\newtheorem{thm}{\bf Theorem}[section] 
\newtheorem{rem}{\bf Remark}[section]
\newcommand{\R}{{\mathbb R}}
\newcommand{\llambda}{b}
\newcommand{\spa}{\operatorname{span}}
\newcommand{\diag}{\operatorname{diag}}
\newcommand{\diam}{\operatorname{diam}}
\newcommand{\vol}{\operatorname{vol}}
\newcommand{\supp}{\operatorname{supp}}
\newcommand{\tG}{{\widetilde{G}}}
\newcommand{\dH}{\;{\rm d}{\mathcal{H}}^{d-1}}  
\newcommand{\dL}{\;{\rm d}{\mathcal{L}}^{d}}
  \newcommand{\ds}{\dH}
\newcommand{\bigchi}{\ensuremath{\mathrm{\mathcal{X}}}}
\newcommand{\charfcn}[1]{\bigchi_{#1}}  
\newcommand{\Domain}{\Omega}
\newcommand{\Sm}{S^m}
\newcommand{\Smhom}{S^m_0}
\newcommand{\SmD}{S^m_D}
\newcommand{\Sml}{S^m_+}
\newcommand{\Smhoml}{S^m_{0,+}}
\newcommand{\SmDl}{S^m_{D,+}}
\newcommand{\Vh}{\underline{V}(\Gamma^m)}
\newcommand{\Wh}{W(\Gamma^m)}
\newcommand{\sigmaO}{o}
\newcommand{\nabs}{\nabla_{\!s}}
\newcommand{\id}{\rm id}
\newcommand{\ddt}{\frac{\rm d}{{\rm d}t}}
\newcommand{\Nbulk}{\vec{N}_{\Gamma,\Omega}^T}
\newcommand{\Mbulk}{M_{\Gamma,\Omega}}
\newcommand{\errorXx}{\|\vec{X} - \vec{x}\|_{L^\infty}}
\newcommand{\tinyplus}{\mbox{\tiny $+$}}
\newcommand{\errorUupl}{\|U - I^h\,u\|_{L^\infty,\tinyplus}}
\newcommand{\errorUu}{\|U - I^h\,u\|_{L^\infty}}
\def\conduct{\mathcal{K}}
\def\TTime{\overline{T}}
\def\epsilon{\varepsilon}
\def\vL{L\kern-0.08cm\char39}
\begin{document}
 
\title{ Finite-Element Approximation of \\ One-Sided Stefan Problems with \\
Anisotropic, Approximately Crystalline, Gibbs--Thomson Law }
\thanks{AMS Subject Classifications: 80A22, 74N05, 65M60, 35R37, 65M12, 80M10.} 
\date{}
\maketitle     
 
\vspace{ -1\baselineskip}

{\small
\begin{center}
 {\sc John W. Barrett} \\
Department of Mathematics, 
Imperial College London, London, SW7 2AZ, UK \\[10pt]
 {\sc Harald Garcke} \\
Fakult{\"a}t f{\"u}r Mathematik, Universit{\"a}t Regensburg, 
93040 Regensburg, Germany \\[10pt]
 {\sc Robert N\"urnberg} \\
Department of Mathematics, 
Imperial College London, London, SW7 2AZ, UK \\[10pt]
\end{center}
}

\numberwithin{equation}{section}
\allowdisplaybreaks
 
 \smallskip

 \begin{quote}
\footnotesize
{\bf Abstract.}  
We present a finite-element approximation for the one-sided Stefan problem and
the one-sided Mullins--Sekerka problem, respectively. The problems feature a
fully anisotropic Gibbs--Thomson law, as well as kinetic undercooling.
Our approximation, which couples a parametric approximation of the moving
boundary with a finite-element approximation of the bulk quantities, 
can be shown to satisfy a stability bound, and it enjoys very
good mesh properties, which means that no mesh smoothing is necessary in
practice.
In our numerical computations we concentrate on the simulation of
snow crystal growth. On choosing realistic physical parameters, we are able to 
produce several distinctive types of snow crystal morphologies. In
particular, facet breaking in approximately crystalline evolutions can
be observed.
\end{quote}

\section{Introduction}

Pattern formation during crystal growth is one of the most fascinating
areas in physics and materials science. Furthermore, crystallisation
is a fundamental phase transition, and a good understanding is crucial for many
applications. In this paper we will concentrate on a mathematical
model based on the one-sided Stefan and Mullins--Sekerka problems, for which 
we will introduce a new numerical method of approximation. The numerical
solutions presented here are tailored for the description of snow crystal
growth. However, we note that with minor modifications our approach can be
used for other crystal growth scenarios (see \cite{dendritic}), which
in particular have applications in engineering as, for example, 
in the foundry industry.

The basic mathematical model for crystal growth involves diffusion
equations in the bulk phases together with complex conditions at the moving
boundary, which separates the phases. Depending on the application, either heat
diffusion or the diffusion of a solidifying species has to be
considered. If a pure, e.g.\ metallic, substance solidifies, then the basic
diffusion equation is the heat equation for the temperature  (see
\cite{Gurtin93,dendritic}), whereas for snow crystal growth the diffusion of
water molecules in the air is the main diffusion mechanism (see
\cite{Libbrecht05}). In the case that a binary metallic substance
solidifies, then models involving both heat and species diffusion
simultaneously, and which are coupled through the interface conditions, 
are considered, see e.g.\ \cite{Davis01}. 

At the moving boundary a conservation law
either for the energy or for the matter has to hold. In the case of heat 
diffusion,
one has to take into account the release of latent heat through the well-known
Stefan condition, which relates the velocity of
the interface to the temperature gradients at the interface, the latter being
proportional to the energy flux; see \cite{Gurtin93,Davis01,dendritic}. 
For snow crystal growth the continuity equation at the
interface relates its velocity to the particle flux
at the interface, which is given in terms of the gradient of the water
molecule density. In conclusion, mathematically very similar conditions
arise in both models.

Beside  the above-discussed continuity equation, another condition has
to be specified at the interface. In the case that heat diffusion is
the main driving force in the bulk, thermodynamical considerations lead
to the Gibbs--Thomson law with kinetic undercooling at the interface;
see \cite{Gurtin93,Davis01,dendritic}. This law relates the
undercooling (or superheating) at the interface to the 
curvature and the velocity of the interface. In the case of snow
crystal growth one has to consider a modified Hertz--Knudsen formula, 
which relates the supersaturation of the water molecules at the
interface to the curvature and velocity of the interface; see e.g.\ 
equations (1) and (23) in \cite{Libbrecht05}. 
The physics at the interface depends
on the local orientation of the crystal lattice in space, and hence
the parameters in the interface conditions discussed above are
anisotropic. In particular, the corresponding surface energy density
leads, through variational calculus, 
to an anisotropic version of curvature, 
which then appears in the moving boundary condition; see
\cite{Giga}. In addition, kinetic coefficients in the moving boundary
condition will also, in general, be anisotropic.

In the numerical experiments in Section~\ref{sec:6},
we focus on snow crystal growth, where the unknown will be
a properly scaled number density of the water molecules. 
However, straightforward
modifications, e.g.\ choosing different anisotropies, 
allow our approach to apply in the context of other crystal
growth phenomena. In addition, we note that our approach can be used for
many other moving boundary problems; see e.g.\ \cite{dendritic}.

In earlier work, the present authors introduced a new methodology
to approximate curvature-driven curve and surface evolution; see
\cite{triplej,triplejMC,gflows3d}. The method has the important feature that
mesh properties remain good during the evolution. In fact, for curves
semidiscrete versions of the
approach lead to polygonal approximations, where the vertices are equally
spaced throughout the evolution. This property is important, 
as most other approaches typically
lead to meshes which deteriorate during the evolution and often the
computation cannot be continued. The approach was first proposed
for isotropic geometric evolution equations, but later the method was
generalized to anisotropic situations, \cite{triplejANI,ani3d}, 
and to situations where an interface geometry was coupled to bulk fields, 
\cite{dendritic}. In most cases it was even possible to show stability
bounds. In \cite{dendritic} the two-sided Stefan and
Mullins--Sekerka problems, as a model for dendritic solidification, were
numerically studied. The physical parameters, such as the heat
conductivity, had to be chosen the same in both phases, whereas in this
paper we focus on the situation where diffusion can be restricted to
the liquid or gas phase, respectively. 
Hence, we need to study
a one-sided Stefan or Mullins--Sekerka problem. 
This has a significant impact on the
numerical analysis, and it necessitates novel computational techniques; see
e.g.\ Section~\ref{sec:41} below.
We remark that  
an anisotropic version of the one-sided Mullins--Sekerka problem is
relevant for snow crystal growth; see 
\cite{Libbrecht05} and \cite{jcg}. 
This, and the fact that the anisotropy in snow crystal
growth is so strong that nearly faceted shapes occur, makes this
application a
perfect situation in order to test whether our approach is suitable for
one-sided models for solidification. 

Before discussing our numerical approach and several phenomena, which
we wish to simulate, we formulate the anisotropic
one-sided Stefan and
Mullins--Sekerka problem with the Gibbs--Thomson law and kinetic
undercooling in detail. 
Let $\Domain\subset\mathbb{R}^d$ be a given domain, where $d=2$ or $d=3$. 
We now seek a time-dependent interface $(\Gamma(t))_{t\in[0,\overline{T}]}$,
$\Gamma(t)\subset\hspace*{-2pt}\subset\Domain$, 
which for all $t\in[0,\overline{T}]$ separates
$\Domain$ into a domain $\Omega_+(t)$, occupied by the liquid/gas,
and a domain $\Omega_{-}(t):=\Domain\setminus\overline\Omega_+(t)$, 
which is occupied by the
solid phase. See Figure~\ref{fig:sketch} for an illustration.

\begin{figure}[ht]
\def\myheight{0.1\textheight}
\center
    \includegraphics[scale=.3]{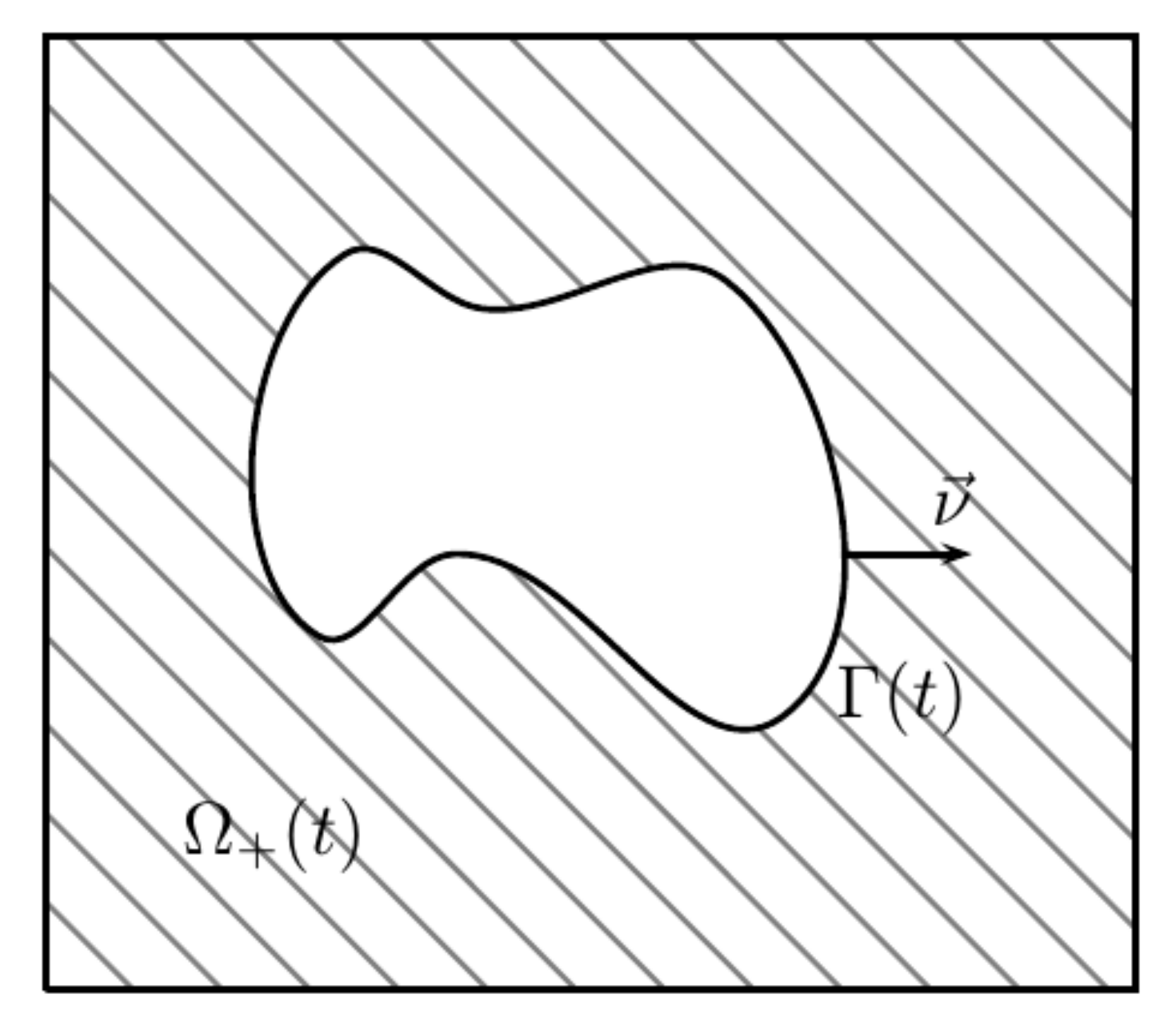}
\caption{The domain $\Omega$ in the case $d=2$.}
\label{fig:sketch}
\end{figure} 

For later use, we assume that
$(\Gamma(t))_{t\in [0,\TTime]}$ 
is a sufficiently smooth evolving
hypersurface parameterized by $\vec{x}(\cdot,t):\Upsilon\to\R^d$,
where $\Upsilon\subset \R^d$ is a given reference manifold, i.e., 
$\Gamma(t) = \vec{x}(\Upsilon,t)$. Then
$\mathcal{V} := \vec{x}_t \cdot\vec{\nu}$ is
the normal velocity of the evolving hypersurface $\Gamma$,
where $\vec\nu$ is the unit normal on $\Gamma(t)$ pointing into $\Omega_+(t)$.

We now need to find a time- and space-dependent function
$u$ defined in the liquid/gas region such that
$u(\cdot,t):\Omega_+(t)\to\mathbb{R}$ and the interface
$(\Gamma(t))_{t\in[0,\overline{T}]}$ fulfill the following conditions:
\begin{subequations}
\begin{alignat}{2}
\vartheta\,u_t - \conduct\,\Delta u & = f \qquad  \qquad \qquad &&\mbox{in } 
\Omega_+(t), \label{eq:1a} \\
\conduct\,\frac{\partial u}{\partial \vec\nu}
&=-\lambda\,{\mathcal{V}} \qquad &&\mbox{on } \Gamma(t), 
\label{eq:1b} \\  
\frac{\rho\,\mathcal{V}}{\beta(\vec \nu)} &= \alpha\,\varkappa_\gamma - 
a\,u \qquad &&\mbox{on } \Gamma(t), 
\label{eq:1c} \\  
u & = u_D \qquad &&\mbox{on } 
\partial \Omega , \label{eq:1d} \\
\Gamma(0) & = \Gamma_0 , \qquad 
\vartheta\,u(\cdot,0) = \vartheta\,u_0 \qquad && \mbox{in } 
\Omega_+(0)
\,; \label{eq:1e} 
\end{alignat}
\end{subequations}
where $\partial\Domain$ denotes the boundary of $\Domain$.
In addition, $f$ is a possible forcing term, while
$\Gamma_0 \subset\hspace*{-2pt}\subset \Omega$ and
$u_0 : \Omega_+(0) \to \R$ are given initial data. We always assume that
the solid region $\Omega_-(t)$ is compactly contained in $\Domain$. 

The unknown $u$ is, depending on the application, either a temperature
or a suitably scaled negative concentration. The orientation-dependent
function $\beta$ is a kinetic coefficient, $\gamma$ is the anisotropic
surface energy, and $\vartheta \geq 0$, $\conduct,$ $ \lambda,$ $ \rho,$ $ \alpha,$ and $ a > 0$
are constants whose physical significance is discussed in 
\cite{dendritic,jcg}.
For snow crystal growth (see 
\cite{jcg}), $-u$
is a suitably scaled concentration with $-u_D$ being the scaled supersaturation.  

It now remains to introduce the anisotropic mean curvature
$\varkappa_\gamma$. One obtains $\varkappa_\gamma$ as the first
variation of an anisotropic interface free energy
\begin{equation*}  
|\Gamma|_\gamma := \int_\Gamma \gamma(\vec\nu) \ds,
\end{equation*}
where $\gamma: \R^d \to \R_{\geq0}$,
with $\gamma(\vec{p})>0$ if $\vec{p}\ne \vec 0$, is the surface free
energy density which depends on the local orientation of the surface
via the normal $\vec{\nu}$;
and ${\mathcal{H}}^{d-1}$ denotes 
the $(d-1)$-dimensional Hausdorff measure in ${\R}^d$.
The function $\gamma$ is assumed to be
positively homogeneous of degree one, i.e., 
\begin{equation*} 
\gamma(\llambda\,\vec{p}) = \llambda\,\gamma(\vec{p}) \quad \forall \ 
\vec{p}\in \R^d,\ \forall\ \llambda \in {\mathbb R}_{>0}\quad \Rightarrow
\quad \gamma'(\vec{p}) \cdot \vec{p} = \gamma(\vec{p})
\quad \forall\ \vec{p}\in \R^d\setminus\{\vec0\},  
\end{equation*}
where $\gamma'$ is the gradient of $\gamma$.
The first variation of $|\Gamma|_\gamma$ is given by  (see
e.g.\ \cite{Giga} and \cite{ani3d}) 
\begin{equation} \label{eq:varkappa}
\varkappa_\gamma := - \nabs \cdot  \gamma'(\vec\nu),  
\end{equation}
where $\nabs .$ is the tangential divergence on $\Gamma$; i.e.,  we have
in particular that
\begin{equation} \label{eq:firstvar}
\ddt\, |\Gamma(t)|_\gamma = \ddt\, \int_{\Gamma(t)}
\gamma(\vec{\nu})\ds = -\int_{\Gamma(t)} 
\varkappa_\gamma\, \mathcal{V} \ds.
\end{equation}
We remark that in the isotropic case we have that 
\begin{equation} \label{eq:iso}
\gamma(\vec{p}) = 
\gamma_{iso}(\vec p) :=
|\vec{p}| \quad \forall\ \vec{p}\in \R^d, 
\end{equation}
which implies that $\gamma(\vec\nu)=1$; and so $|\Gamma|_{\gamma}$ 
reduces to $|\Gamma|$, the surface area of $\Gamma$. Moreover,
in the isotropic case the anisotropic mean curvature $\varkappa_\gamma$ reduces
to the usual mean curvature, i.e., 
to the sum of the principal curvatures~of~$\Gamma$.

In this paper we are interested in anisotropies of the form
\begin{equation} \label{eq:g1}
\gamma(\vec{p}) = \sum_{\ell=1}^L
\gamma_{\ell}(\vec{p}), \quad
\gamma_\ell(\vec{p}):= [{\vec{p} \cdot G_{\ell}\,\vec{p}}]^\frac12,
\end{equation}
where $G_{\ell} \in \R^{d\times d}$, for $\ell=1\to L$, 
are symmetric and positive definite matrices. 
We note that (\ref{eq:g1}) corresponds to the special choice $r=1$ for the
class of anisotropies
\begin{equation} \label{eq:g}
\gamma(\vec{p}) = \Big (\sum_{\ell=1}^L
[\gamma_{\ell}(\vec{p})]^r \Big )^{\frac1r},
\end{equation}
which has been considered by the authors in \cite{dendritic}. 
Numerical methods based on anisotropies of
the form (\ref{eq:g}) have first been considered in \cite{triplejANI} and
\cite{ani3d}, and there this choice enabled the authors to introduce 
unconditionally
stable fully discrete finite-element approximations for the anisotropic mean
curvature flow, i.e.,  (\ref{eq:1c}) with $a=0$, 
and other geometric evolution equations for an evolving interface $\Gamma$. 
Similarly, in \cite{dendritic}, the choice of
 anisotropies (\ref{eq:g}) leads to fully discrete 
approximations of the Stefan problem with very good stability properties.
We note that the simpler choice $r=1$, i.e., when $\gamma$ is of the form 
(\ref{eq:g1}), leads to a
finite-element approximation with a linear system to solve at each time level; 
see (\ref{eq:uHGa}--c). In three space dimensions, the choice (\ref{eq:g1}) 
only gives rise to a relatively small class of anisotropies, which is why the
authors introduced the more general (\ref{eq:g}) in \cite{ani3d}.
For the modelling of snow crystal growth, however, the choice (\ref{eq:g1}) 
is sufficient, and we will stick to this case in the present paper, 
but we point out that using the method from
\cite{dendritic} the approach in this paper can be easily generalized
to the more general class of anisotropies in (\ref{eq:g}). 

We now give some examples for anisotropies of the form (\ref{eq:g1}), which
later on will be used for the numerical simulations in this paper. For the
visualizations we will use the Wulff shape, \cite{Wulff1901}, defined by
\begin{equation}
{\mathcal{W}} := \{ \vec{p}\in\R^d : \vec{p} \cdot \vec{q} \leq \gamma(\vec{q})
\quad \forall\ \vec{q} \in \R^d \} . \label{eq:Wulff}
\end{equation}
Here we recall that 
the Wulff shape $\mathcal{W}$ is known to be the solution of an 
isoperimetric problem; i.e., the boundary of $\mathcal{W}$ is the minimizer of 
$|\cdot|_{\gamma}$ in the class of all  
surfaces enclosing the same volume; see e.g.\ \cite{FonsecaM91}.

Let 
$l_\epsilon(\vec{p}) := \left[ \epsilon^2\,|\vec{p}|^2 
                                 + p_1^2\,(1-\epsilon^2) \right]^{\frac12}$
for $\epsilon>0$.
\begin{figure}[t]
\def\myheight{0.1\textheight}
\center
   \includegraphics[scale=.2]{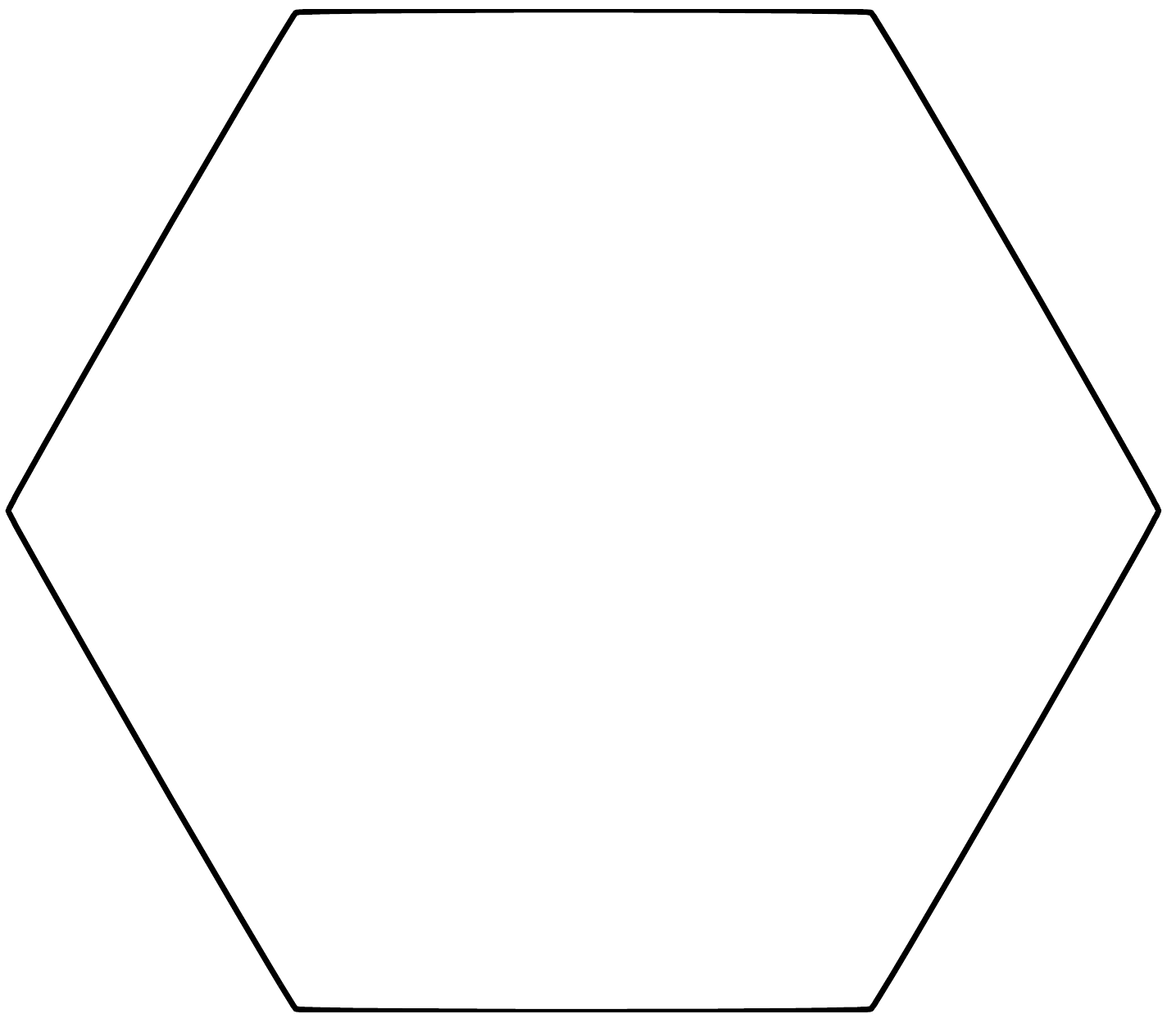}
\caption{Wulff shape in $\R^2$ for (\ref{eq:hexgamma2d}) with  
$\epsilon=0.01$ and $\theta_0=0$.}
\label{fig:Wulff2d}
\end{figure}%
Then a hexagonal anisotropy in $\R^2$ can be modelled with the choice
\begin{equation} \label{eq:hexgamma2d}
\gamma(\vec{p}) = \gamma_{hex} (\vec{p}) := \sum_{\ell = 1}^3
l_\epsilon(R(\theta_0 + \tfrac{\ell\,\pi}3)\,\vec{p}),
\end{equation}
where $R(\theta)$ denotes a clockwise rotation through the angle $\theta$,
and $\theta_0 \in [0,\frac\pi3)$ is a parameter that rotates the orientation of
the anisotropy in the plane. The Wulff shape of (\ref{eq:hexgamma2d}) 
for $\epsilon=0.01$ and $\theta_0=0$ is shown in Figure~\ref{fig:Wulff2d}.

In order to define anisotropies of the form (\ref{eq:g1}) in $\R^3$, we
introduce the rotation matrices
$$
R_{1}(\theta):=\left(
\begin{array}{rrr} \cos\theta & \sin\theta&0 \\
-\sin\theta & \cos\theta & 0 \\ 0 & 0 & 1 \end{array}\!\! \right)
\ \text{ and } \
 R_{2}(\theta):=\left(
\begin{array}{rrr} \cos\theta & 0 & \sin\theta \\
0 & 1 & 0 \\ -\sin\theta & 0 & \cos\theta \end{array}\!\! \right) .
$$
Then 
\begin{equation} \label{eq:hexgamma3d}
\gamma(\vec{p}) = 
l_\epsilon(R_2(\tfrac{\pi}2)\,\vec{p}) +
\sum_{\ell = 1}^3
l_\epsilon(R_1(\theta_0+\tfrac{\ell\,\pi}3)\,\vec{p})
\end{equation}
is one such example, where $\theta_0 \in [0,\frac\pi3)$ again rotates the
anisotropy in the $x_1$-$x_2$ plane. The anisotropy (\ref{eq:hexgamma3d}) has
been used by the authors in their numerical simulations of anisotropic
geometric evolution equations in \cite{ani3d,clust3d,ejam3d}, as well as for
their dendritic solidification computations in \cite{dendritic}. 
Its Wulff shape for $\epsilon=0.01$ is shown on the left of
Figure~\ref{fig:Wulff3d}.

\begin{figure}[t]
\def\mywidth{0.15\textwidth}
\center
   \includegraphics[angle=0,width=\mywidth]{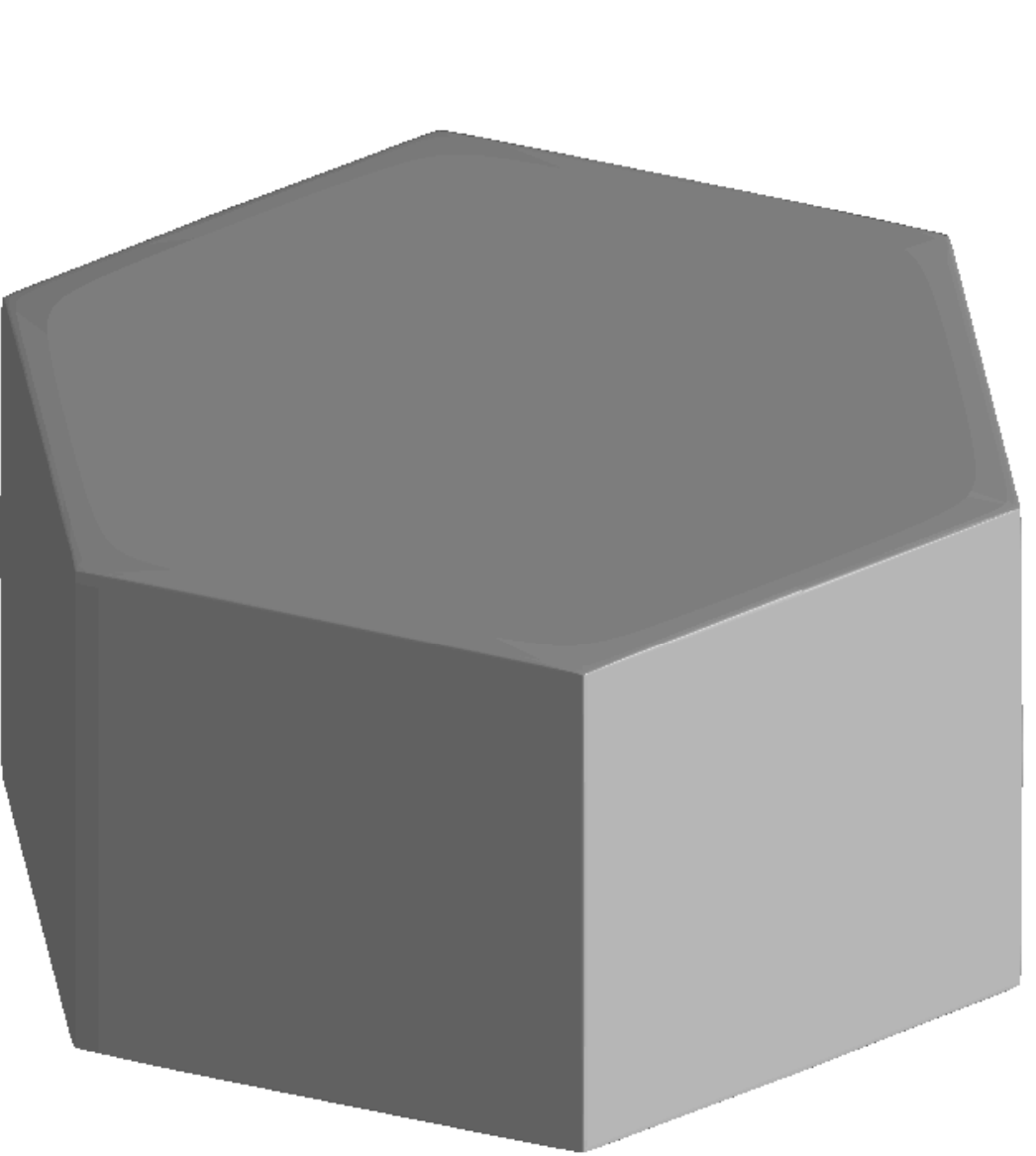} 
\qquad\qquad
   \includegraphics[angle=0,width=\mywidth]{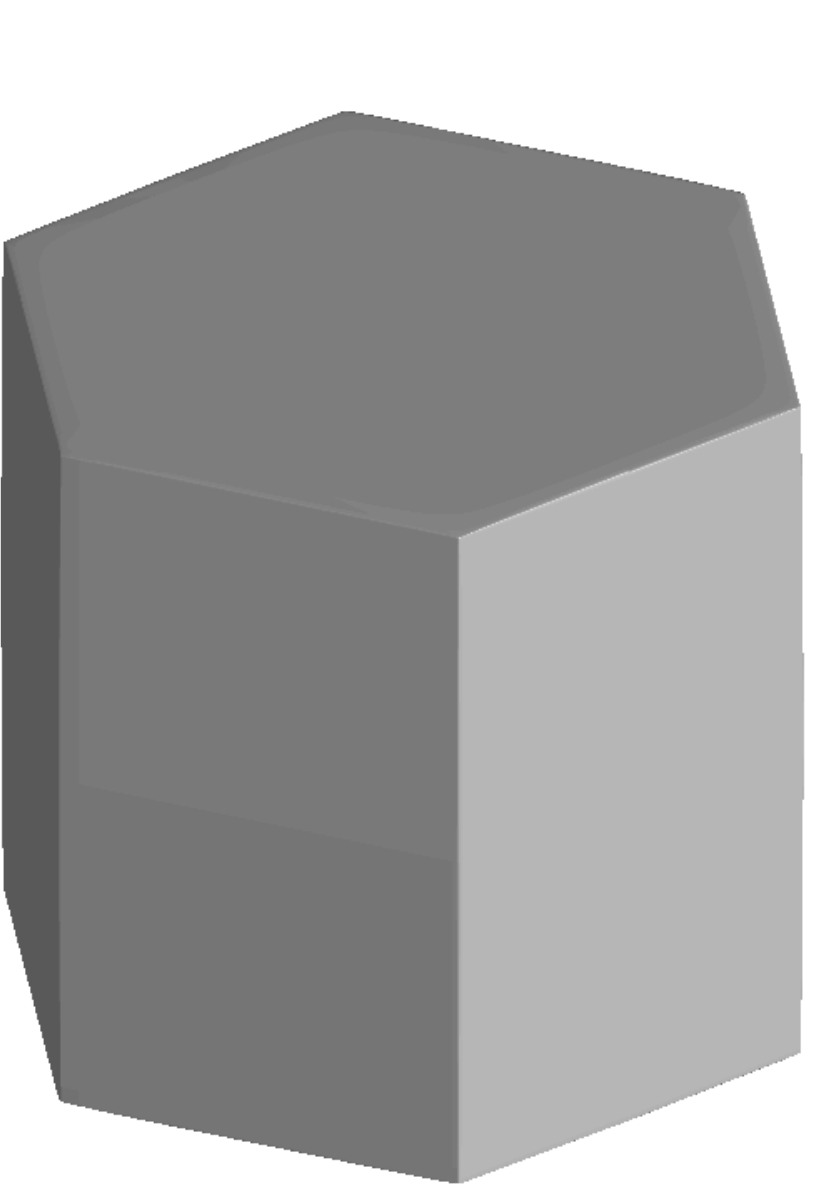} 
\caption{Scaled Wulff shape in $\R^3$ for (\ref{eq:hexgamma3d}) 
with $\epsilon=0.01$ (left).
Scaled Wulff shape in $\R^3$ for (\ref{eq:L44}) 
with $\epsilon=0.01$ (right).}
\label{fig:Wulff3d}
\end{figure}

A small modification of (\ref{eq:hexgamma3d}), which is more relevant 
for the simulation of snow flake growth, is
\begin{equation} \label{eq:L44}
\gamma(\vec{p}) = 
\gamma_{hex}(\vec{p}) := 
l_\epsilon(R_2(\tfrac{\pi}2)\,\vec{p}) + \tfrac{1}{\sqrt{3}}\,
\sum_{\ell = 1}^3
l_\epsilon(R_1(\theta_0 + \tfrac{\ell\,\pi}3)\,\vec{p}).
\end{equation}
Its Wulff shape for $\epsilon=0.01$ is
shown on the right of Figure~\ref{fig:Wulff3d}.
We note that the Wulff shape of (\ref{eq:L44}), in contrast to
(\ref{eq:hexgamma3d}), for $\epsilon\to0$ approaches a prism where every face
has the same distance from the origin. In other words, for
(\ref{eq:L44}) 
the surface energy densities in the basal and prismal directions are the same.
We remark that if $\mathcal{W}_0$ denotes the Wulff shape of
(\ref{eq:L44}) with $\epsilon=0$, 
then the authors in \cite{GravnerG09} used the scaled
Wulff shape $\tfrac12\,\mathcal{W}_0$ as the building block in their 
cellular automata algorithm. In addition, we observe that the choice
(\ref{eq:L44}) agrees well with data reported in e.g.\ 
\cite[p.~148]{PruppacherK97}, although there the ratio of basal to prismal 
energy is computed as $\gamma^{\rm B} / \gamma^{\rm P} \approx 0.92 < 1$.

In addition, we consider an example of (\ref{eq:g1}), where $L=2$ and
$G_1 = \diag(1,1,\epsilon^2)$, 
$G_2=\gamma_{\rm TB}^2\,\diag(\epsilon^2,\epsilon^2,1)$, so that
it approximates for small $\epsilon$ the anisotropy
\begin{equation} \label{eq:giga}
\gamma(\vec{p}) = \gamma_{\rm TB}\,|p_3| + (p_1^2+p_2^2)^\frac12,
\end{equation}
as considered in e.g.\ \cite{GigaR04}. See
Figure~\ref{fig:frankgiga}, where we show its Wulff shape for
$\gamma_{\rm TB} = 1$ and $\gamma_{\rm TB} = 0.1$ for
$\epsilon=10^{-2}$.
\begin{figure}[t]
\def\myheight{0.15\textheight}
\center
   \includegraphics[angle=0,height=\myheight]{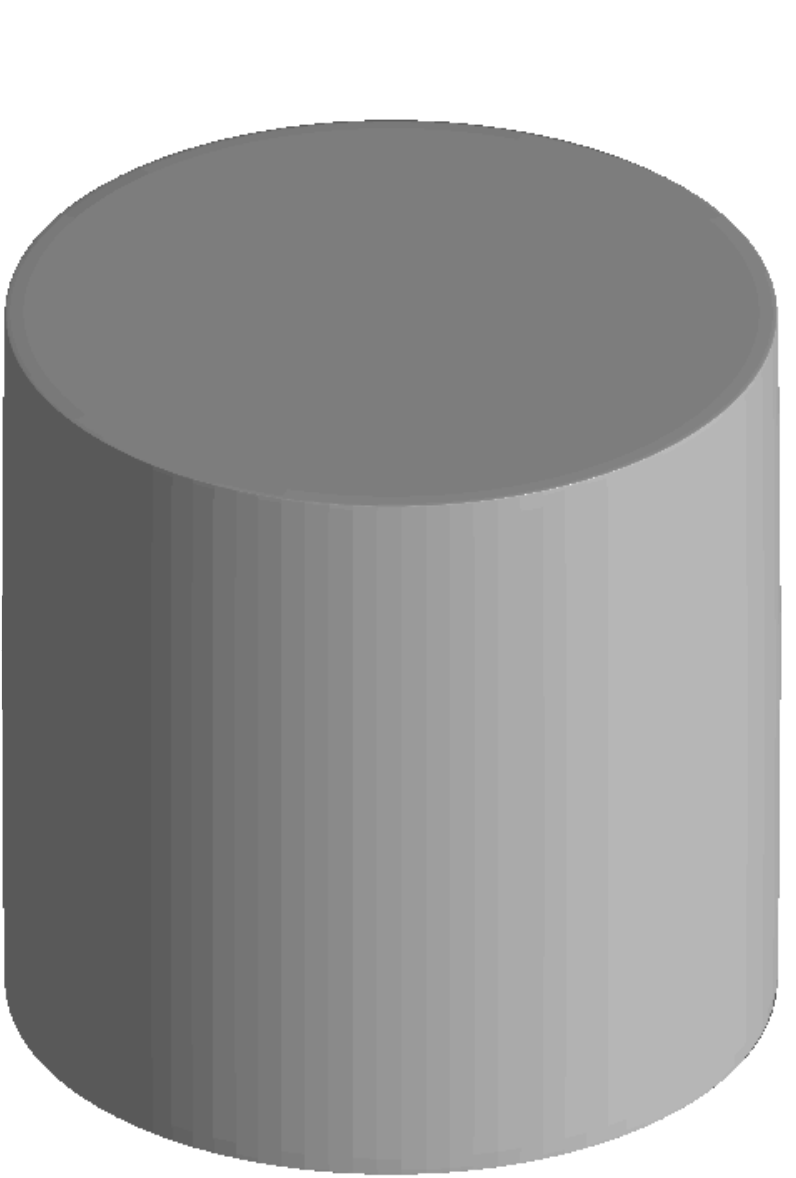} \qquad \quad
   \includegraphics[angle=0,height=\myheight]{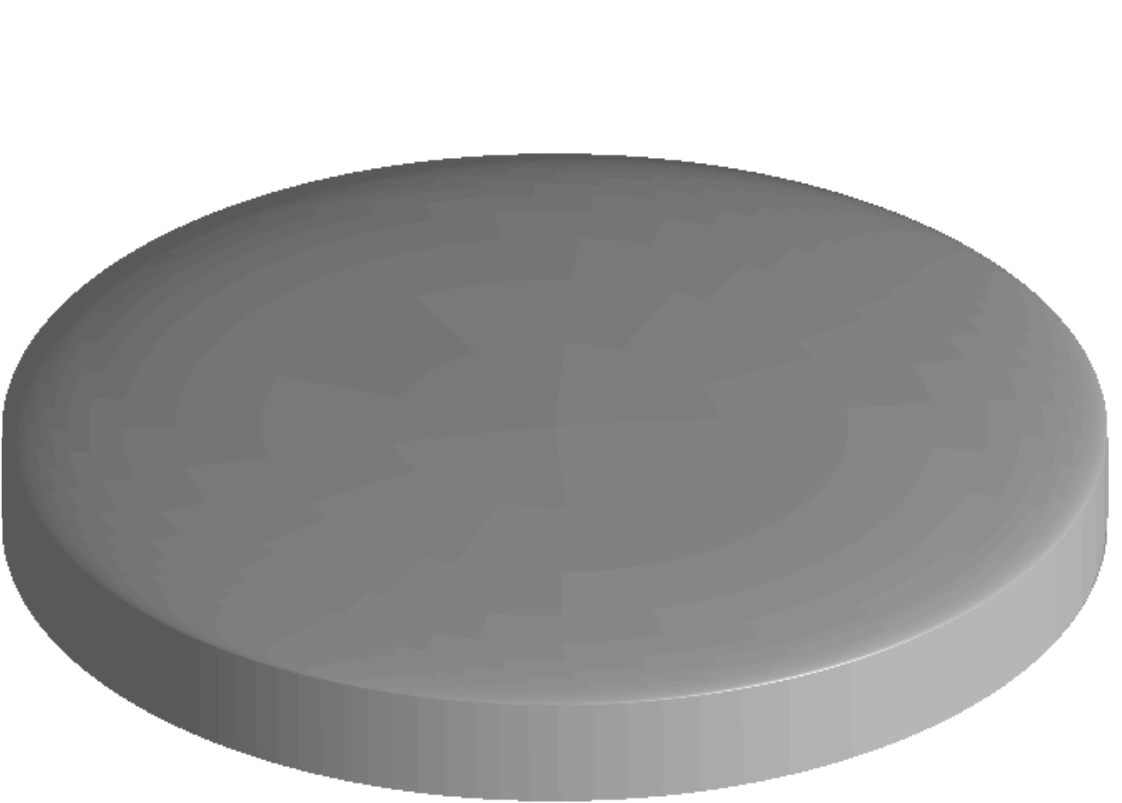} 
\caption{Scaled Wulff shape for the approximation of (\ref{eq:giga}) with 
$\gamma_{\rm TB} = 1$ (left) and $\gamma_{\rm TB} = 0.1$ (right) for
$\epsilon=10^{-2}$.}
\label{fig:frankgiga}
\end{figure}%
We note the Wulff shape of (\ref{eq:giga}) is given by a cylinder with basal
radius one and height $2\,\gamma_{\rm TB}$. Hence its ratio of
height to basal diameter is $\gamma_{\rm TB}$.

More examples
of anisotropies of the form (\ref{eq:g}) can be found in 
\cite{triplejANI,ani3d,clust3d}. 
Let us briefly discuss why the novel way that we deal with the 
anisotropy makes it possible to compute evolution equations
resulting from nearly crystalline surface energies,
i.e., when the Wulff shape has sharp corners and flat parts. 
Energies of the form 
(\ref{eq:hexgamma2d}) and (\ref{eq:hexgamma3d}) have as building blocks 
simple quadratic expressions, and for $\epsilon$ close to zero they reduce to 
crystalline surface energies. 
It is now possible to discretize these energies,
such that the resulting discrete equations  are linear and such that 
they allow for a stability bound; compare Theorem~\ref{thm:stab} below
and \cite{triplejANI,ani3d}. Stability        
bounds for nearly crystalline energies are very difficult to obtain. The     
fact that
we obtain stability bounds for small $\epsilon$, and hence nearly            
crystalline energies, together with the good mesh properties                   
of our discrete approximation of the interface                                  
enable us to perform numerical computations in situations which involve        
nearly crystalline surface energies. In this context let us mention that the    
good mesh quality results from a tangential redistribution of the mesh, where
the tangential velocity arises naturally from the discretization of 
a variational formulation of (\ref{eq:varkappa}).

Crystal growth in general, and snow crystal growth in particular, is a
highly anisotropic mechanism. In snow crystal growth the morphologies that
appear depend strongly on the environment and, in particular, on the
temperature and the supersaturation,
which influence the values of
$\alpha$ and $u_D$, respectively, in (\ref{eq:1a}--e). 
This can be seen in the famous
Nakaya diagram; see Figure~\ref{fig:libbrecht}. Depending on
these parameters, 
either solid prisms, needles, thin plates, hollow columns
or dendrites appear in snow crystal growth. The anisotropy of the
surface energy can be responsible for the hexagonal symmetry, but
probably also an anisotropic $\beta$ has an influence on the shapes
appearing in snow crystal growth; see e.g.\ \cite{Libbrecht05} and
\cite{Yokoyama93}. Depending on the size of the crystal, either the kinetic
anisotropy or the anisotropy in the surface energy dominates; see
\cite{YokoyamaS92} or \cite{KobayashiG01}.
It is one of the goals of this paper to study the influence of the
anisotropies in $\beta$ and $\gamma$ on the growth morphologies. It
was discussed in \cite{Libbrecht05} that the kinetic coefficient can
vary drastically between the directions of the two basal hexagonal facets
and the directions of the six prismal facets. Depending on the environmental
conditions either flat crystals or column crystals appear; see 
Figure~\ref{fig:libbrecht}. 

 \begin{figure}[t]
\center
   \includegraphics[angle=0,height=6.5cm]{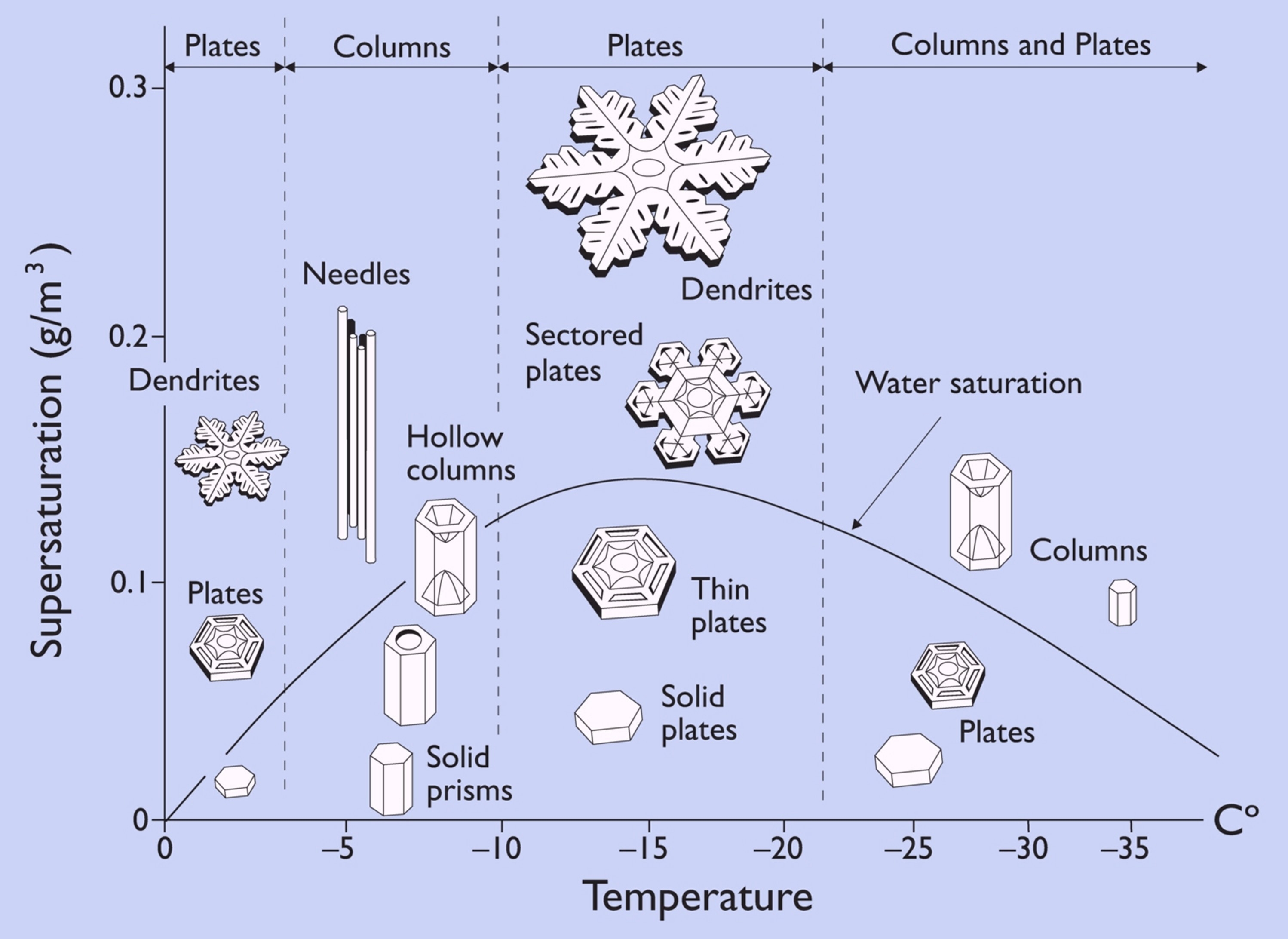}
\caption{The Nakaya diagram illustrates which snow crystal forms
  appear at different temperatures and supersaturations. This figure
  is taken from \cite{Libbrecht05}.}
\label{fig:libbrecht}
\end{figure}

A derivation of the set of equations (\ref{eq:1a}--e) can be found in
\cite{Gurtin93} and \cite{Davis01}. The evolution of interfaces driven by
anisotropic curvature has been studied by many authors, and we refer
to \cite{Giga} for an overview. For the full problem (\ref{eq:1a}--e),
to the knowledge of the authors, no existence result seems to be
known,
although there are results for two-sided variants; see
\cite{Luckhaus90} in the isotropic case and 
\cite{GarckeS11} in the anisotropic case. We remark that also cases where the
Wulff shape is crystalline  
have been studied. In this case nonlocal curvature quantities have to
be considered, and the geometric equation (\ref{eq:1c}) for the
interface is of singular diffusion type. 
Then local existence to (\ref{eq:1a}--e) has been obtained for 
anisotropies where the Wulff shape is a prism with polygonal base,
for a restricted class of $\Gamma_0$ and on assuming that no facet bending or
facet breaking occurs; see \cite{Gorka08,Gorka08a}.
In addition, it was shown in \cite{GigaR04}  
that self-similar solutions for (\ref{eq:1a}--e) exist in a
situation  where the Wulff shape is a cylinder. We will attempt to
compute such self-similar solutions in Section~\ref{sec:6}.

In snow crystal growth often flat parts appear, and in some cases they
become unstable and break; see Figure~\ref{fig:libbrecht},
\cite{Libbrecht05} and \cite{GondaY82}. Only recently  have researchers 
studied facet breaking from a mathematical point of view.  
The three-dimensional case  
has been considered in 
\cite{BellettiniNP99} and \cite{GigaG98} for geometrical evolution
equations---see also the numerical studies in \cite{ani3d}. A
full crystalline model of solidification facet breaking has, so far,
only been studied analytically in \cite{GigaR06} and numerically in
\cite{dendritic}. Clearly from the Nakaya diagram, facet
breaking is an important issue in snow crystal growth, and we will
study this aspect numerically in Section~\ref{sec:6}.

Numerical approaches for dendritic solidification  that are based on
the Stefan problem with the Gibbs--Thomson law are often restricted to two
space dimensions; see e.g.\ \cite{YokoyamaS92,Schmidt98} 
and \cite{BanschS00}, where in the latter article 
the coupling to a fluid flow is also considered. 
The first implementations in three
space dimensions are due to Schmidt  (see \cite{Schmidt93,Schmidt96}), 
and the present authors later proposed a stable variant of Schmidt's approach
which could also handle the anisotropy in a more physically rigorous
way; see \cite{dendritic}. We also would like to refer to the
fascinating results on snow crystal growth, which were established in
\cite{GravnerG09}, using a cellular automata model. They were
able to compute a large variety of forms, which resemble snow crystals
in nature, even though the overall approach does not
stem from basic physical conservation laws and it is difficult to
relate its parameters to physical quantities. 

The outline of this paper is as follows.
In Section~\ref{sec:2} we introduce a weak formulation of the one-sided 
Stefan problem and the one-sided Mullins--Sekerka problem, which we consider in 
this paper.
Based on this weak formulation, we then 
introduce our numerical approximation of these problems
in Section~\ref{sec:3}. In particular, on utilizing techniques 
from \cite{dendritic}, we derive a coupled finite-element approximation for the interface evolution and the diffusion equation in
the bulk. Moreover, we show well-posedness and stability results for our
numerical approximation. Solution methods for the discrete equations and
implementation issues are discussed in Section~\ref{sec:4}.
In addition, a non-dimensionalization of a model for snow crystal growth from
\cite{Libbrecht05}, which allows us to derive physically relevant parameter
ranges, is recalled in Section~\ref{sec:nondim}.
Finally, we present several numerical experiments, including
simulations of snow crystal formations in three space dimensions, in
Section~\ref{sec:6}.

\setcounter{equation}{0}
\section{Weak formulation} \label{sec:2}

In this section we state a weak formulation of the problem
(\ref{eq:1a}--e) and derive a formal energy bound. Recall that
$\vartheta \geq 0$ and $\conduct, \lambda, \rho, \alpha, a > 0$
are physical parameters that are discussed in more detail 
in \cite{dendritic} and in \cite{jcg}. 

We introduce the function spaces
\begin{align*}
S_{0,+}(t) & := \{ \phi \in H^1(\Omega_+(t)) : \phi = 0 \ \mbox{ on } 
\partial\Omega \} \\ 
\quad\mbox{and}\quad
S_{D,+}(t) & := \{ \phi \in H^1(\Omega_+(t)) : \phi = u_D \ \mbox{ on } 
\partial\Omega\}
.  
\end{align*}
In addition, we define
$
\underline{V} := H^1(\Upsilon,\R^d) $ and $
W := H^1(\Upsilon,\R),  
$
where we recall that $\Upsilon$ is a given reference manifold.
A possible weak formulation of 
\mbox{(\ref{eq:1a}--e)},
which utilizes the novel weak representation of $\varkappa_\gamma\,\vec\nu$
introduced in \cite{ani3d}, is then given as follows.
Find time-dependent functions $u$, $\vec{x}$, and $\varkappa_\gamma$ such
that $u(\cdot,t)\in S_{D,+}(t)$, $\vec{x}(\cdot,t) \in \underline{V}$, 
$\varkappa_\gamma(\cdot,t)\in W$, and
\begin{subequations}
\begin{align}
& \vartheta\,(u_t, \phi)_+ + \conduct\,(\nabla\,u, \nabla\,\phi)_+ 
- (f, \phi)_+ \nonumber \\ & \hspace{5mm}
 = - \conduct\,\int_{\Gamma(t)} 
\frac{\partial u}{\partial \vec\nu} \, \phi \ds =
 \lambda\int_{\Gamma(t)} \vec{x}_t \cdot \vec\nu\,\phi \ds 
\quad\forall\ \phi\in S_{0,+}(t),
\label{eq:weaka} \\
&\rho\,\int_{\Gamma(t)} \frac{\vec{x}_t \cdot \vec\nu\,\chi}{\beta(\vec \nu)}\;\dH
= \int_{\Gamma(t)} \left[\alpha\,\varkappa_\gamma - a\,u\right] \chi \ds
\quad \forall\ \chi \in W
,\label{eq:weakb} \\
&\int_{\Gamma(t)} \varkappa_\gamma\,\vec\nu \cdot \vec\eta \ds
+ \langle \nabs^\tG\,\vec{x} , \nabs^\tG\,\vec\eta \rangle_\gamma
 = 0  \quad\forall\ \vec\eta \in \underline{V}
\label{eq:weakc}
\end{align}
\end{subequations}
hold for almost all times $t\in (0,\TTime]$, as well as the initial conditions
(\ref{eq:1e}). 
Here $(\cdot,\cdot)_+$ denotes the $L^2$-inner product on $\Omega_+(t)$.

We note that, for convenience, we have adopted a slight abuse
of notation
in (\ref{eq:weaka}--c). Here, and throughout  this paper, we will identify functions defined
on the reference manifold $\Upsilon$ with functions defined on $\Gamma(t)$.
In particular, we identify $v \in W$ with $v \circ \vec{x}^{-1}$ on 
$\Gamma(t)$, where we recall that $\Gamma(t) = \vec{x}(\Upsilon,t)$, and we
denote both functions simply as $v$.
For example, $\vec{x} \equiv \vec{\rm id}$ is also 
the identity function on $\Gamma(t)$.
In addition, we have
introduced the shorthand notation 
$\langle \nabs^\tG\,\cdot, \nabs^\tG\,\cdot \rangle_\gamma$ for the 
inner product defined in \cite{ani3d}. 
In particular, on recalling (\ref{eq:g1}), we define the symmetric positive-definite matrices $\tG_\ell$ with the associated inner products 
$(\cdot,\cdot)_{\tG_\ell}$ on ${\mathbb R}^d$ by
\begin{equation*} 
\tG_{\ell} := [\det G_\ell]^{\frac1{2}}\,[G_\ell]^{-1}
\ \mbox{ and } \ 
(\vec{v},\vec{w})_{\tG_\ell} = \vec{v} \cdot \tG_\ell\,\vec{w}
\quad \forall \ \vec{v},\,\vec{w} \in {\mathbb R}^d, 
\quad \ell = 1 \to L. 
\end{equation*}
Then we have that
\begin{equation}
\langle \nabs^\tG \vec\chi, \nabs^\tG \vec\eta \rangle_\gamma :=
\sum_{\ell=1}^{L} \int_{\Gamma(t)} 
(\nabs^{\tG_{\ell}} \vec\chi , \nabs^{\tG_{\ell}} \vec\eta)
_{\tG_{\ell}} \gamma_{\ell}(\vec\nu) \ds 
\quad \forall\ \vec\chi , \vec\eta \in \underline V ,
\label{eq:ani3d}
\end{equation}
where 
\begin{equation*}
(\nabs^{\tG_\ell}\,\vec\eta , \nabs^{\tG_\ell}\,\vec\chi)_{\tG_\ell} 
:= \sum_{j=1}^{d-1} (\partial_{\vec{t}_{j}^{(\ell)}}\, \vec\eta, 
\partial_{\vec{t}_{j}^{(\ell)}}\, \vec\chi)_{\tG_\ell} 
\end{equation*}
with $\{\vec{t}_{1}^{(\ell)},\ldots,\vec{t}_{d-1}^{(\ell)}\}$ 
being an orthonormal basis with respect to the $\tG_\ell$ inner product for
the tangent space of $\Gamma(t)$; see \cite{ani3d} for further details. 

Assuming, for simplicity, that the Dirichlet data 
$u_D$ is constant, we can establish the
following formal a priori bound. 
Choosing $\phi=u-u_D$ in (\ref{eq:weaka}), 
$\chi=\frac\lambda{a}\,\vec{x}_t \cdot \vec\nu$ in (\ref{eq:weakb}), and 
$\vec\eta=\frac{\alpha\,\lambda}a\,\vec{x}_t$ in (\ref{eq:weakc})  
we obtain, on using the identities
\begin{equation}
\ddt\, \int_{\Omega_+(t)} g \dL
= \int_{\Omega_+(t)} g_t \dL 
- \int_{\Gamma(t)} g\,\mathcal{V} \ds, 
\label{eq:dtvol}
\end{equation}
with $\mathcal{L}^d$ denoting the Lebesgue measure in $\R^d$ 
(see e.g.\ \cite{DeckelnickDE05}) and 
\begin{equation}
\ddt\,|\Gamma(t)|_\gamma = \ddt\,\int_{\Gamma(t)} \gamma(\vec\nu) \ds=
\langle \nabs^\tG\,\vec{x},\nabs^\tG\,\vec{x}_t \rangle_\gamma 
\label{eq:dtarea}
\end{equation}
(see \cite{ani3d}), that
\begin{align}
\label{eq:testD}
& \ddt \Big (\frac\vartheta2\,|u-u_D|^2_{\Omega_+} + 
\frac{\alpha\,\lambda}a\,
|\Gamma(t)|_\gamma
-\lambda\,u_D\,\vol(\Omega_+(t)) \Big ) + \conduct\,(\nabla\,u, \nabla\,u)_+
  \\ & \hspace{5mm} 
+\frac{\lambda\,\rho}{a}\, 
\int_{\Gamma(t)} \frac{\mathcal{V}^2}{\beta(\vec \nu)} \ds
= - \frac\vartheta2\,\int_{\Gamma(t)} \mathcal{V}\,|u-u_D|^2 \ds  +
(f, u - u_D)_+ , \notag 
\end{align}
where $|\cdot|_{\Omega_+}$ denotes the $L^2$-norm on $\Omega_+(t)$. In
particular, the bound (\ref{eq:testD}) for $\vartheta>0$ gives a formal a priori control on
$u$ and $\Gamma(t)$ only if $\mathcal{V} \geq 0$, i.e., when the solid region is
not shrinking.

\setcounter{equation}{0}
\section{Finite-element approximation} \label{sec:3}
Let $0= t_0 < t_1 < \dots < t_{M-1} < t_M = \TTime$ be a
partitioning of $[0,\TTime]$ into possibly variable time steps 
$\tau_m := t_{m+1} -
t_{m}$, $m=0\to M-1$. We set $\tau := \max_{m=0\to M-1}\tau_m$.
First we introduce standard finite-element spaces of piecewise-linear
functions on $\Omega$.

Let $\Omega$ be a polyhedral domain. For $m\geq0$, let $\mathcal{T}^m$ 
be a regular partitioning of $\Omega$ into disjoint open simplices, so that 
$\overline{\Omega}=\cup_{\sigmaO^m\in\mathcal{T}^m}\overline\sigmaO^m$. 
Let $J_\Omega^m$ be the number of elements in $\mathcal{T}^m$, so that
$\mathcal{T}^m=\{ \sigmaO^m_j : j = 1 \to J^m_\Omega\}$.
Associated with $\mathcal{T}^m$ is the finite-element space 
\begin{equation} \label{eq:Sh}
 S^m := \{\chi \in C(\overline{\Omega}) : \chi\!\mid_{\sigmaO^m} \mbox{ is
linear } \forall\ \sigmaO^m \in \mathcal{T}^m\} \subset H^1(\Omega). 
\end{equation}
Let $K_\Omega^m$ be the number of nodes of $\mathcal{T}^m$, and  
let $\{\vec{p}^m_{j}\}_{j=1}^{K_\Omega^m}$ be the coordinates of these nodes.
Let $\{\phi_{j}^m\}_{j=1}^{K_\Omega^m}$ be the standard basis functions 
for $\Sm$.
We introduce $I^m:C(\overline{\Omega})\to S^m$, the interpolation
operator, such that $(I^m \eta)(\vec{p}_k^m)= \eta(\vec{p}_k^m)$ 
for $k=1\to K_\Omega^m$. 
A discrete semi-inner product on $C(\overline{\Omega})$ is then defined by 
$ 
(\eta_1,\eta_2)^h_m := (I^m[\eta_1\,\eta_2],1) ,
$
with the induced semi-norm given by 
$|\eta|_{\Omega,m}  := [\,(\eta,\eta)^h_m\,]^{\frac{1}{2}}$
for $\eta \in C(\overline{\Omega})$.

The test and trial spaces for our finite-element approximation of the bulk
equation (\ref{eq:weaka}) are then defined by
\begin{equation} \label{eq:Sh0}
 \Smhom:= \{\chi \in \Sm : \chi = 0 \ \mbox{ on $\partial\Omega$} \}  
\ \mbox{ and } \
 \SmD:= \{\chi \in \Sm : \chi = I^m u_D \ \mbox{ on $\partial\Omega$} \} 
,
\end{equation}
where in the definition of $\SmD$ we allow for  
$u_D \in H^\frac12(\partial\Omega) \cap C(\partial\Omega)$.
Without loss of generality, let
$\{\phi_{j}^m\}_{j=1}^{K_{\Omega,D}^m}$ be the standard basis functions 
for $\Smhom$.

The parametric finite-element spaces in order to approximate $\vec{x}$ and
$\varkappa_\gamma$ in (\ref{eq:weaka}--c), are defined as follows.
Similarly to \cite{gflows3d}, 
we introduce the following discrete spaces, based on the seminal
paper \cite{Dziuk91}.
Let $\Gamma^{m}\subset\R^d$ be a 
\mbox{$(d-1)$}-dimensional {\em polyhedral surface},
i.e., a union of non-degenerate $(d-1)$-simplices with no hanging vertices
(see \cite[p.~164]{DeckelnickDE05} for $d=3$), approximating the
closed surface $\Gamma(t_m)$, $m=0 \to M$.
In particular, let $\Gamma^m=\bigcup_{j=1}^{J^m_\Gamma} 
\overline\sigma^m_j$,
where $\{\sigma^m_j\}_{j=1}^{J^m_\Gamma}$ is a family of mutually disjoint open 
$(d-1)$-simplices 
with vertices $\{\vec{q}^m_k\}_{k=1}^{K^m_\Gamma}$. 
Then for $m =0 \to M-1$, let
\begin{align*}
\Vh & := \{\vec\chi \in C(\Gamma^m,\R^d):\vec\chi\!\mid_{\sigma^m_j}
\mbox{ is linear}\ \forall\ j=1\to J^m_\Gamma\} \\ & 
=: [\Wh]^d \subset H^1(\Gamma^m,\R^d),
\end{align*}
where $\Wh \subset H^1(\Gamma^m,\R)$ is the space of scalar continuous
piecewise-linear functions on $\Gamma^m$, with 
$\{\chi^m_k\}_{k=1}^{K^m_\Gamma}$ denoting the standard basis of $\Wh$.
For later purposes, we also introduce 
$\pi^m: C(\Gamma^m,\R)\to \Wh$, the standard interpolation operator
at the nodes $\{\vec{q}_k^m\}_{k=1}^{K^m_\Gamma}$,
and similarly $\vec\pi^m: C(\Gamma^m,\R^d)\to \Vh$.
Throughout this paper, we will parameterize the new closed surface 
$\Gamma^{m+1}$ over $\Gamma^m$, with the help of a parameterization
$\vec{X}^{m+1} \in \Vh$, i.e., $\Gamma^{m+1} = \vec{X}^{m+1}(\Gamma^m)$.
Moreover, for $m \geq 0$, we will often identify $\vec{X}^m$ with 
$\vec{\rm id} \in \Vh$, the identity function on $\Gamma^m$. 

For scalar and vector  
functions $v,w\in L^2(\Gamma^m,\R^{(d)})$ 
we introduce the $L^2$ inner product $\langle\cdot,\cdot\rangle_m$ over
the current polyhedral surface $\Gamma^m$  
as follows:
\begin{equation*}  
\langle v, w\rangle_m  := \int_{\Gamma^m} v \cdot w \, \dH.
\end{equation*}
Here and throughout this paper, $\cdot^{(\ast)}$ denotes an
expression with or
without the superscript $\ast$, and similarly for subscripts.
If $v$ and $w$ are piecewise continuous, with possible jumps
across the edges of $\{\sigma_j^m\}_{j=1}^{J^m_\Gamma}$,
we introduce the mass lumped inner product
$\langle\cdot,\cdot\rangle^h_m$ as
\begin{equation}
\langle v, w \rangle^h_m  :=
\tfrac1d \sum_{j=1}^{J^m_\Gamma} |\sigma^m_j|\sum_{k=1}^{d} 
(v \cdot w)((\vec{q}^m_{j_k})^-),
\label{def:ip0}
\end{equation}
where $\{\vec{q}^m_{j_k}\}_{k=1}^{d}$ are the vertices of $\sigma^m_j$,
and where
we define $v((\vec{q}^m_{j_k})^-):=
\underset{\sigma^m_j\ni \vec{p}\to \vec{q}^m_{j_k}}{\lim}\, v(\vec{p})$.
Here $|\sigma^m_j| = \frac{1}{(d-1)!}\,
|(\vec{q}^m_{j_2}-\vec{q}^m_{j_1}) \land \cdots\land
(\vec{q}^m_{j_{d}}-\vec{q}^m_{j_1})|$ is the measure of $\sigma^m_j$,
where $\land$ is the standard wedge product on $\R^d$.
Moreover, we set $|\cdot|_{m(,h)}^2 := \langle \cdot, \cdot \rangle^{(h)}_m$.

Given $\Gamma^m$, we 
let $\Omega^m_+$ denote the exterior of $\Gamma^m$ and let
$\Omega^m_-$ denote the interior of $\Gamma^m$, so that
$\Gamma^m = \partial \Omega^m_- = \overline\Omega^m_- \cap 
\overline\Omega^m_+$. 
In addition, we define the piecewise-constant unit normal 
$\vec{\nu}^m$ to $\Gamma^m$ by
\begin{equation*}
\vec{\nu}^m_j := \vec{\nu}^m\!\mid_{\sigma^m_j} :=
\frac{(\vec{q}^m_{j_2}-\vec{q}^m_{j_1}) \land \cdots \land
(\vec{q}^m_{j_d}-\vec{q}^m_{j_1})}{|(\vec{q}^m_{j_2}-\vec{q}^m_{j_1}) \land
\cdots \land (\vec{q}^m_{j_d}-\vec{q}^m_{j_1})|},
\end{equation*}
where we have assumed that the vertices $\{\vec{q}^m_{j_k}\}_{k=1}^d$
of $\sigma_j^m$ are ordered such that $\vec\nu^m:\Gamma^m\to\R^d$ 
induces an orientation on $\Gamma^m$, and such that $\vec\nu^m$ points into
$\Omega^m_+$.

Before we can introduce our approximation to (\ref{eq:weaka}--c), we have to
introduce the notion of a vertex normal on $\Gamma^m$. We will combine this
definition with a natural assumption that is needed in order to show existence
and uniqueness, where applicable, for the introduced finite-element 
approximation.
\begin{itemize}
\item[$(\mathcal{A})$]
We assume for $m=0\to M-1$ that $|\sigma^m_j| > 0$ for all $j=1\to J^m_\Gamma$,
and that $\Gamma^m \subset \overline\Omega$.
For $k= 1 \rightarrow K^m_\Gamma$, let
$\Xi_k^m:= \{\sigma^m_j : \vec{q}^m_k \in \overline \sigma^m_j\}$
and set
\begin{equation*}
\Lambda_k^m := \cup_{\sigma^m_j \in \Xi_k^m} \overline \sigma^m_j
 \qquad \mbox{and} \qquad
\vec\omega^m_k := \frac{1}{|\Lambda^m_k|}
\sum_{\sigma^m_j\in \Xi_k^m} |\sigma^m_j|\;\vec{\nu}^m_j.  
\end{equation*}
Then we further assume that $\vec\omega^m_k\not=\vec0$, $k=1\to K^m_\Gamma$, 
and that $\dim \spa\{\vec{\omega}^m_k\}_{k=1}^{K^m_\Gamma} = d$,
$m=0\to M-1$.
\end{itemize}

Given the above definitions, we also introduce the piecewise-linear 
vertex normal function
\begin{equation*}  
\vec\omega^m := \sum_{k=1}^{K^m_\Gamma} \chi^m_k\,\vec\omega^m_k \in \Vh ,
\end{equation*}
and note that 
\begin{equation} 
\langle \vec{v}, w\,\vec\nu^m\rangle_m^h =
\langle \vec{v}, w\,\vec\omega^m\rangle_m^h \qquad \forall\ \vec{v} \in 
\Vh,\ w \in \Wh .
\label{eq:NI}
\end{equation}

Following \cite{BarrettE82}, we consider the following unfitted finite-element
approximation of (\ref{eq:weaka}--c).
First we need to introduce the appropriate discrete trial and test function
spaces. To this end, let $\Omega^{m,h}_+$ be an approximation to
$\Omega^m_+$ and set $\Omega^{m,h}_- := \Domain \setminus
\overline \Omega^{m,h}_+$. We stress that $\Omega^{m,h}_+$ need not necessarily
be a union of elements from $\mathcal{T}^m$. Moreover, it need not hold that
$\Gamma^m \subset \overline \Omega^{m,h}_+$.
Then we define the finite-element spaces
\begin{align}
\Sml & := \{ \chi \in \Sm : \chi(\vec p^m_j) = 0 \text{ if }
\supp \phi^m_j \subset \overline\Omega^{m,h}_- \}, \nonumber \\ 
\quad \Smhoml & := \Smhom \cap \Sml,
\quad \SmDl := \SmD \cap \Sml.
\label{eq:Sml}
\end{align}

Our finite-element approximation is then given as follows.
Let $\Gamma^0$, an approximation to $\Gamma(0)$, 
and, if $\vartheta>0$, $U^0\in S^0_D$ be given.
For $m=0\to M-1$, find $U^{m+1} \in \SmDl$, $\vec{X}^{m+1}\in\Vh$, and 
$\kappa^{m+1}_\gamma \in \Wh$ such that
for all $\varphi \in \Smhoml$, $\chi \in \Wh$, and  $\vec\eta \in \Vh$,
\begin{subequations}
\begin{align}
&
\vartheta \Big (\frac{U^{m+1}-U^m}{\tau_m}, \varphi  \Big )^h_{m,+} +
 \conduct\,(\nabla\,U^{m+1}, \nabla\,\varphi)_{m,+} \nonumber \\ & \hspace{2cm}
 - \lambda\,
 \Big \langle \pi^m \Big [
\frac{\vec{X}^{m+1}-\vec{X}^m}{\tau_m}  \cdot \vec\omega^m  \Big ], \varphi
 \Big \rangle_m   
= (f^{m+1}, \varphi)^h_{m,+}
, \label{eq:uHGa}\\
& \rho \Big \langle 
[\beta(\vec\nu^m)]^{-1}\,\frac{\vec{X}^{m+1}-\vec{X}^m}{\tau_m},
\chi\,\vec\omega^m  \Big \rangle_m^h - \alpha\, \langle \kappa^{m+1}_\gamma, 
\chi\rangle_m^h
+ a \,\langle U^{m+1}, \chi \rangle_m =0  
, \label{eq:uHGb} \\
& \langle \kappa^{m+1}_\gamma\,\vec\omega^m, \vec\eta \rangle_m^h + 
\langle \nabs^\tG\,\vec{X}^{m+1}, \nabs^\tG\,\vec\eta \rangle_{\gamma,m}
= 0  
, \label{eq:uHGc} 
\end{align}
\end{subequations}
and set $\Gamma^{m+1} = \vec{X}^{m+1}(\Gamma^m)$.
Here we define
\begin{subequations}
\begin{align} \label{eq:intml}
(\nabla\,\chi , \nabla\,\varphi)_{m,+} & :=
\int_{\Omega^{m,h}_+} \nabla\,\chi  \cdot  \nabla\,\varphi \dL \nonumber \\ &
= \sum_{j=1}^{J^m_\Omega} \tfrac{|o^m_j \cap \Omega^{m,h}_+|}{|o^m_j|}\, 
\int_{o^m_j} \nabla\,\chi  \cdot  \nabla\,\varphi \dL
\quad \forall\ \chi, \varphi \in \Sm,
\end{align}
and, in a similar fashion, 
\begin{equation} \label{eq:inthml}
(\chi, \varphi)_{m,+}^h := \sum_{j=1}^{J^m_\Omega}
\tfrac{|o^m_j \cap \Omega^{m,h}_+|}{|o^m_j|}
\, \int_{o^m_j} I^m [\chi \,\varphi ] \dL
\quad \forall\ \chi, \varphi \in \Sm.
\end{equation}
\end{subequations}
For later use we note that it follows immediately from (\ref{eq:Sml}) 
and (\ref{eq:intml}) that
\begin{equation} \label{eq:Omegah}
(\nabla \varphi, \nabla \varphi)_{m,+} > 0 \qquad \forall\ \varphi \in 
\Smhoml \setminus \{0\}.
\end{equation}
In addition, we set $f^{m+1}(\cdot) := f(\cdot,t_{m+1})$,
where we assume for convenience that $f$ is defined on $\Omega$.
In addition, for $\vartheta>0$, $U^0\in S^0_D$ is given by
$U^{0} = I^0[u_0]$, where $u_0 \in C(\overline\Omega)$ is an appropriately
defined extension to $\overline\Omega$ of the given initial data from  
(\ref{eq:1e}).

Moreover, 
$\langle \nabs^\tG\, \cdot, \nabs^\tG\, \cdot \rangle_{\gamma,m}$ 
in (\ref{eq:uHGc}) is the discrete inner product defined by
\begin{align}
\langle \nabs^\tG\, \vec\chi, \nabs^\tG\, \vec\eta \rangle_{\gamma,m} & :=
 \sum_{\ell=1}^{L} \int_{\Gamma^m} 
(\nabs^{\tG_{\ell}}\,\vec{\chi},\nabs^{\tG_{\ell}}\,
\vec\eta)_{\tG_{\ell}} \,\gamma_{\ell}(\vec\nu^m) \ds \nonumber \\ & 
\hspace{6cm} \forall\ \vec\chi,\,\vec\eta \in \Vh.
\label{eq:dip}
\end{align}
Note that (\ref{eq:dip}) is a natural discrete analogue of (\ref{eq:ani3d}); 
see \cite{ani3d} for details. This choice of discretization will lead to 
unconditionally
stable approximations in certain situations; see Theorem~\ref{thm:stab}, below.

\begin{rem} \label{rem:theta}
\rm
We note that for $\vartheta>0$ the approximation {\rm (\ref{eq:uHGa}--c)} is
only meaningful when the discrete solid region does not shrink. To see this,
assume that the discrete solid region shrinks at some time step, so that
$\Smhoml \setminus S^{m-1}_{0,+} \not= \emptyset$ for some $m > 1$. Assume for
simplicity that $\mathcal{T}^m = \mathcal{T}^{m-1}$, so that $S^m = S^{m-1}$.
Now let $\phi^m_j \in \Smhoml \setminus S^{m-1}_{0,+}$, 
which means that the node $\vec p^m_j$ is an active node in $\Smhoml$, 
but was inactive in 
$S^{m-1}_{0,+}$; i.e.,  $U^m(\vec p^m_j) = 0$ since $U^m \in S^{m-1}_{D,+}$.
Here the value $U^m(\vec p^m_j) = 0$ is arbitrary, and has no physical meaning.
Crucially, however, this value will play a role on the discrete level, since
choosing $\varphi = \phi^m_j$ in {\rm (\ref{eq:uHGa})}, and noting that
$(\phi^m_j, \phi^m_j)^h_{m,+} >0$, means that $U^{m+1}$ will depend on
$U^m(\vec p^m_j)$.

In practice this technical restriction is not very relevant, since in physically
meaningful simulations the solid region typically 
never shrinks. Here we also recall that
the formal energy bound {\rm (\ref{eq:testD})}, for $\vartheta >0$, is also only meaningful,
when the solid region is not shrinking.
\end{rem}

\begin{thm} \label{thm:stab}
Let the assumption $( \mathcal{A} )$ hold. Then
there exists a unique solution  
$(U^{m+1}, \vec{X}^{m+1}, \kappa^{m+1}_\gamma) 
\in \SmDl\times \Vh \times \Wh$ to {\rm (\ref{eq:uHGa}--c)}. 
Let $u_D \in \R$ and define
\begin{equation}
\mathcal{E}^m(U^m,\vec{X}^m) := 
\frac\vartheta2\,|U^{m} - u_D|_{\Omega,m,+}^2 +
\frac{\alpha\,\lambda}a\,|\Gamma^m|_\gamma,
\label{eq:E}
\end{equation}
where $|\cdot|_{\Omega,m,+} := [(\cdot, \cdot)^h_{m,+}]^\frac12$.
Then the solution to {\rm (\ref{eq:uHGa}--c)} satisfies
\begin{align}
& \mathcal{E}^m(U^{m+1}, \vec{X}^{m+1})
+\lambda\,u_D\,\langle \vec{X}^{m+1}-\vec{X}^m, \vec\omega^m\rangle_m^h
+ \frac\vartheta2\,|U^{m+1} - U^m|_{\Omega,m,+}^2
\nonumber \\ & \hspace{0.2cm}
+ \tau_m\,\conduct\,(\nabla\,U^{m+1}, \nabla\,U^{m+1})_{m,+}
+\tau_m\,\frac{\lambda\,\rho}{a} \Big |[\beta(\vec\nu^m)]^{-\frac12}\,
\frac{\vec{X}^{m+1}-\vec{X}^m}{\tau_m}  \cdot \vec\omega^m \Big |_{m,h}^2
\nonumber \\ & \hspace{2cm}
\leq \mathcal{E}^m(U^m,\vec{X}^m) 
+ \tau_m\,(f^{m+1}, U^{m+1} - u_D)^h_{m,+}.
\label{eq:stab}
\end{align}
\end{thm}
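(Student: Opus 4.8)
Since \mbox{(\ref{eq:uHGa}--c)} is a finite-dimensional \emph{square} linear system for $(U^{m+1},\vec X^{m+1},\kappa^{m+1}_\gamma)$ --- the test spaces $\Smhoml$, $\Wh$, $\Vh$ carry exactly as many degrees of freedom as $U^{m+1}$, $\kappa^{m+1}_\gamma$, $\vec X^{m+1}$, respectively --- existence is equivalent to uniqueness, and I would establish both at once by showing that the associated homogeneous system has only the trivial solution. Here ``homogeneous'' means setting the data $U^m$, $\vec X^m$, $f^{m+1}$ and $u_D$ to zero while keeping $\Gamma^m$, and hence $\vec\nu^m$, $\vec\omega^m$ and all the bilinear forms, fixed; call such a solution $(U,\vec X,\kappa)\in\Smhoml\times\Vh\times\Wh$.

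I would then test (\ref{eq:uHGa}) with $\varphi=U$ (admissible because now $u_D=0$), (\ref{eq:uHGb}) with $\chi=\tfrac\lambda a\,\pi^m[\tau_m^{-1}\,\vec X\cdot\vec\omega^m]$, and (\ref{eq:uHGc}) with $\vec\eta=\tfrac{\alpha\lambda}{a\tau_m}\,\vec X$, and add the three. The coupling term linking (\ref{eq:uHGa}) and (\ref{eq:uHGb}) cancels, and the two curvature terms linking (\ref{eq:uHGb}) and (\ref{eq:uHGc}) cancel on using the mass-lumping identity $\langle\kappa,\pi^m[\tau_m^{-1}\vec X\cdot\vec\omega^m]\rangle^h_m=\tau_m^{-1}\langle\kappa\,\vec\omega^m,\vec X\rangle^h_m$; what survives is a sum of manifestly nonnegative contributions equal to zero, namely $\conduct\,(\nabla U,\nabla U)_{m,+}$, a kinetic term proportional to $|[\beta(\vec\nu^m)]^{-\frac12}\tau_m^{-1}\vec X\cdot\vec\omega^m|^2_{m,h}$, and $\langle\nabs^\tG\vec X,\nabs^\tG\vec X\rangle_{\gamma,m}\ge0$. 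By (\ref{eq:Omegah}) the first forces $U=0$; vanishing of the anisotropic form forces $\vec X$ to be piecewise, hence globally, constant on $\Gamma^m$; the kinetic term then gives $\vec X\cdot\vec\omega^m_k=0$ at every vertex, and since $\dim\spa\{\vec\omega^m_k\}=d$ by $(\mathcal A)$ this yields $\vec X=\vec 0$. With $\vec X=\vec 0$, (\ref{eq:uHGc}) collapses to $\langle\kappa\,\vec\omega^m,\vec\eta\rangle^h_m=0$ for all $\vec\eta$; choosing $\vec\eta=\vec\pi^m[\kappa\,\vec\omega^m]$ and using $\vec\omega^m_k\ne\vec 0$ from $(\mathcal A)$ gives $\kappa=0$.

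For (\ref{eq:stab}) I would run the same three test functions on the inhomogeneous system, now with $\varphi=U^{m+1}-u_D$, $\chi=\tfrac\lambda a\,\pi^m[\tau_m^{-1}(\vec X^{m+1}-\vec X^m)\cdot\vec\omega^m]$ and $\vec\eta=\tfrac{\alpha\lambda}{a\tau_m}(\vec X^{m+1}-\vec X^m)$, and multiply the resulting identity by $\tau_m$. The discrete time term yields $\tfrac\vartheta2(|U^{m+1}-u_D|^2_{\Omega,m,+}-|U^m-u_D|^2_{\Omega,m,+}+|U^{m+1}-U^m|^2_{\Omega,m,+})$ through $2p(p-q)=p^2-q^2+(p-q)^2$; the (\ref{eq:uHGa})--(\ref{eq:uHGb}) coupling cancels up to the constant piece $\lambda u_D\langle\vec X^{m+1}-\vec X^m,\vec\omega^m\rangle^h_m$, where I use that a piecewise-linear function integrates exactly over $\Gamma^m$, so that $\langle\pi^m[(\vec X^{m+1}-\vec X^m)\cdot\vec\omega^m],1\rangle_m=\langle\vec X^{m+1}-\vec X^m,\vec\omega^m\rangle^h_m$, while the curvature terms cancel as before. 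Recalling (\ref{eq:E}), this reproduces (\ref{eq:stab}) as an \emph{equality}, except that the anisotropic contribution still appears in the form $\tfrac{\alpha\lambda}a\langle\nabs^\tG\vec X^{m+1},\nabs^\tG(\vec X^{m+1}-\vec X^m)\rangle_{\gamma,m}$. The single inequality that converts this into the stated bound --- and the mechanism behind unconditional stability --- is the time-discrete convexity estimate $\langle\nabs^\tG\vec X^{m+1},\nabs^\tG(\vec X^{m+1}-\vec X^m)\rangle_{\gamma,m}\ge|\Gamma^{m+1}|_\gamma-|\Gamma^m|_\gamma$, the backward-difference analogue of (\ref{eq:dtarea}); it holds precisely because every $\gamma_\ell$ is the quadratic form attached to $\tG_\ell$, and I would quote it from \cite{ani3d}.

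The step I expect to be the real obstacle, and the place where the one-sided problem departs from the two-sided analysis of \cite{dendritic}, is the admissibility of $\varphi=U^{m+1}-u_D$ in (\ref{eq:uHGa}). Since $U^{m+1}\in\SmDl$ vanishes at every inactive node whereas $u_D\ne0$, the shifted function lies in $\Smhom$ but \emph{not} in $\Sml$, so it is not a legitimate test function. I would repair this by testing instead with its active part $\tilde\varphi\in\Smhoml$ --- equal to $U^{m+1}-u_D$ at active nodes and zero at inactive ones --- and verifying that no quantity in the computation changes value: the bulk forms $(\cdot,\cdot)^h_{m,+}$, $(\nabla\cdot,\nabla\cdot)_{m,+}$ and $(f^{m+1},\cdot)^h_{m,+}$ weight each element $o^m_j$ by $|o^m_j\cap\Omega^{m,h}_+|/|o^m_j|$, which vanishes on the support of every inactive basis function, so they are insensitive to the inactive nodal values; and the interface term is unaffected as soon as $\Gamma^m$ meets no element contained in $\overline\Omega^{m,h}_-$, so that its trace never pairs against an inactive value. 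Establishing this compatibility between $\Gamma^m$ and the construction of $\Omega^{m,h}_+$ (Section~\ref{sec:4}) is the crux; once it is secured, the energy identity holds verbatim and, combined with the convexity estimate and (\ref{eq:Omegah}), delivers (\ref{eq:stab}).
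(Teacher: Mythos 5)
Your argument is correct and follows the paper's proof essentially verbatim: the same homogeneous-system test functions for uniqueness (with the same chain $U=0$ from (\ref{eq:Omegah}), then $\vec X$ constant, then $\vec X=\vec 0$ from the span condition in $(\mathcal A)$, then $\kappa_\gamma=0$), the same choices $\chi=\tfrac{\lambda}{a}\,\pi^m[(\vec X^{m+1}-\vec X^m)\cdot\vec\omega^m]$ and $\vec\eta=\tfrac{\alpha\lambda}{a}\,(\vec X^{m+1}-\vec X^m)$ for the stability identity, and the same convexity estimate $\langle\nabs^\tG\vec X^{m+1},\nabs^\tG(\vec X^{m+1}-\vec X^m)\rangle_{\gamma,m}\ge|\Gamma^{m+1}|_\gamma-|\Gamma^m|_\gamma$ quoted from \cite{triplejANI,ani3d}. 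The admissibility issue you single out as the crux is handled in the paper exactly as you propose: the test function there is $\varphi=U^{m+1}-u_D\,I^m\,\charfcn{\overline\Omega^{m,h}_+}\in\Smhoml$, which is precisely your ``active part'' of $U^{m+1}-u_D$ and which the weighted inner products (\ref{eq:intml},b) and the interface pairing cannot distinguish from $U^{m+1}-u_D$ itself.
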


\begin{proof}[\bf Proof.]
As the system (\ref{eq:uHGa}--c) is linear, existence follows from uniqueness.
In order to establish the latter, we consider the following system:
Find $(U, \vec{X}, \kappa_\gamma) \in \Smhoml\times \Vh \times \Wh$ such that
\begin{subequations}
\begin{align}
& \vartheta\left(U, \varphi \right)^h_{m,+} + 
\tau_m\,\conduct\,(\nabla\, U, \nabla\,\varphi)_{m,+} - \lambda 
\left\langle \pi^m  [\vec{X}  \cdot \vec\omega^m ], \varphi 
\right\rangle_m = 0 
\quad \forall\ \varphi \in \Smhoml, \label{eq:proofa}\\
& \frac\rho{\tau_m}\left\langle 
[\beta(\vec\nu^m)]^{-1}\,\vec{X},
\chi\,\vec\omega^m \right\rangle_m^h - \alpha \langle \kappa_\gamma, 
\chi\rangle_m^h
+ a \langle U, \chi \rangle_m =0 \quad \forall\ \chi \in \Wh,
\label{eq:proofb} \\
& \langle \kappa_\gamma\,\vec\omega^m, \vec\eta \rangle_m^h + 
\langle \nabs^\tG\,\vec{X}, \nabs^\tG\,\vec\eta \rangle_{\gamma,m}
= 0 \quad \forall\ \vec\eta \in \Vh.
\label{eq:proofc} 
\end{align}
\end{subequations}
Choosing $\varphi=U$ in (\ref{eq:proofa}), 
$\chi = \frac\lambda{a}\,\pi^m[\vec{X} \cdot \vec\omega^m]$ in (\ref{eq:proofb}),
and 
$\vec\eta=\frac{\alpha\,\lambda}a\,\vec{X}$ in (\ref{eq:proofc})
yields, on noting (\ref{eq:NI}), that
\begin{align}
& \vartheta\,(U, U)^h_{m,+} + 
\tau_m\,\conduct\,(\nabla\, U, \nabla\, U)_{m,+}
+\frac{\lambda\,\rho}{\tau_m\,a}\,\left|[\beta(\vec\nu^m)]^{-\frac12}\,
\vec{X} \cdot \vec\omega^m\right|_{m,h}^2 \nonumber \\ & \qquad\qquad\qquad
+ \frac{\alpha\,\lambda}a\, \langle \nabs^\tG\,\vec{X}, \nabs^\tG\,\vec{X} 
\rangle_{\gamma,m} 
=0. \label{eq:proof2}
\end{align}
It immediately follows from (\ref{eq:proof2}) 
and (\ref{eq:Omegah}) that $U = 0 \in \Smhoml$.
In addition, on recalling that $\alpha,\lambda > 0$, it holds that
$\vec{X}\equiv \vec{X}_c \in \R^d$. Together with (\ref{eq:proof2}), 
for $U=0$, and the assumption $(\mathcal{A})$ this
immediately yields that $\vec{X} \equiv \vec0$, while
(\ref{eq:proofb}) with $\chi=\kappa_\gamma$  
implies that $\kappa_\gamma \equiv 0$; compare Theorem 3.1 in \cite{ani3d}.
Hence there exists a unique solution
$(U^{m+1}, \vec{X}^{m+1}, \kappa^{m+1}_\gamma) \in \SmDl\times \Vh \times \Wh$.

It remains to establish the bound (\ref{eq:stab}). 
Let $\charfcn{\mathcal{A}}$ denote the characteristic function of a set
$\mathcal{A}$.
Choosing $\varphi=U^{m+1}-u_D\,I^m\,\charfcn{\overline\Omega^{m,h}_+}$ 
in (\ref{eq:uHGa}), 
$\chi = \frac\lambda{a}\,\pi^m[({\vec{X}^{m+1}-\vec{X}^m}) \cdot \vec\omega^m]$ 
in (\ref{eq:uHGb}), and
$\vec\eta=\frac{\alpha\,\lambda}a\,({\vec{X}^{m+1}-\vec{X}^m})$ 
in (\ref{eq:uHGc}) yields that
\begin{align*}
& \vartheta\,(U^{m+1}-U^m, U^{m+1} - u_D)^h_{m,+} + 
\tau_m\,\conduct\,(\nabla\, U^{m+1}, \nabla\, U^{m+1})_{m,+}
\nonumber \\ & \hspace{1cm}
+ \frac{\alpha\,\lambda}a\,
 \langle \nabs^\tG\,\vec{X}^{m+1}, \nabs^\tG\,(\vec{X}^{m+1} - \vec{X}^m) 
\rangle_{\gamma,m}
\nonumber \\ & \hspace{1cm}
+\tau_m\,\frac{\lambda\,\rho}{a} \Big |[\beta(\vec\nu^m)]^{-\frac12}\,
\frac{\vec{X}^{m+1}-\vec{X}^m}{\tau_m}  \cdot \vec\omega^m \Big |_{m,h}^2
\nonumber \\ & \hspace{2cm}
= -\lambda\,u_D\,\langle \vec{X}^{m+1}-\vec{X}^m, \vec\omega^m\rangle_m^h
+ \tau_m\,(f^{m+1}, U^{m+1} - u_D)^h_{m,+},
\end{align*}
and hence (\ref{eq:stab}) follows immediately, where
we have used the result that
\begin{equation*}
\langle \nabs^\tG\,\vec{X}^{m+1}, \nabs^\tG\,(\vec{X}^{m+1} - \vec{X}^m) 
\rangle_{\gamma,m}
\geq |\Gamma^{m+1}|_\gamma - |\Gamma^{m}|_\gamma 
;  
\end{equation*}
see e.g.\ \cite{triplejANI}  
and \cite{ani3d}  
for the proofs for $d=2$ and $d=3$, respectively.
\end{proof}

The above theorem allows us to prove unconditional stability for our scheme
under certain conditions.

\begin{thm} \label{thm:stabstab}
Let the assumptions of Theorem~$\ref{thm:stab}$ hold with $u_D=0$.
In addition, assume that either $\vartheta=0$ or that
$U^m \in \Smhom$ 
and $\Omega^{m,h}_+ \subset \Omega^{m-1,h}_+$
for $m=1\to M-1$. Then it holds that
\begin{align}
&\mathcal{E}^m(U^{m+1}, \vec{X}^{m+1}) 
\nonumber \\ & \hspace{5mm}
+ \sum_{k=0}^m \tau_k\,\conduct \Big [ (\nabla\,U^{k+1}, \nabla\,
U^{k+1})_{k,+}
+ \frac{\lambda\,\rho}{a} \Big |[\beta(\vec\nu^k)]^{-\frac12}\,
\frac{\vec{X}^{k+1}-\vec{X}^k}{\tau_k}  \cdot \vec\omega^k \Big |_{k,h}^2 \Big ]
\nonumber \\ & \hspace{4cm}
\leq \mathcal{E}^0(U^0, \vec{X}^0)
+ \sum_{k=0}^m \tau_k\,(f^{k+1}, U^{k+1})^h_{k,+}
\label{eq:stabstab}
\end{align}
for $m=0\to M-1$.
\end{thm}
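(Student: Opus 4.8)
Since Theorem~\ref{thm:stab} already furnishes the iterates and the one-step bound, only the accumulation in time needs to be carried out. The plan is to derive (\ref{eq:stabstab}) by summing the one-step estimate (\ref{eq:stab}) over $k=0\to m$ and telescoping. Setting $u_D=0$ removes the indefinite term $\lambda\,u_D\,\langle\vec{X}^{k+1}-\vec{X}^k,\vec\omega^k\rangle^h_k$ from (\ref{eq:stab}), and discarding the nonnegative contribution $\frac\vartheta2\,|U^{k+1}-U^k|^2_{\Omega,k,+}$ on the left leaves, for each $k$,
\begin{equation*}
\mathcal{E}^k(U^{k+1},\vec{X}^{k+1}) + \mathcal{D}_k \leq \mathcal{E}^k(U^k,\vec{X}^k) + \tau_k\,(f^{k+1},U^{k+1})^h_{k,+},
\end{equation*}
where $\mathcal{D}_k\geq0$ abbreviates the conduction and kinetic dissipation terms of (\ref{eq:stab}). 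Summing, and relabelling the left-hand energy sum through $j=k+1$, reproduces the dissipation sum on the left of (\ref{eq:stabstab}) together with the two boundary energies $\mathcal{E}^m(U^{m+1},\vec{X}^{m+1})$ and $\mathcal{E}^0(U^0,\vec{X}^0)$, but it also generates a residual $\sum_{j=1}^m\big[\mathcal{E}^j(U^j,\vec{X}^j)-\mathcal{E}^{j-1}(U^j,\vec{X}^j)\big]$.

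The heart of the argument is that this residual is harmless. It arises because $\mathcal{E}^j$ measures the bulk energy of the iterate $U^j$ over the configuration $\Omega^{j,h}_+$, whereas at the previous step the same $U^j$ was measured over the larger $\Omega^{j-1,h}_+$. I would therefore prove the monotonicity $\mathcal{E}^j(U^j,\vec{X}^j)\leq\mathcal{E}^{j-1}(U^j,\vec{X}^j)$ for $j=1\to m$. The anisotropic-area parts coincide, both being $\frac{\alpha\,\lambda}a\,|\Gamma^j|_\gamma$, so for $\vartheta=0$ equality holds trivially and there is nothing to do. For $\vartheta>0$ it remains to show $|U^j|^2_{\Omega,j,+}\leq|U^j|^2_{\Omega,j-1,+}$ (recall $u_D=0$). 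Writing both forms out via (\ref{eq:inthml}), and using that the bulk triangulation is unchanged ($I^j=I^{j-1}$, cf.\ Remark~\ref{rem:theta}), each simplex $o$ contributes the common nonnegative factor $\int_o I^j[(U^j)^2]\dL$ — nonnegative because it is the integral of the nodal interpolant of a nonnegative function — weighted by $\frac{|o\cap\Omega^{j,h}_+|}{|o|}$ and $\frac{|o\cap\Omega^{j-1,h}_+|}{|o|}$ respectively; since $\Omega^{j,h}_+\subset\Omega^{j-1,h}_+$ the former weight is the smaller, and a termwise comparison yields the claim.

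Substituting this monotonicity into the summed estimate turns every residual summand nonpositive, so the residual is bounded above by $0$ and exactly (\ref{eq:stabstab}) remains. The sole genuine obstacle is this mismatch between energies measured on consecutive bulk configurations; overcoming it requires the two hypotheses used in tandem, namely the nonincreasing region $\Omega^{j,h}_+\subset\Omega^{j-1,h}_+$ and the nonnegativity of the mass-lumped quadratic form, whereas the role of $U^j\in\Smhom$ is to keep $U^j$ admissible on the unchanged triangulation so that the termwise comparison above is legitimate.
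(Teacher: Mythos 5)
Your proposal is correct and follows essentially the same route as the paper: sum the one-step bound (\ref{eq:stab}) with $u_D=0$ and absorb the telescoping mismatch via the monotonicity $\mathcal{E}^{j}(U^{j},\vec{X}^{j})\leq\mathcal{E}^{j-1}(U^{j},\vec{X}^{j})$, which the paper likewise deduces from $\Omega^{j,h}_+\subset\Omega^{j-1,h}_+$ together with the nonnegativity of the interpolant $I^{j}[(U^{j})^{2}]$ and the identity $I^{j}[(U^{j})^{2}]=I^{j-1}[(U^{j})^{2}]$ guaranteed by $U^{j}\in\Smhom$. The only cosmetic difference is that you spell out the termwise comparison of the weighted element sums in (\ref{eq:inthml}), which the paper leaves implicit.
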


\noindent
{\bf Proof.}
The result immediately follows from (\ref{eq:stab}) on noting that, if
$\vartheta>0$, it follows from our assumptions that
$\mathcal{E}^m(U^{m}, \vec{X}^{m}) \leq \mathcal{E}^{m-1}(U^{m}, \vec{X}^{m})$
for $m=1\to M-1$, since then
\begin{align*} 
\int_{\Omega^{m,h}_+} I^m [ (U^m)^2 ] \dL    
&
\leq
\int_{\Omega^{m-1,h}_+} I^m [ (U^m)^2 ] \dL   
\\
&
=
\int_{\Omega^{m-1,h}_+} I^{m-1} [ (U^m)^2 ] \dL  .
\tag*{\qed}
\end{align*} 

\begin{rem}\label{rem:stab}
\rm
{\rm Theorem~\ref{thm:stabstab}} establishes the unconditional stability of our 
scheme {\rm (\ref{eq:uHGa}--c)} 
under certain conditions. Of course, if $u_D \not= 0$, 
analogous weaker stability results based on {\rm (\ref{eq:stab})} can 
be derived.
We note that the condition $U^m\in \SmD$ is trivially satisfied if 
$S^{m-1}_{D} \subset \SmD$, e.g.,  when mesh refinement routines without
coarsening are employed. 
The condition $\Omega^{m,h}_+ \subset \Omega^{m-1,h}_+$, on the other hand,
is ensured whenever the discrete solid region is
not shrinking. 
This is in line with the corresponding continuous 
energy law {\rm (\ref{eq:testD})}.
Note also that the condition $U^m\in \SmDl$ would be too strong, as in
physically meaningful computations the solid region grows, and so the
condition would enforce that $U^m = 0$ at vertices which are now in the
solid region, but were degrees of freedom in $S^{m-1}_{D,+}$.
In the simpler case that $\vartheta=0$, the stability
bound {\rm (\ref{eq:stab})} 
is independent of $U^m$, and so
here the stability bound {\rm (\ref{eq:stabstab})} holds 
for arbitrary choices of bulk meshes $\mathcal{T}^m$.
\end{rem}

\begin{rem} \label{rem:twosided}
\rm
With the techniques introduced in this paper, it is a simple matter to extend
the finite-element approximation introduced in \cite{dendritic} for the
two-sided Stefan problem with constant heat conductivity 
$\conduct = \conduct_s = \conduct_l$ to the case 
$\conduct_s - \conduct_l \not= 0$, where we have adopted the notation 
from \cite[(2.1a--e)]{dendritic}. Here the subscripts $s$ and $l$ refer to the
solid and liquid phase, respectively.

Our finite-element approximation for this problem is then given as follows.
Let $\Gamma^0$ be given.
For $m=0\to M-1$, find $U^{m+1} \in \SmD$, $\vec{X}^{m+1}\in\Vh$, and 
$\kappa^{m+1}_\gamma \in \Wh$ such that
for all $\varphi \in \Smhom$, $\chi \in \Wh$, and $\vec\eta \in \Vh$,
\begin{subequations}
\begin{align}
& \vartheta \Big (\frac{U^{m+1}-U^m}{\tau_m}, \varphi  \Big )^h_{m} +
\sum_{i\in\{l,s\}}  \Big [
 \conduct_i\,(\nabla\,U^{m+1}, \nabla\,\varphi)_{m,i} 
- (f^{m+1}_i, \varphi )^h_{m,i}
 \Big ]
\nonumber \\ & \hspace{4cm}
- \lambda\,
 \Big \langle \pi^m \Big [
\frac{\vec{X}^{m+1}-\vec{X}^m}{\tau_m}  \cdot \vec\omega^m  \Big ], \varphi
 \Big \rangle_m = 0  
, \label{eq:2HGa}\\
& \rho \Big \langle 
[\beta(\vec\nu^m)]^{-1}\,\frac{\vec{X}^{m+1}-\vec{X}^m}{\tau_m},
\chi\,\vec\omega^m  \Big \rangle_m^h - \alpha\, \langle \kappa^{m+1}_\gamma, 
\chi\rangle_m^h
+ a \,\langle U^{m+1}, \chi \rangle_m =0  
, \label{eq:2HGb} \\
& \langle \kappa^{m+1}_\gamma\,\vec\omega^m, \vec\eta \rangle_m^h + 
\langle \nabs^\tG\,\vec{X}^{m+1}, \nabs^\tG\,\vec\eta \rangle_{\gamma,m}
= 0  
, \label{eq:2HGc} 
\end{align}
\end{subequations}
and set $\Gamma^{m+1} = \vec{X}^{m+1}(\Gamma^m)$.
Here $(\nabla\,\chi , \nabla\,\varphi)_{m,i}$ and
$(\chi, \varphi)_{m,i}^h$, for $i \in \{s,l\}$ and for $\chi, \varphi \in \Sm$,
are defined analogously to (\ref{eq:intml},b),
where $\Omega^{m,h}_l := \Omega^{m,h}_+$ and 
$\Omega^{m,h}_s := \Omega^{m,h}_-$ 
represent approximations to the ``liquid'' and ``solid'' phases in this
two-sided Stefan
problem.
\end{rem}

\setcounter{equation}{0}
\section{Solution of the discrete system}  \label{sec:4}
Introducing the obvious abuse of notation, the linear system (\ref{eq:uHGa}--c) 
can be formulated as follows: Find $(U^{m+1},\kappa^{m+1}_\gamma,\delta\vec{X}^{m+1})$ 
such that 
\begin{equation}
\begin{pmatrix}
\frac1{\tau_m}\,M_\Omega + A_\Omega & 0 &
-\frac\lambda{\tau_m}\,\Nbulk \\
-a\,\Mbulk & \alpha\,M_\Gamma &
- \frac\rho{\tau_m}\,[\vec{N}_\Gamma^{(\beta)}]^T \\
0 & \vec{N}_\Gamma & \vec{A}_\Gamma 
\end{pmatrix}  
\begin{pmatrix} U^{m+1} \\  \kappa^{m+1}_\gamma \\ 
\delta\vec{X}^{m+1} \end{pmatrix}
=
\begin{pmatrix} \frac1{\tau_m}\,M_\Omega\,U^m + g^m \\  
0 \\ -\vec{A}_\Gamma\,\vec{X}^{m} \end{pmatrix} ,
\label{eq:lin}
\end{equation}
where $(U^{m+1},\kappa^{m+1}_\gamma,\delta\vec{X}^{m+1}) \in
\R^{K^m_\Omega}\times \R^{K^m_\Gamma}\,\times (\R^d)^{K^m_\Gamma}$
here denote the coefficients of these finite-element functions with respect to
the standard bases of $\Sm$, $\Wh$, and $\Vh$, respectively.
The definitions of the matrices in (\ref{eq:lin}) directly follow from
(\ref{eq:uHGa}--c), but we state them here for completeness.
Let $i,j = 1 \to K_\Omega^m$ and $k,l = 1 \to K^m_\Gamma$.
Then, on recalling (\ref{eq:Sh0}), we have that
\begin{align}
& [M_\Omega]_{ij} := \vartheta\,(\phi_j^m, \phi_i^m)^h_{m,+}, \nonumber \\
& [\widetilde A_\Omega]_{ij} := 
\begin{cases}
\conduct\,(\nabla\,\phi_j^m, \nabla\,\phi_i^m)_{m,+} & 
1 \leq i \leq K^m_{\Omega,D} \\
\delta_{i,j} & K^m_{\Omega,D} < i \leq K^m_{\Omega} 
\end{cases}
, \nonumber \\ &
[\Mbulk]_{li} := 
\langle \phi_i^m, \chi^m_l \rangle_m
, \nonumber \\ & [\vec{N}_{\Gamma,\Omega}]_{li} := \left( 
\langle \phi_i^m, \pi^m\left[(\chi^m_l\,\vec{e}_j) \cdot \vec\omega^m\right] 
\rangle_m \right)_{j=1}^d = 
\langle \phi_i^m, \chi^m_l \rangle_m\,\vec\omega^m_l, 
\nonumber \\
& [M_\Gamma]_{kl} := \langle \chi^m_l, \chi^m_k \rangle_m^h, \quad 
[\vec{A}_\Gamma]_{kl} := \left(
\langle \nabs^\tG\,(\chi^m_l\,\vec{e}_i), \nabs^\tG\,(\chi^m_k\,\vec{e}_j)
\rangle_{\gamma,m} \right)_{i,j=1}^d , \nonumber \\
& [\vec{N}_\Gamma]_{kl} := \langle 
\chi^m_l, \chi^m_k\,\vec\omega^m \rangle_m^h , \nonumber \\ & 
[\vec{N}^{(\beta)}_\Gamma]_{kl} := \langle [\beta(\vec\nu^m)]^{-1}
\chi^m_l, \chi^m_k\,\vec\omega^m \rangle_m^h
=\langle [\beta(\vec\nu^m)]^{-1}
\chi^m_l, \chi^m_k\rangle_m^h\,\vec\omega^m_l \label{eq:MAM}
, 
\end{align}
where $\{\vec{e}_i\}_{i=1}^d$ denotes the standard basis in $\R^d$
and where we have used the convention that the subscripts in the matrix
notation refer to the test and trial domains, respectively. A single subscript
is used where the two domains are the same.
We note that the special definition of $\widetilde A_\Omega$, together
with $g^m$ in (\ref{eq:lin}), accounts for the Dirichlet boundary conditions
of $U^{m+1} \in \SmD$. Here $g^m$ is defined by
\begin{equation} \label{eq:gm}
g^m_i = \begin{cases}
(f^{m+1} , \phi^m_i)^h_{m,+} & 1 \leq i \leq K^m_{\Omega,D} ,\\
u_D & K^m_{\Omega,D} < i \leq K^m_{\Omega} .
\end{cases}
\end{equation}
Clearly, the matrix $\widetilde A_\Omega$ will in
general be singular. In particular, it will have zero diagonal entries for
every vertex $\vec p^m_j \in \overline\Omega^{m,h}_-$. Hence we enforce
$U^{m+1} \in \SmDl$ by setting
\begin{equation} \label{eq:A}
[A_\Omega]_{ij} = \begin{cases}
[\widetilde A_\Omega]_{ij} & [\widetilde A_\Omega]_{ij} \not= 0  ,\\
\delta_{i,j} & [\widetilde A_\Omega]_{ij} = 0 \,;
\end{cases}
\end{equation}
i.e.,  we replace zero diagonal entries by $1$.

The assembly of the matrices in (\ref{eq:MAM}), apart from 
$\widetilde A_\Omega$, is described in \cite[Section~4]{dendritic}. The assembly of
$\widetilde A_\Omega$, and in particular the possible definitions of the
region $\Omega^{m,h}_+$, will be discussed in Section~\ref{sec:41} below.
The linear system (\ref{eq:lin}) can be efficiently solved with iterative
solvers applied to a Schur complement formulation; see \cite{dendritic} for
details. For completeness we state that for the application of preconditioners
and for the solution of subproblems we make use of the packages
LDL and AMD; see \cite{Davis05,AmestoyDD04}.

\subsection{Definition of the discrete liquid/gas region} \label{sec:41}
We now discuss possible choices of $\Omega^{m,h}_+$ in 
(\ref{eq:Sml}) and (\ref{eq:intml},b). 
To this end, we partition the elements of the  bulk mesh 
$\mathcal{T}^m$ into liquid/gas, solid, and interfacial elements as follows.
Let
\begin{align}
\mathcal{T}^m_+ & := \{ o^m \in \mathcal{T}^m : \overline o^m \subset
\Omega^m_+ \} , \  \quad \
\mathcal{T}^m_-   := \{ o^m \in \mathcal{T}^m : \overline o^m \subset
\Omega^m_- \} , \nonumber \\
\mathcal{T}^m_{\Gamma^m} & := \{ o^m \in \mathcal{T}^m : \overline o^m \cap
\Gamma^m \not = \emptyset \} . \label{eq:partT}
\end{align}
Then $\mathcal{T}^m = \mathcal{T}^m_+ \cup \mathcal{T}^m_- \cup
\mathcal{T}^m_{\Gamma^m}$ is a disjoint partition. 

Clearly, using $\Omega^{m,h}_+ = \Omega^{m}_+$ is not very practical, since the
intersection of $\Omega^m_+$ with elements of the bulk mesh $\mathcal{T}^m$
can be complicated. Moreover, computing the domain $\Omega^{m}_+$ is unlikely
to be rewarded with lower overall approximation errors, since the trial and test
functions in (\ref{eq:intml},b) are only piecewise linear.
Instead, we consider the following approach, which defines
$\Omega^{m,h}_+$ with the help of a piecewise-linear approximation to
$\charfcn{\overline\Omega^{m}_+}$ as
\begin{equation} \label{eq:unfitBS}
\Omega^{m,h}_+ := \{ \vec p \in \Omega : 
(I^m\,\charfcn{\overline\Omega^m_+})(\vec p) > 0 \}
.
\end{equation}

Next we discuss an algorithm that computes $\Omega^{m,h}_+$ for the 
strategy (\ref{eq:unfitBS}). 
Here each element of
$\mathcal{T}^m$ is assigned to one of the three sets 
$\mathcal{T}^m_+$, $\mathcal{T}^m_-$, or $\mathcal{T}^m_{\Gamma^m}$ as 
described in Algorithm~\ref{algo:markels}.
\begin{myalgorithm}[t]
\caption{Mark all bulk mesh elements as liquid/gas, solid, or cut.} 
\label{algo:markels}
\raggedright
1. Traversing over $\Gamma^m$, find all elements of 
$\mathcal{T}^m_{\Gamma^m}$. \\
2. Set $\mathcal{T} := \mathcal{T}^m \setminus \mathcal{T}^m_{\Gamma^m}$
and $\mathcal{T}^m_+ := \emptyset$. \\
3. Move all elements that touch $\partial\Omega$
from $\mathcal{T}$ to $\mathcal{T}^m_+$.\\
4. For as long as this is possible, move neighbours of
elements in $\mathcal{T}^m_+$ from \\ \quad 
$\mathcal{T}$ to $\mathcal{T}^m_+$.\\
5. Set $\mathcal{T}^m_- := \mathcal{T}$.
\end{myalgorithm}
In addition, for later use, we need to decide for each
bulk mesh vertex $\vec p_j^m$, $j=1\to K^m_\Omega$, whether it belongs to
$\overline\Omega^m_+$ or to $\overline\Omega^m_-$. This can be done as 
described in Algorithm~\ref{algo:marknodes}. 

\begin{myalgorithm}[t]
\caption{Assign all bulk mesh vertices to $\overline\Omega^{m}_-$
or $\overline\Omega^{m}_+$.} \label{algo:marknodes}
\raggedright
1. All vertices of elements in $\mathcal{T}^m_-$ belong to
$\overline\Omega^m_-$. \\
2. All vertices of elements in $\mathcal{T}^m_+$ belong to
$\overline\Omega^m_+$. \\
3. For any remaining vertices $\{\vec p_j^m\}$, choose a $\vec p^m_j$
with 
a neighbouring \\ \quad vertex
$\vec q$ 
that is known to belong to $\overline\Omega^m_-$ or 
$\overline\Omega^m_+$. If $\Gamma^m$ cuts the \\ \quad segment 
$[\vec p_j^m, \vec q] \subset \R^d$ an even number of times, assign 
$\vec p_j^m$ to the same \\ \quad region as $\vec q$, otherwise to the
opposite 
region.
Repeat this, until all \\ \quad vertices have been assigned.
\end{myalgorithm}

\begin{rem} \label{rem:sweeping}
\rm
The global {\rm Algorithm~\ref{algo:markels}} 
is only needed at the very first time
step. For subsequent time steps, the existing marking of bulk mesh elements can
be updated depending on the movement of $\Gamma^m$. In particular, only
elements in 
$\mathcal{T}^{m-1}_{\Gamma^{m-1}} \setminus \mathcal{T}^{m}_{\Gamma^{m}}$
need to be considered. On assuming that $\Gamma^m$ has not travelled over a
whole bulk mesh element, these elements can be marked with the help of
neighbour information. This is far more efficient than employing the global
{\rm Algorithm~\ref{algo:markels}} at every time step.

In addition, in practice for a refined bulk mesh in the neighbourhood of $\Gamma^m$,
all the remaining vertices in {\rm Step 3 of Algorithm~\ref{algo:marknodes}} 
have immediately a 
neighbouring vertex
$\vec q$  
that is known to belong to $\overline\Omega^m_-$ or 
$\overline\Omega^m_+$.  
\end{rem}

An alternative approach to {\rm (\ref{eq:unfitBS})} 
would not define $\Omega^{m,h}_+$ explicitly, but rather the effect of 
$\Omega^{m,h}_+$ on the inner products defined in {\rm (\ref{eq:intml},b)}.
Here it is natural to define $\Omega^{m,h}_+$ in such a way, that
$\bigcup_{o^m \in \mathcal{T}^m_-} \overline o^m \subset
\overline\Omega^{m,h}_-$.
Then the integral in {\rm (\ref{eq:intml})} can be rewritten as
\begin{equation}
(\nabla \chi , \nabla\,\varphi)_{m,+}   
= 
\!\!\!\!\!
\sum_{o^m \in \mathcal{T}^m_+}
\int_{o^m} 
\!\!\!
\nabla\,\chi  .  \nabla\,\varphi \dL  
+
\!\!\!\!\!
 \sum_{o^m \in \mathcal{T}^m_{\Gamma^m}}
v(o^m) \int_{o^m} 
\!\!\!
\nabla \chi . \nabla \varphi \dL ,
\label{eq:unfitted2}
\end{equation}
where $v(o^m) \in [0,1]$ denotes the fraction of the element
$o^m$ that is considered to belong to the liquid/gas region
$\Omega^{m,h}_+$, and similarly for the inner product defined in
{\rm (\ref{eq:inthml})}. 
Note that {\rm (\ref{eq:unfitted2})} only implicitly defines (candidates of) 
the region $\Omega^{m,h}_+$.

In practice, several choices of $v(o^m) \in [0,1]$ can be considered.
The approach {\rm (\ref{eq:unfitBS})} corresponds to 
\begin{subequations}
\begin{equation} \label{eq:unfit0}
v(o^m) = 1 \qquad \forall\ o^m \in \mathcal{T}^m_{\Gamma^m},
\end{equation}
while the choice 
\begin{equation} \label{eq:unfit1}
v(o^m) = 0 \qquad \forall\ o^m \in \mathcal{T}^m_{\Gamma^m} 
\end{equation}
was used in \cite{BanschS00} for a two-sided Stefan problem with nonvanishing
heat conductivity coefficients. We note that for the one-sided situation
considered in this paper, the strategy {\rm (\ref{eq:unfit1})} does not make
sense, as it dramatically affects the accuracy of the approximation $U^{m+1}$
on $\Gamma^m$.
An alternative approach is the choice
\begin{equation} \label{eq:unfit3}
v(o^m) = \tfrac{k}{d+1} 
= \frac{1}{|o^m|} \int_{o^m} I^m \,\charfcn{\overline\Omega^m_+} \dL
\qquad \forall\ o^m \in \mathcal{T}^m_{\Gamma^m}, 
\end{equation}
where $k$ denotes the number of vertices of
$\overline o^m$ that lie within $\overline\Omega^m_+$. 
A simpler approach is to set
\begin{equation} \label{eq:unfit2}
v(o^m) = \tfrac12 \qquad \forall\ o^m \in \mathcal{T}^m_{\Gamma^m}.
\end{equation}
\end{subequations}
We note that for the practical implementation, the strategies
(\ref{eq:unfit0},b,d) only need the marking from Algorithm~\ref{algo:markels}.
The additional Algorithm~\ref{algo:marknodes} is only required for
the strategy
(\ref{eq:unfit3}). In practice, the three strategies (\ref{eq:unfit0},c,d) all show very
similar numerical results. Hence, in general we will employ
the simplest strategy
(\ref{eq:unfit0}).

\begin{rem} \label{rem:Oimplicit}
\rm
Of course, setting
\begin{equation} \label{eq:unfitexact}
v(o^m) = |o^m \cap \Omega^m_+| 
\qquad \forall\ o^m \in \mathcal{T}^m_{\Gamma^m}
\end{equation}
corresponds to $\Omega^{m,h}_+ = \Omega^m_+$. This will in general be too
costly to do in practice. However, we mention one possible strategy here. For
an arbitrary open bounded set $V \subset \R^d$ with Lipschitz boundary
it holds that
\begin{equation} \label{eq:volV}
\vol(V) = \int_V1 \dL = 
\tfrac1d\,\int_{\partial V} (\vec\id - \vec z_0) \cdot \vec\nu_V \dH ,
\end{equation}
where $\vec\id$ is the identity function on $\R^d$, 
$\vec z_0 \in \R^d$ is an arbitrarily fixed point, and where
$\vec\nu_V$ denotes the outer normal to $V$. Applying 
{\rm (\ref{eq:volV})} for $V = o^m \cap \Omega^m_+$, 
on noting that
$\vec\nu_V = - \vec\nu^m$ on $o^m \cap \Gamma^m$ and
$\vec\nu_V = \vec\nu_{o^m}$, the outer normal of $o^m$, on 
$\partial o^m \cap \Omega^m_+$,
yields a way of using
{\rm (\ref{eq:unfitexact})} in practice. Of course, in this case $V$ is a
polytope, with $\partial V$ being a union of flat facets. Thus the integral in
{\rm (\ref{eq:volV})} simplifies on noting that $\vec\id \cdot \vec\nu_V$ is now
constant on each facet, and vanishes on each facet that contains $\vec z_0$.
Moreover, $o^m \cap \Gamma^m$ can be computed
as in \cite[Section~4.5]{dendritic}. It remains to calculate
$\partial o^m \cap \Omega^m_+$, where for our purposes it is enough
to compute $|F_\mu \cap \Omega^m_+|$ for $\mu=1\to d+1$, where $F_\mu$ are the 
edges/faces of $o^m$; 
i.e.,  $\partial o^m = \cup_{\mu=1}^{d+1} \overline F_\mu$.
For $d=2$ this reduces to finding the lengths of $F_\mu \cap \Omega^m_+$, which
is straightforward.
For $d=3$ the set $F_\mu \cap \Omega^m_+$ in general can be the disjoint union
of possibly non-convex polygons. The oriented boundary of these polygons can be
found by suitably arranging the line segments making up 
$\partial F_\mu \cap \Gamma^m$, as well as the line segments making up
$\partial F_\mu \cap \Omega^m_+$. Then the area $|F_\mu \cap \Omega^m_+|$ can
be easily computed with Gauss' area formula.
\end{rem}

\setcounter{equation}{0}
\section{Numerical results}  \label{sec:6}
We implemented our finite-element approximation (\ref{eq:uHGa}--c) within 
the framework of the finite-element toolbox ALBERTA; see \cite{Alberta}. 
We use the bulk mesh and parametric mesh refinement strategies introduced in 
\cite[Section~5]{dendritic}. Here the bulk mesh adaptation algorithm, which
was inspired by a similar strategy proposed in \cite{voids} and \cite{voids3d}
for $d=2$ and $d=3$, respectively, results in a fine mesh of uniform mesh size 
$h_f$ around $\Gamma^m$ and a coarse mesh of uniform mesh size $h_c$ 
further away from it. Here
$h_{f} = \frac{2\,H}{N_{f}}$ and $h_{c} =  \frac{2\,H}{N_{c}}$
are given by two integer numbers $N_f >  N_c$, where we assume from now on that
$\Omega = (-H,H)^d$. For the one-sided problems considered in this paper, we
slightly amend the strategy from \cite[Section~5]{dendritic}, in that we allow an
even coarser grid inside $\Omega^{m,h}_-$. Of course, the definitions
(\ref{eq:Sml}) mean that this has no effect on the numerical results.
Moreover, the parametric mesh refinement uses bisections in order
to avoid elements getting too large over time. We stress that apart from 
this simple mesh refinement, no other changes were performed on the 
parametric mesh in any of our simulations. 
In particular, no mesh smoothing (redistribution) was required.

Throughout this section we use (almost) uniform time steps, in that 
$\tau_m=\tau$, $m=0\to M-2$, and $\tau_{M-1} = T-t_{m-1} \leq \tau$.
Unless otherwise stated we set $\Omega=(-H,H)^d$ with $H=4$.
Similarly, unless otherwise stated, we always employ the strategy
(\ref{eq:unfit0}) for the computation of $\Omega^{m,h}_+$.
The initial interface $\Gamma(0)$ is always a circle/sphere of radius 
$R_0 \in (0,H)$ around the origin.
For the Stefan problem, i.e.,  if $\vartheta > 0$, we set
\begin{equation}
u_0(\vec{z}) = \begin{cases} 0 & |\vec{z}| \leq R_0, \\
\dfrac{1 - e^{R_0 - |\vec{z}|}}{1 - e^{R_0 - H}}\,u_D 
& R_0 < |\vec{z}| < H, \\
u_D & |\vec{z}| \geq H ,
\end{cases}
\label{eq:u0expcircle}
\end{equation}
unless a true solution $u$ is given.

For later purposes, we define
$$
 \vec{X}(t):=\tfrac{t-t_{m-1}}{\tau_m}\,\vec{X}^m+ \tfrac{t_m-t}{\tau_{m-1}} 
  \vec{X}^{m-1} , \ \  t\in [t_{m-1},t_m], \ \  m\geq 1 ,
$$ 
and similarly for $U$. 

\subsection{Non-dimensionalization of a model for snow crystal growth}
\label{sec:nondim}

An aim of this paper is to be able to perform computations for the growth of
snow crystals with realistic parameters
and on physically relevant length and time  scales. 
Upon non-dimensionalizing the 
continuum model for snow crystal growth from
\cite{Libbrecht05}, it turns out that (\ref{eq:1a}--c) with
\begin{equation} \label{eq:HGparams}
\vartheta = 0, \ \ \mathcal{K} =1, \ \ \lambda = 1, \ \ \rho =
1.42\times 10^{-3}, \ \ \alpha = 10^{-5}, \ \ a = 1, \ \ f = 0
\end{equation}
is a physically realistic model. Here the typical length scale is 100\,$\mu$m, 
typical time scales vary from 100\,s to 1300\,s, $-u$ denotes a scaled
concentration of water vapour in the gas phase,
and $-u_D$ is a scaled supersaturation. We refer to \cite{jcg} for more
details on the physical interpretation of these parameters.

\subsection{Convergence experiments}
We begin with a comparison of the approximation error 
$\vol(\Omega_+(0)) - \vol(\Omega^{0,h}_+)$ for the four different strategies
(\ref{eq:unfit0}--d).
Here we set $\Omega_+(0) = \overline\Omega \setminus B_1(0)$ and, for the case
$d=2$, use the spatial discretization
parameters $N_f = K^0_\Gamma = 2^{7+i}$ and $N_c = 4^{i}$. 
An example of how the discrete interface $\Gamma^0$ cuts the bulk mesh
$\mathcal{T}^0$ is shown in Figure~\ref{fig:cutmesh}.
 
The numerical results are shown in Table~\ref{tab:volO}, where we observe that
the strategies (\ref{eq:unfit3},d) produce far smaller errors than 
(\ref{eq:unfit0},b). However, as we will see in the subsequent convergence
experiments, this does not seem to have an influence on the overall
approximation error for the underlying solutions $u$ and $\Gamma$.

For completeness, we repeat the same experiments for $d=3$, where now
$N_f = 2^{6+i}$, $N_c = 4^{i}$, and 
$K^0_\Gamma = K(i)$, with $(K(0), K(1), K(2), K(3)) = (770, 3074, 12290, 
49154)$, for $i=0\to 3$. The results are shown in Table~\ref{tab:volO3d}.

\begin{figure}[t]
\center
   \includegraphics[angle=-90,width=0.3\textwidth]{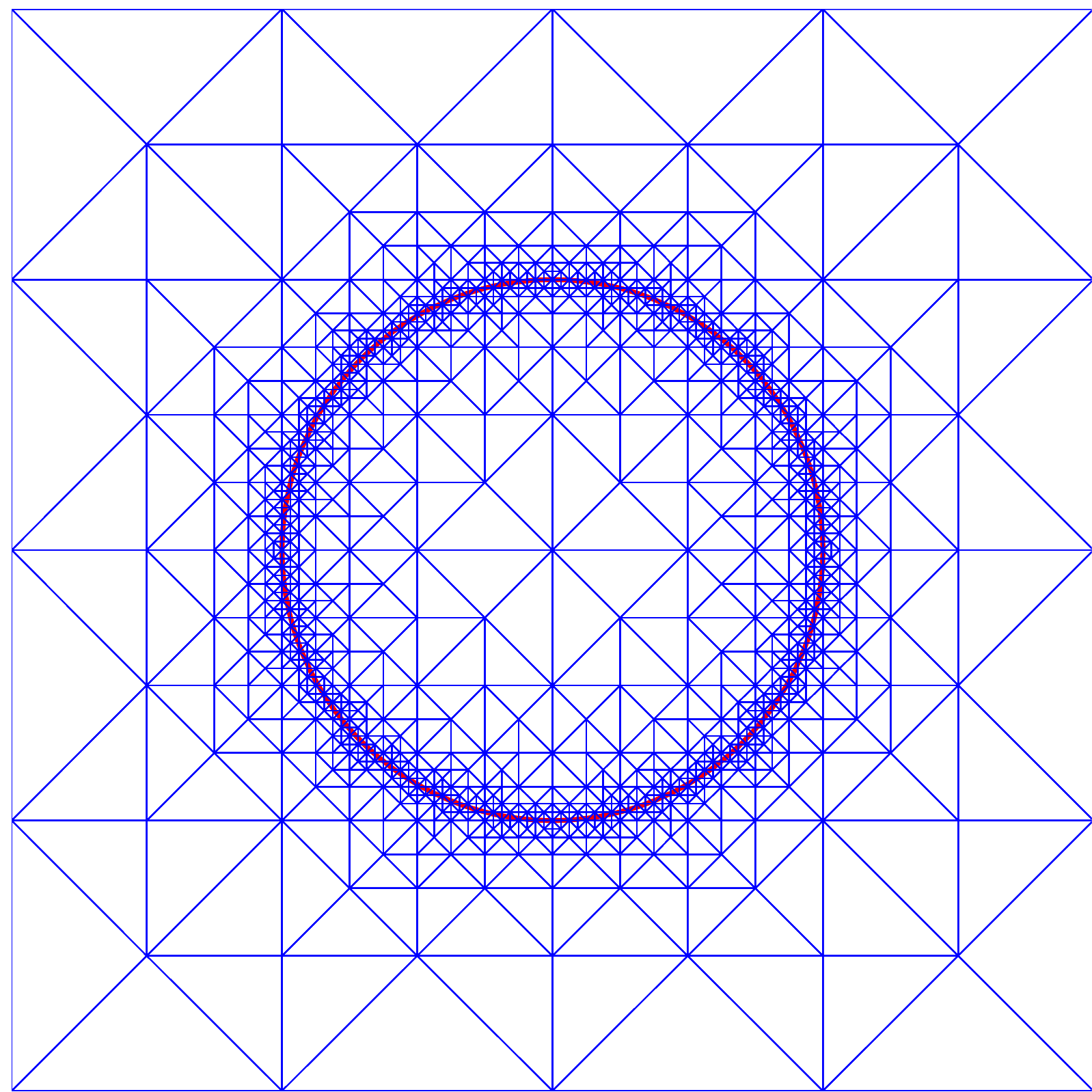}
   \includegraphics[angle=-90,width=0.3\textwidth]{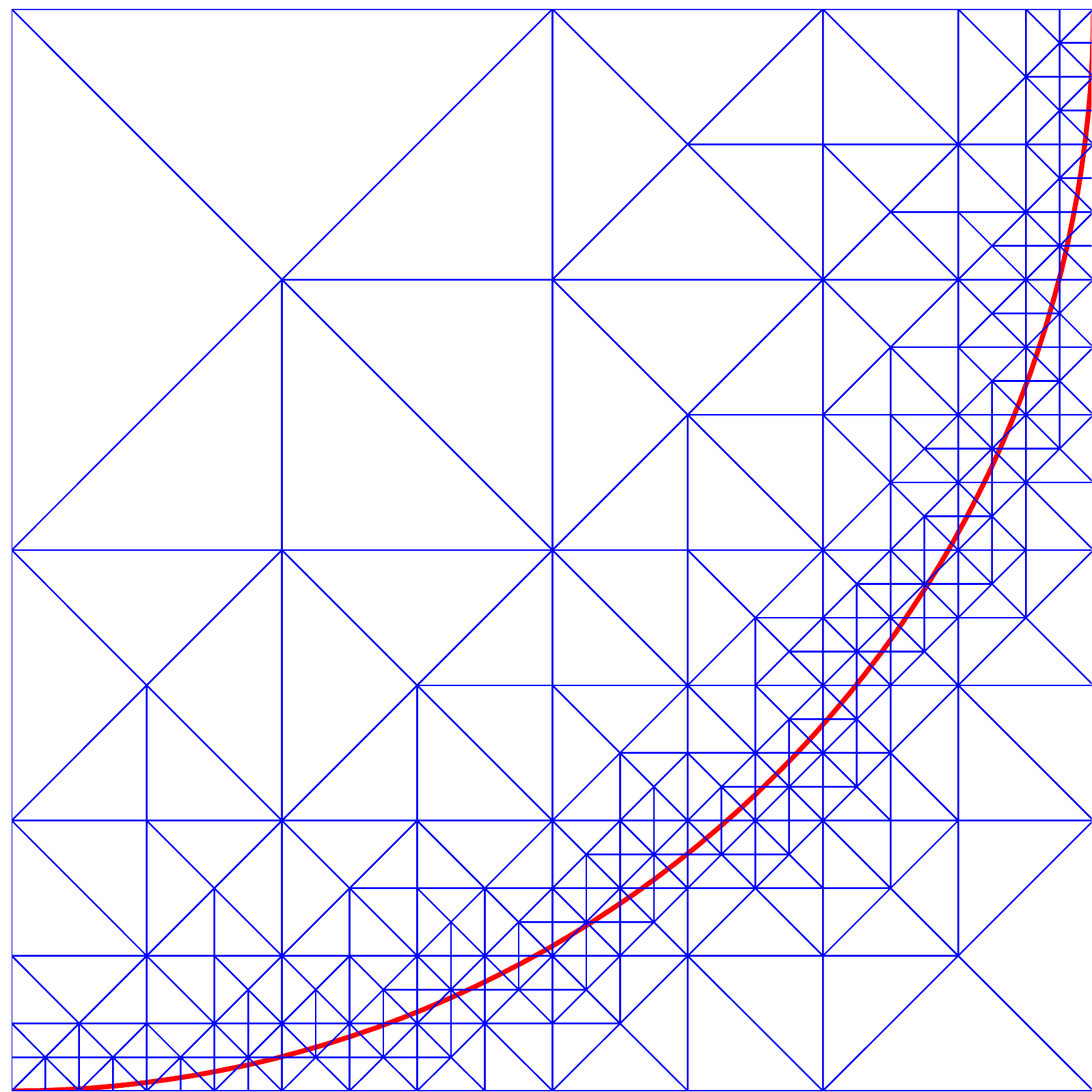}
   \includegraphics[angle=-90,width=0.3\textwidth]{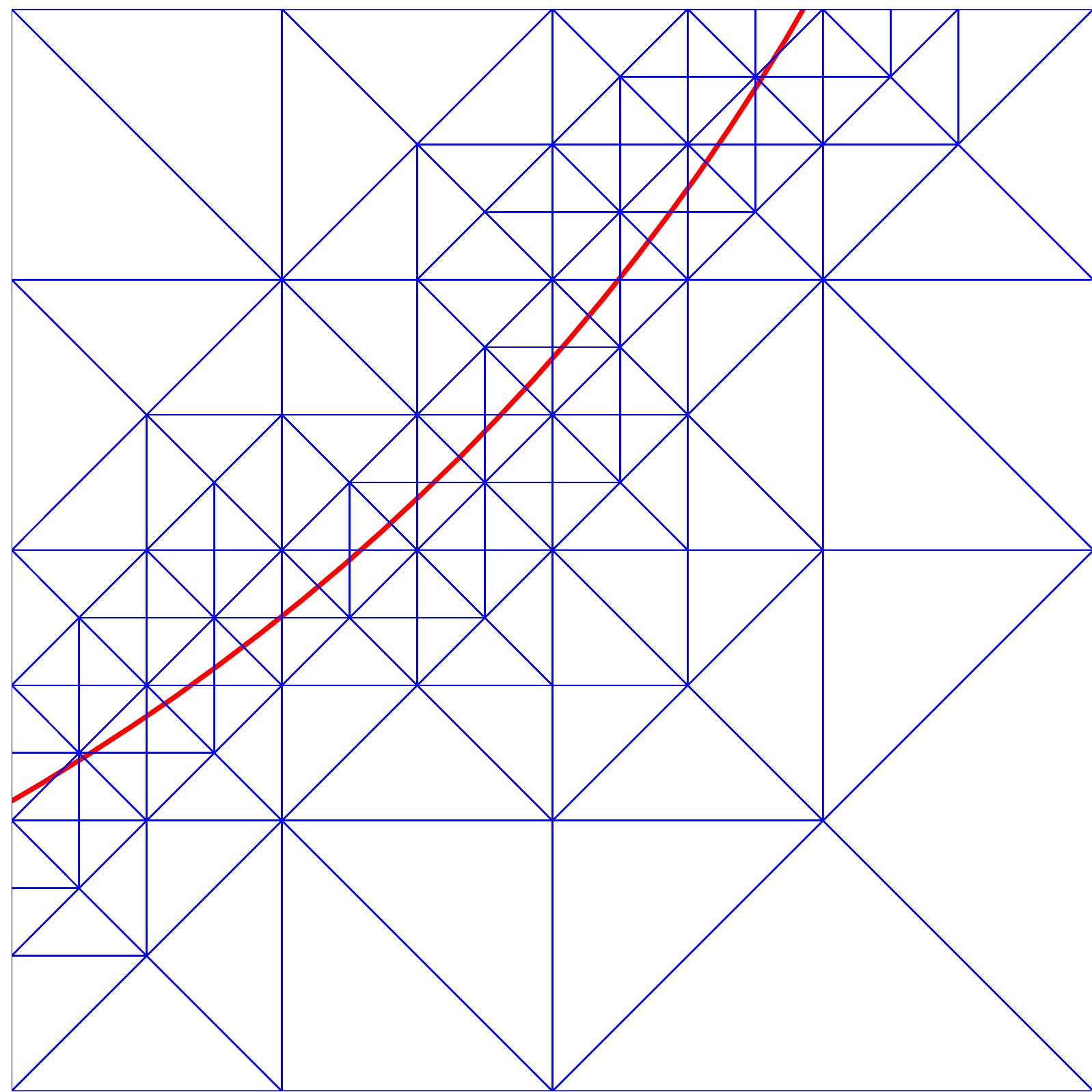}
\caption{Parts of the triangulation $\mathcal{T}^0$ and the interface 
$\Gamma^0$ when $N_f = K^0_\Gamma = 2^8$ and $N_c = 4$. From left to right
$[-2,2]^2$, $[-1,0]^2$, and $[-1,-\frac12]^2$.}
\label{fig:cutmesh}
\end{figure} 
\small
\begin{table}[t]
\center
\begin{tabular}{c|c|c|c|c|c}
 $10^3\,h_{f}$ & $h^M_\Gamma$ & (\ref{eq:unfit0}) & (\ref{eq:unfit3}) & 
(\ref{eq:unfit2}) & (\ref{eq:unfit1}) \\
\hline
62.500 & 5.4874e-02 & -2.3534e-01 & -1.1384e-02 & -8.7802e-03 &
2.1778e-01 \\
31.250 & 2.7439e-02 & -1.1425e-01 & -4.8739e-03 & -4.8739e-03 &
1.0450e-01 \\
15.625 & 1.3720e-02 & -5.5655e-02 & -1.9442e-03 & -1.4559e-03 &
5.2743e-02 \\
7.8125 & 6.8600e-03 & -2.7579e-02 & -1.2118e-03 & -7.2351e-04 &
2.6132e-02 \\
3.9062 & 3.4300e-03 & -1.4273e-02 & -7.0317e-04 & -8.7610e-04 &
1.2521e-02   \\
\end{tabular}
\caption{\small $\Omega=(-4,4)^2$. Approximation error
$\vol(\Omega_+(0))-\vol(\Omega^{0,h}_+)$ for (\ref{eq:unfit0}--d).}
\label{tab:volO}
\end{table} 
\begin{table}[t]
\center
\begin{tabular}{c|c|c|c|c|c}
 $10^2\,h_{f}$ & $h^M_\Gamma$ & (\ref{eq:unfit0}) & (\ref{eq:unfit3}) & 
(\ref{eq:unfit2}) & (\ref{eq:unfit1}) \\
\hline
12.500 & 2.0854e-01 & -8.9192e-01 & -6.7696e-02 & -1.2902e-03 & 
8.8933e-01 \\
6.2500 & 1.0472e-01 & -4.5246e-01 & -1.7403e-02 & -3.2433e-03 &
4.4598e-01 \\
3.1250 & 5.2416e-02 & -2.2370e-01 & -3.5485e-03 & 4.1878e-04 &
2.2454e-01 \\
1.5625 & 2.6215e-02 & -1.1247e-01 & -8.0954e-04 & -2.8311e-04 &
1.1190e-01 \\
\end{tabular}
\caption{$\Omega=(-4,4)^3$. Approximation error
$\vol(\Omega_+(0)) - \vol(\Omega^{0,h}_+)$ for (\ref{eq:unfit0}--d).}
\label{tab:volO3d}
\end{table} 
\begin{table}[t]
\center
\begin{tabular}{c|c|c|c|c|c}
 $h_{f}$ & $h^M_\Gamma$ & $\errorUupl$ & $\errorXx$ & $K^M_\Omega$ & $K^M_\Gamma$ 
\\
\hline
 6.2500e-02 & 5.0640e-02 & 2.4595e-01 & 9.2545e-02 & 677 & 128 \\
 3.1250e-02 & 2.7093e-02 & 7.2888e-02 & 2.1049e-02 & 1329 & 256 \\
 1.5625e-02 & 1.3740e-02 & 2.0818e-02 & 3.5439e-03 & 2753 & 512 \\
 7.8125e-03 & 6.8637e-03 & 5.2596e-03 & 6.2892e-04 & 8853 & 1024 \\
 3.9062e-03 & 3.4307e-03 & 1.2318e-03 & 2.1081e-04 & 71305 & 2048 \\
\end{tabular}
\caption{$\Omega=(-4,4)^2$ and $\TTime=1$. Convergence test for 
(\ref{eq:Trueu}) with (\ref{eq:unfit0}).}
\label{tab:HGnew2d}
\end{table} 
\begin{table}[t]
\center
\begin{tabular}{c|c|c|c|c}
 & \multicolumn{2}{c|}{(\ref{eq:unfit3})} & 
 \multicolumn{2}{c}{(\ref{eq:unfit2})} \\ \hline
 $h_{f}$ & $\errorUupl$ & $\errorXx$ & $\errorUupl$ & $\errorXx$ \\ \hline
 6.2500e-02 & 2.5105e-01 & 9.9965e-02 & 2.5204e-01 & 1.0185e-01 \\
 3.1250e-02 & 7.7931e-02 & 2.4780e-02 & 7.8798e-02 & 2.5502e-02 \\
 1.5625e-02 & 2.3909e-02 & 4.8305e-03 & 2.4363e-02 & 5.1527e-03 \\
 7.8125e-03 & 6.1309e-03 & 1.5131e-03 & 6.2823e-03 & 1.7784e-03 \\
 3.9062e-03 & 1.7662e-03 & 7.6027e-04 & 1.8835e-03 & 9.4426e-04 \\
\end{tabular}
\caption{$\Omega=(-4,4)^2$ and $\TTime=1$. Convergence test for 
(\ref{eq:Trueu}) with (\ref{eq:unfit3}) and (\ref{eq:unfit2}).}
\label{tab:HGnew2di}
\end{table} 
\normalsize

\subsubsection{One-sided Stefan problem}
Next we investigate the approximative \linebreak[4] properties of our algorithm 
(\ref{eq:uHGa}--c) for the
following exact solution to the one-sided Stefan problem (\ref{eq:1a}--e), 
in the case of 
the isotropic surface energy (\ref{eq:iso}). 
Here we adapt the following expanding circle/sphere solution for the two-phase
Stefan problem in \cite[(6.5)]{dendritic},
where the radius of the circle/sphere is given by $r(t)$,
and so $\Omega_+ (t)= \Domain \setminus \overline B_{r(t)}(0)$. Assume that
$ \vartheta = \conduct = \lambda = \rho = \alpha = a = 1 $ 
and let
$$
r(t) = (r^2(0) + t)^\frac12 , 
\ \  
w(t) = -\frac{d-\frac12}{r(t)}, 
\ \
v(s) = - \frac{e^{\frac14}}2\,\int_1^s \frac{e^{-\frac14\,z^2}}{z^{d-1}}\;{\rm
d}z.  
$$
Then it is easy to see that on letting

$$
f(\vec{z},t) = \ddt w(t) = \frac{d - \frac12}{2\,r^3(t)},
$$
the solution $u$ to (\ref{eq:1a}--e), with $u_D$ in (\ref{eq:1d}) replaced by
$u\!\mid_{\partial_D\Domain}$, is given by the restriction to $\Omega_+(t)$ of
\begin{equation}
u(\vec{z},t) = 
\begin{cases}
w(t) & \vec{z} \in \overline\Omega_-(t), \\
w(t) + v\left(\frac{|\vec{z}|}{r(t)}\right) 
& \vec{z} \in \Omega_+(t).
\end{cases}
\label{eq:Trueu}
\end{equation}
For $d=2$, we perform the following convergence experiment for the solution
(\ref{eq:Trueu}), where we use $r(0) = R_0 = 0.5$.   
For $i=0\to 4$, we set $N_f = 2 K^0_\Gamma = 2^{7+i}$, 
$N_c = 4^{i}$,
and $\tau= 4^{3-i}\times10^{-3}$. 
The errors $\errorUupl$ and $\errorXx$ on the
interval $[0,\TTime]$ with $\TTime=1$, so that $r(\TTime) \approx 1.12$, 
are displayed in Table~\ref{tab:HGnew2d}.
Here 
$$
\errorUupl := \max_{m=1 \to M} \|U^m - I^{m-1}\,u(t_m)\|_{\infty,m-1,+} ,
$$
where 
$$
\|U^m - I^{m-1}\,u(t_m)\|_{\infty,m-1,+} :=
\max_{\vec p \in \mathcal{N}^{m-1}_+} |U^m(\vec p) - u(t_m, \vec p)|
$$
and 
$$
\mathcal{N}^{m-1}_+ := \{ \vec p^{m-1}_j : j = 1 \to K^{m-1}_\Omega \}
\cap \overline\Omega^{m-1}_+ \cap \overline \Omega_+(t_m) .
$$
Moreover, 
$$
\errorXx :=
\max_{m=1\to M} \|\vec{X}^m - \vec{x}(\cdot,t_m)\|_{L^\infty} , 
$$
where
$\|\vec{X}(t_m) - \vec{x}(\cdot,t_m)\|_{L^\infty}
:= \max_{k=1\to K^m_\Gamma} 
\left\{\min_{\vec{y}\in \Upsilon}|\vec{X}^m(\vec{q}^m_k) -
\vec{x}(\vec{y},t_m)|\right\}$, and
$h^M_\Gamma := \max_{j = 1 \to J^M_\Gamma} 
\diam(\sigma^M_j)$.
Note that $K^M_\Gamma = 2\,K^0_\Gamma$ due to the growth of the interface.

In addition, we use the convergence experiment in order to compare the
different strategies (\ref{eq:unfit3}) and (\ref{eq:unfit2}). See
Table~\ref{tab:HGnew2di}, where we present the same computations as in
Table~\ref{tab:HGnew2d}, but now for (\ref{eq:unfit3}) and
(\ref{eq:unfit2}). For the new results we omit the additional mesh statistics,
as they are very similar to the results for (\ref{eq:unfit0}) shown in
Table~\ref{tab:HGnew2d}.
 
We also compare the numbers in Tables~\ref{tab:HGnew2d} and \ref{tab:HGnew2di}
with 
the corresponding errors for the approximation from \cite{dendritic} for the
two-phase Stefan problem  (see (2.1a--e) in \cite{dendritic}), 
with the same choice of parameters. Note that $u(\cdot,t):\Omega \to \R$ as
defined in (\ref{eq:Trueu}) then is the desired true solution.
The corresponding errors, where $\errorUu :=
\max_{m=1\to M}\|U^m - I^{m-1}\,u(\cdot,t_m)\|_{L^\infty}$, 
can be seen in Table~\ref{tab:tpsp}.

\begin{table} 
\center
\begin{tabular}{c|c|c|c|c|c}
 $h_{f}$ & $h^M_\Gamma$ & $\errorUu$ & $\errorXx$ & $K^M_\Omega$ & $K^M_\Gamma$ 
\\
\hline
 6.2500e-02 & 5.0474e-02 & 2.4940e-01 & 9.7039e-02 & 645 & 128 \\
 3.1250e-02 & 2.7082e-02 & 7.3208e-02 & 2.2291e-02 & 1353 & 256 \\
 1.5625e-02 & 1.3739e-02 & 2.0678e-02 & 3.9277e-03 & 2753 & 512 \\
 7.8125e-03 & 6.8641e-03 & 4.9403e-03 & 7.2470e-04 & 9017 & 1024 \\
 3.9062e-03 & 3.4309e-03 & 1.2377e-03 & 2.8003e-04 & 74589 & 2048 \\
\end{tabular}
\caption{$\Omega=(-4,4)^2$ and $\TTime=1$. Convergence test for the
two-phase Stefan problem.}
\label{tab:tpsp}
\end{table} 

Similarly to Table~\ref{tab:HGnew2d}, we perform a convergence test for the 
solution (\ref{eq:Trueu}) to the one-sided Stefan problem, 
now for $d=3$, leaving all the remaining parameters fixed as before.
To this end, for $i=0\to 3$, we set $N_f = 2^{6+i}$, $N_c = 4^{i}$, and 
$K^0_\Gamma = K(i)$, where $(K(0), K(1), K(2), K(3)) = (770, 3074, 12290, 
49154)$,
and $\tau= 4^{3-i}\times10^{-3}$. The errors $\errorUupl$ and $\errorXx$ on the
interval $[0,\TTime]$ with $\TTime=0.1$, so that $r(\TTime) \approx 0.59$,
are displayed in Table~\ref{tab:HGnew3d}. 

\begin{table}[t]
\center
\begin{tabular}{c|c|c|c|c|c}
 $h_{f}$ & $h^M_\Gamma$ & $\errorUupl$ & $\errorXx$ & $K^M_\Omega$ & $K^M_\Gamma$
 \\
\hline
 1.2500e-01 & 1.1309e-01 & 1.9195e-01 & 5.1473e-02 & 1655 & 770 \\
 6.2500e-02 & 5.9856e-02 & 8.7871e-02 & 2.0037e-02 & 5353 & 3074 \\
 3.1250e-02 & 3.0712e-02 & 2.8850e-02 & 5.2297e-03 & 26221 & 12290 \\
 1.5625e-02 & 1.5464e-02 & 8.3717e-03 & 1.0781e-03 & 356903 & 49154 \\
\end{tabular}
\caption{$\Omega=(-4,4)^3$ and $\TTime=0.1$. Convergence test for 
(\ref{eq:Trueu}) with (\ref{eq:unfit0}).}
\label{tab:HGnew3d}
\end{table} 
In addition, we use the convergence experiment in order to compare the
different strategies (\ref{eq:unfit3}) and (\ref{eq:unfit2}). See
Table~\ref{tab:HGnew3di}, where we present the same computations as in
Table~\ref{tab:HGnew3d}, but now for (\ref{eq:unfit3}) and
(\ref{eq:unfit2}).

\begin{table}[t]
\center
\begin{tabular}{c|c|c|c|c}
 & \multicolumn{2}{c|}{(\ref{eq:unfit3})} & 
 \multicolumn{2}{c}{(\ref{eq:unfit2})} \\ \hline
 $h_{f}$ & $\errorUupl$ & $\errorXx$ & $\errorUupl$ & $\errorXx$ \\ \hline
 1.2500e-01 & 1.9608e-01 & 5.1569e-02 & 1.9699e-01 & 5.1769e-02 \\
 6.2500e-02 & 8.7603e-02 & 2.0783e-02 & 9.0314e-02 & 2.1104e-02 \\
 3.1250e-02 & 2.8999e-02 & 5.9048e-03 & 3.0329e-02 & 6.1388e-03 \\
 1.5625e-02 & 9.3255e-03 & 1.4821e-03 & 9.9485e-03 & 1.6143e-03 \\
\end{tabular}
\caption{$\Omega=(-4,4)^3$ and $\TTime=0.1$. Convergence test for 
(\ref{eq:Trueu}) with (\ref{eq:unfit3}) and (\ref{eq:unfit2}).}
\label{tab:HGnew3di}
\end{table} 
We also compare the numbers in Tables~\ref{tab:HGnew3d} and \ref{tab:HGnew3di}
with
the corresponding errors for the approximation from \cite{dendritic} for the
two-phase Stefan problem with the same choice of parameters. 
The corresponding errors can be seen in Table~\ref{tab:tpsp3d}.

\begin{table}[t]
\center
\begin{tabular}{c|c|c|c|c|c}
 $h_{f}$ & $h^M_\Gamma$ & $\errorUu$ & $\errorXx$ & $K^M_\Omega$ & $K^M_\Gamma$ 
\\
\hline
 1.2500e-01 & 1.1297e-01 & 1.9491e-01 & 5.2057e-02 & 1781 & 770 \\
 6.2500e-02 & 5.9798e-02 & 8.3255e-02 & 2.0582e-02 & 5353 & 3074 \\
 3.1250e-02 & 3.0700e-02 & 2.7380e-02 & 5.4506e-03 & 26221 & 12290 \\
 1.5625e-02 & 1.5462e-02 & 8.1295e-03 & 1.1521e-03 & 356909 & 49154 \\
\end{tabular}
\caption{$\Omega=(-4,4)^3$ and $\TTime=0.1$. Convergence test for the
two-phase Stefan problem.}
\label{tab:tpsp3d}
\end{table}

\subsubsection{One-sided Mullins--Sekerka problem}
We start with a comparison of our algorithm (\ref{eq:uHGa}--c) for the
following exact solution to the one-sided Mullins--Sekerka 
problem (\ref{eq:1a}--e) with $\vartheta= 0$, 
in the case of 
the isotropic surface energy (\ref{eq:iso}).
Here we use the following expanding circle/sphere solution,
where the radius of the circle/sphere is given by $r(t)$. Assume that
$ 
\vartheta= 0 ,
$
$
\conduct = \lambda = \rho = \alpha = a = 1 ,
$ and 
$ f = 0$,  
and let
$
r(t) = (r^2(0) + 2\,t)^\frac12  .
$
Then it is easy to see that the solution $u$ to 
(\ref{eq:1a}--e), 
with $u_D$ in (\ref{eq:1d}) replaced by $u\!\mid_{\partial_D\Omega}$, 
is given by the restriction to $\Omega_+(t)$ of
\begin{equation}
u(\vec{z},t) = 
\begin{cases}
- \frac d{r(t)} & \vec{z} \in \overline\Omega_-(t) , \\
\begin{cases} 
-\ln \frac{|\vec{z}|}{r(t)} - \frac2{r(t)} & d = 2, \\
\frac{r(t)}{|\vec{z}|} - 1 - \frac3{r(t)} & d = 3,
\end{cases} & \vec{z} \in \Omega_+(t).
\end{cases}
\label{eq:TrueuMS}
\end{equation}
For $d=2$, we perform the following convergence experiment for the solution
(\ref{eq:TrueuMS}), where we use $r(0) = R_0 = 1$. 
For $i=0\to 4$, we set $N_f = K^0_\Gamma = 2^{7+i}$, 
$N_c = 4^{i}$, and $\tau= 4^{2-i}\times10^{-3}$. 
The errors $\errorUupl$ and $\errorXx$ on the
interval $[0,\TTime]$ with $\TTime=1$, so that $r(\TTime) \approx 1.73$, 
are displayed in Table~\ref{tab:MS2d}.

\begin{table}[t]
\center
\begin{tabular}{c|c|c|c|c|c}
 $h_{f}$ & $h^M_\Gamma$ & $\errorUupl$ & $\errorXx$ & $K^M_\Omega$ & $K^M_\Gamma$ 
\\
\hline
 6.2500e-02 & 8.5583e-02 & 5.9751e-02 & 1.1650e-02 & 1005 & 128 \\
 3.1250e-02 & 4.2909e-02 & 3.7601e-02 & 1.6311e-02 & 1981 & 256 \\
 1.5625e-02 & 2.1304e-02 & 9.0157e-03 & 4.0322e-03 & 4069 & 512 \\
 7.8125e-03 & 1.0632e-02 & 1.5531e-03 & 6.7227e-04 & 11149 & 1024 \\
 3.9062e-03 & 5.3145e-03 & 4.7394e-04 & 2.0761e-04 & 70733 & 2048 \\
\end{tabular}
\caption{$\Omega=(-4,4)^2$ and $\TTime=1$. Convergence test for 
(\ref{eq:TrueuMS}) with (\ref{eq:unfit0}).}
\label{tab:MS2d}
\end{table} 
In addition, we use the convergence experiment in order to compare the
different strategies (\ref{eq:unfit3}) and (\ref{eq:unfit2}). See
Table~\ref{tab:MS2di}, where we present the same computations as in
Table~\ref{tab:MS2d}, but now for (\ref{eq:unfit3}) and
(\ref{eq:unfit2}).
\begin{table}[t]
\center
\begin{tabular}{c|c|c|c|c}
 & \multicolumn{2}{c|}{(\ref{eq:unfit3})} & 
 \multicolumn{2}{c}{(\ref{eq:unfit2})} \\ \hline
 $h_{f}$ & $\errorUupl$ & $\errorXx$ & $\errorUupl$ & $\errorXx$ \\ \hline
 6.2500e-02 & 7.0732e-02 & 6.1554e-03 & 7.5587e-02 & 5.0174e-03 \\
 3.1250e-02 & 4.1221e-02 & 1.3540e-02 & 4.3588e-02 & 1.2923e-02 \\
 1.5625e-02 & 1.1504e-02 & 2.6877e-03 & 1.2409e-02 & 2.3901e-03 \\
 7.8125e-03 & 3.2383e-03 & 3.8846e-05 & 3.4367e-03 & 1.7735e-04 \\
 3.9062e-03 & 1.2919e-03 & 1.4815e-04 & 1.3623e-03 & 2.1749e-04 \\
\end{tabular}
\caption{$\Omega=(-4,4)^2$ and $\TTime=1$. Convergence test for 
(\ref{eq:TrueuMS}) with (\ref{eq:unfit3}) and (\ref{eq:unfit2}).}
\label{tab:MS2di}
\end{table} 

We also compare the numbers in Tables~\ref{tab:MS2d} and \ref{tab:MS2di} with
the corresponding errors for the approximation from \cite{dendritic} for the
two-phase Mullins--Sekerka problem with the same choice of parameters, when
the function $u(\cdot,t):\Omega \to \R$ from (\ref{eq:TrueuMS}) 
is the desired true solution.
The corresponding errors can be seen in Table~\ref{tab:tpMS}.
\begin{table}[th]
\center
\begin{tabular}{c|c|c|c|c|c}
 $h_{f}$ & $h^M_\Gamma$ & $\errorUu$ & $\errorXx$ & $K^M_\Omega$ & $K^M_\Gamma$ 
\\
\hline
 6.2500e-02 & 8.5582e-02 & 5.5854e-02 & 1.1640e-02 & 1005 & 128 \\
 3.1250e-02 & 4.2910e-02 & 3.3181e-02 & 1.6328e-02 & 1981 & 256 \\
 1.5625e-02 & 2.1305e-02 & 8.6904e-03 & 4.0428e-03 & 4073 & 512 \\
 7.8125e-03 & 1.0632e-02 & 1.5719e-03 & 6.8315e-04 & 11493 & 1024 \\
 3.9062e-03 & 5.3145e-03 & 4.7787e-04 & 2.1309e-04 & 79197 & 2048 \\
\end{tabular}
\caption{$\Omega=(-4,4)^2$ and $\TTime=1$. Convergence test for the
two-phase Mullins--Sekerka problem.}
\label{tab:tpMS}
\end{table} 

Similarly to Table~\ref{tab:MS2d}, 
we perform a convergence experiment for the true solution
(\ref{eq:TrueuMS}) to the one-sided Mullins--Sekerka problem, now for $d=3$,
leaving all the remaining parameters fixed as before.
To this end, for $i=0\to 3$, we set $N_f = 2^{5+i}$, $N_c = 4^{i}$, and 
$K^0_\Gamma=K(i)$, 
where $(K(0), K(1), K(2), K(3)) = (770, 3074, 12290, 49154)$,
and $\tau= 4^{3-i}\times10^{-3}$. The errors $\errorUupl$ and $\errorXx$ on the
interval $[0,\TTime]$ with $\TTime=0.1$, so that $r(\TTime) \approx 1.1$ 
are displayed in Table~\ref{tab:MS3d}.
\begin{table}[t]
\center
\begin{tabular}{c|c|c|c|c|c}
 $h_{f}$ & $h^M_\Gamma$ & $\errorUupl$ & $\errorXx$ & $K^M_\Omega$ & $K^M_\Gamma$ 
\\ \hline
 2.5000e-01 & 2.2637e-01 & 1.8264e-01 & 1.3621e-02 & 1437 & 770 \\
 1.2500e-01 & 1.1441e-01 & 8.2741e-02 & 2.6208e-03 & 4769 & 3074 \\
 6.2500e-02 & 5.7328e-02 & 3.2617e-02 & 8.0637e-04 & 22659 & 12290 \\
 3.1250e-02 & 2.8688e-02 & 5.8383e-03 & 2.4496e-04 & 339431 & 49154 \\
\end{tabular}
\caption{$\Omega=(-4,4)^3$ and $\TTime=0.1$. Convergence test for 
(\ref{eq:TrueuMS}) with (\ref{eq:unfit0}).}
\label{tab:MS3d}
\end{table} 
In addition, we use the convergence experiment in order to compare the
different strategies (\ref{eq:unfit3}) and (\ref{eq:unfit2}). See
Table~\ref{tab:MS3di}, where we present the same computations as in
Table~\ref{tab:MS3d}, but now for (\ref{eq:unfit3}) and
(\ref{eq:unfit2}).
\begin{table}[th]
\center
\begin{tabular}{c|c|c|c|c}
 & \multicolumn{2}{c|}{(\ref{eq:unfit3})} & 
 \multicolumn{2}{c}{(\ref{eq:unfit2})} \\ \hline
 $h_{f}$ & $\errorUupl$ & $\errorXx$ & $\errorUupl$ & $\errorXx$ \\ \hline
 2.5000e-01 & 1.7194e-01 & 1.5249e-02 & 1.7596e-01 & 1.4567e-02 \\
 1.2500e-01 & 7.1850e-02 & 2.3731e-03 & 7.8187e-02 & 2.8742e-03 \\
 6.2500e-02 & 2.9357e-02 & 5.3446e-04 & 3.2027e-02 & 8.1515e-04 \\
 3.1250e-02 & 9.6310e-03 & 2.7820e-04 & 1.0533e-02 & 4.1290e-04 \\
\end{tabular}
\caption{$\Omega=(-4,4)^3$ and $\TTime=0.1$. Convergence test for 
(\ref{eq:TrueuMS}) with (\ref{eq:unfit3}) and (\ref{eq:unfit2}).}
\label{tab:MS3di}
\end{table} 
We also compare the numbers in Tables~\ref{tab:MS3d} and \ref{tab:MS3di} with
the corresponding errors for the approximation from \cite{dendritic} for the
two-phase Mullins--Sekerka problem with the same choice of parameters. 
The corresponding errors can be seen in Table~\ref{tab:tpMS3d}.
\begin{table}[th]
\center
\begin{tabular}{c|c|c|c|c|c}
 $h_{f}$ & $h^M_\Gamma$ & $\errorUu$ & $\errorXx$ & $K^M_\Omega$ & $K^M_\Gamma$ 
\\
\hline
 2.5000e-01 & 2.2681e-01 & 1.8285e-01 & 1.2023e-02 & 1563 & 770 \\
 1.2500e-01 & 1.1458e-01 & 6.7414e-02 & 1.3748e-03 & 4847 & 3074 \\
 6.2500e-02 & 5.7385e-02 & 2.2704e-02 & 7.5695e-04 & 22773 & 12290 \\
 3.1250e-02 & 2.8688e-02 & 6.4026e-03 & 2.5641e-04 & 340087 & 49154 \\
\end{tabular}
\caption{$\Omega=(-4,4)^3$ and $\TTime=0.1$. Convergence test for the
two phase Mullins--Sekerka problem.}
\label{tab:tpMS3d}
\end{table} 

What all of the numerical results in Tables~\ref{tab:HGnew2d}--\ref{tab:tpMS3d}
reveal is that the three strategies (\ref{eq:unfit0},c,d) all behave very
similarly in practice, with the simple strategy (\ref{eq:unfit0}) surprisingly
showing the smallest errors in general. This, combined with the fact that
implementing this strategy requires the fewest computational steps, means that
from now on we will always use (\ref{eq:unfit0}) in our experiments. Lastly we
note that also in the anisotropic setting the different strategies 
(\ref{eq:unfit0},c,d) perform very similarly. For example, 
when we compared the numerical simulations in 
Figure~\ref{fig:2dhex004beta}, below, for the two strategies (\ref{eq:unfit0})
and (\ref{eq:unfit3}), the numerical results were virtually identical.

\subsection{Crystal growth simulations for $d=2$}
Throughout this subsection we use the parameters in (\ref{eq:HGparams})
and $\gamma=\gamma_{hex}$ defined by (\ref{eq:hexgamma2d}) with $\epsilon=0.01$
and $\theta_0=\frac\pi{12}$.
We use this rotation of the anisotropy $\gamma_{hex}$, so that the dominant growth
directions are not exactly aligned with the underlying finite-element meshes
$\mathcal{T}^m$. For the kinetic coefficient we usually set $\beta = \gamma$.
Moreover, the radius of the initial crystal seed $\Gamma(0)$ is always chosen
to be $R_0=0.05$.

We begin with a value of $u_D = -0.004$. 
The results are shown in Figure~\ref{fig:2dhex004}. 
We also show the same experiment for $\beta=1$; 
see Figure~\ref{fig:2dhex004beta}.
\begin{figure}[t]
\center
\mbox{
\hspace{-1.3cm}
 \includegraphics[angle=-90,width=0.5\textwidth]{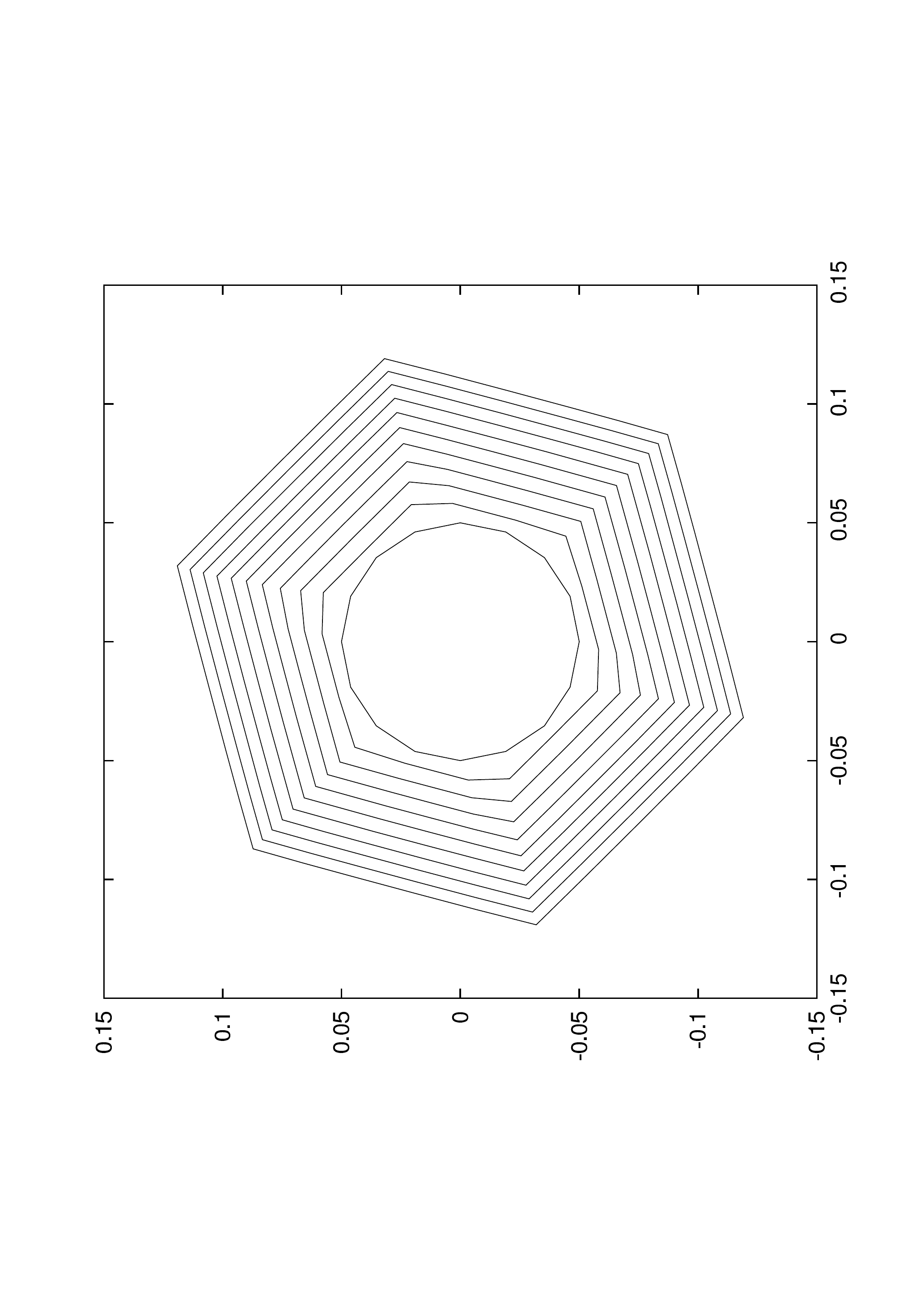}
\hspace{-2.3cm}
 \includegraphics[angle=-90,width=0.5\textwidth]{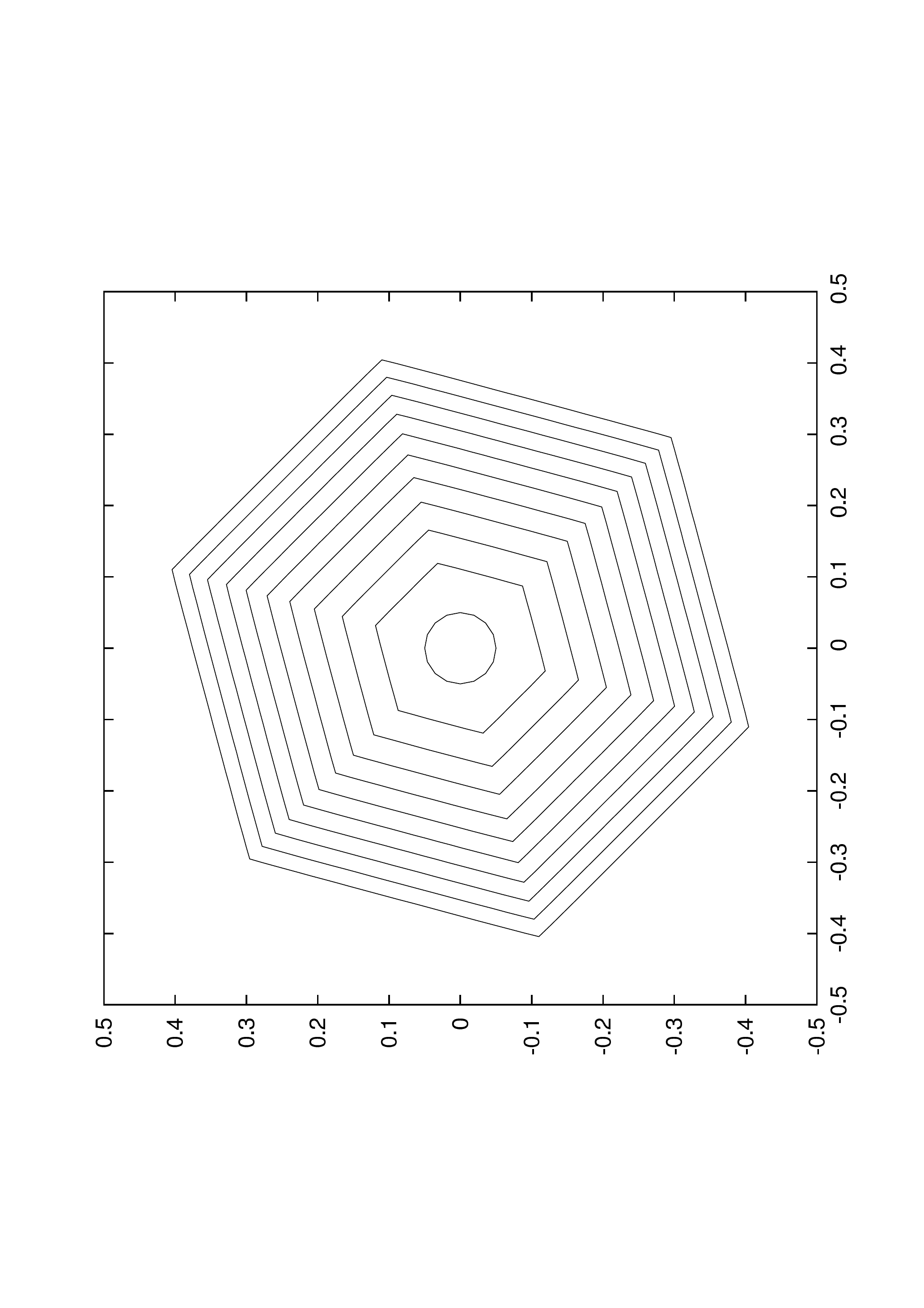}
\hspace{-2.3cm}
 \includegraphics[angle=-90,width=0.5\textwidth]{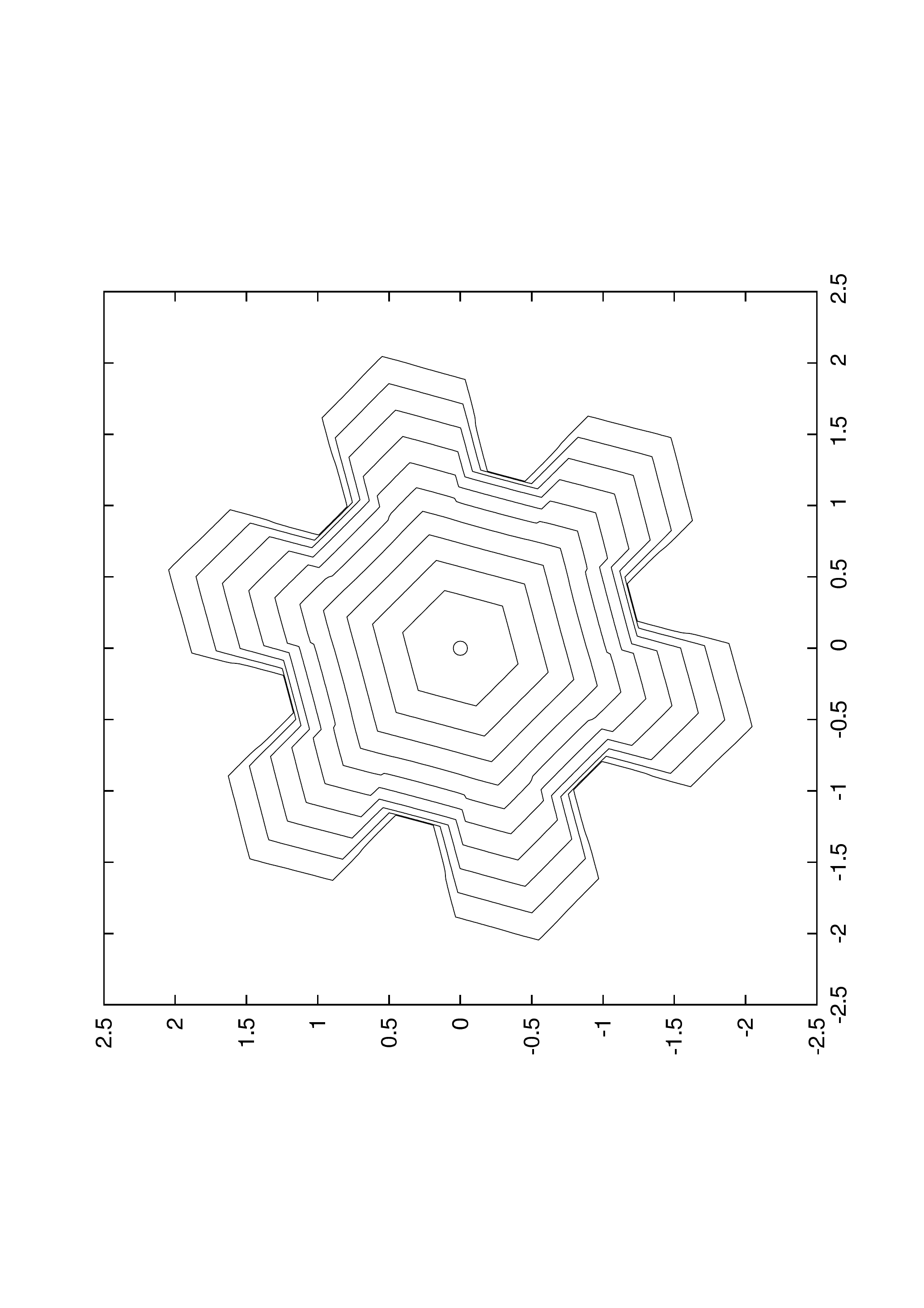}
}
\caption{($\Omega=(-4,4)^2$, $u_D = -0.004$, $\gamma=\beta = \gamma_{hex}$)
$\vec{X}(t)$ for $t=0,\,0.5,\ldots,5$ (left), 
for $t=0,\,5,\ldots,50$ (middle), 
and for $t=0,\,50,\ldots,500$ (right).
Parameters are $N_f=256$, $N_c=4$, $K^0_\Gamma = 16$, and $\tau=0.1$.
}
\label{fig:2dhex004}
\end{figure} 
\begin{figure}
\center
\mbox{
\hspace{-1.3cm}
 \includegraphics[angle=-90,width=0.5\textwidth]{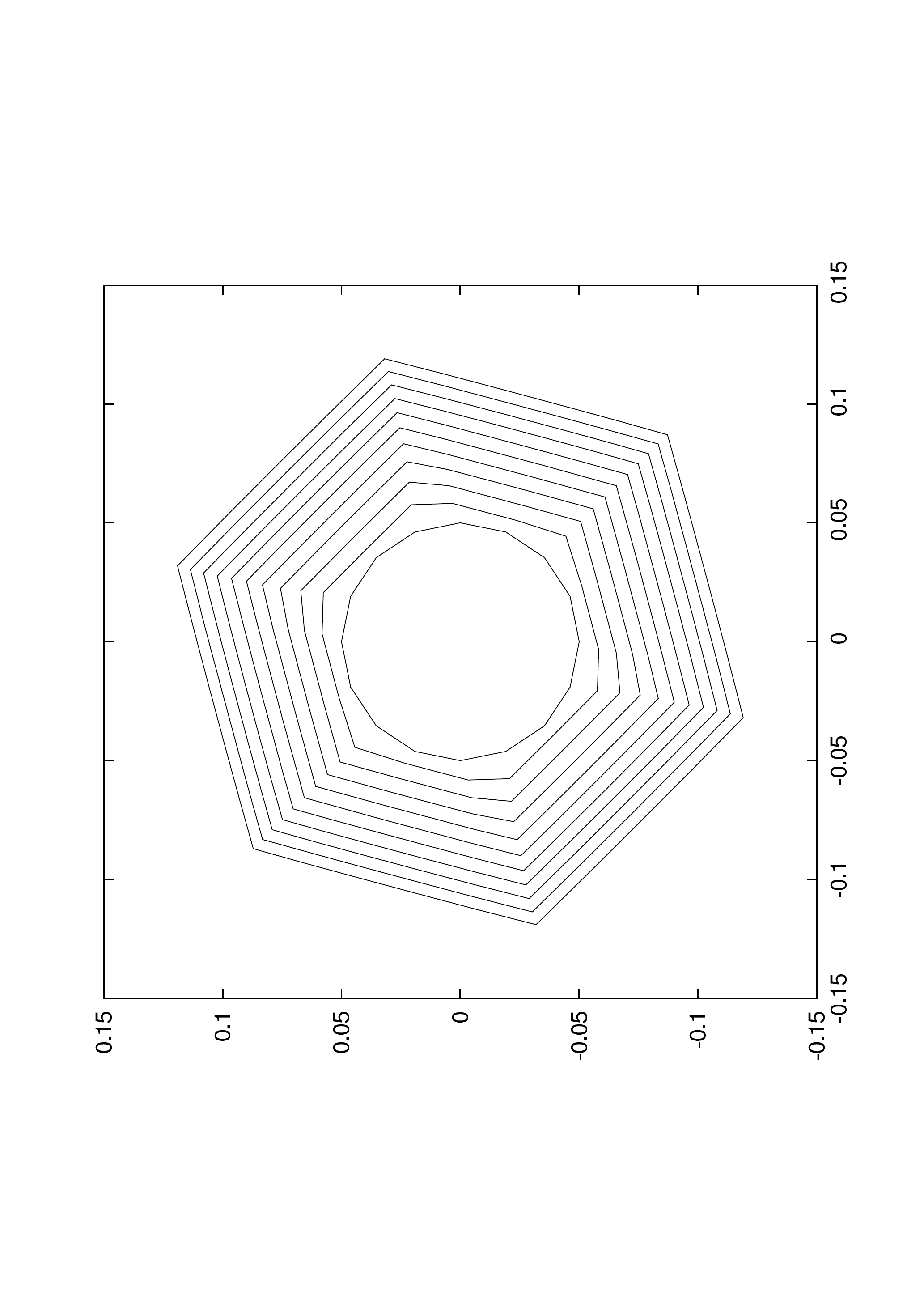}
\hspace{-2.3cm}
 \includegraphics[angle=-90,width=0.5\textwidth]{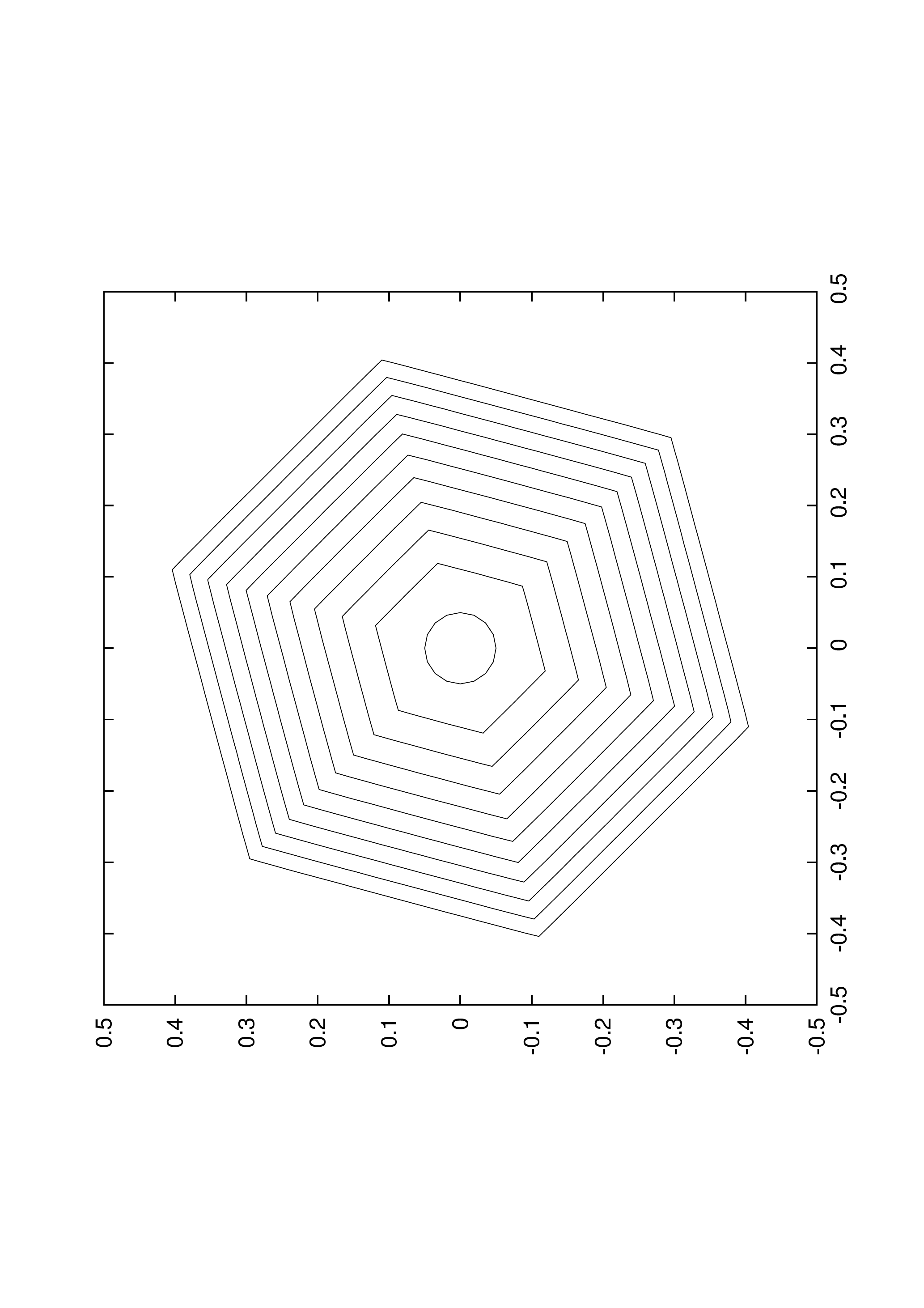}
\hspace{-2.3cm}
 \includegraphics[angle=-90,width=0.5\textwidth]{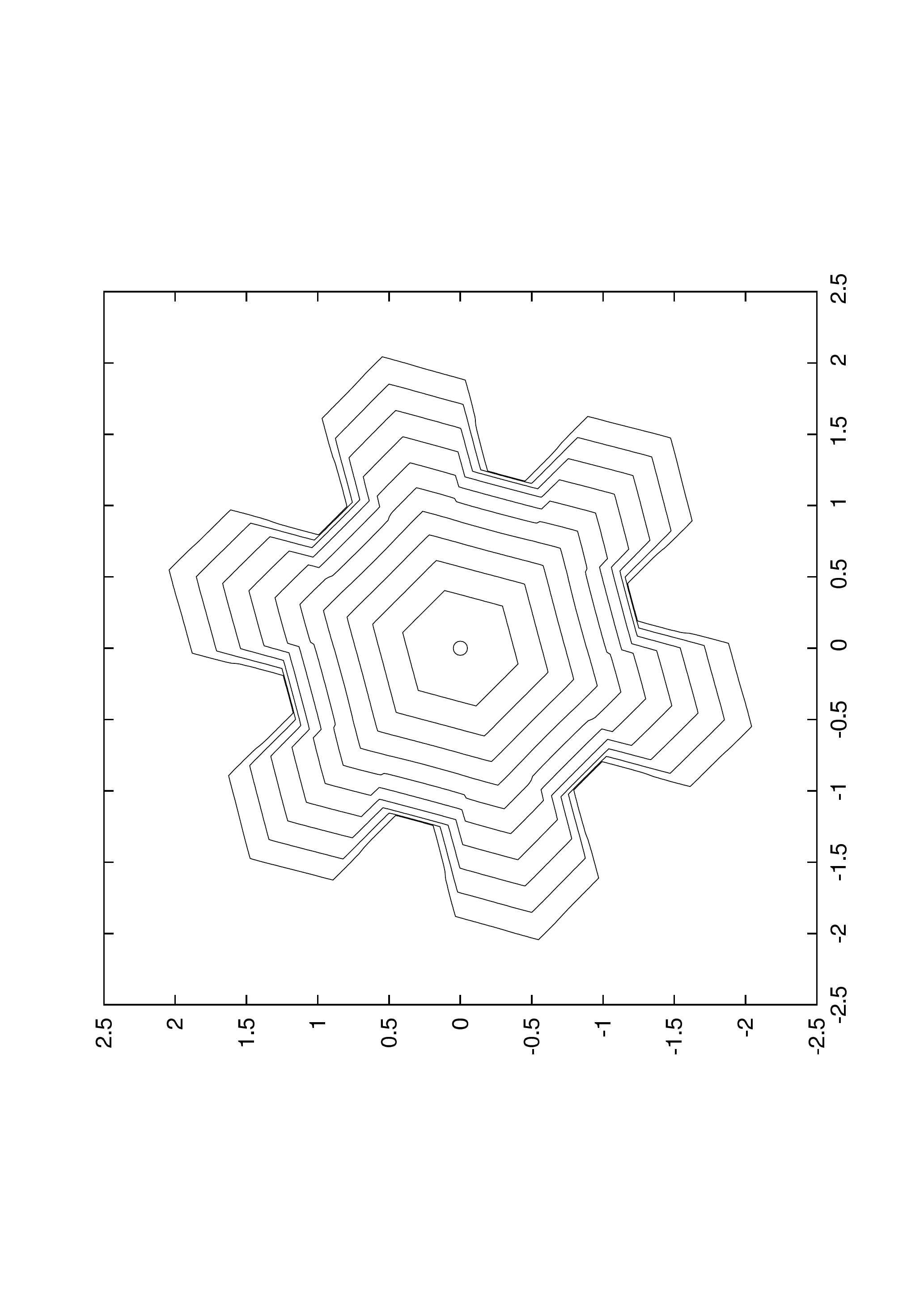}
}
\caption{($\Omega=(-4,4)^2$, $u_D = -0.004$, $\gamma=\gamma_{hex}$, $\beta = 1$)
$\vec{X}(t)$ for $t=0,\,0.5,\ldots,5$ (left), 
for $t=0,\,5,\ldots,50$ (middle), 
and for $t=0,\,50,\ldots,500$ (right).
Parameters are $N_f=256$, $N_c=4$, $K^0_\Gamma = 16$, and $\tau=0.1$.
}
\label{fig:2dhex004beta}
\end{figure} 
We observe that for this experiment, the kinetic coefficient $\beta$ appears to
have hardly any influence on the growth of the crystal. Moreover, we can
observe that the initially circular crystal seed almost immediately assumes a
shape that is favoured by the anisotropy $\gamma$, i.e.,  a shape that is close
to the Wulff shape. This shape then expands at first in a self-similar fashion,
before dendritic arms start to grow at the vertices of the shape.
In order to underline the different effects of $\gamma$ and $\beta$, we compare
the results in Figure~\ref{fig:2dhex004beta} with an experiment where we reverse the
choices of $\gamma$ and $\beta$; i.e.,  we choose an isotropic surface energy
density $\gamma = \gamma_{iso}$ as in (\ref{eq:iso}), while the kinetic
coefficient is defined by $\beta = \gamma_{hex}$; recall (\ref{eq:hexgamma2d}). 
The numerical results for this experiment can be seen in
Figure~\ref{fig:2dhex004beta10}.
\begin{figure}[t]
\center
\mbox{
\hspace{-1.3cm}
 \includegraphics[angle=-90,width=0.5\textwidth]{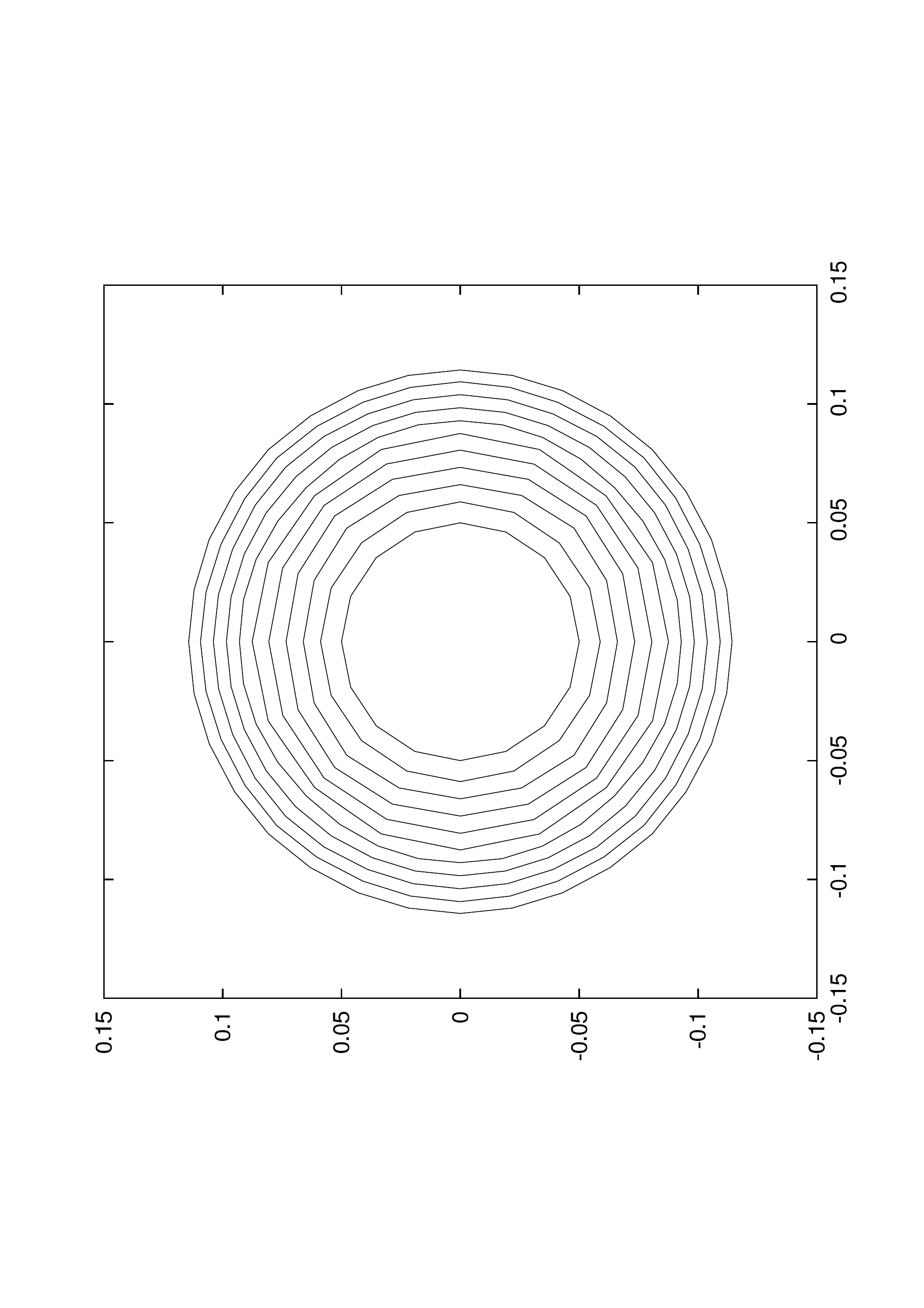}
\hspace{-2.3cm}
 \includegraphics[angle=-90,width=0.5\textwidth]{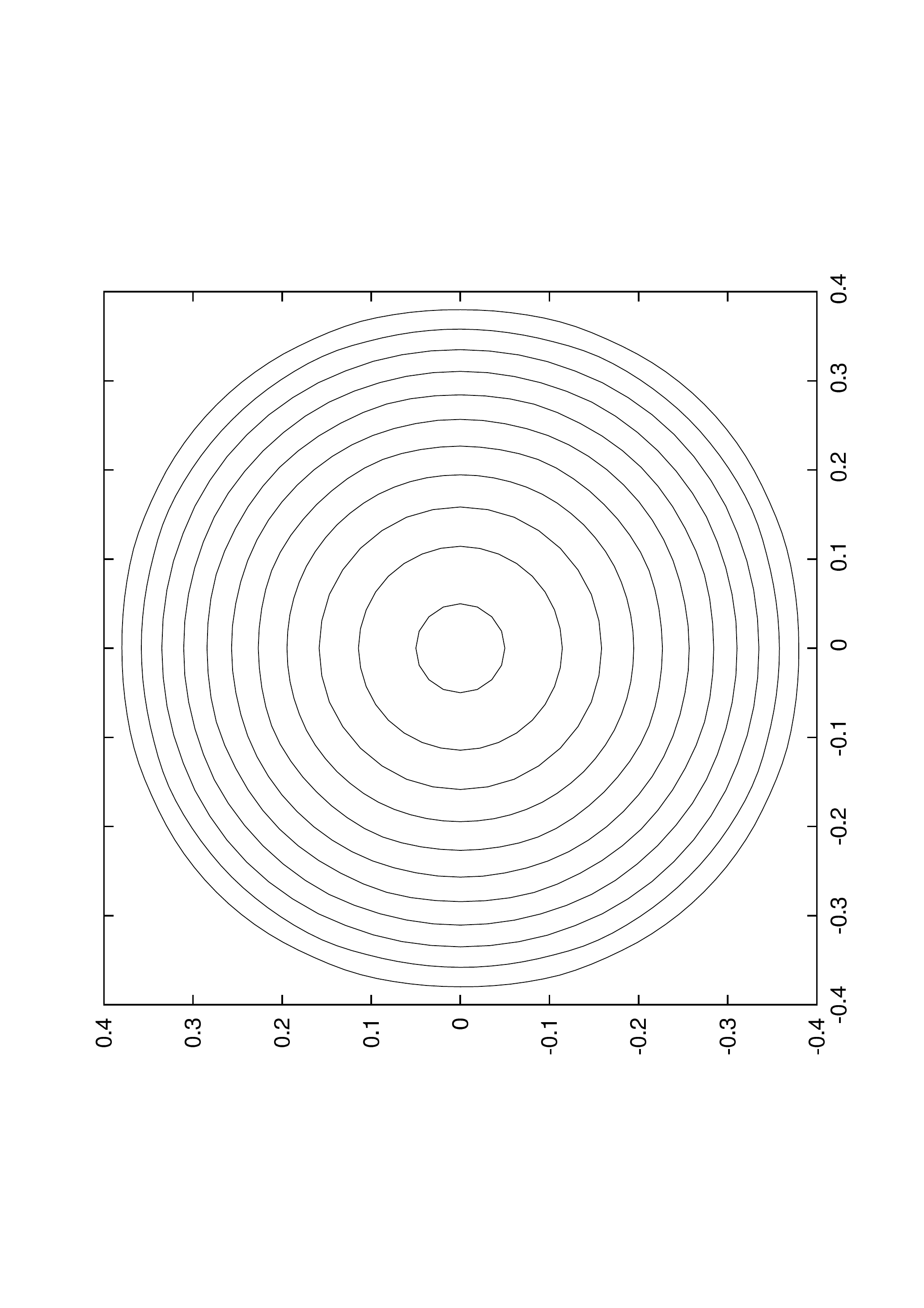}
\hspace{-2.3cm}
 \includegraphics[angle=-90,width=0.5\textwidth]{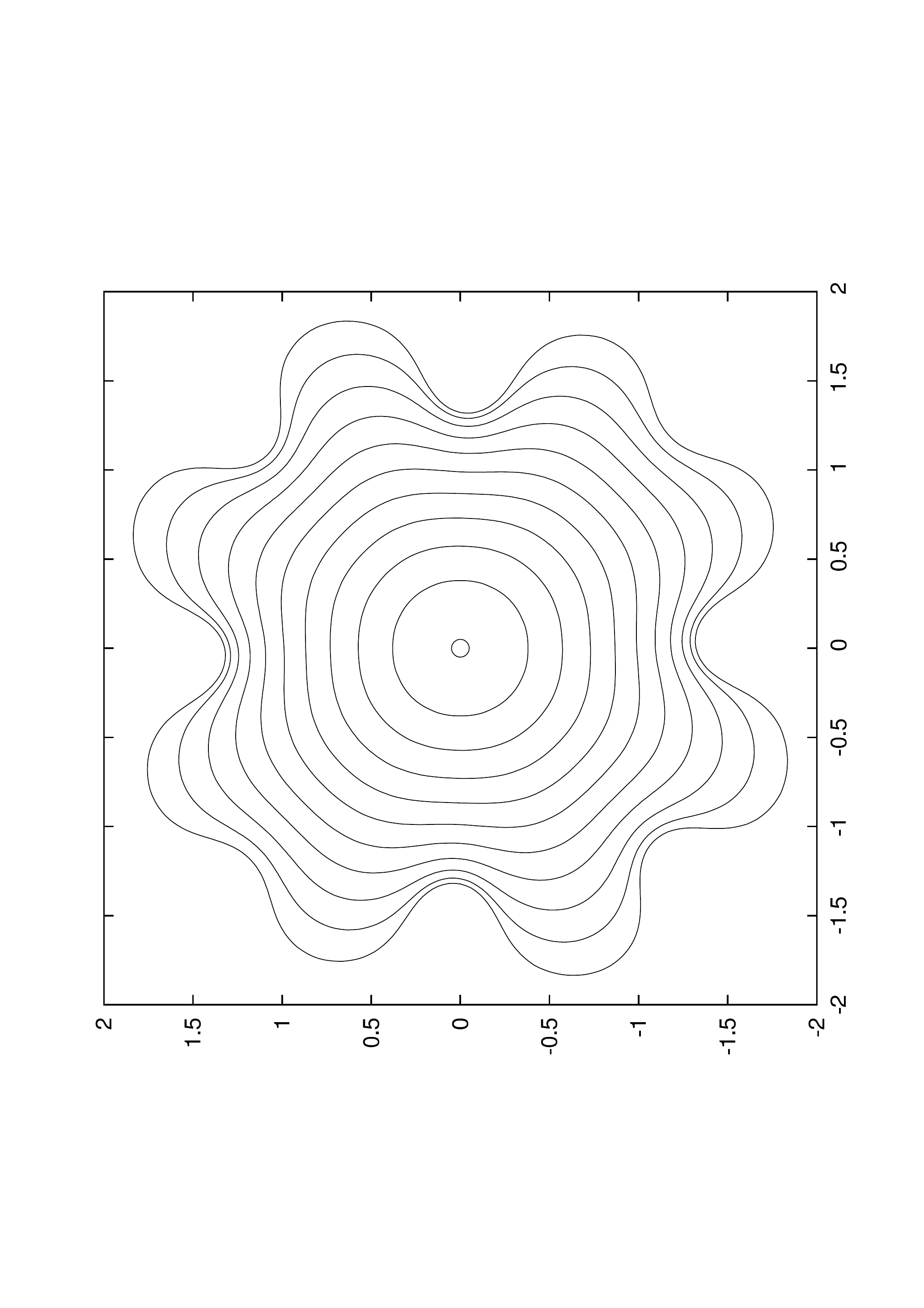}
}
\caption{($\Omega=(-4,4)^2$, $u_D = -0.004$, $\gamma = \gamma_{iso}$, 
$\beta = \gamma_{hex}$)
$\vec{X}(t)$ for $t=0,\,0.5,\ldots,5$ (left), 
for $t=0,\,5,\ldots,50$ (middle), 
and for $t=0,\,50,\ldots,500$ (right).
Parameters are $N_f=256$, $N_c=4$, $K^0_\Gamma = 16$, and $\tau=0.1$.
}
\label{fig:2dhex004beta10}
\end{figure} 

Before we look at experiments with larger values of $|u_D|$, we present the
results for a run with $u_D = -0.004$, but now run on the larger domain
$\Omega=(-8,8)^2$ and until the later time $T=2500$. See
Figure~\ref{fig:2dhex004long} for the results, where the different effects of
$\gamma$ and $\beta$ are once again visible.
In fact, the results for the isotropic surface energy $\gamma =
\gamma_{iso}$ seem to indicate that the orientation of the underlying finite
element mesh has a larger influence on the directions, in which the unstable
interface grows, than the kinetic coefficient $\beta = \gamma_{hex}$ itself.
To confirm this interpretation, we present a further comparison. This time, we
choose all coefficients as isotropic, so that $\gamma = \gamma_{iso}$ and $\beta = 1$. 
The corresponding result is shown on the right of
Figure~\ref{fig:2dhex004long}. Once again it appears that the role that $\beta$
plays here is insignificant. We observe that in the case that $\gamma$
is isotropic a tip-splitting instability occurs.
\begin{figure}[t]
\center
\mbox{
\hspace{-1.3cm}
 \includegraphics[angle=-90,width=0.5\textwidth]{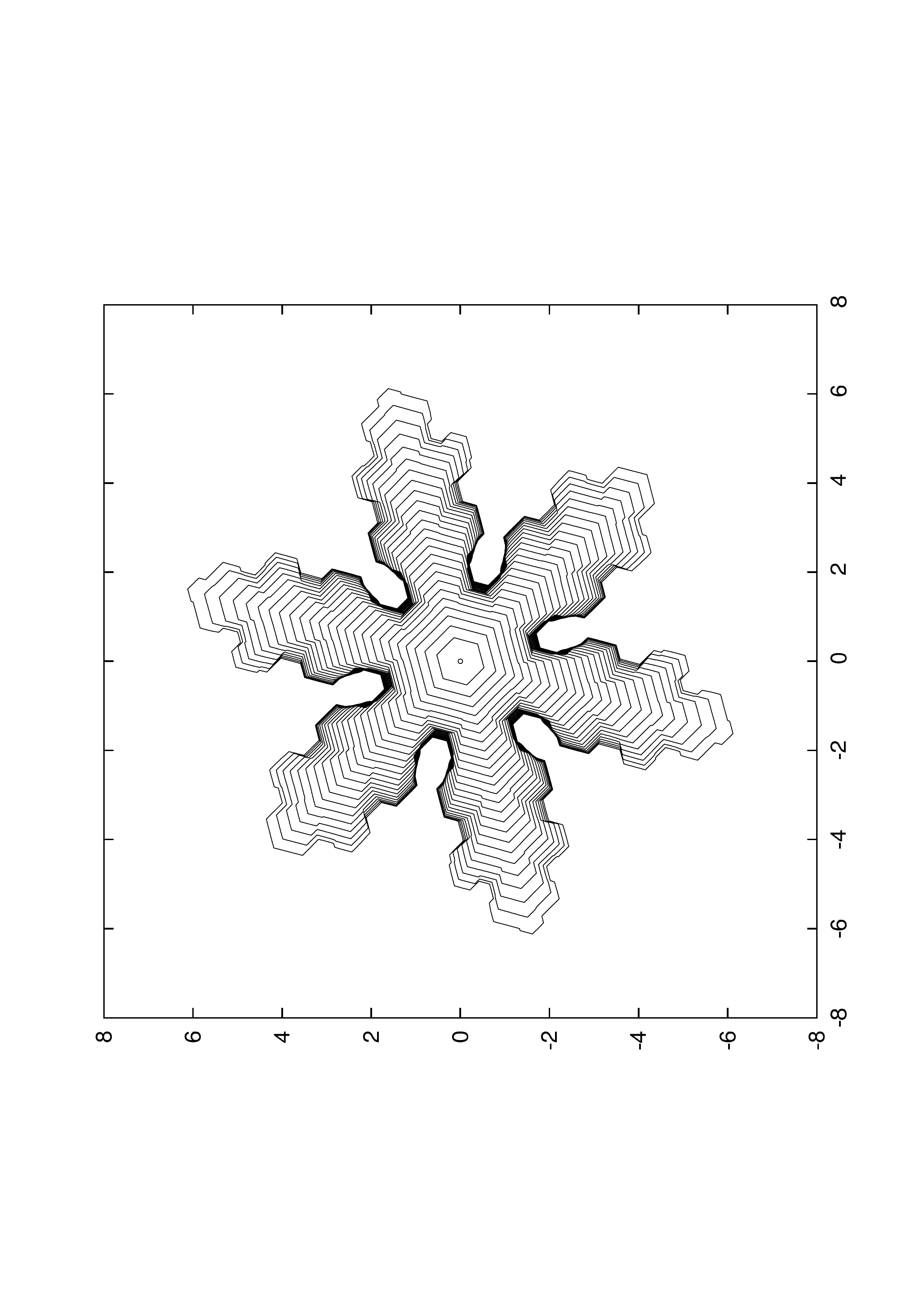}
\hspace{-2.3cm}
 \includegraphics[angle=-90,width=0.5\textwidth]{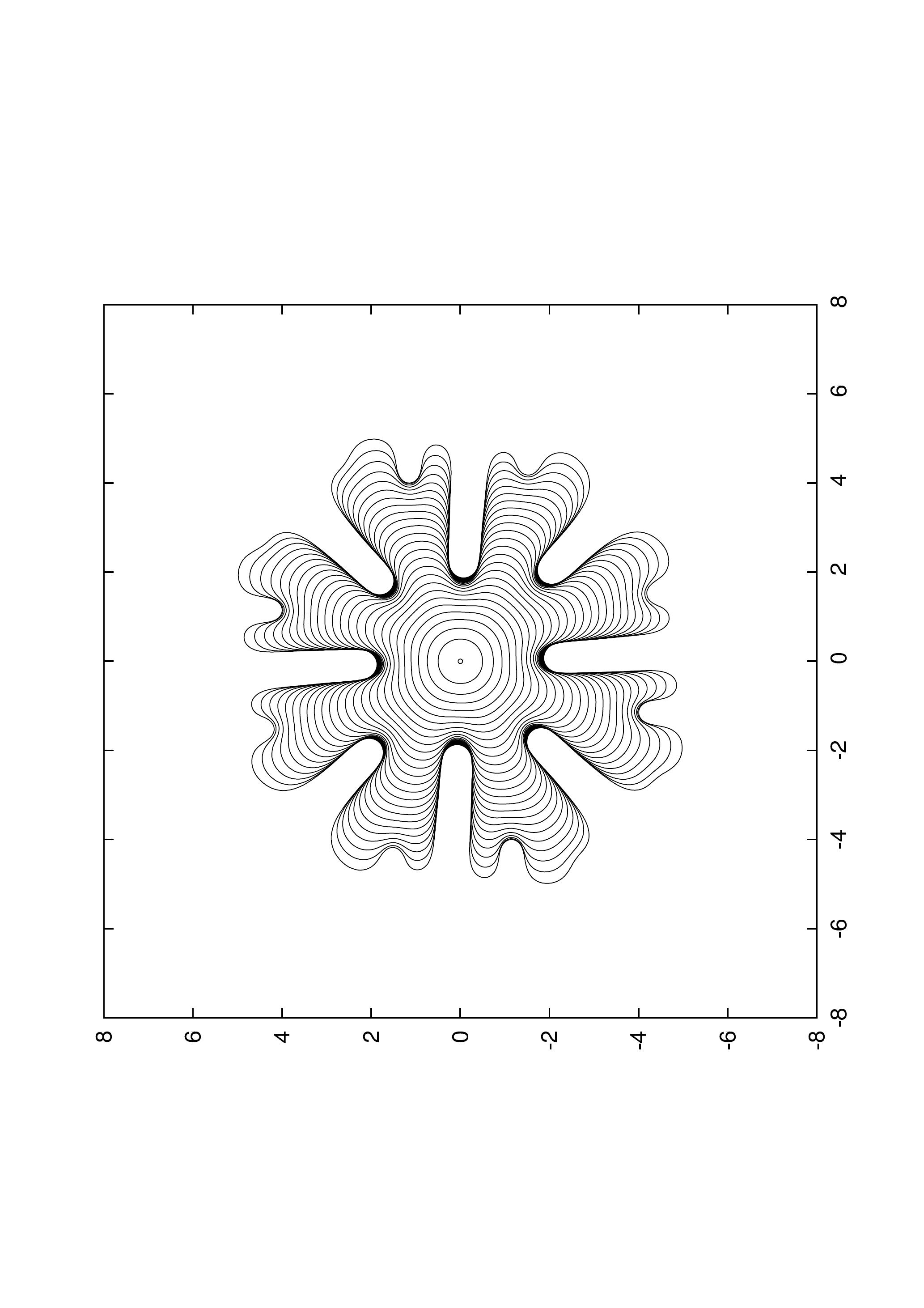}
\hspace{-2.3cm}
 \includegraphics[angle=-90,width=0.5\textwidth]{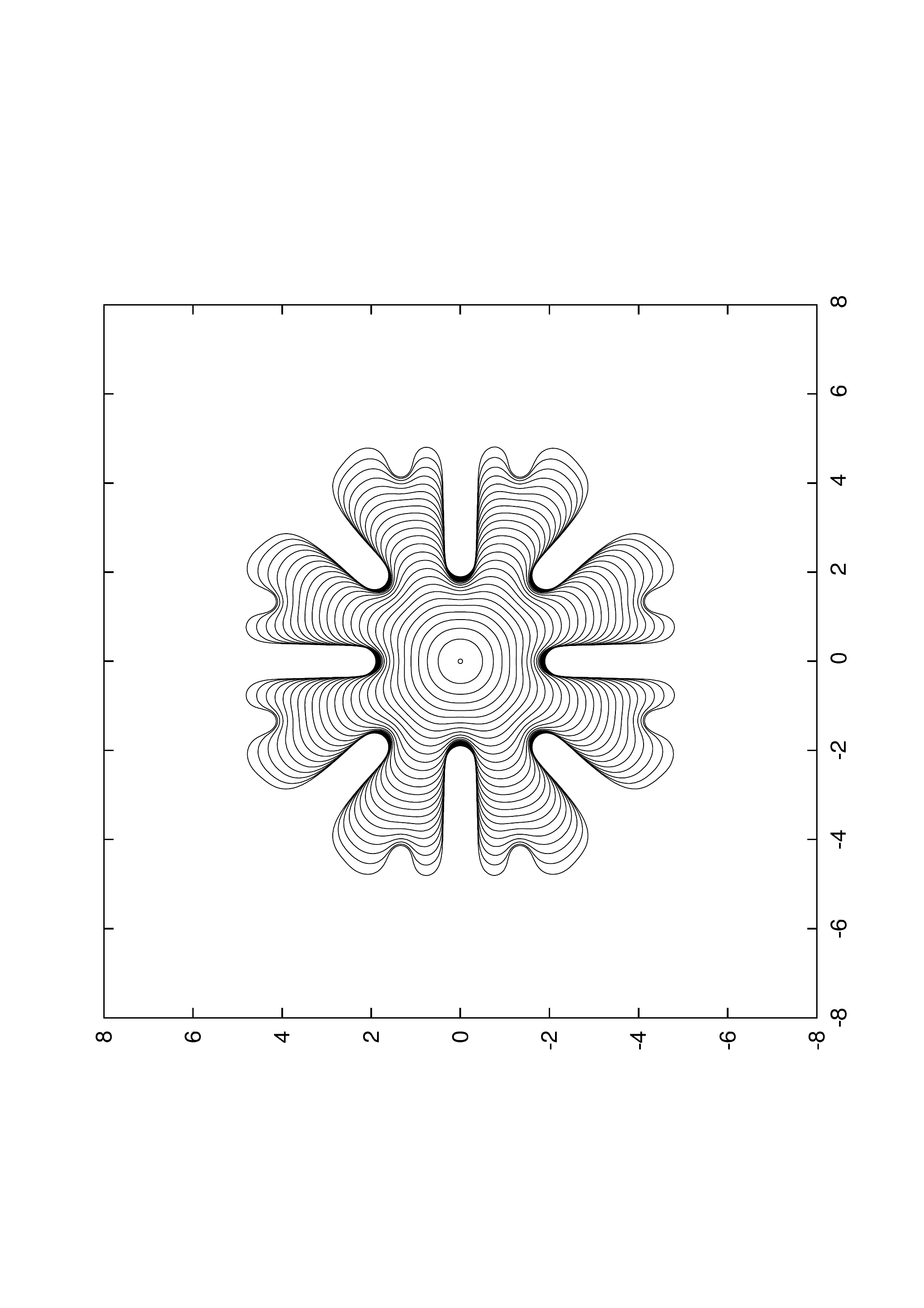} 
}
\caption{($\Omega=(-8,8)^2$, $u_D = -0.004$, 
 $\gamma = \gamma_{hex}$, $\beta = 1$ (left), $\gamma =
\gamma_{iso}$, $\beta = \gamma_{hex}$ (middle),  $\gamma = \gamma_{iso}$, $\beta = 1$, (right))
$\vec{X}(t)$ for $t=0,\,100,\ldots,2500$.
Parameters are $N_f=512$, $N_c=8$, $K^0_\Gamma = 16$, and $\tau=0.1$.
}
\label{fig:2dhex004long}
\end{figure}

In the next experiment, we set $u_D = -0.01$ for $\gamma=\beta=\gamma_{hex}$. The
results are shown in Figure~\ref{fig:2dhex01coarse}, and we observe
that a larger supersaturation enhances the unstable behaviour. 
\begin{figure}[ht]
\center
\mbox{
 \includegraphics[angle=-90,width=0.5\textwidth]{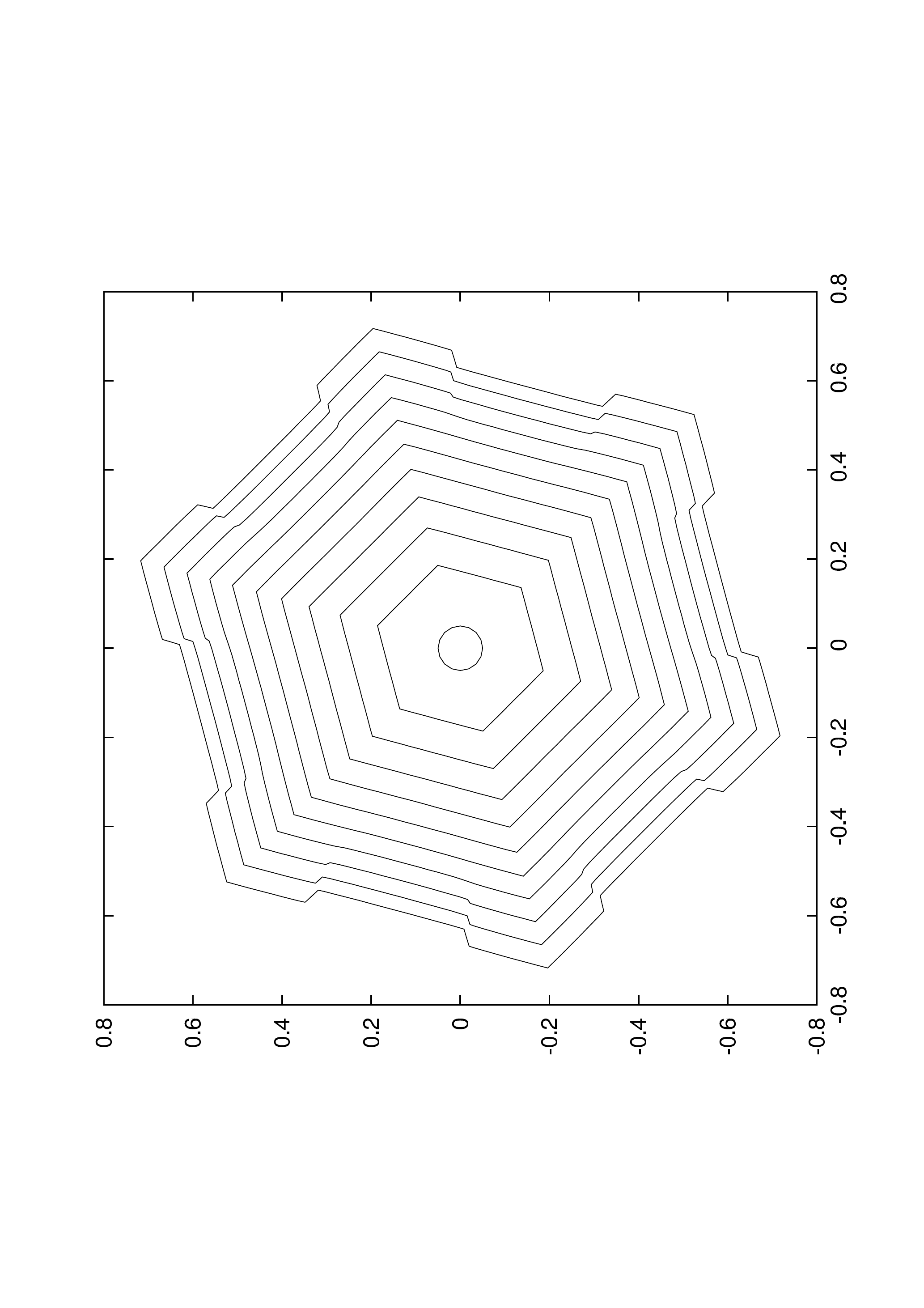}
\quad 
 \includegraphics[angle=-90,width=0.5\textwidth]{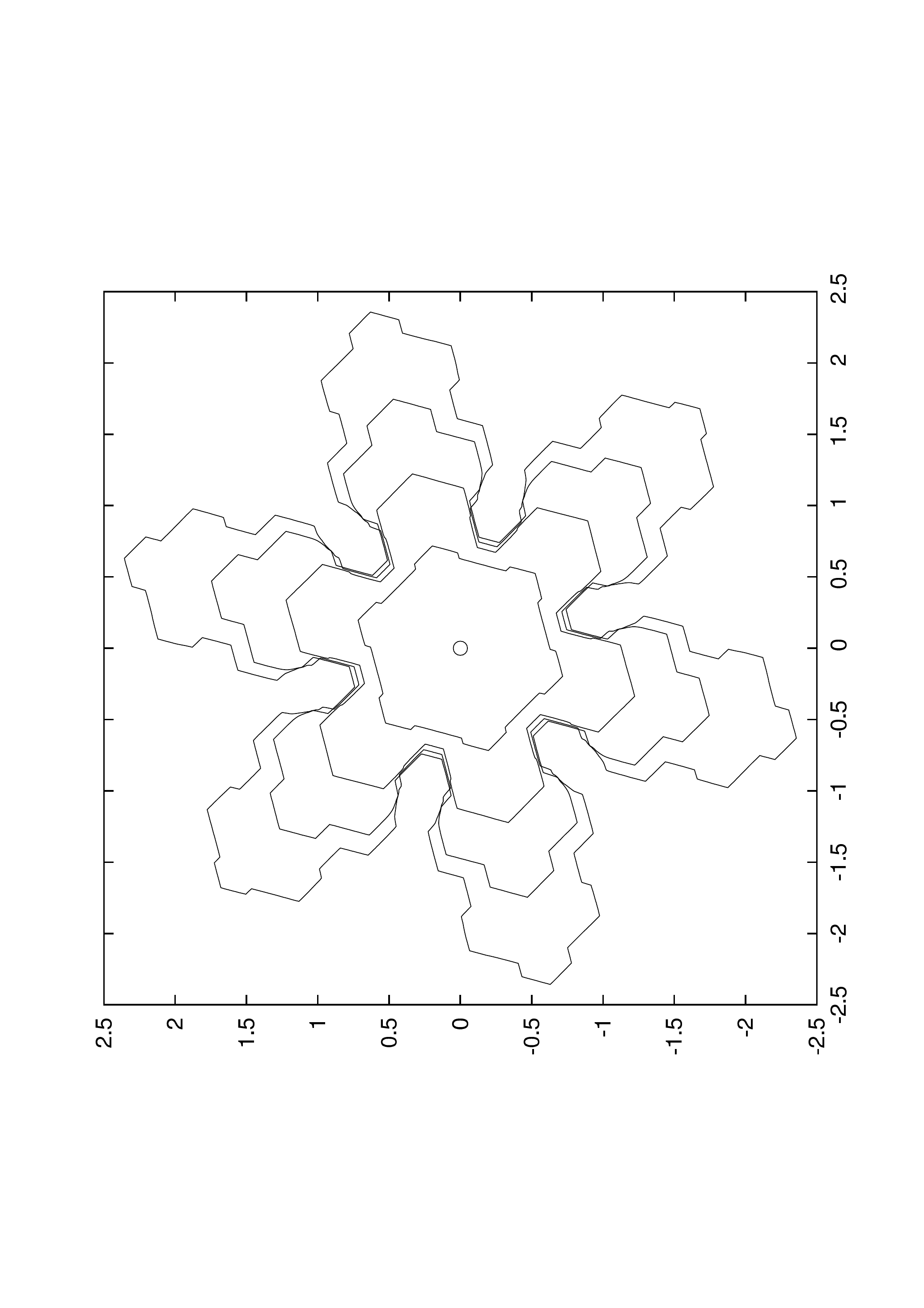}
}
\caption{($\Omega=(-4,4)^2$, $u_D = -0.01$, $\gamma=\beta=\gamma_{hex}$)
$\vec{X}(t)$ for $t=0,\,5,\ldots,50$ (left), 
and for $t=0,\,50,\ldots,200$ (right).
Parameters are $N_f=512$, $N_c= K^0_\Gamma = 16$, and $\tau=5\times10^{-3}$.
}
\label{fig:2dhex01coarse}
\end{figure} 

In the next experiment, we set $u_D = -0.04$. The
results are shown in Figure~\ref{fig:2dhex04}.
\begin{figure}[ht]
\center
\mbox{
 \includegraphics[angle=-90,width=0.5\textwidth]{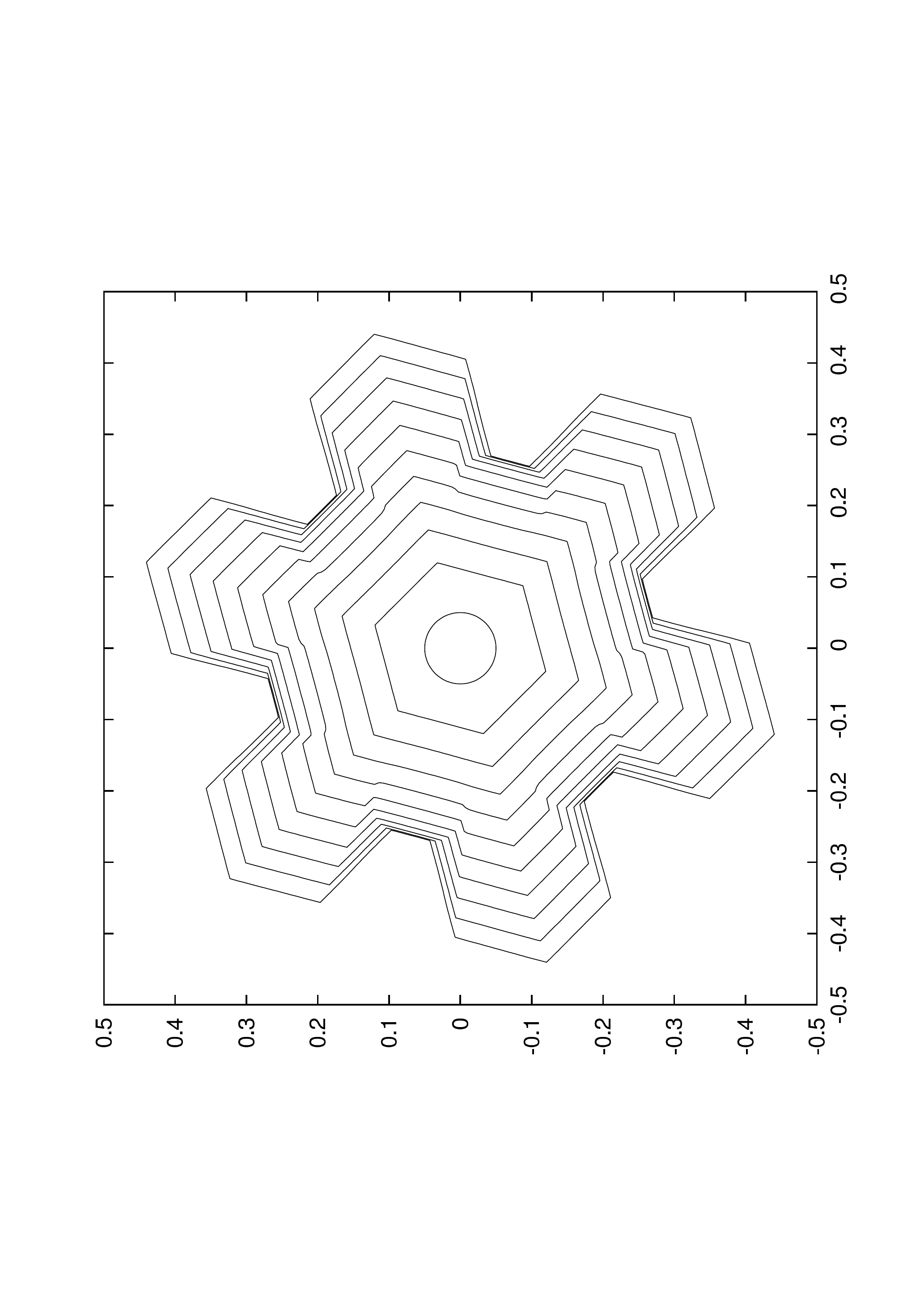}  
\quad
 \includegraphics[angle=-90,width=0.5\textwidth]{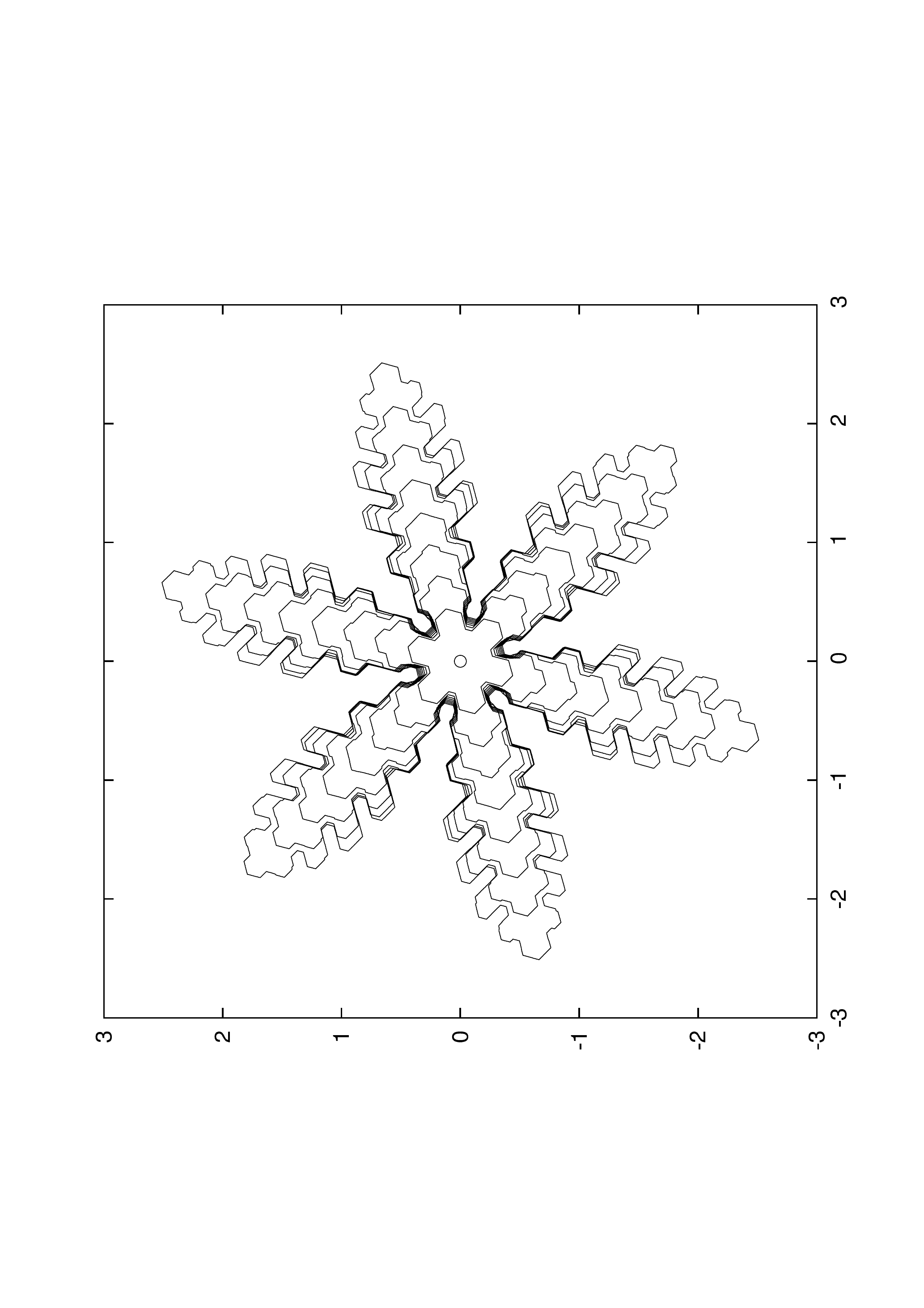}
}
\caption{($\Omega=(-4,4)^2$, $u_D = -0.04$, $\gamma=\beta=\gamma_{hex}$)
$\vec{X}(t)$ for $t=0,\,0.5,\ldots,5$ (left), 
and for $t=0,\,5,\ldots,40$ (right).
Parameters are $N_f=1024$, $N_c=K^0_\Gamma = 64$, and $\tau=2.5\times10^{-3}$.
}
\label{fig:2dhex04}
\end{figure} 
The distribution of $U$ at time $t=40$ can be seen in
Figure~\ref{fig:2dhex04_tempX}. Here we note that, according to the definitions
(\ref{eq:Sml}), in these plots $U$  is set to zero
inside the solid phase.
\begin{figure}[ht]
\center
\mbox{
\hspace{-1.3cm}
 \includegraphics[angle=-90,width=0.45\textwidth]{figures/2dhex_c04_T40_ps}
\hspace{-1.3cm}
 \includegraphics[angle=-90,width=0.33\textwidth]{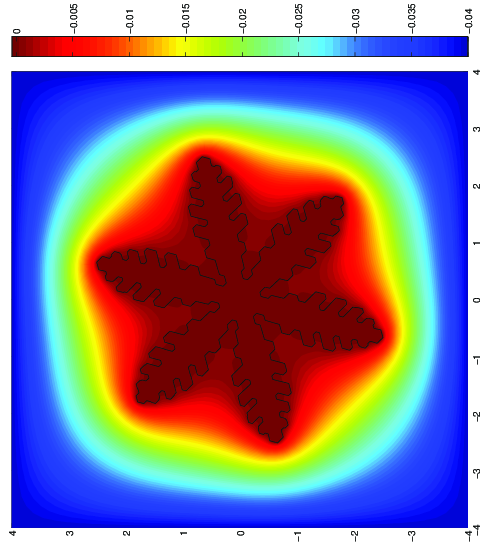}
 \includegraphics[angle=-90,width=0.33\textwidth]{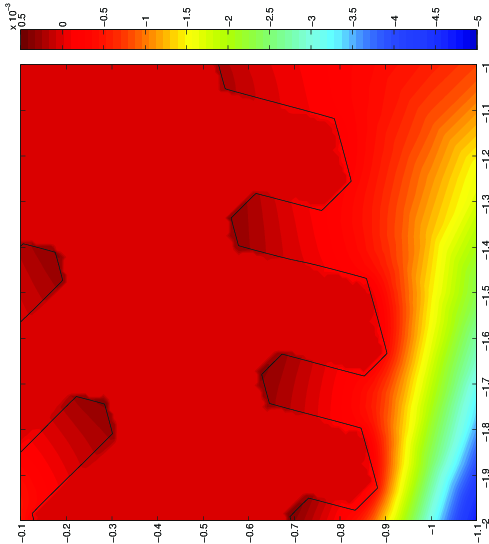}
}
\caption{(One-sided Mullins--Sekerka problem)
$\vec{X}(t)$ for for $t=0,\,5,\ldots,40$ (left).
$\vec X(t)$ and $U(t)$ for $t=40$ on $[-4,4]^2$ (middle) and on
$[-2,-1]\times[-1.1,-0.1]$ (right).
}
\label{fig:2dhex04_tempX}
\end{figure} 

As a comparison, we repeat the same experiment as in
Figure~\ref{fig:2dhex04_tempX} now for (i) the one-sided Stefan problem,
(ii) the two-sided Mullins--Sekerka problem, and (iii) the two-sided Stefan
problem with $\vartheta=1$ for the Stefan problems.
Note that for (ii) and (iii) we employ the finite-element approximation
from \cite{dendritic}, while for (i) we use (\ref{eq:uHGa}--c) with
$\vartheta=1$. The corresponding plots are shown in 
Figures~\ref{fig:2dhex04_ossp}--\ref{fig:2dhex04_tssp}. We observe that
the difference between the one-sided and the two-sided problems is not
very pronounced, but one notices that the sidearms in the two-sided
problems grow more slowly due to the fact that diffusion into the crystal
is possible. 
\begin{figure}[ht]
\center
\mbox{
\hspace{-1.3cm}
 \includegraphics[angle=-90,width=0.45\textwidth]{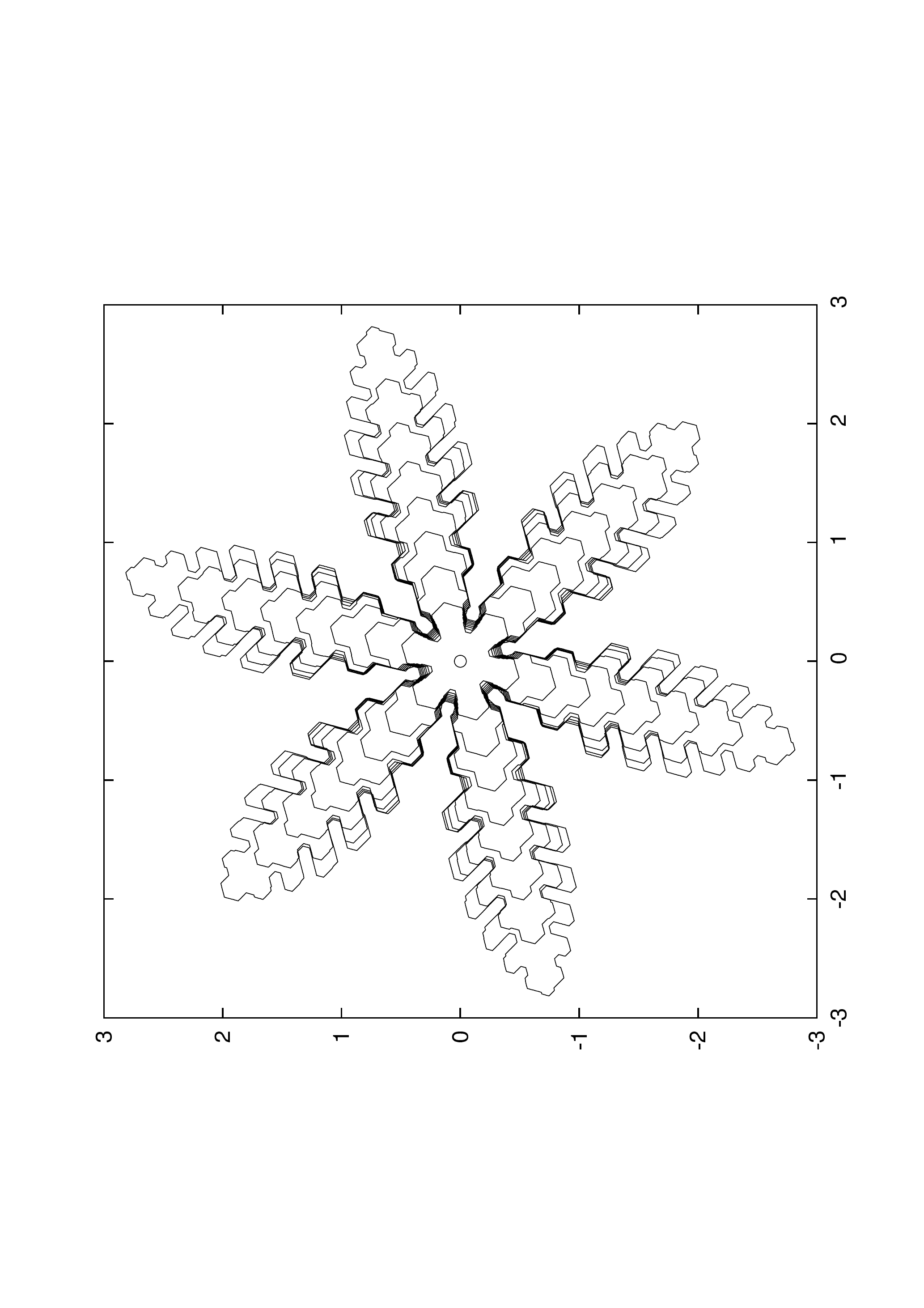}
\hspace{-1.3cm}
 \includegraphics[angle=-90,width=0.33\textwidth]{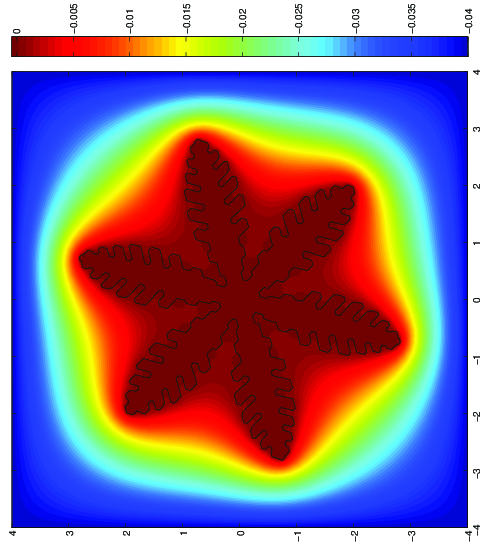}
 \includegraphics[angle=-90,width=0.33\textwidth]{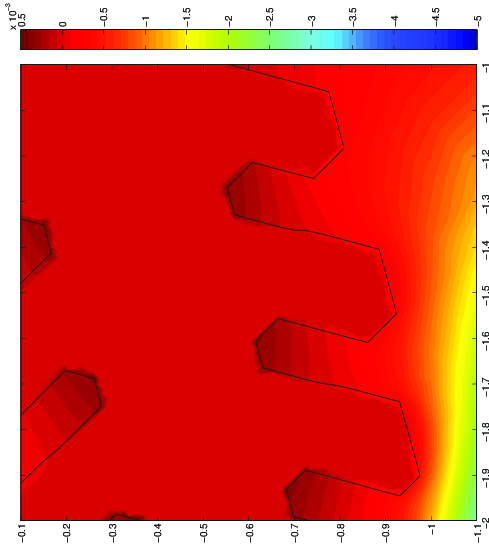}
}
\caption{(One-sided Stefan problem) 
$\vec{X}(t)$ for for $t=0,\,5,\ldots,40$ (left).
$\vec X(t)$ and $U(t)$ for $t=40$ on $[-4,4]^2$ (middle) and on
$[-2,-1]\times[-1.1,-0.1]$ (right).}
\label{fig:2dhex04_ossp}
\end{figure} 
\begin{figure}[ht]
\center
\mbox{
\hspace{-1.3cm}
 \includegraphics[angle=-90,width=0.45\textwidth]{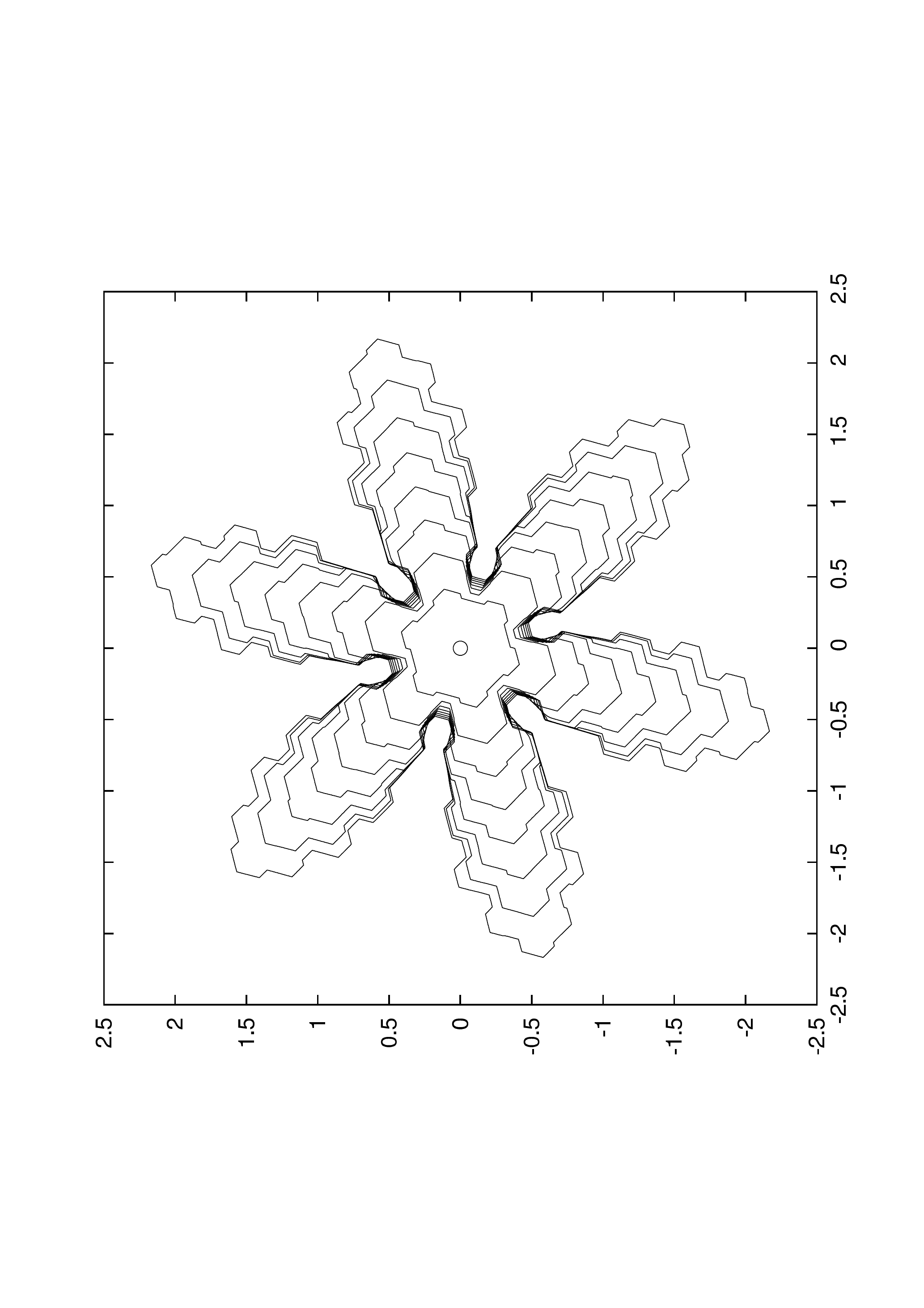}
\hspace{-1.3cm}
 \includegraphics[angle=-90,width=0.33\textwidth]{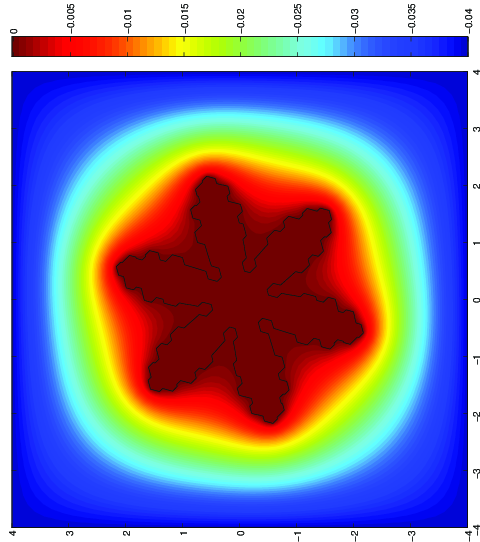}
 \includegraphics[angle=-90,width=0.33\textwidth]{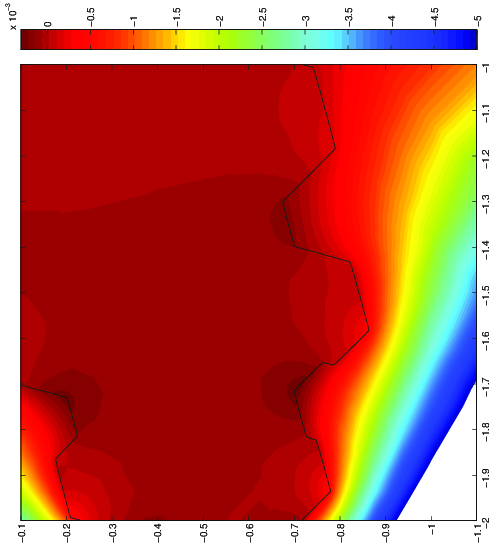}
}
\caption{(Two-sided Mullins--Sekerka problem)
$\vec{X}(t)$ for for $t=0,\,5,\ldots,40$ (left).
$\vec X(t)$ and $U(t)$ for $t=40$ on $[-4,4]^2$ (middle) and on
$[-2,-1]\times[-1.1,-0.1]$ (right).}
\label{fig:2dhex04_tsms}
\end{figure} 
\begin{figure}[ht]
\center
\mbox{
\hspace{-1.3cm}
 \includegraphics[angle=-90,width=0.45\textwidth]{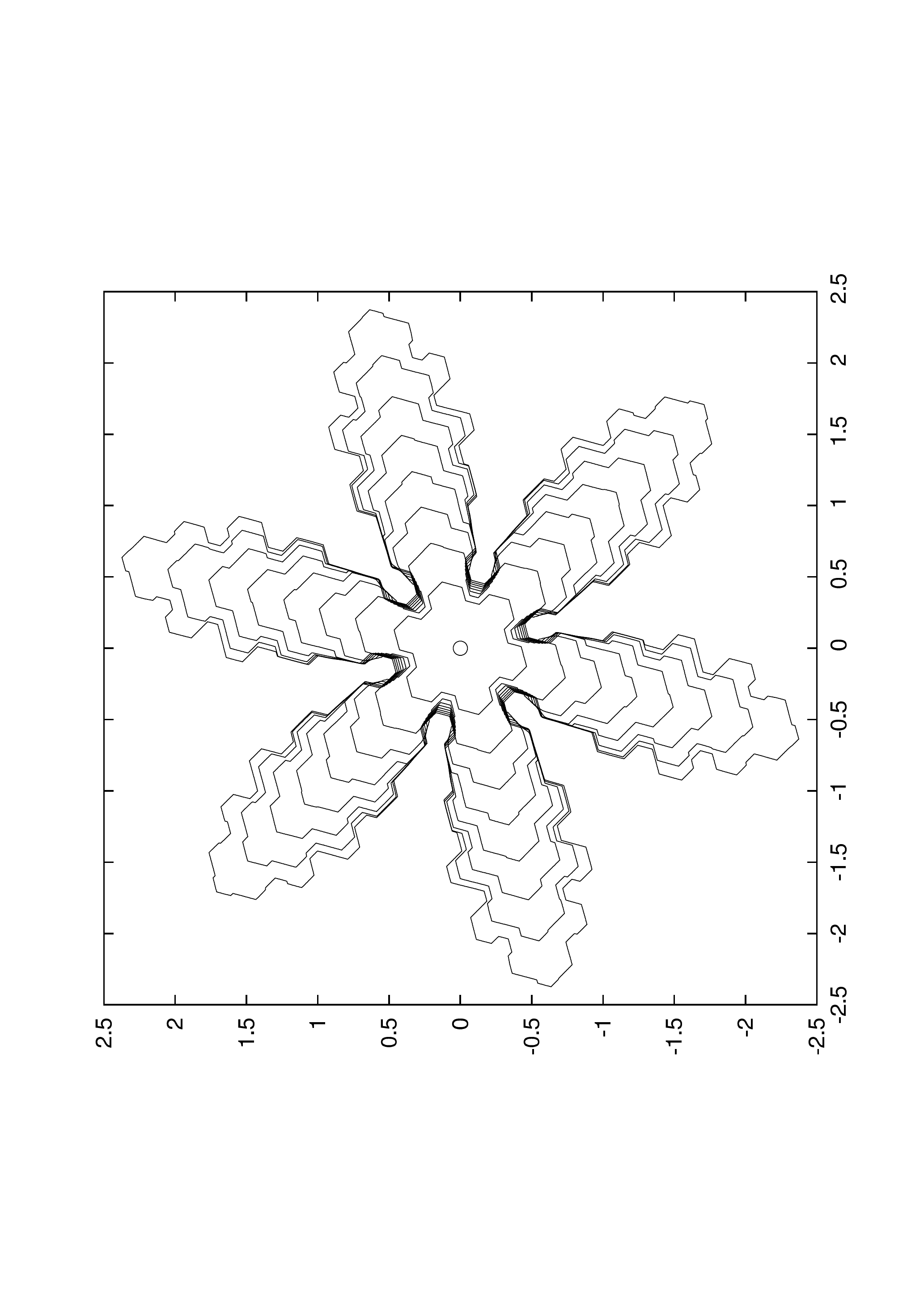}
\hspace{-1.3cm}
 \includegraphics[angle=-90,width=0.33\textwidth]{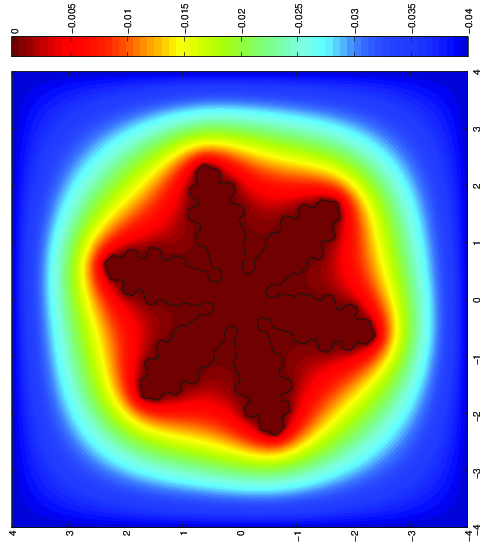}
 \includegraphics[angle=-90,width=0.33\textwidth]{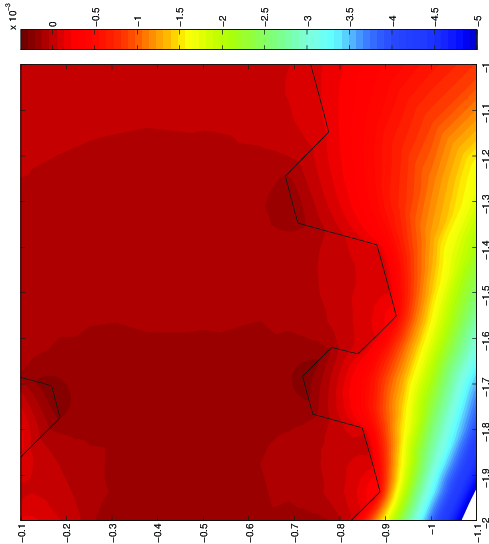}
}
\caption{(Two-sided Stefan problem)
$\vec{X}(t)$ for for $t=0,\,5,\ldots,40$ (left).
$\vec X(t)$ and $U(t)$ for $t=40$ on $[-4,4]^2$ (middle) and on
$[-2,-1]\times[-1.1,-0.1]$ (right).}
\label{fig:2dhex04_tssp}
\end{figure} 

In the final experiments for $d=2$, we return to the one-sided Mullins--Sekerka problem and 
set $u_D = -0.08$ and $u_D=-0.2$. The
results are shown in Figures~\ref{fig:2dhex08} and \ref{fig:2dhex2}, respectively.
\begin{figure}[ht]
\center
\mbox{
 \includegraphics[angle=-90,width=0.5\textwidth]{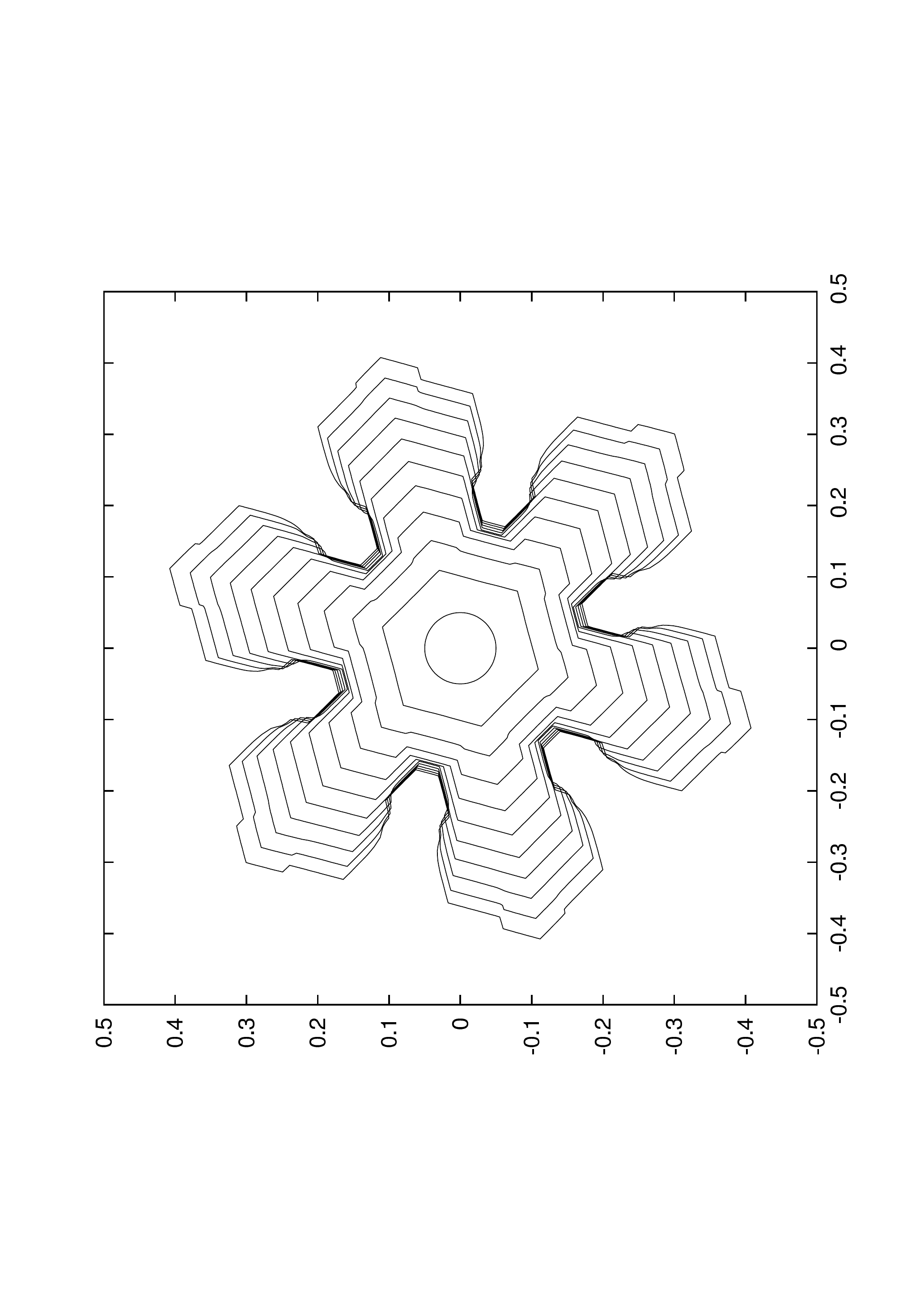}
\quad 
 \includegraphics[angle=-90,width=0.5\textwidth]{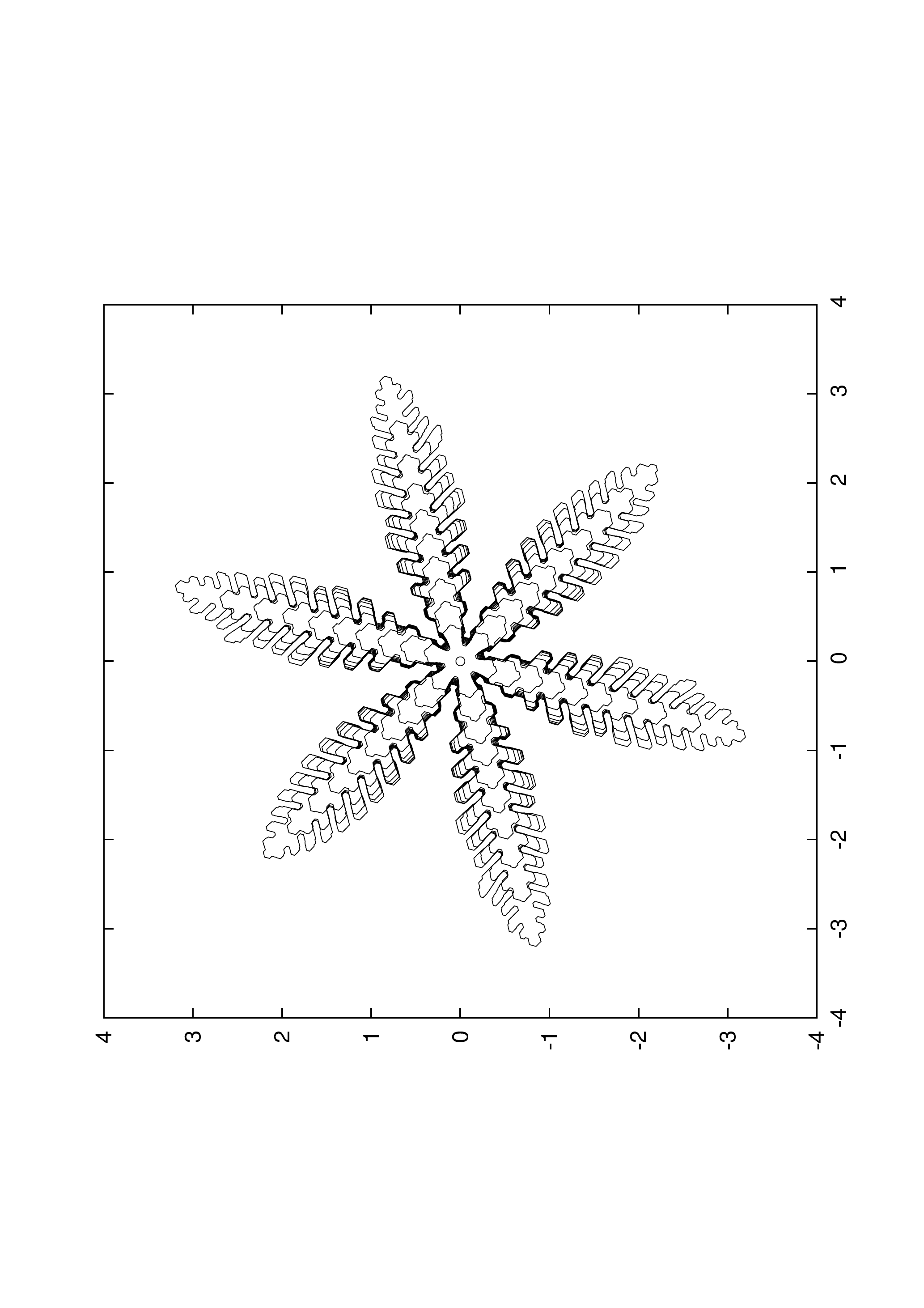}
}
\caption{($\Omega=(-4,4)^2$, $u_D = -0.08$, $\gamma=\beta$)
$\vec{X}(t)$ for $t=0,\,0.2,\ldots,2$ (left), 
and for $t=0,\,2,\ldots,20$ (right).
Parameters are $N_f=1024$, $N_c = K^0_\Gamma = 64$, and $\tau=10^{-3}$.
}
\label{fig:2dhex08}
\end{figure}
\begin{figure}
\center
\mbox{
 \includegraphics[angle=-90,width=0.5\textwidth]{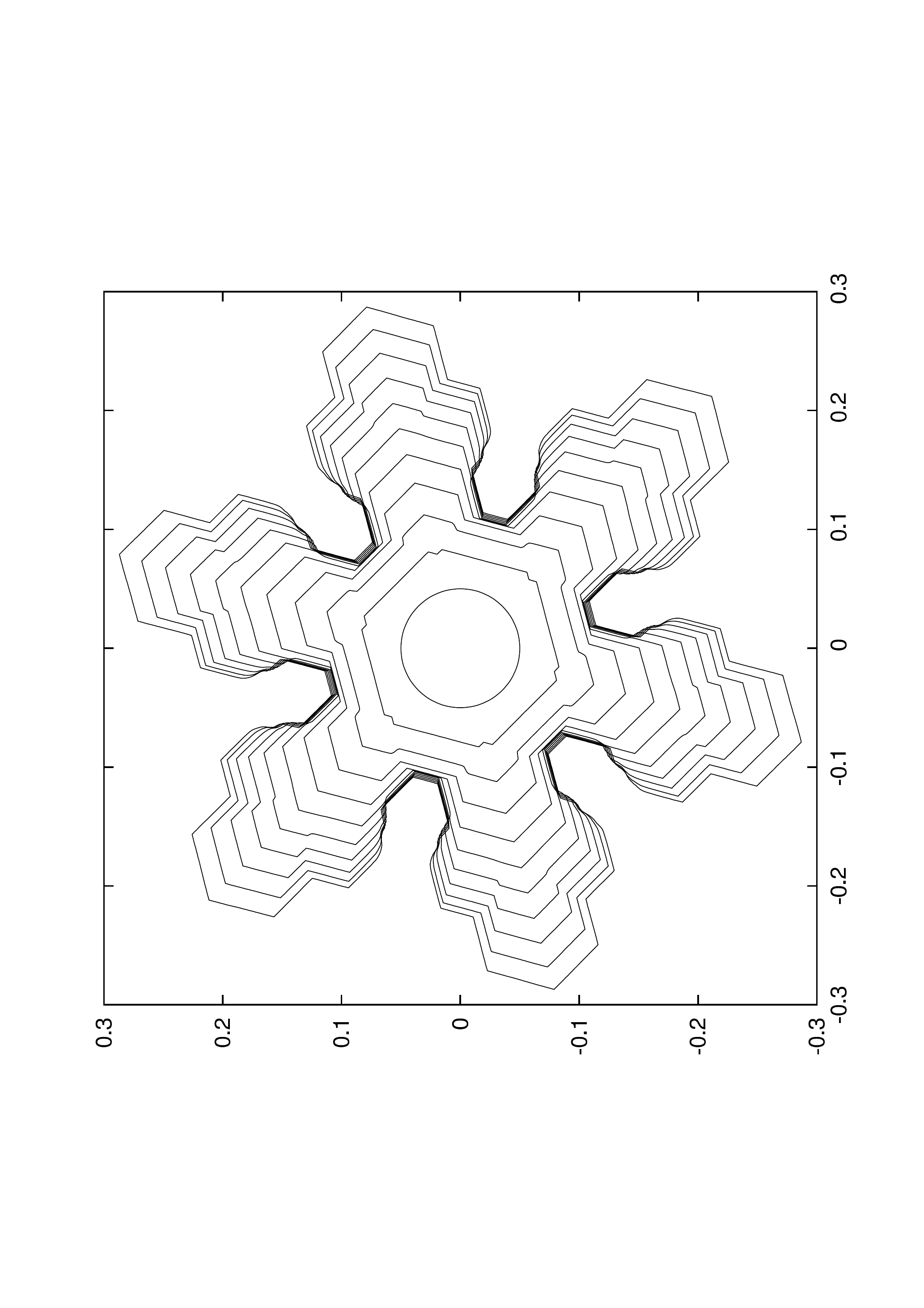}
\quad 
 \includegraphics[angle=-90,width=0.5\textwidth]{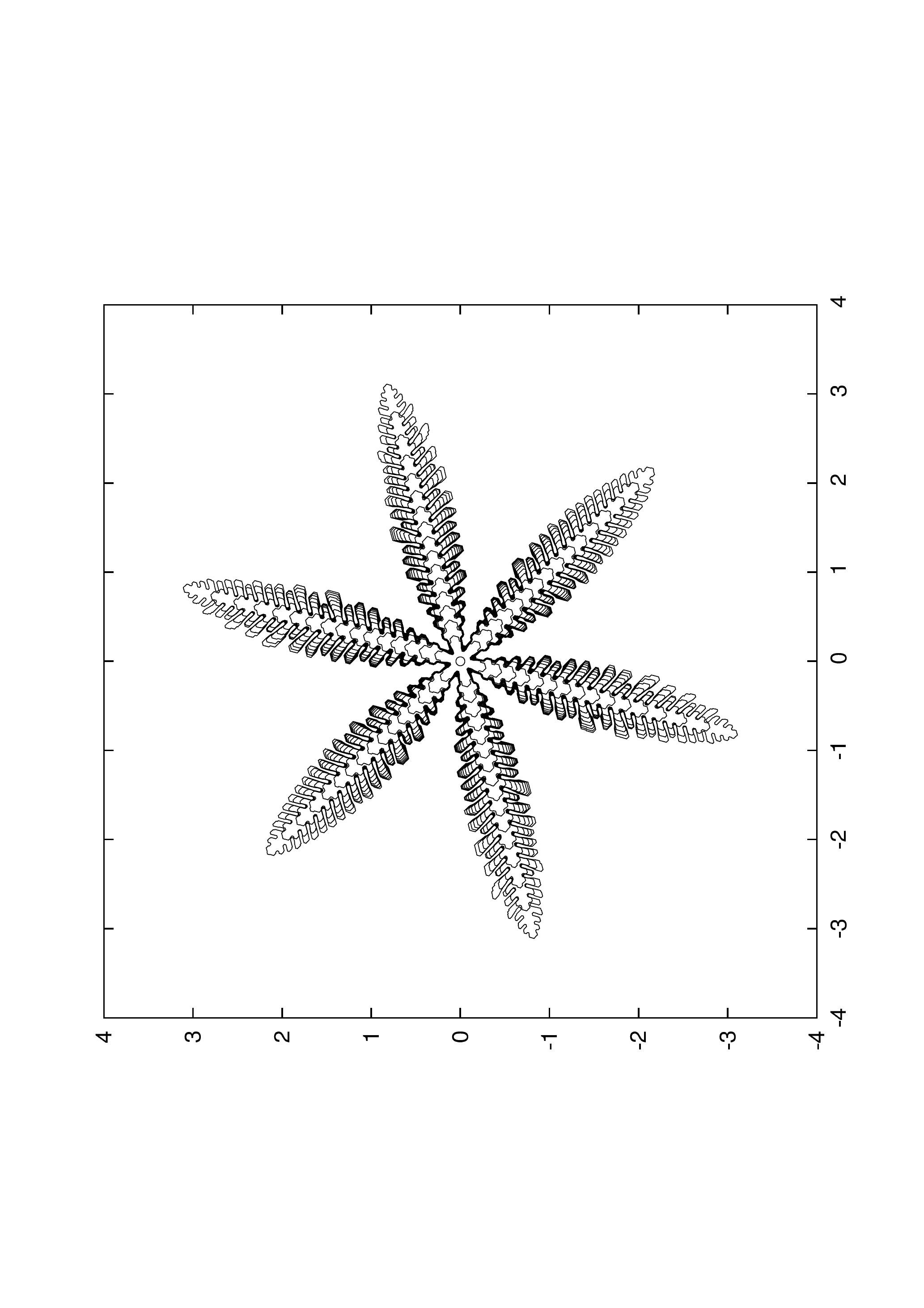}
}
\caption{($\Omega=(-4,4)^2$, $u_D = -0.2$, $\gamma=\beta$)
$\vec{X}(t)$ for $t=0,\,0.04,\ldots,0.4$ (left), 
and for $t=0,\,0.4,\ldots,6.4$ (right).
Parameters are $N_f=2048$, $N_c = K^0_\Gamma = 128$, 
and $\tau=2.5\times10^{-4}$.
}
\label{fig:2dhex2}
\end{figure}

\subsection{Crystal growth simulations for $d=3$}
Throughout this subsection, unless otherwise stated, 
we use the parameters in (\ref{eq:HGparams}) and
$\gamma=\gamma_{hex}$ defined by (\ref{eq:L44}) 
with $\epsilon=0.01$ and $\theta_0=\frac\pi{12}$.
Once again,
we use this rotation of the anisotropy $\gamma$, so that the dominant growth
directions are not exactly aligned with the $x_1$- and $x_2$-directions of the
underlying finite-element meshes
$\mathcal{T}^m$. 
Moreover, the radius of the initial crystal seed $\Gamma(0)$ is always chosen
to be $R_0=0.05$.
For later use, we define the kinetic coefficients
\begin{subequations}
\begin{equation} \label{eq:betaflat}
\beta_{\rm flat}(\vec{p}) = \beta_{\rm flat,\ell}(\vec{p}) 
:= [p_1^2 + p_2^2 + 10^{-2\ell}\,p_3^2]^\frac12
\quad\text{with}\quad  \ell \in \mathbb{N},
\end{equation}
and
\begin{equation} \label{eq:betatall}
\beta_{\rm tall}(\vec{p}) = \beta_{\rm tall,\ell}(\vec{p}) 
:= [10^{-2\ell}\,(p_1^2 + p_2^2) + p_3^2]^\frac12
\quad\text{with}\quad  \ell \in \mathbb{N}.
\end{equation}
\end{subequations}
We note that in practice, similarly to the two-dimensional results in 
Figures~\ref{fig:2dhex004} and \ref{fig:2dhex004beta}, there was hardly any 
difference between the numerical results for a kinetic coefficient $\beta$ that
is isotropic in the $x_1$-$x_2$ plane, such as $\beta_{\rm flat}$ and 
$\beta_{\rm tall}$, and one that is anisotropically aligned to the surface
energy density, such as e.g.\ $\beta = \beta_{\rm flat}\,\gamma$. Hence in all
our experiments we always choose coefficients $\beta$ that
are isotropic in the $x_1$-$x_2$ plane, e.g.\ (\ref{eq:betaflat},b). 

In the first experiment, we set $u_D = - 0.004$ and compare the results for the two
coefficients $\beta = 1$   
and $\beta = \beta_{\rm flat,3}$; see 
Figures~\ref{fig:20}   
and \ref{fig:22}. We observe that the kinetic coefficient seems to be
responsible for the fact whether solid prisms or thin plates grow; see
also the Nakaya diagram in Figure~\ref{fig:libbrecht} 
and \cite{Libbrecht05}. More details of
the evolution for the 
simulation in Figure~\ref{fig:22} 
are given in Figure~\ref{fig:23}.
\begin{figure}[ht]
\center
 \includegraphics[angle=-90,totalheight=3cm]{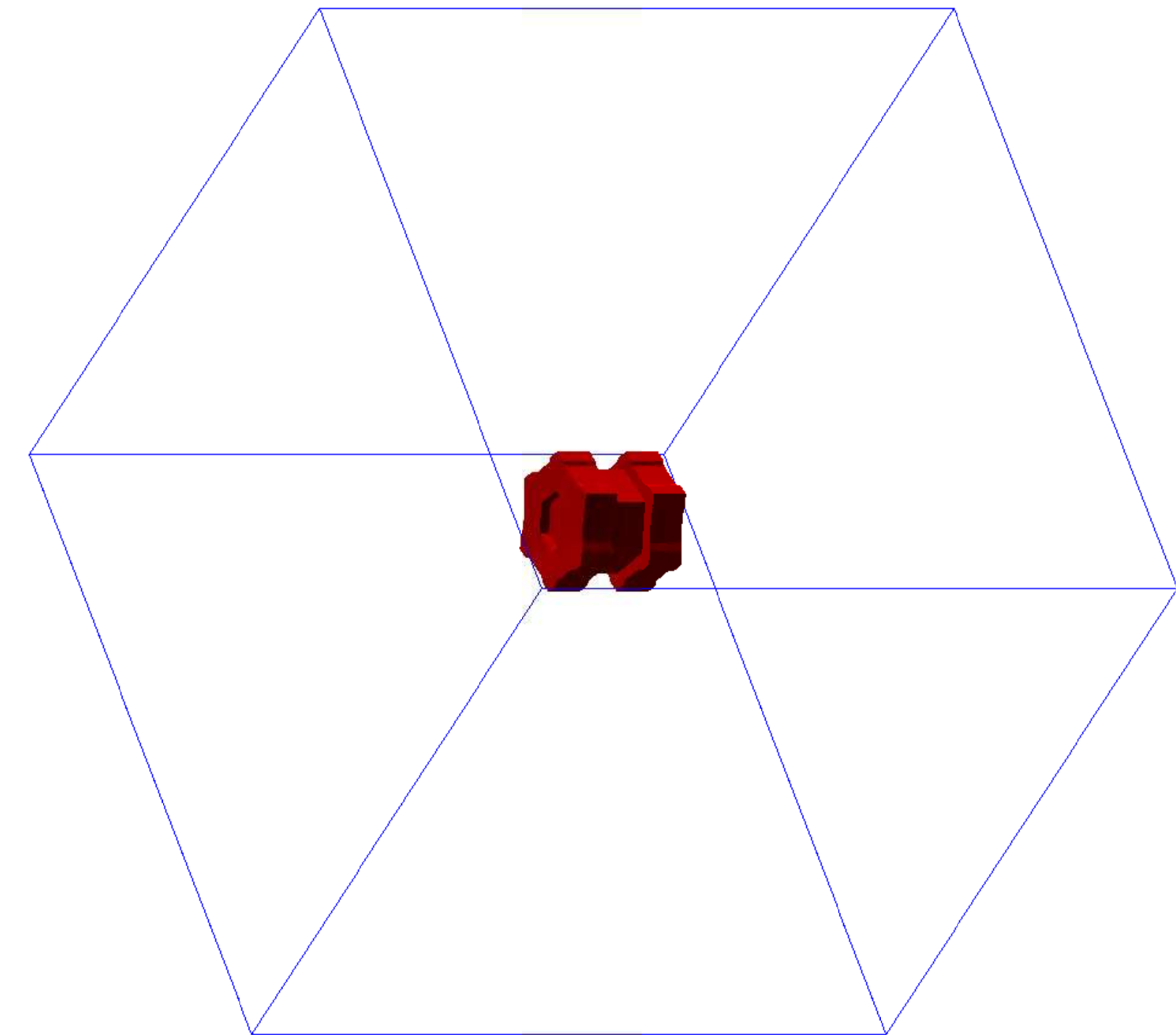}
\qquad
 \includegraphics[angle=-90,totalheight=3cm]{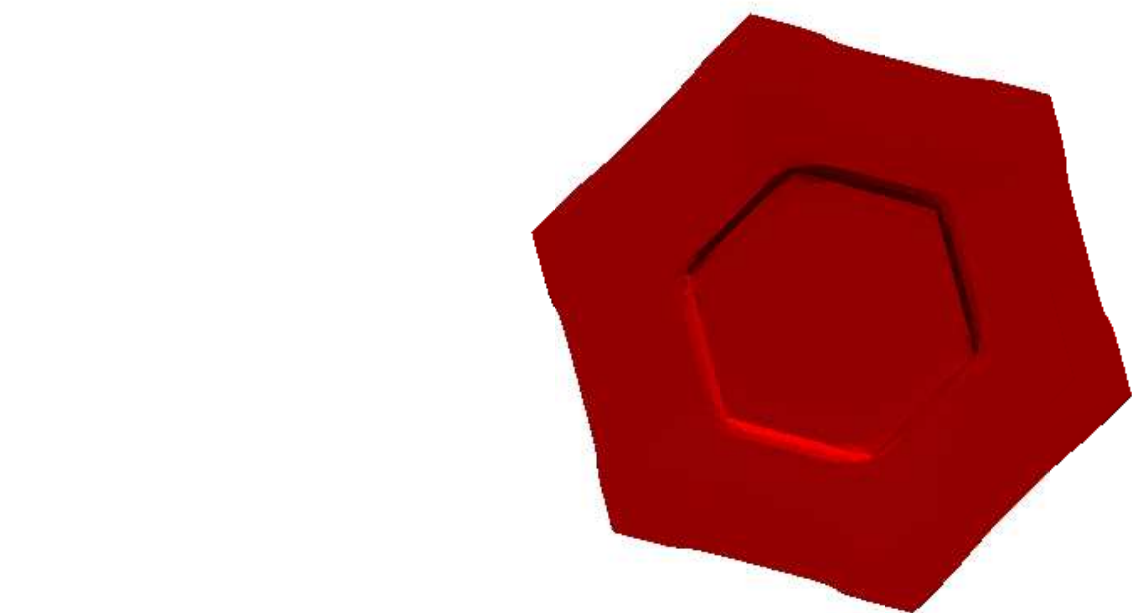}
\qquad
 \includegraphics[angle=-90,totalheight=3cm]{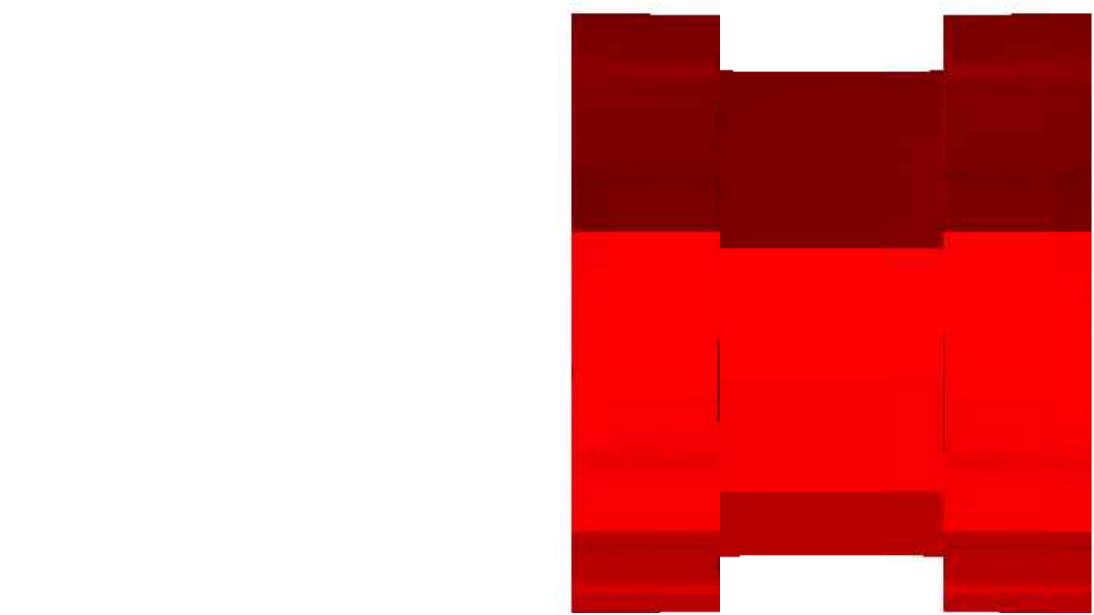}
\caption{($\Omega=(-4,4)^3$, $u_D = -0.004$, $\beta = 1$)
$\vec{X}(T)$ for $T=50$.
Parameters are $N_f=128$, $N_c=16$, $K^0_\Gamma = 98$, 
and $\tau=10^{-1}$.}
\label{fig:20}
\end{figure} 
\begin{figure}[ht]
\center
 \includegraphics[angle=-90,totalheight=2.5cm]{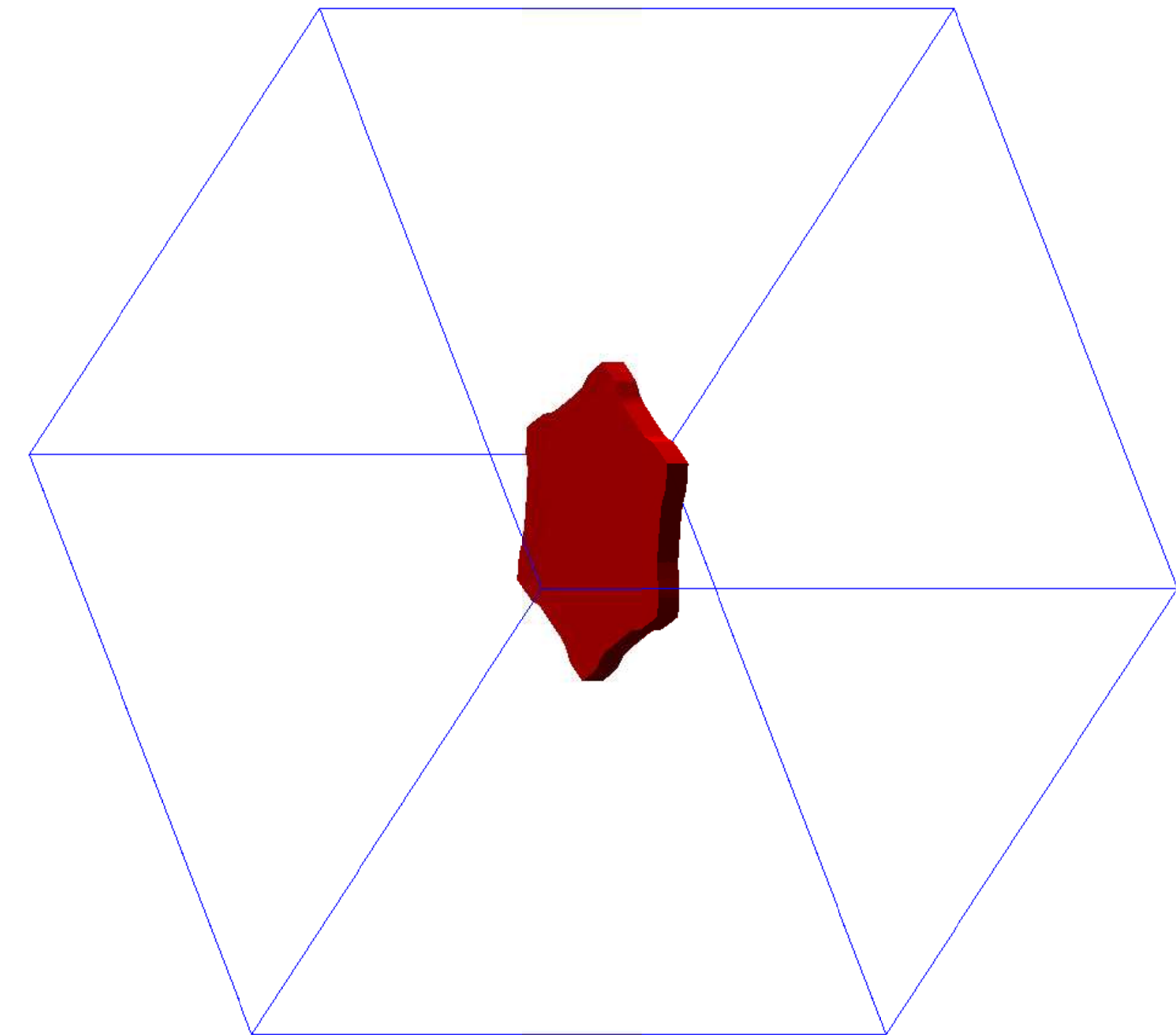}
\qquad
 \includegraphics[angle=-90,totalheight=2.5cm]{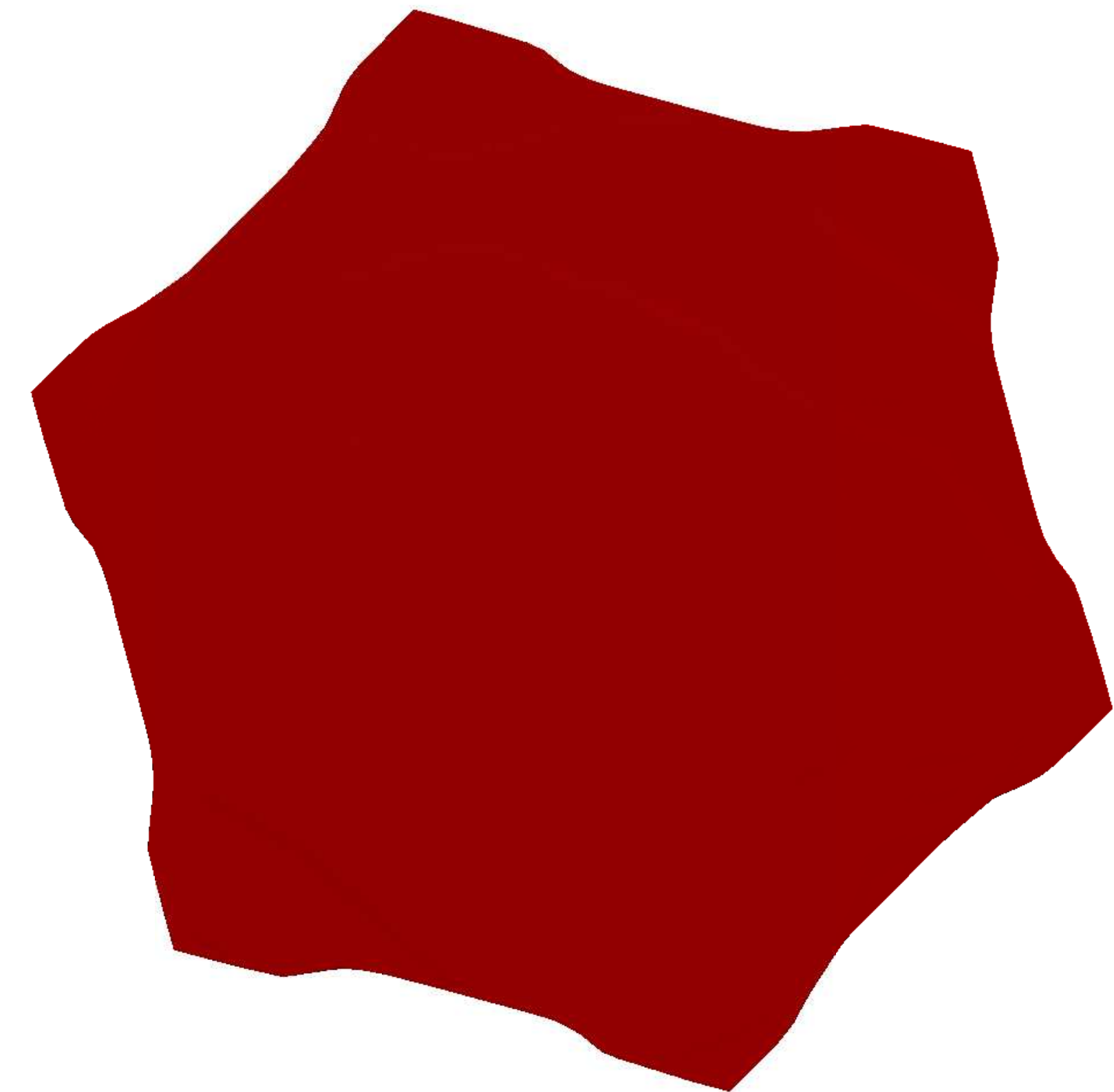}
\qquad
 \includegraphics[angle=-90,totalheight=2.5cm]{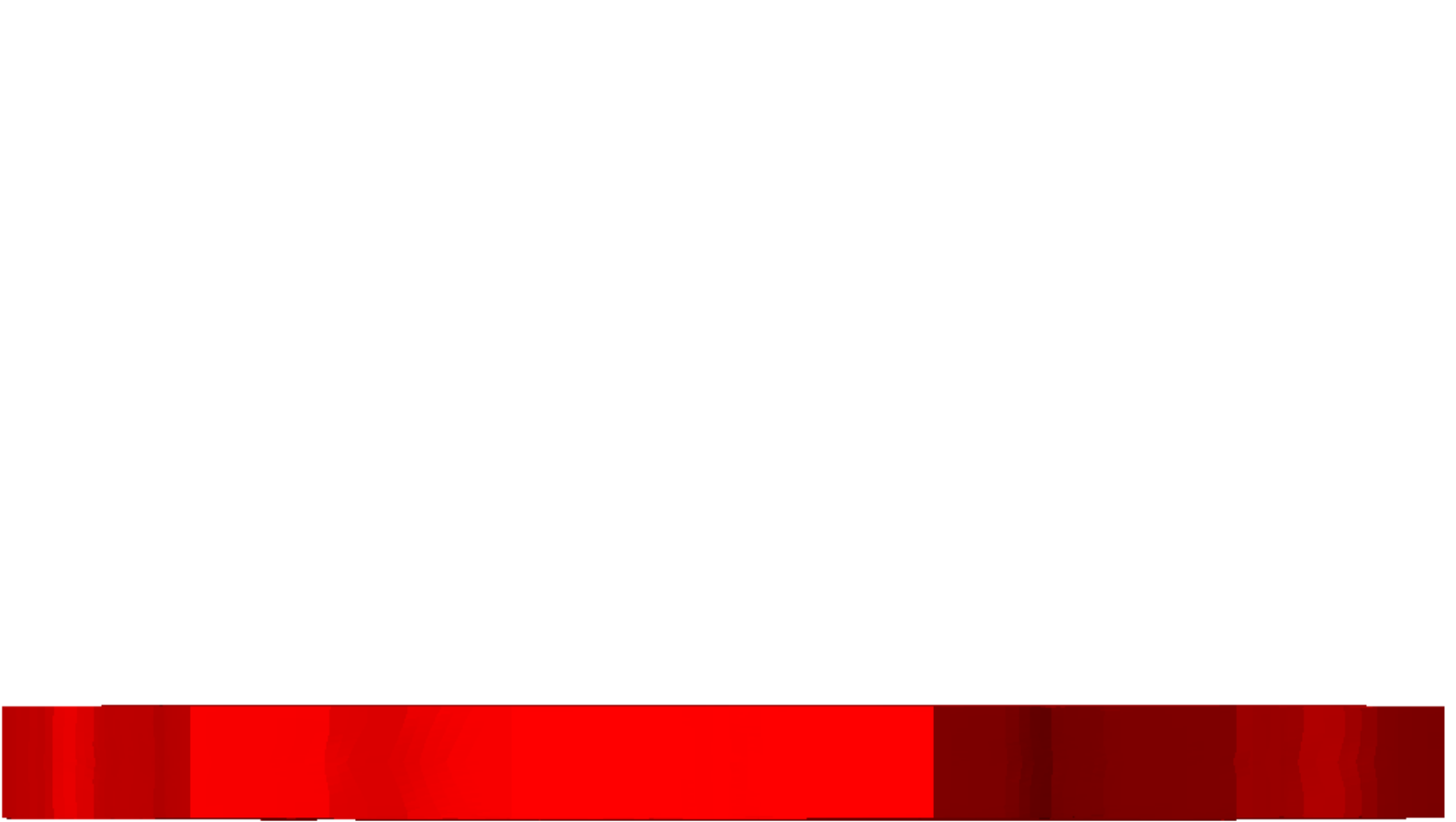}
\caption{($\Omega=(-4,4)^3$, $u_D = -0.004$, $\beta = \beta_{\rm flat,3}$)
$\vec{X}(T)$ for $T=50$.
Parameters are $N_f=128$, $N_c=16$, $K^0_\Gamma = 98$, 
and $\tau=10^{-1}$.}
\label{fig:22}
\end{figure} 
\begin{figure}[ht]
\center
 \includegraphics[angle=-90,totalheight=2.5cm]{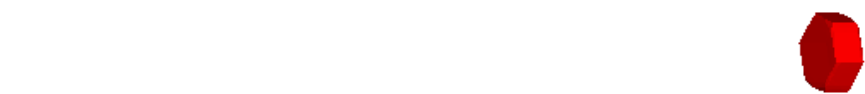}\quad
 \includegraphics[angle=-90,totalheight=2.5cm]{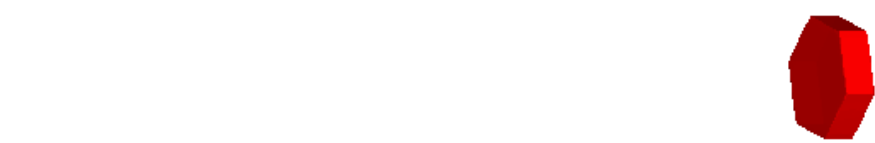}\quad
 \includegraphics[angle=-90,totalheight=2.5cm]{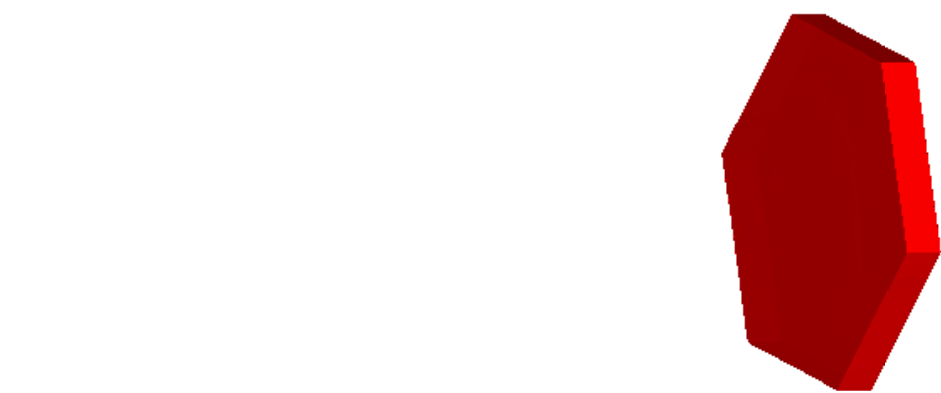}\quad
 \includegraphics[angle=-90,totalheight=2.5cm]{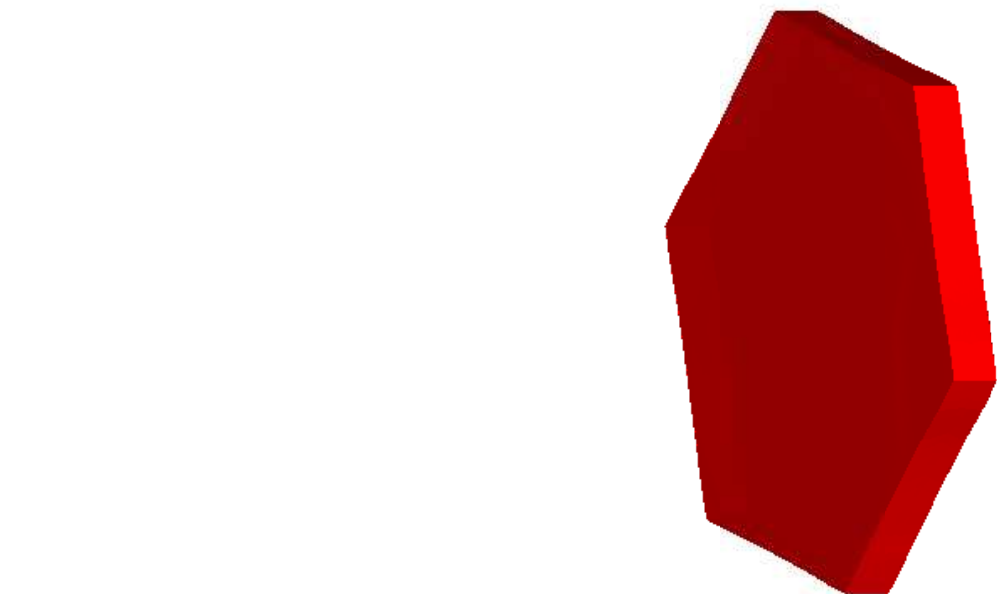}\quad
 \includegraphics[angle=-90,totalheight=2.5cm]{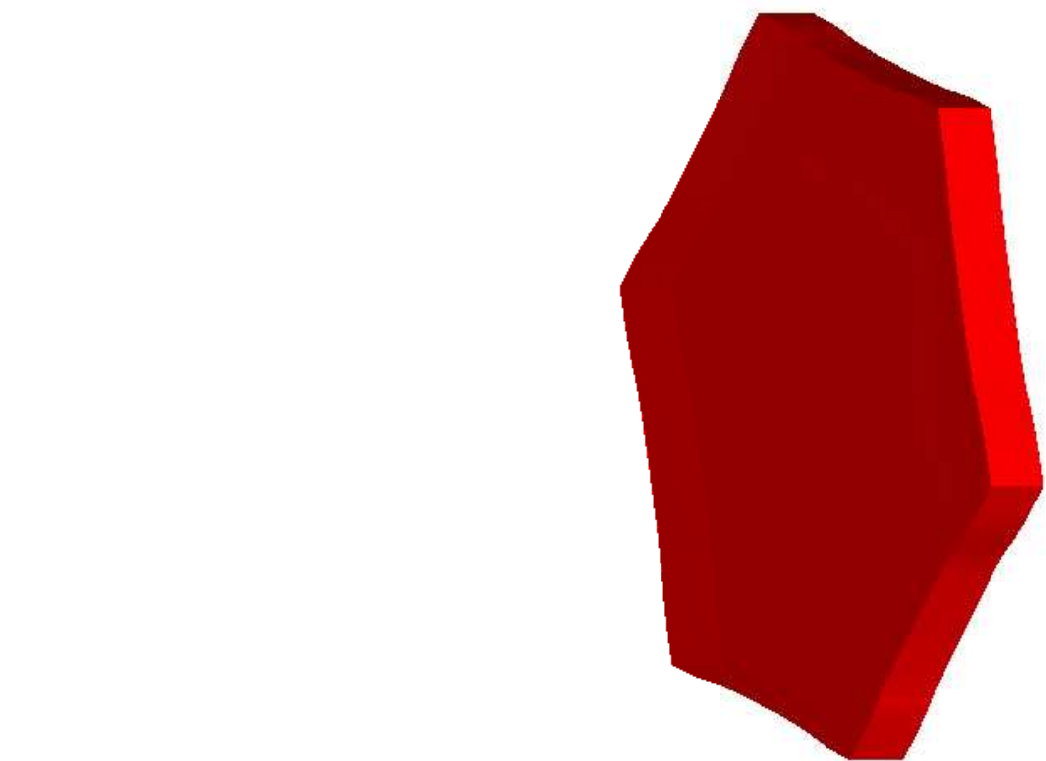}\quad
 \includegraphics[angle=-90,totalheight=2.5cm]{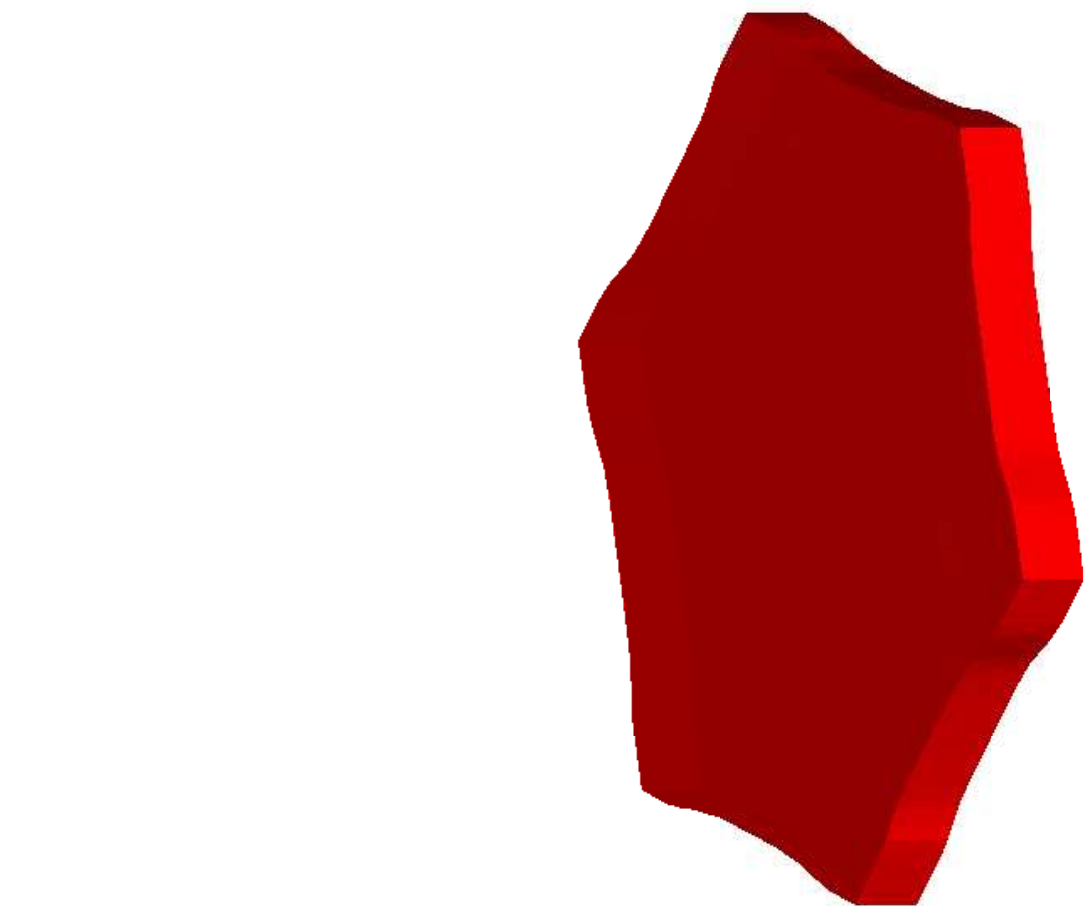}\quad
 \includegraphics[angle=-90,totalheight=2.5cm]{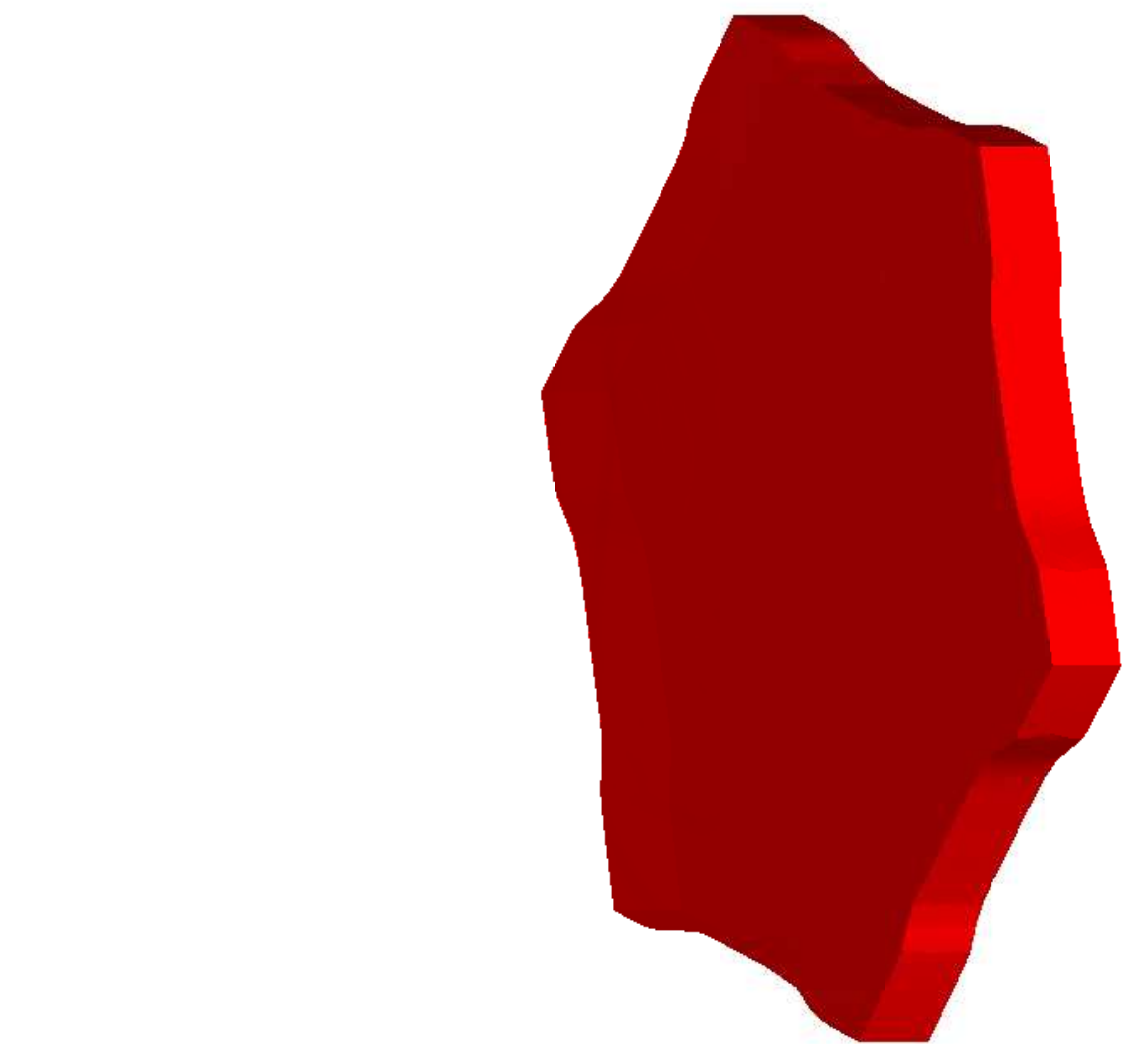}
\qquad
\caption{($\Omega=(-4,4)^3$, $u_D = -0.004$, $\beta = \beta_{\rm flat,3}$)
$\vec{X}(t)$ for $t=1,\,2,\,10,\,20,\,30,\,40,\,50$.
Parameters are $N_f=128$, $N_c=16$, $K^0_\Gamma = 98$, 
and $\tau=10^{-1}$.}
\label{fig:23}
\end{figure} 
A continuation of the evolution shown in Figure~\ref{fig:23}, now on the larger
domain $\Omega = (-8,8)^3$, can be seen in Figure~\ref{fig:23long},
where the onset of dendritic growth can be observed. 
\begin{figure}[ht]
\center
 \includegraphics[angle=-90,totalheight=3.5cm]{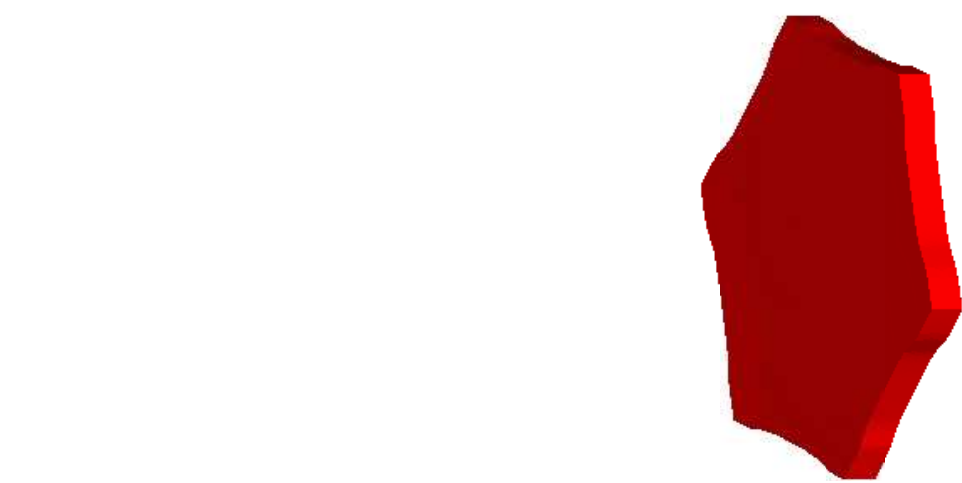}\quad
 \includegraphics[angle=-90,totalheight=3.5cm]{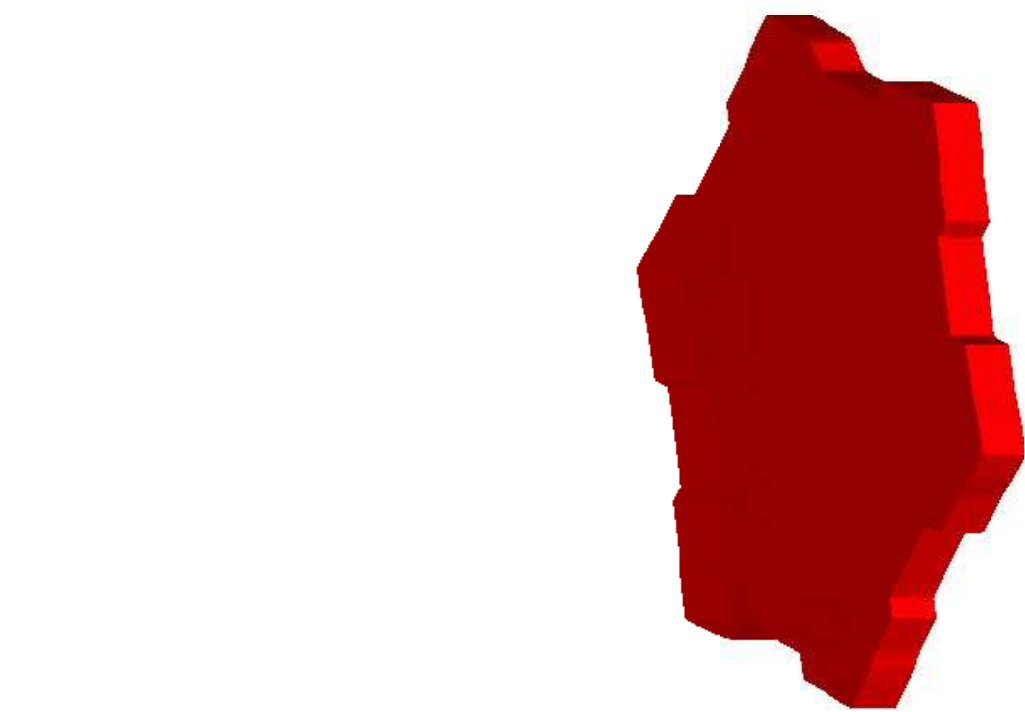}\quad
 \includegraphics[angle=-90,totalheight=3.5cm]{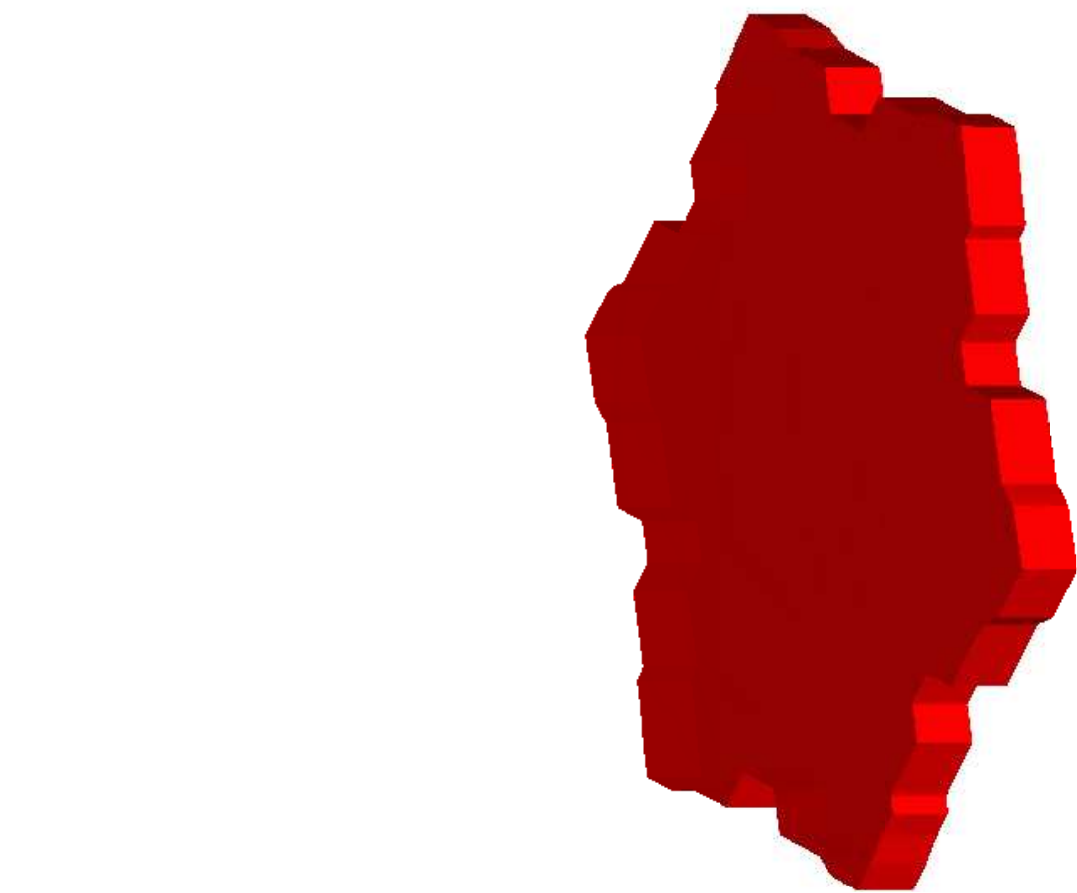}\quad
 \includegraphics[angle=-90,totalheight=3.5cm]{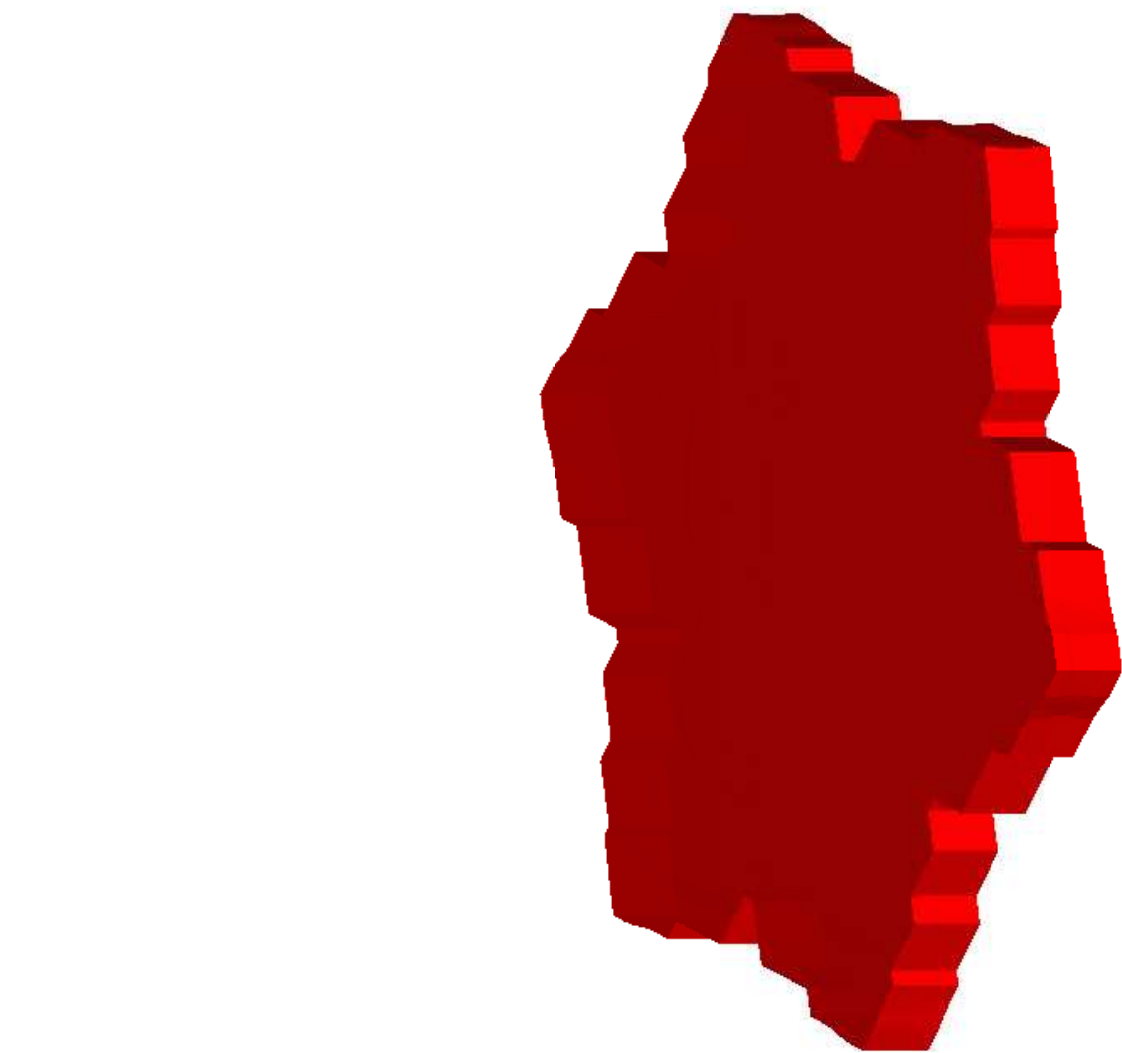}
\caption{($\Omega=(-8,8)^3$, $u_D = -0.004$, $\beta = \beta_{\rm flat,3}$)
$\vec{X}(t)$ for $t=50,\,100,\,150,\,200$.
Parameters are $N_f=256$, $N_c=32$, $K^0_\Gamma = 98$, 
and $\tau=10^{-1}$.}
\label{fig:23long}
\end{figure} 

An experiment for $u_D = -0.002$ and $\beta = \beta_{\rm tall,1}$ can be seen
in Figure~\ref{fig:24_002}, where a solid prism grows.
\begin{figure}[ht]
\center
 \includegraphics[angle=-90,totalheight=3cm]{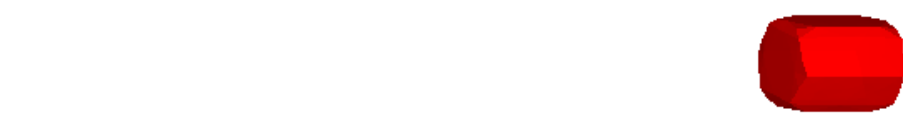} \quad
 \includegraphics[angle=-90,totalheight=3cm]{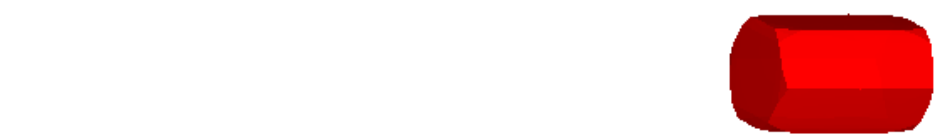} \quad
 \includegraphics[angle=-90,totalheight=3cm]{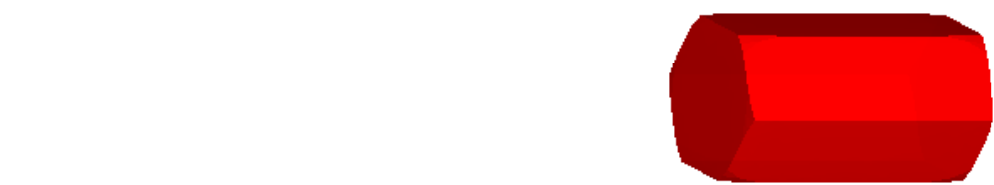} \quad
 \includegraphics[angle=-90,totalheight=3cm]{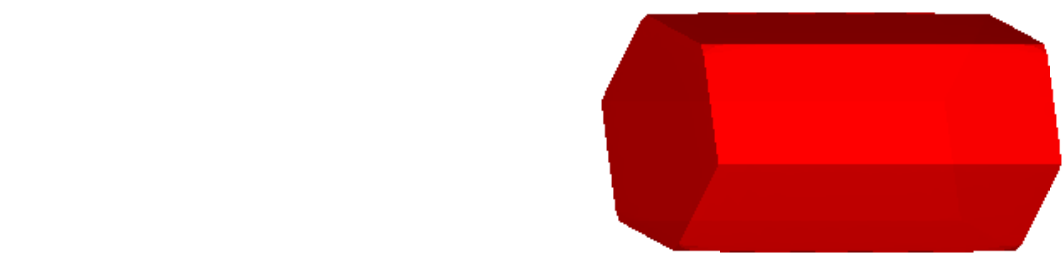} \quad
 \includegraphics[angle=-90,totalheight=3cm]{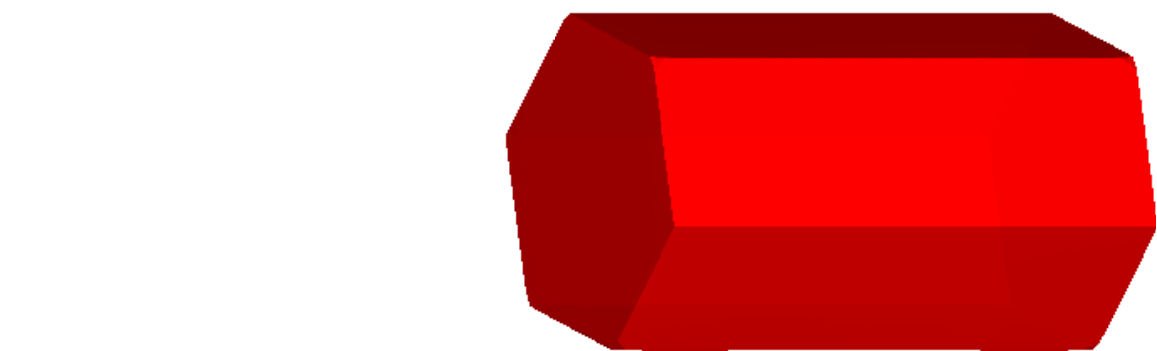} \quad
 \includegraphics[angle=-90,totalheight=3cm]{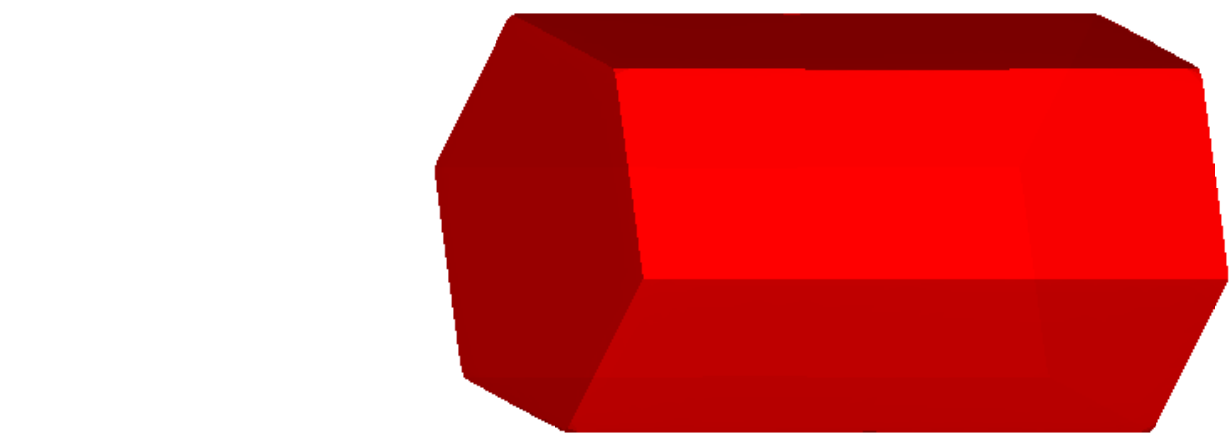} \quad
 \includegraphics[angle=-90,totalheight=3cm]{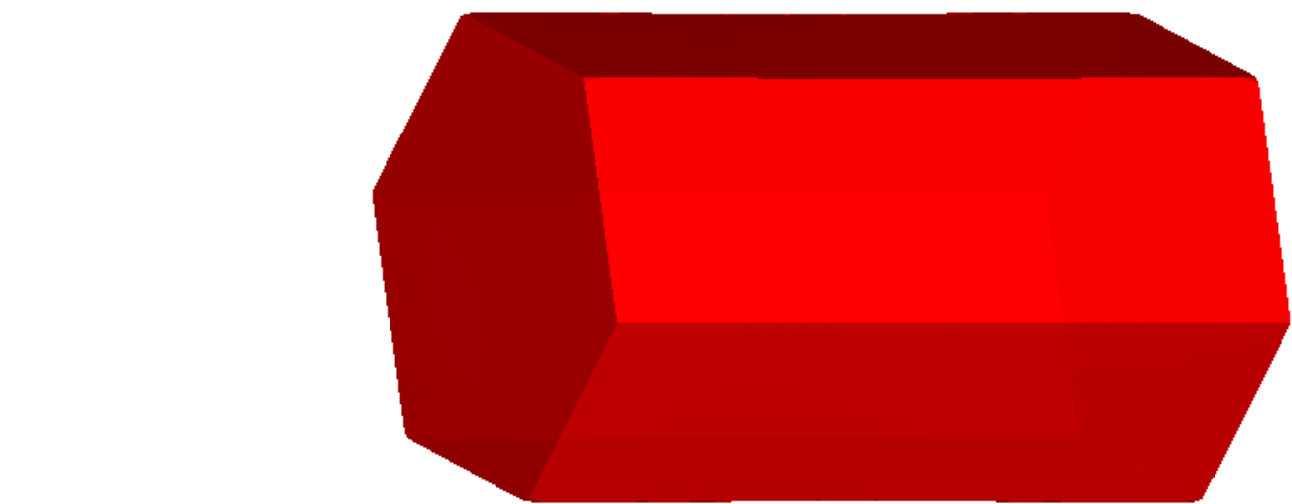} \quad
 \includegraphics[angle=-90,totalheight=3cm]{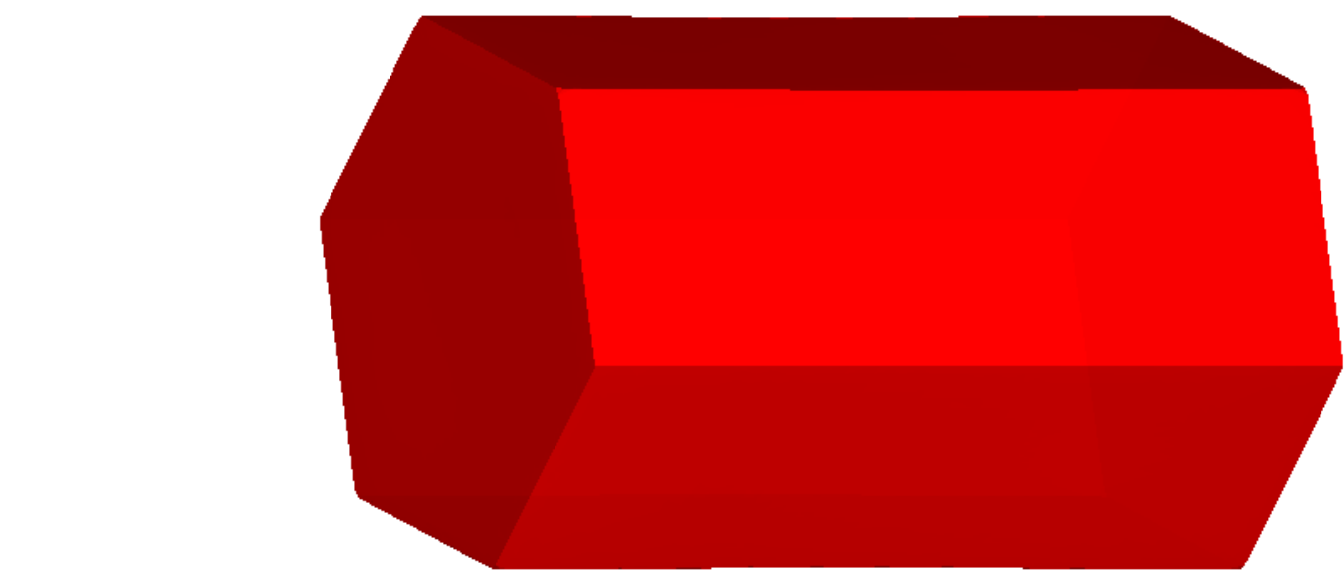} \quad
 \includegraphics[angle=-90,totalheight=3cm]{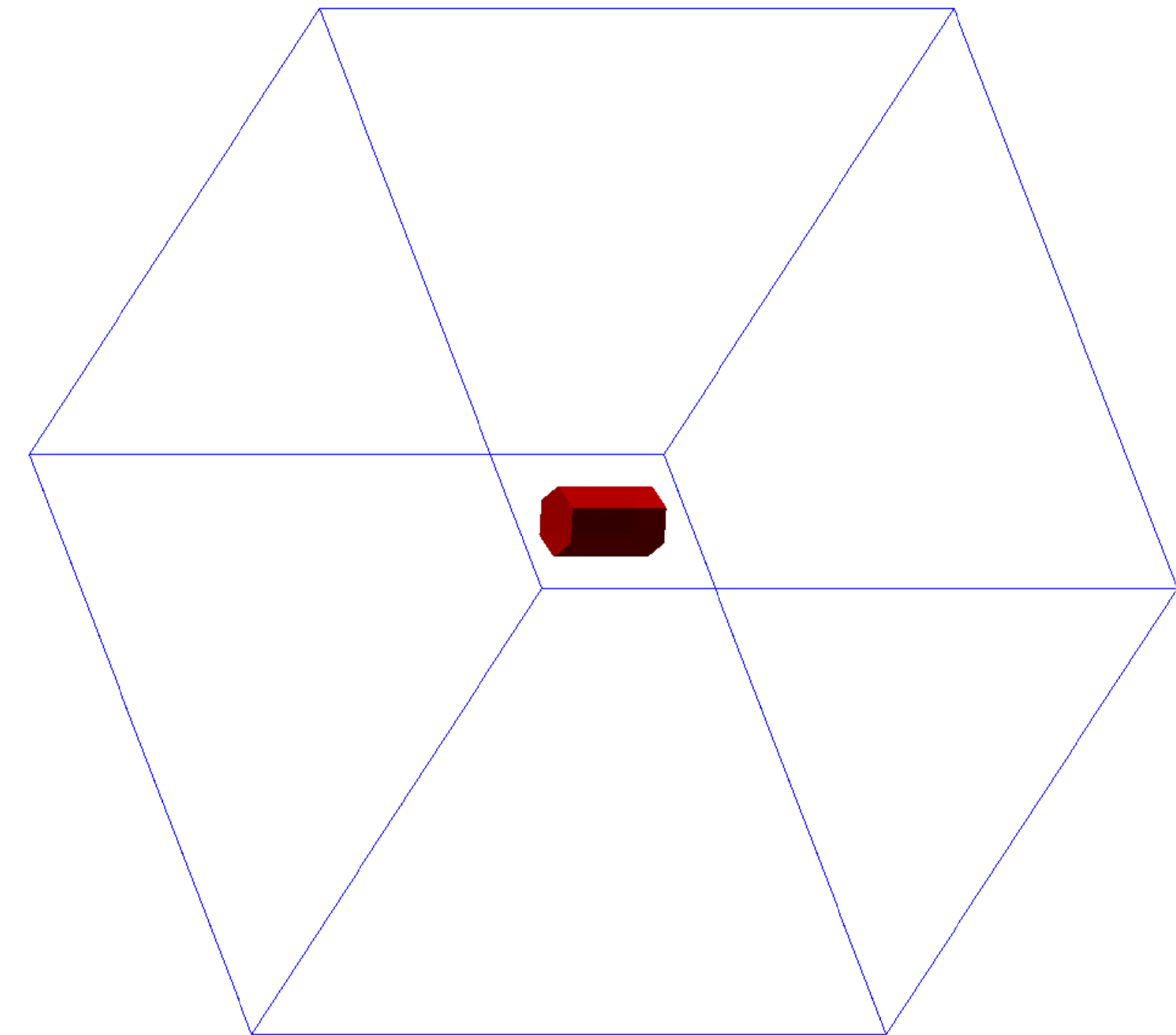} 
\caption{($\Omega=(-4,4)^3$, $u_D = -0.002$, $\beta = \beta_{\rm tall,1}$)
$\vec{X}(t)$ for $t=1,\,2,\,5,\,10,\,20,\,30,\,40,\,50$; 
and $\vec X(50)$ within $\Omega$.
Parameters are $N_f=128$, $N_c=16$, $K^0_\Gamma = 98$, and $\tau=10^{-1}$.}
\label{fig:24_002}
\end{figure} 

An experiment for $u_D = -0.008$ and $\beta = \beta_{\rm tall,2}$ can be seen
in Figure~\ref{fig:25}. In this case the basal facets break,
leading to hollow columns; see Figure~\ref{fig:libbrecht} and 
\cite{GigaR06}. 
\begin{figure}[ht]
\center
 \includegraphics[angle=-90,totalheight=5cm]{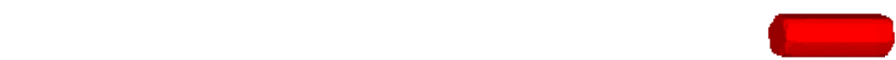} \quad
 \includegraphics[angle=-90,totalheight=5cm]{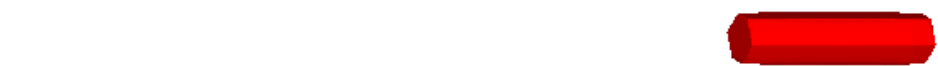} \quad
 \includegraphics[angle=-90,totalheight=5cm]{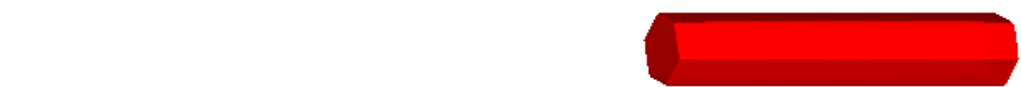} \quad
 \includegraphics[angle=-90,totalheight=5cm]{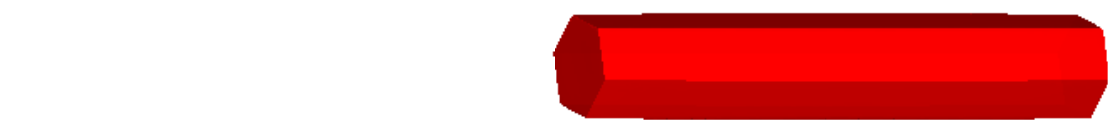} \quad
 \includegraphics[angle=-90,totalheight=5cm]{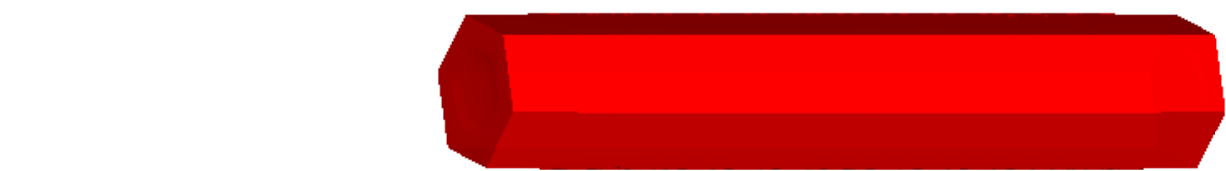} \quad
 \includegraphics[angle=-90,totalheight=5cm]{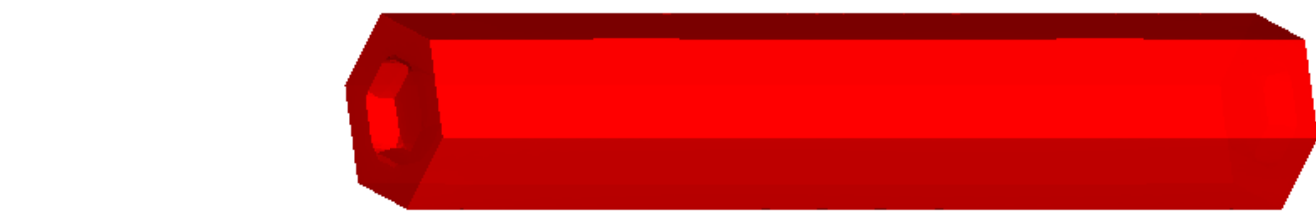} \quad
 \includegraphics[angle=-90,totalheight=5cm]{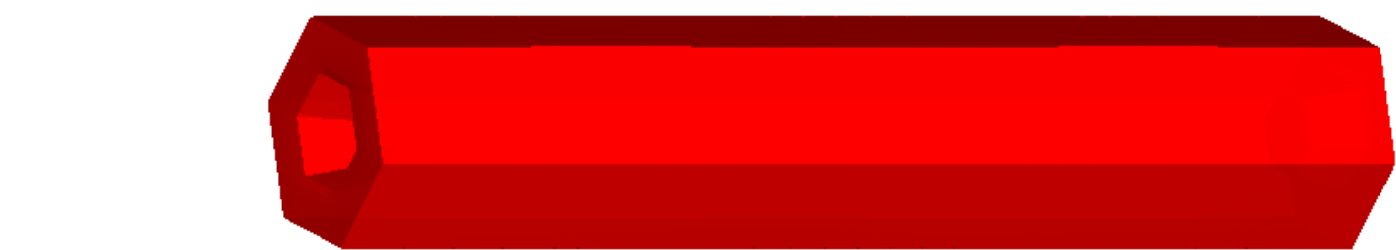} \quad
 \includegraphics[angle=-90,totalheight=5cm]{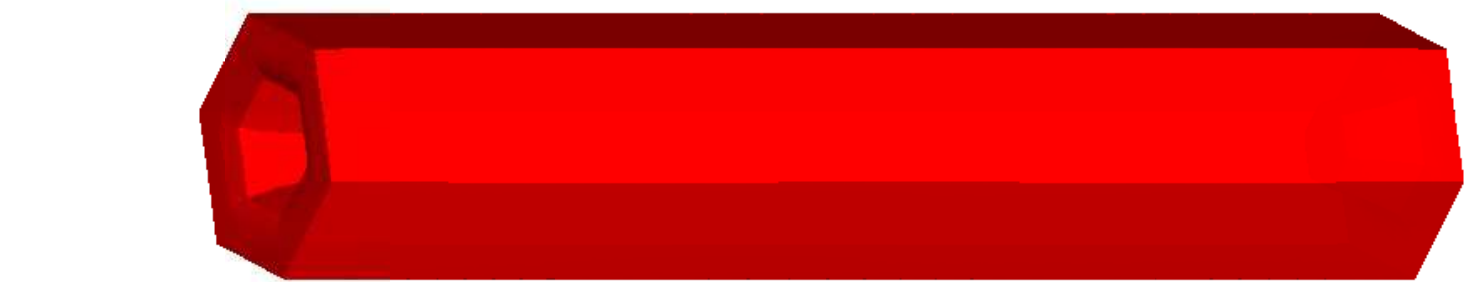} \quad
 \includegraphics[angle=-90,totalheight=5cm]{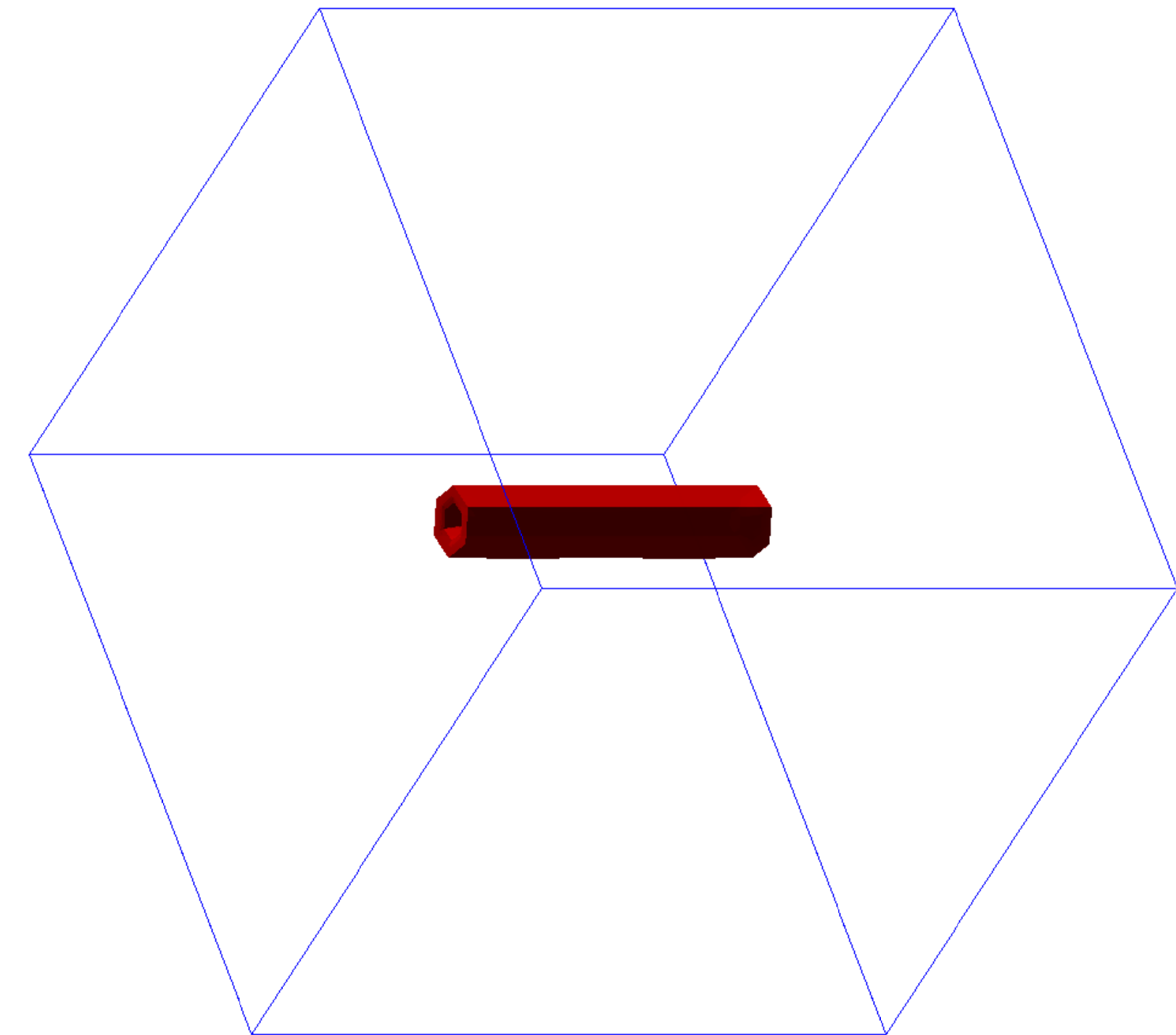}
\caption{($\Omega=(-4,4)^3$, $u_D = -0.008$, $\beta = \beta_{\rm tall,2}$)
$\vec{X}(t)$ for $t=1,\,2,\,5,\,10,\,20,\,30,\,40,\,50$; 
and $\vec X(50)$ within $\Omega$.
Parameters are $N_f=128$, $N_c=16$, $K^0_\Gamma = 98$, and $\tau=10^{-1}$.}
\label{fig:25}
\end{figure} 

An experiment for $u_D = -0.02$ and $\beta = \beta_{\rm flat,3}$ can be seen
in Figure~\ref{fig:c02f6}. In this case the prism facets break,
leading to capped columns which also can be observed in nature; see
\cite{Libbrecht05}. 
\begin{figure}[ht]
\center
 \includegraphics[angle=-90,totalheight=3cm]{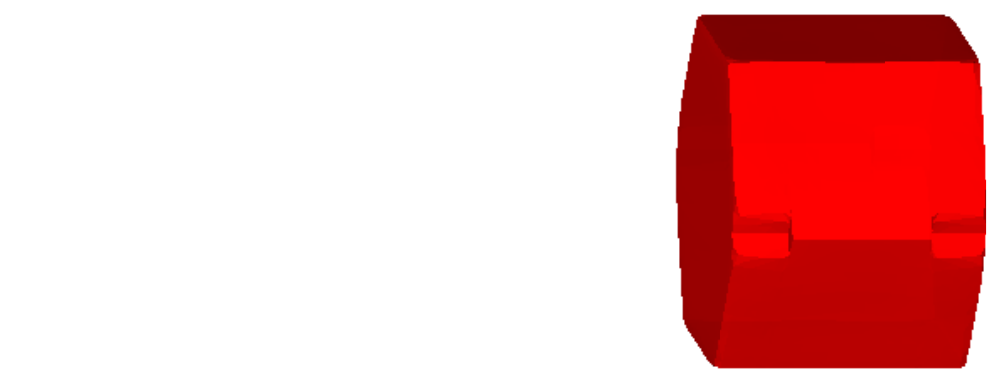} \quad
 \includegraphics[angle=-90,totalheight=3cm]{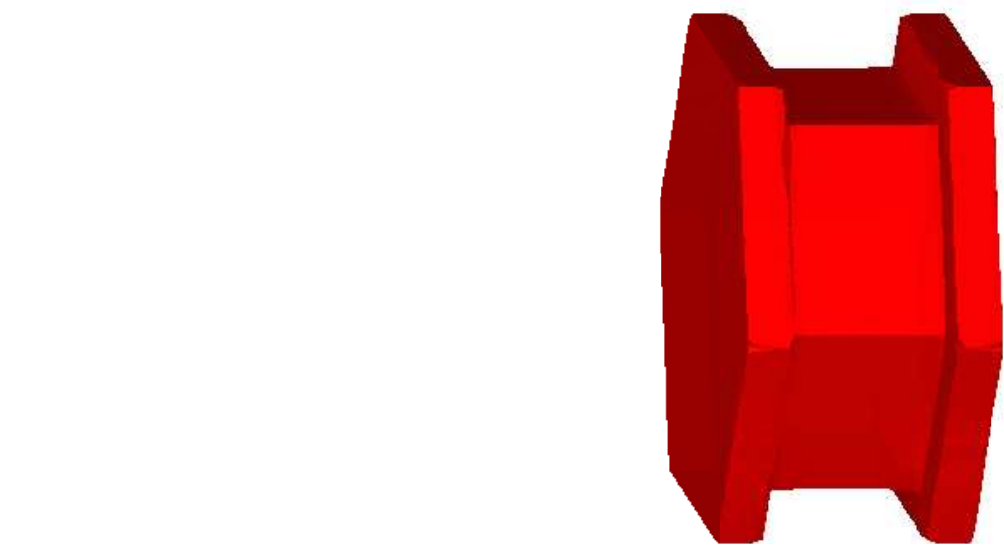} \quad
 \includegraphics[angle=-90,totalheight=3cm]{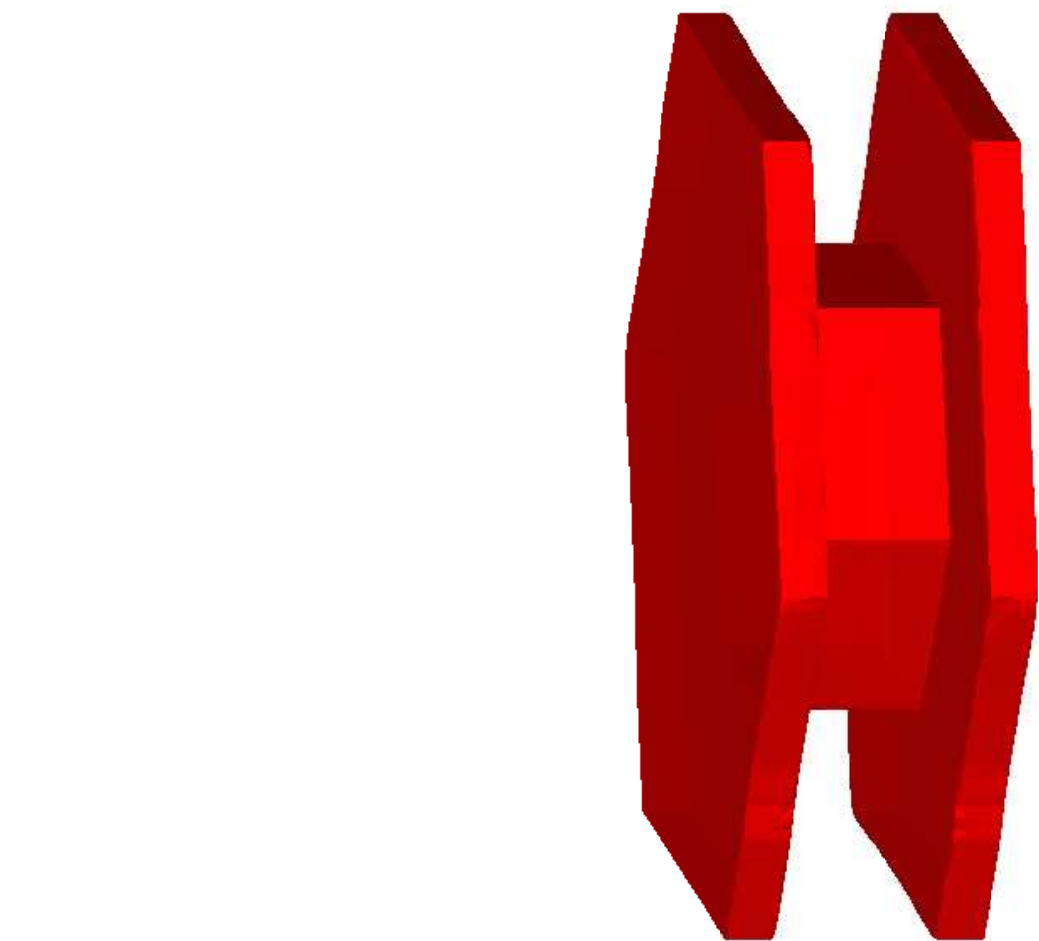} \quad
 \includegraphics[angle=-90,totalheight=3cm]{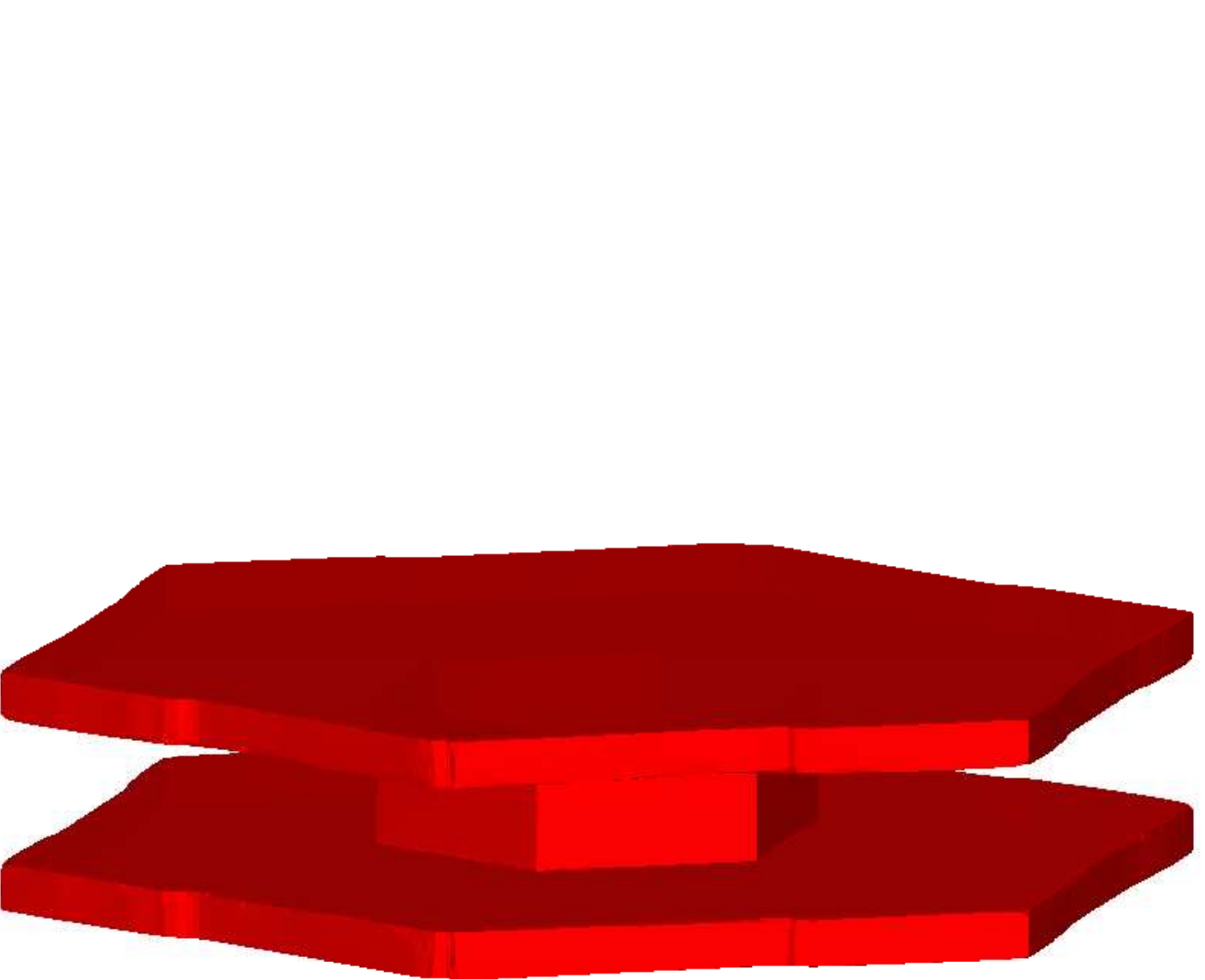} \quad
 \includegraphics[angle=-90,totalheight=3cm]{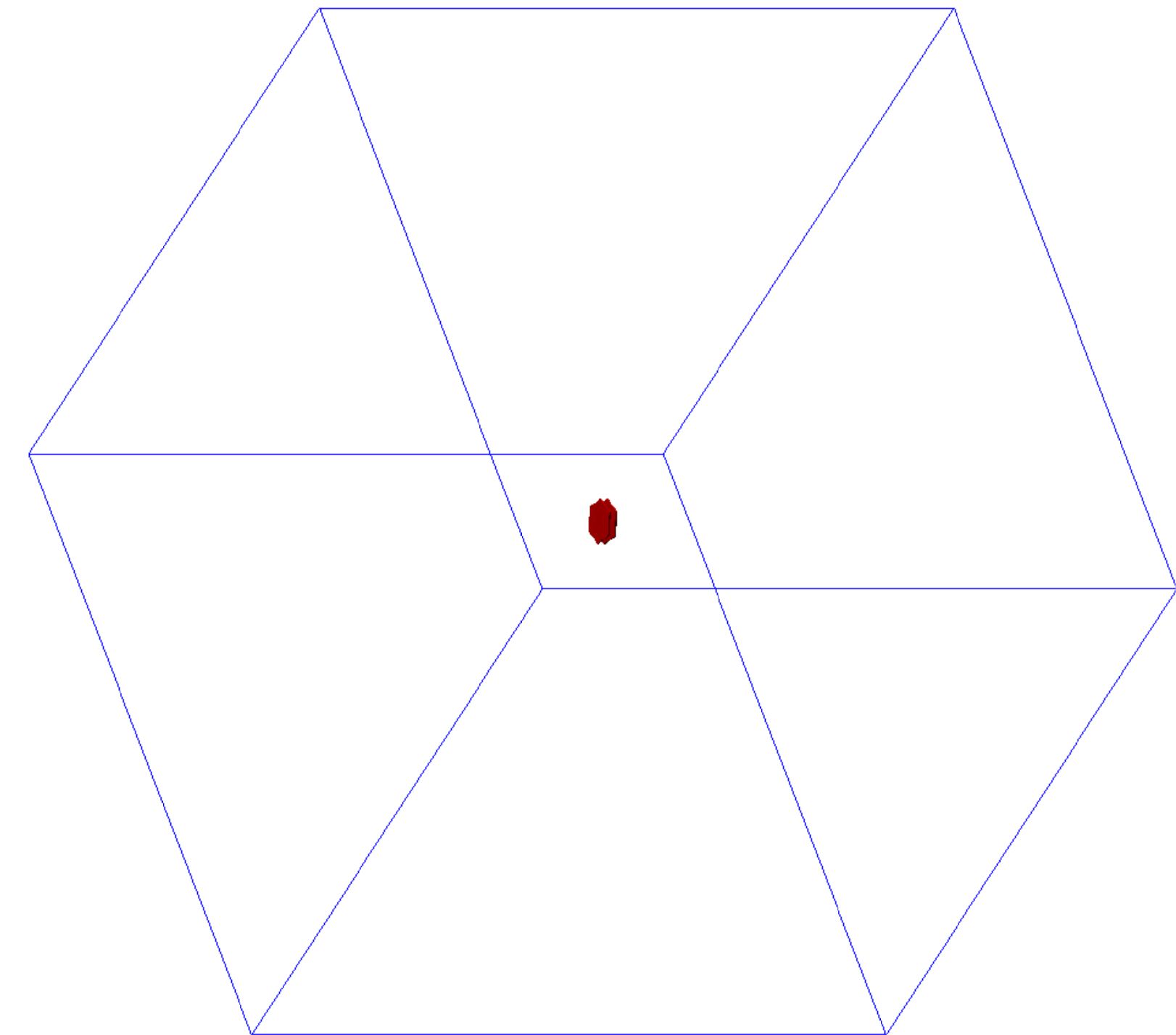}
\caption{($\Omega=(-4,4)^3$, $u_D = -0.02$, $\beta = \beta_{\rm flat,3}$)
$\vec{X}(t)$ for $t=0.05,\,0.1,\,0.2,\,0.3$; 
and $\vec X(0.3)$ within $\Omega$.
Parameters are $N_f=512$, $N_c=32$, $K^0_\Gamma = 1538$ 
and $\tau=5\times10^{-4}$.}
\label{fig:c02f6}
\end{figure} 

An experiment for $u_D = -0.02$ and $\beta = \beta_{\rm flat,3}$,
but for the anisotropy $\gamma$ defined by
\begin{equation} \label{eq:newhexgamma3d}
\gamma(\vec{p}) = 
2\,l_\epsilon(R_2(\tfrac{\pi}2)\,\vec{p}) + \sum_{\ell = 1}^3
l_\epsilon(R_1(\theta_0 + \tfrac{\ell\,\pi}3)\,\vec{p}).
\end{equation}
with $\epsilon=0.01$ and $\theta_0=\frac\pi{12}$
can be seen in Figure~\ref{fig:3dhex02_flat6}. This leads to a
geometrically more complicated breaking of the prismal facets. These can
also be observed in nature, and they are called hollow plates; see
\cite{Libbrecht05}. 
\begin{figure}[ht]
\center
 \includegraphics[angle=-90,totalheight=3.5cm]{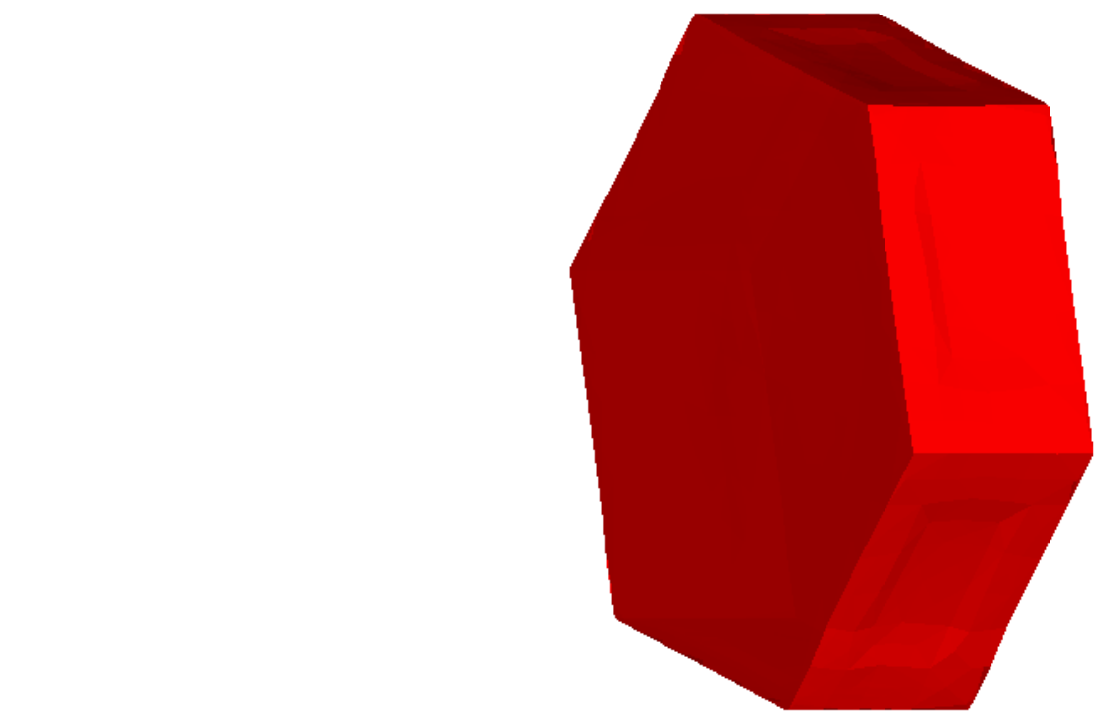}\qquad
 \includegraphics[angle=-90,totalheight=3.5cm]{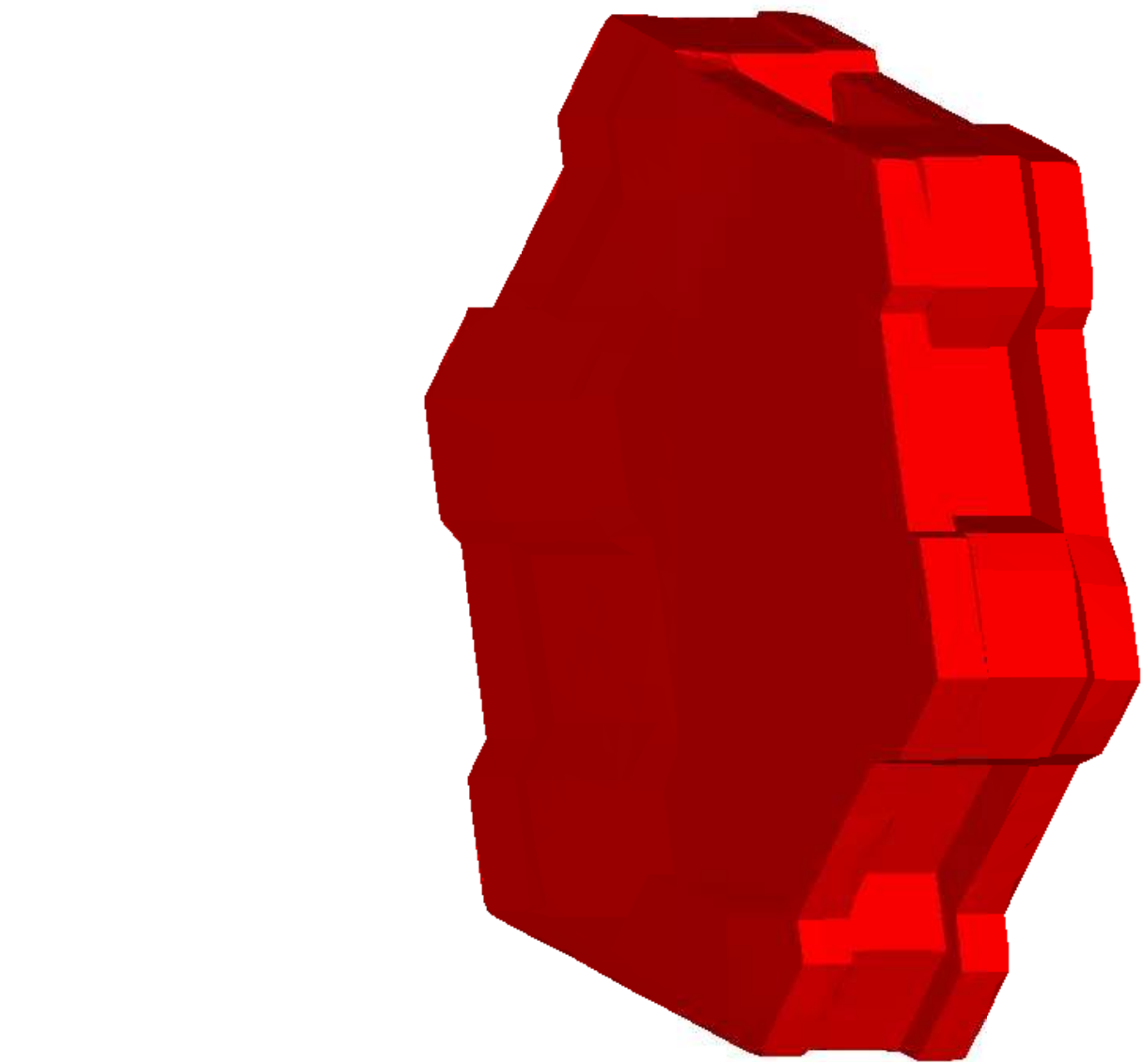}\qquad
 \includegraphics[angle=-90,totalheight=3.5cm]{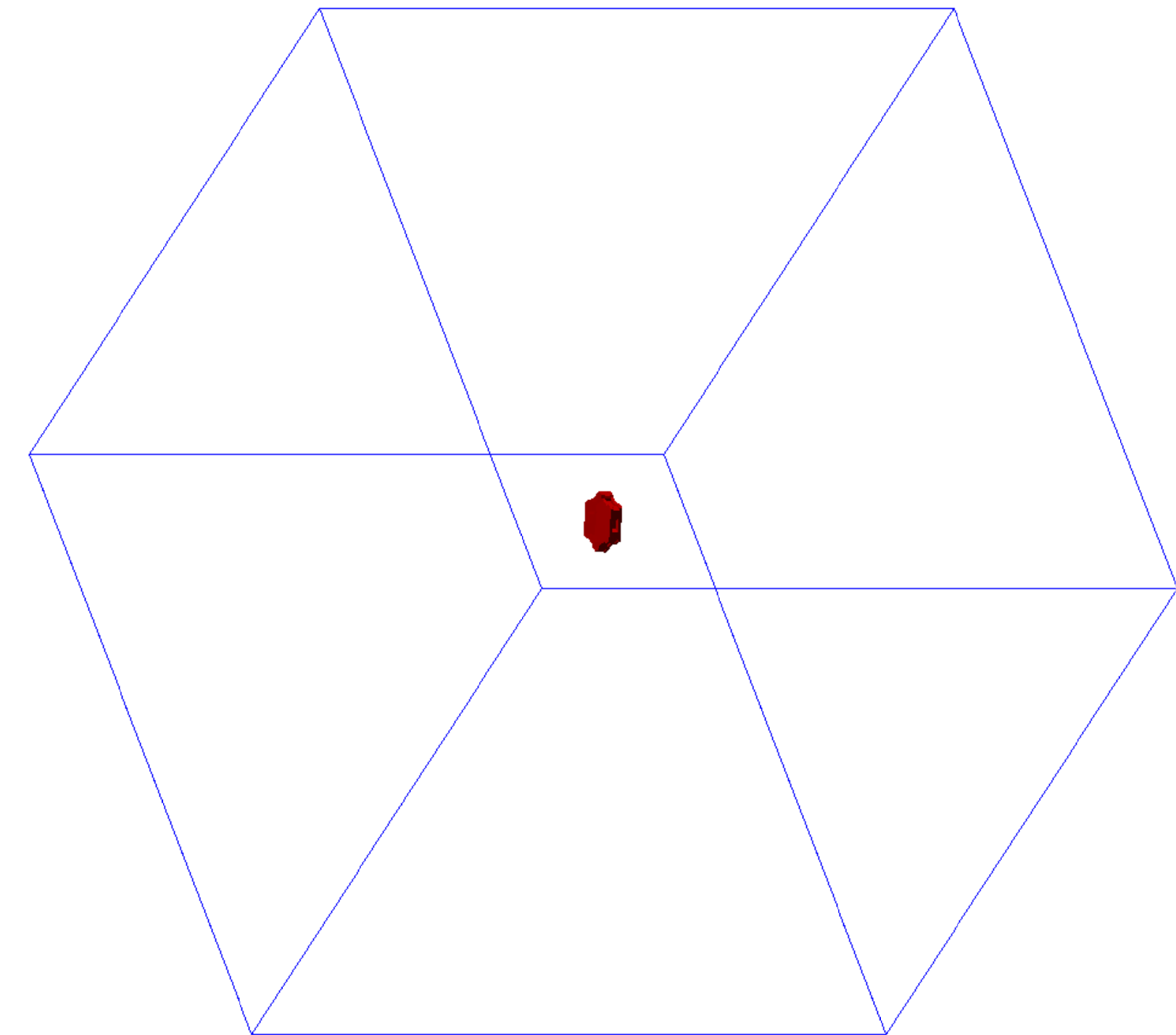}
\qquad
\caption{($\Omega=(-4,4)^3$, $u_D = -0.02$, 
$\gamma$ as in (\ref{eq:newhexgamma3d}),
$\beta = \beta_{\rm flat,3}$)
$\vec{X}(t)$ for $t=0.5,\,1$.
Parameters are $N_f=512$, $N_c=32$, $K^0_\Gamma = 98$, 
and $\tau=10^{-3}$.}
\label{fig:3dhex02_flat6}
\end{figure}

We also performed simulations varying $\beta$ in time. 
This is realistic as a
growing snow crystal falls to the earth through changing weather
conditions, which influence  the governing parameters, e.g.\ via the
temperature. In the first such example, we choose
\begin{subequations}
\begin{equation} \label{eq:beta_sw0}
\beta(\vec p) = \begin{cases}
\beta_{\rm flat,3}(\vec p) & t \in [0,30), \\
\beta_{\rm tall,3}(\vec p) & t \in [30,\infty) .
\end{cases}
\end{equation}
In a second example we choose
\begin{equation} \label{eq:beta_sw1}
\beta(\vec p) = \begin{cases}
\beta_{\rm flat,3}(\vec p) & t \in [0,20), \\
\beta_{\rm flat,1}(\vec p) & t \in [20,\infty) .
\end{cases}
\end{equation}
\end{subequations}
Results for these choices of $\beta$ and for $u_D = -0.004$ can be seen in
Figure~\ref{fig:22_switches}. The shapes in
Figure~\ref{fig:22_switches} can also be observed in nature, and they
are called scrolls on plates. 
\begin{figure}[ht]
\center
 \includegraphics[angle=-90,totalheight=2.7cm]{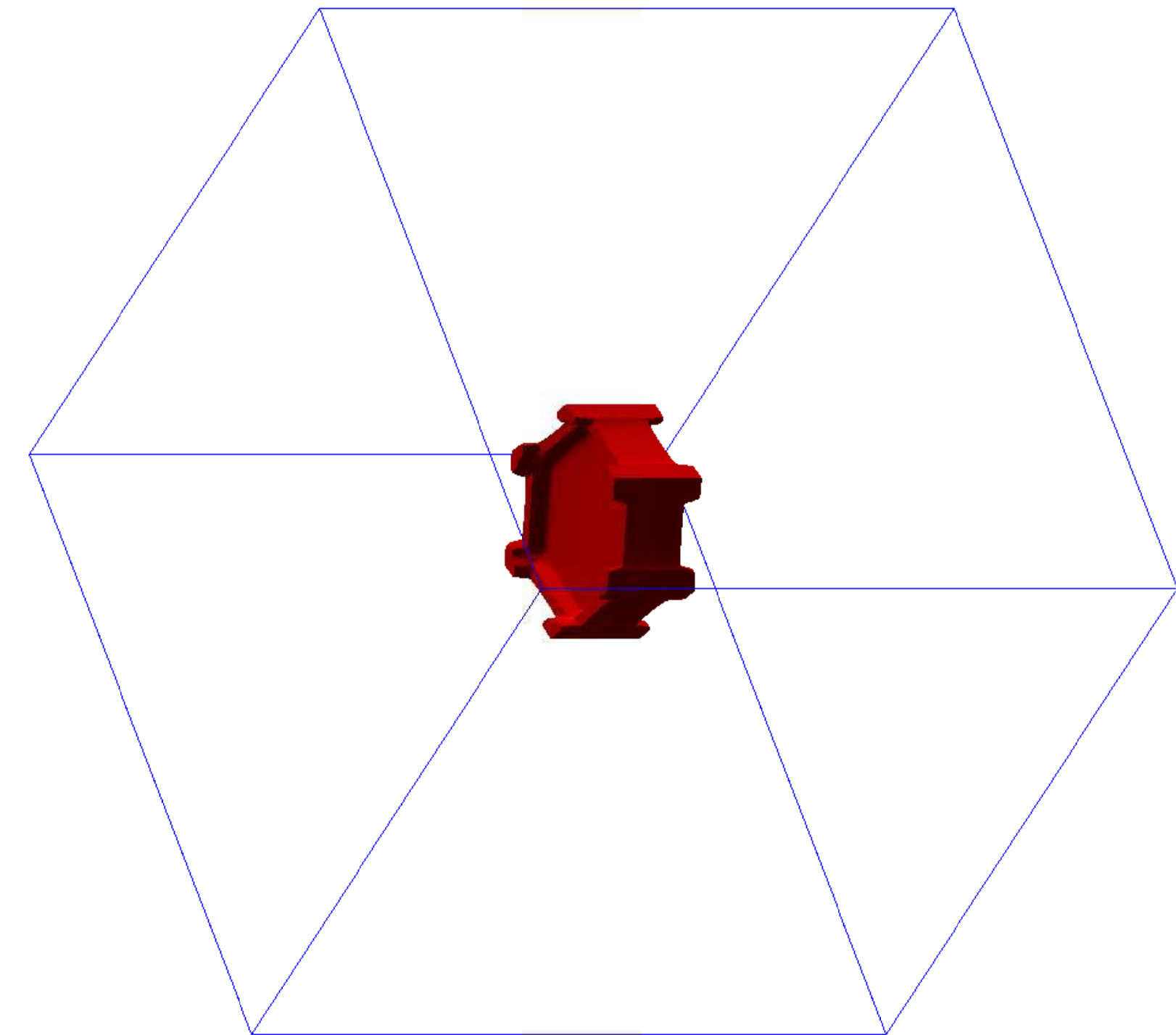}
\quad
 \includegraphics[angle=-90,totalheight=2.7cm]{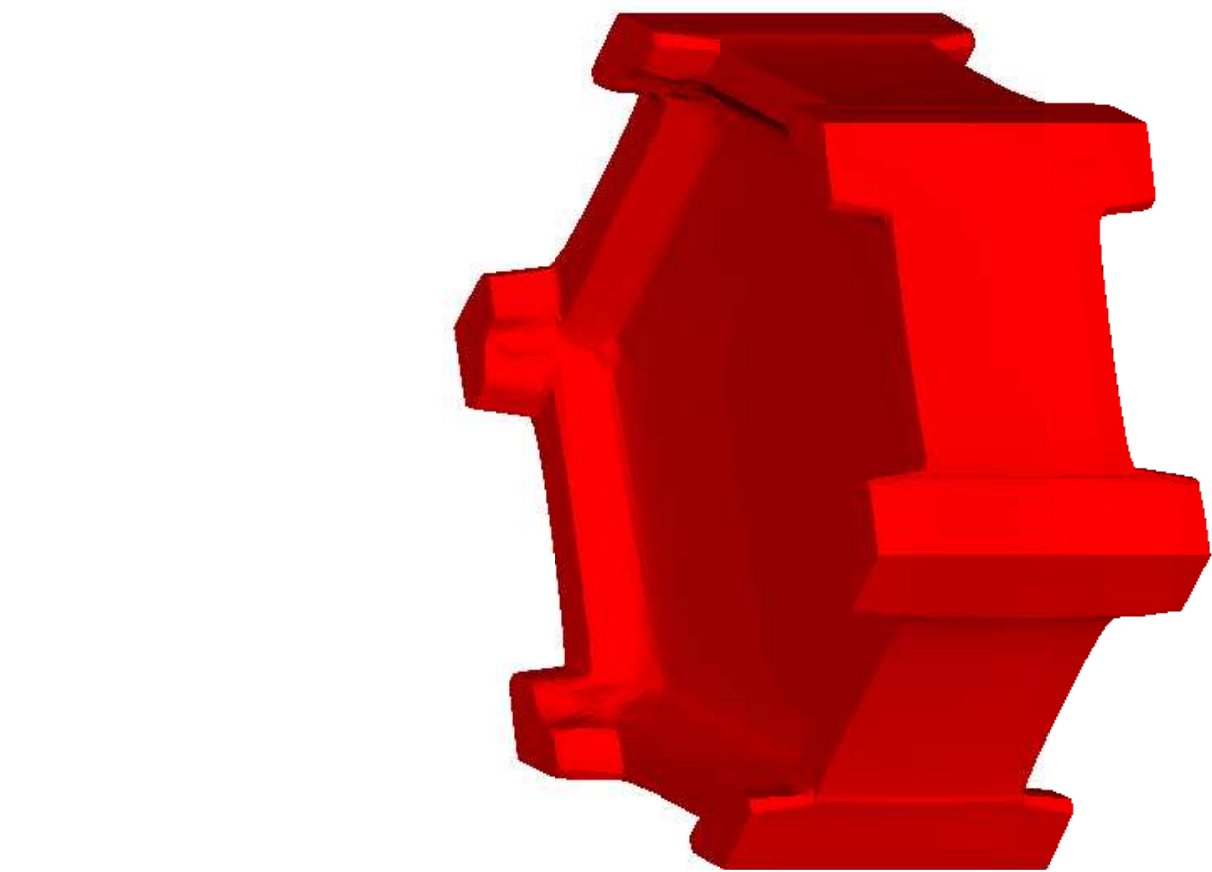}
\qquad\qquad
 \includegraphics[angle=-90,totalheight=2.7cm]{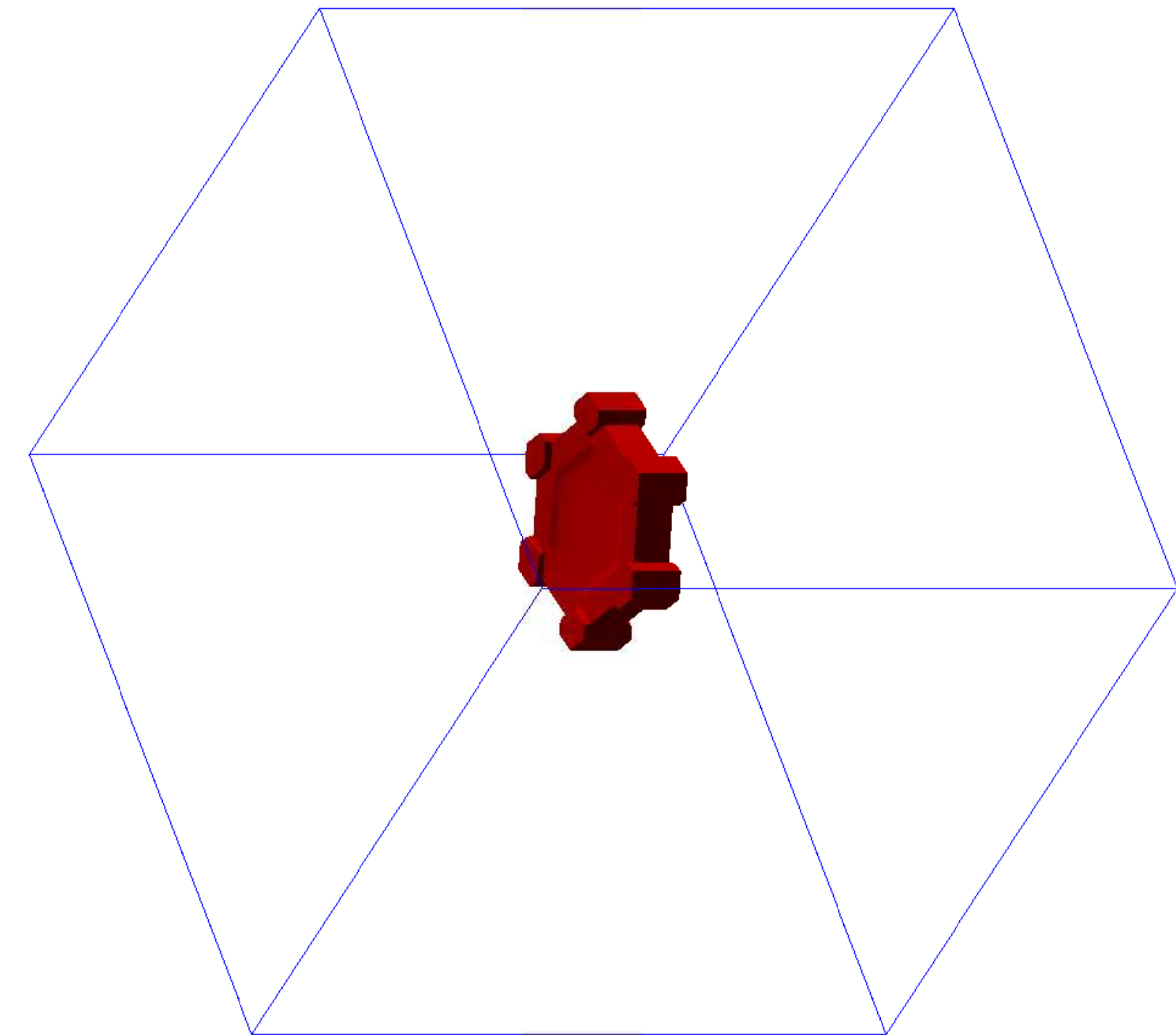}
\quad
 \includegraphics[angle=-90,totalheight=2.7cm]{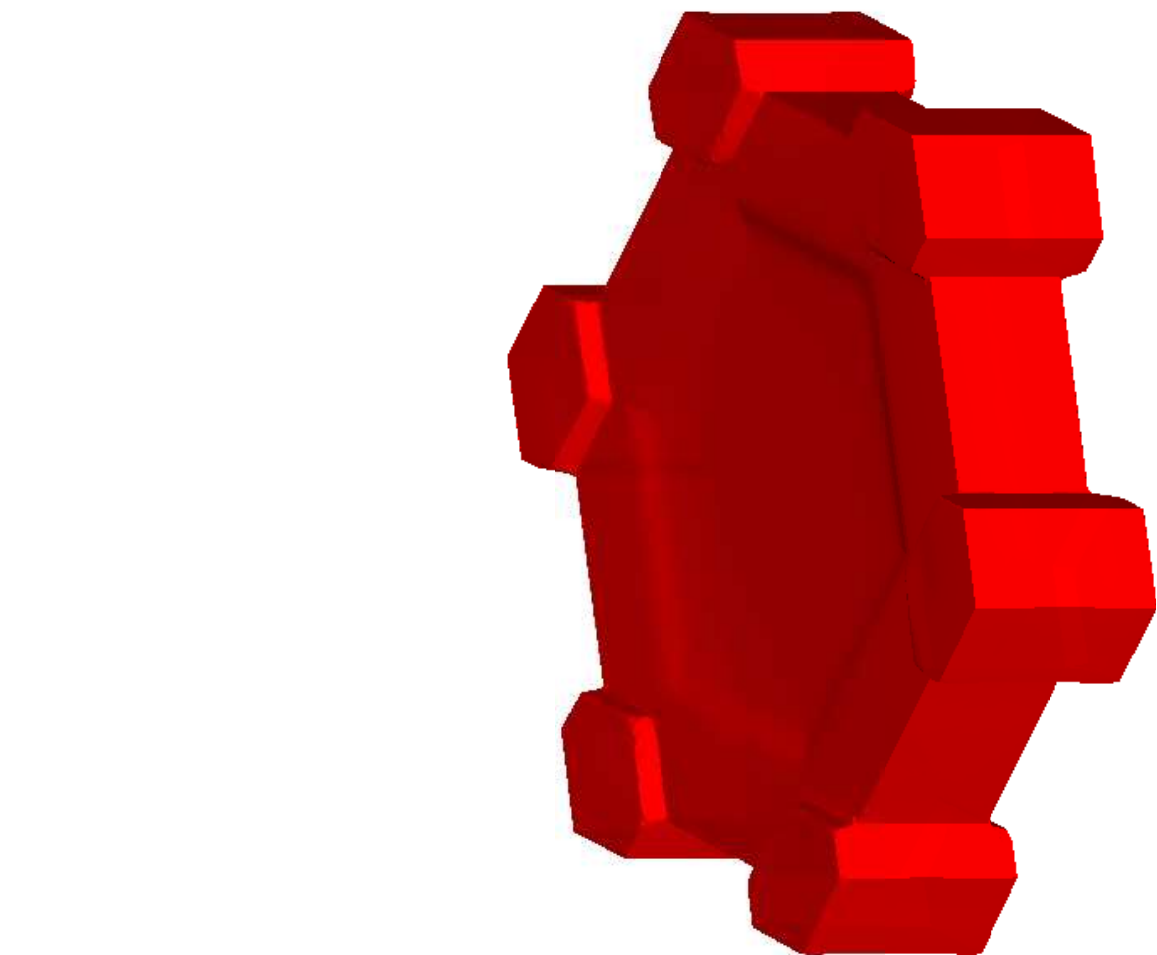}
\caption{\mbox{($\Omega=(-4,4)^3$, $u_D = -0.004$, 
$\beta$ as in (\ref{eq:beta_sw0},b))} 
$\vec{X}(T)$ for $T=50$.
Parameters are $N_f=128$, $N_c=16$, $K^0_\Gamma = 98$, 
and $\tau=10^{-1}$.}
\label{fig:22_switches}
\end{figure}

The remaining numerical experiments are for the cylindrical anisotropy 
(\ref{eq:giga}) with $\epsilon=10^{-2}$; recall Figure \ref{fig:frankgiga}.
The first case is for 
$\gamma_{\rm TB}= 1$, 
$u_D = -0.004$, and $\beta = \beta_{\rm tall,1}$, and the results, 
which show facet breaking both in the basal and
prismal directions, can be seen in Figure~\ref{fig:24r}.
Some plots of the concentration are shown in
Figures~\ref{fig:24rtemppng} and \ref{fig:24rtemppngzoom}, where
\begin{figure}[ht]
\center
 \includegraphics[angle=-90,totalheight=3.2cm]{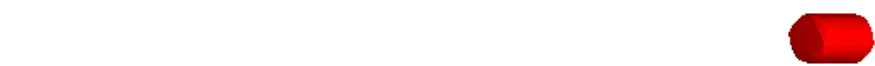} \quad
 \includegraphics[angle=-90,totalheight=3.2cm]{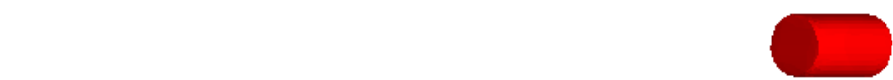} \quad
 \includegraphics[angle=-90,totalheight=3.2cm]{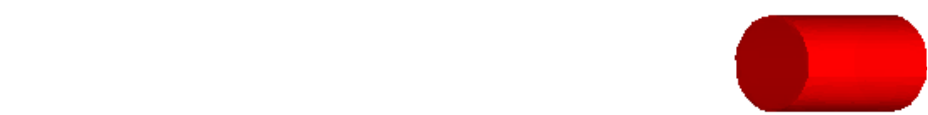} \quad
 \includegraphics[angle=-90,totalheight=3.2cm]{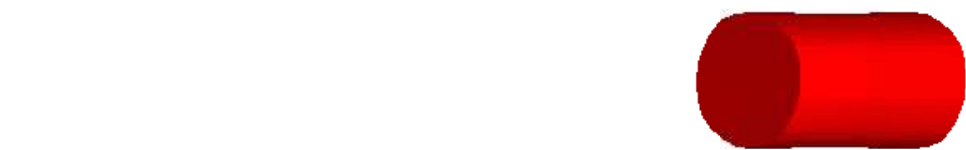} \quad
 \includegraphics[angle=-90,totalheight=3.2cm]{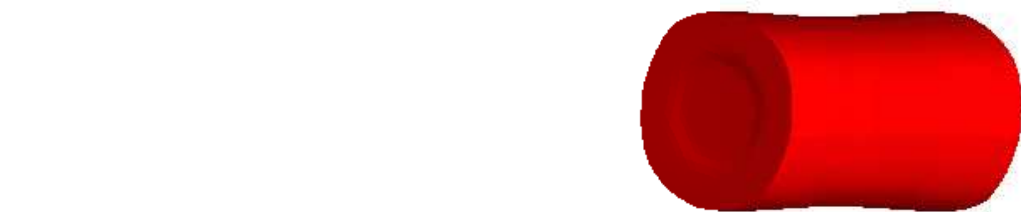} \quad
 \includegraphics[angle=-90,totalheight=3.2cm]{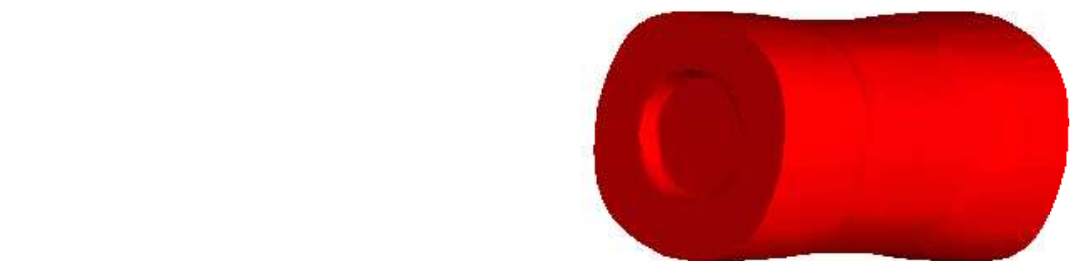} \quad
 \includegraphics[angle=-90,totalheight=3.2cm]{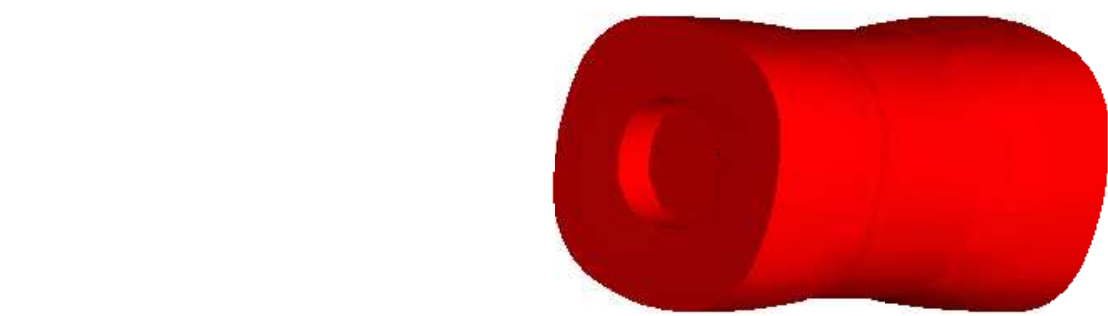} \quad
 \includegraphics[angle=-90,totalheight=3.2cm]{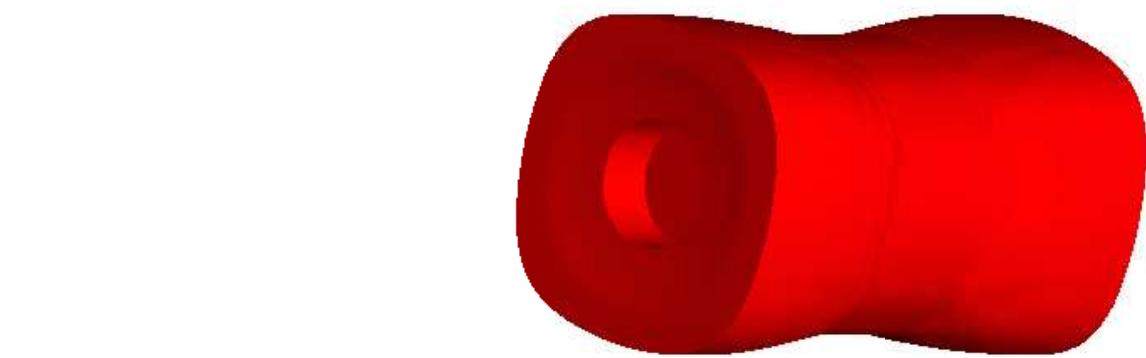} \\
 \includegraphics[angle=-90,totalheight=3.2cm]{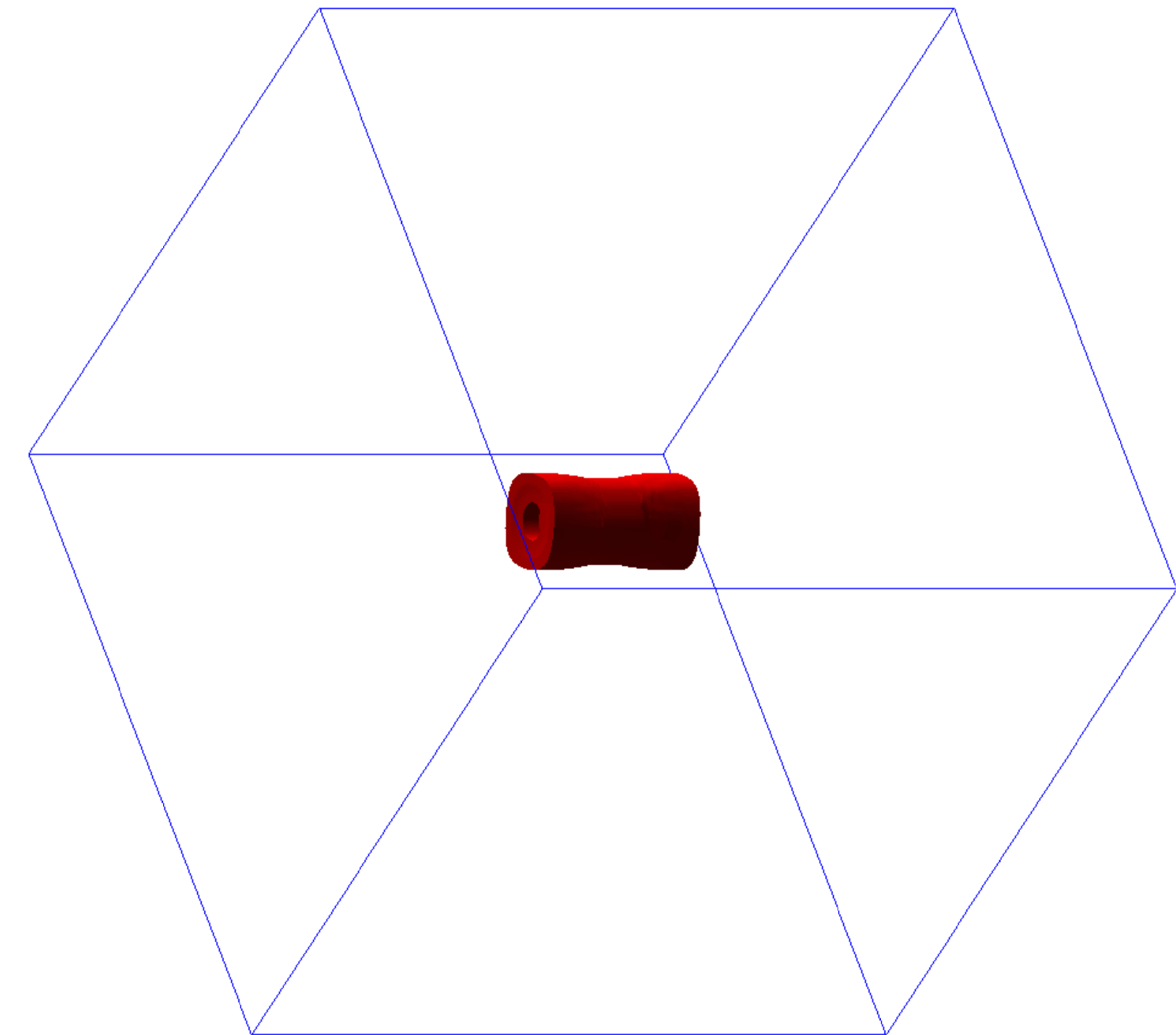}
\caption{($\Omega=(-4,4)^3$, $u_D = -0.004$, $\beta = \beta_{\rm tall,1}$)
$\vec{X}(t)$ for $t=1,\,2,\,5,\,10,\,20,\,30,\,40,\,50$; 
and $\vec X(50)$ within $\Omega$.
Parameters are $N_f=128$, $N_c=16$, $K^0_\Gamma = 98$, and $\tau=10^{-1}$.}
\label{fig:24r}
\end{figure} 
\begin{figure}[ht]
\center
 \includegraphics[angle=-0,totalheight=2.6cm]{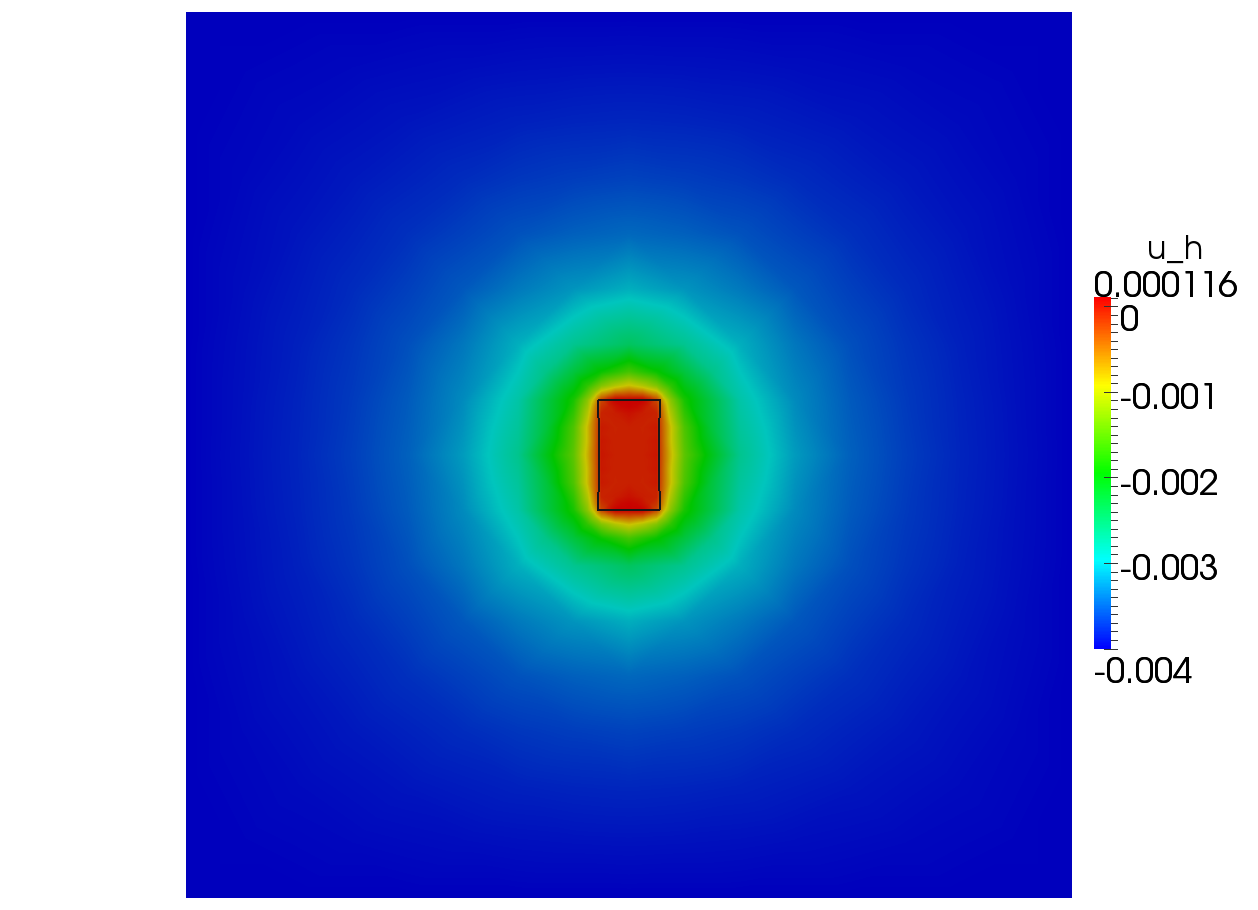}\quad
 \includegraphics[angle=-0,totalheight=2.6cm]{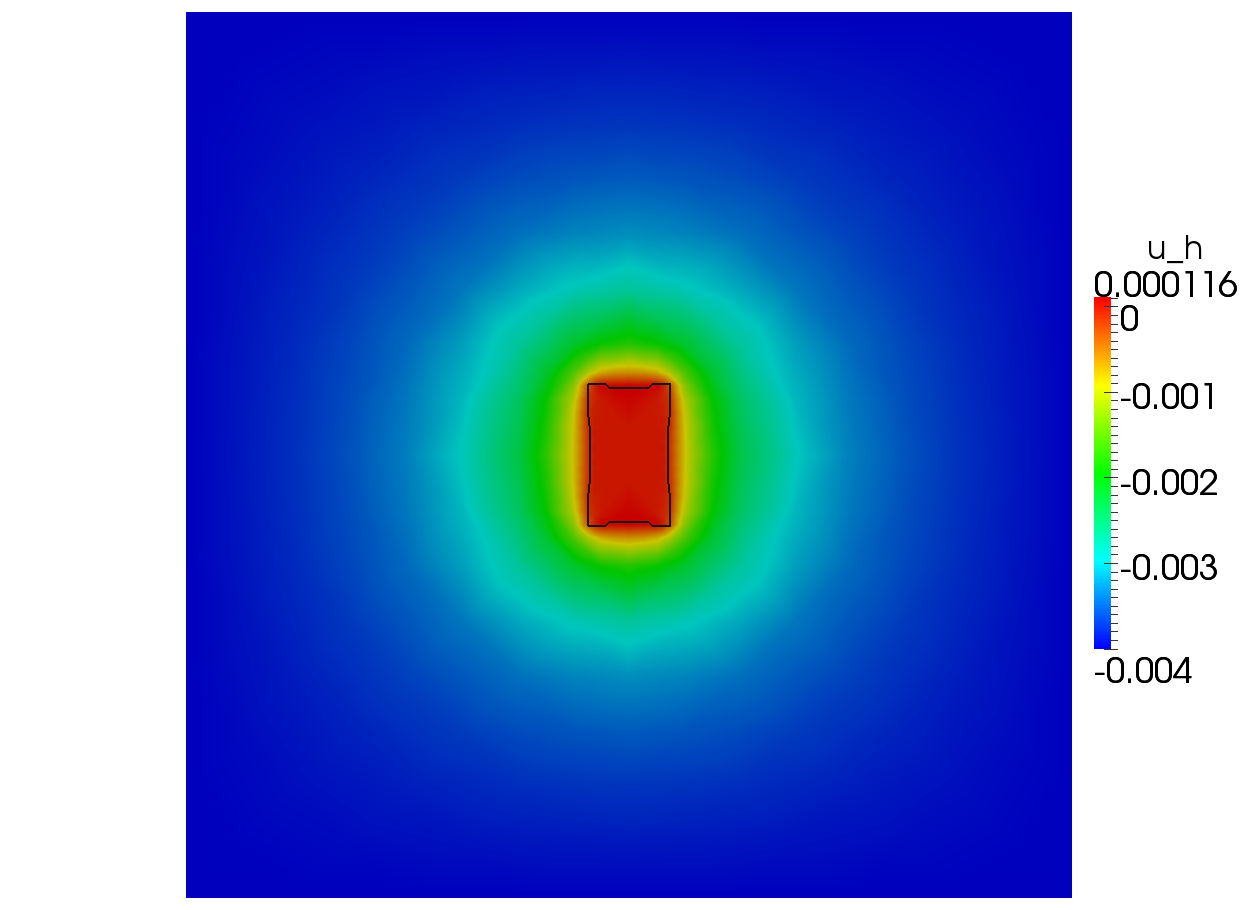}\quad
 \includegraphics[angle=-0,totalheight=2.6cm]{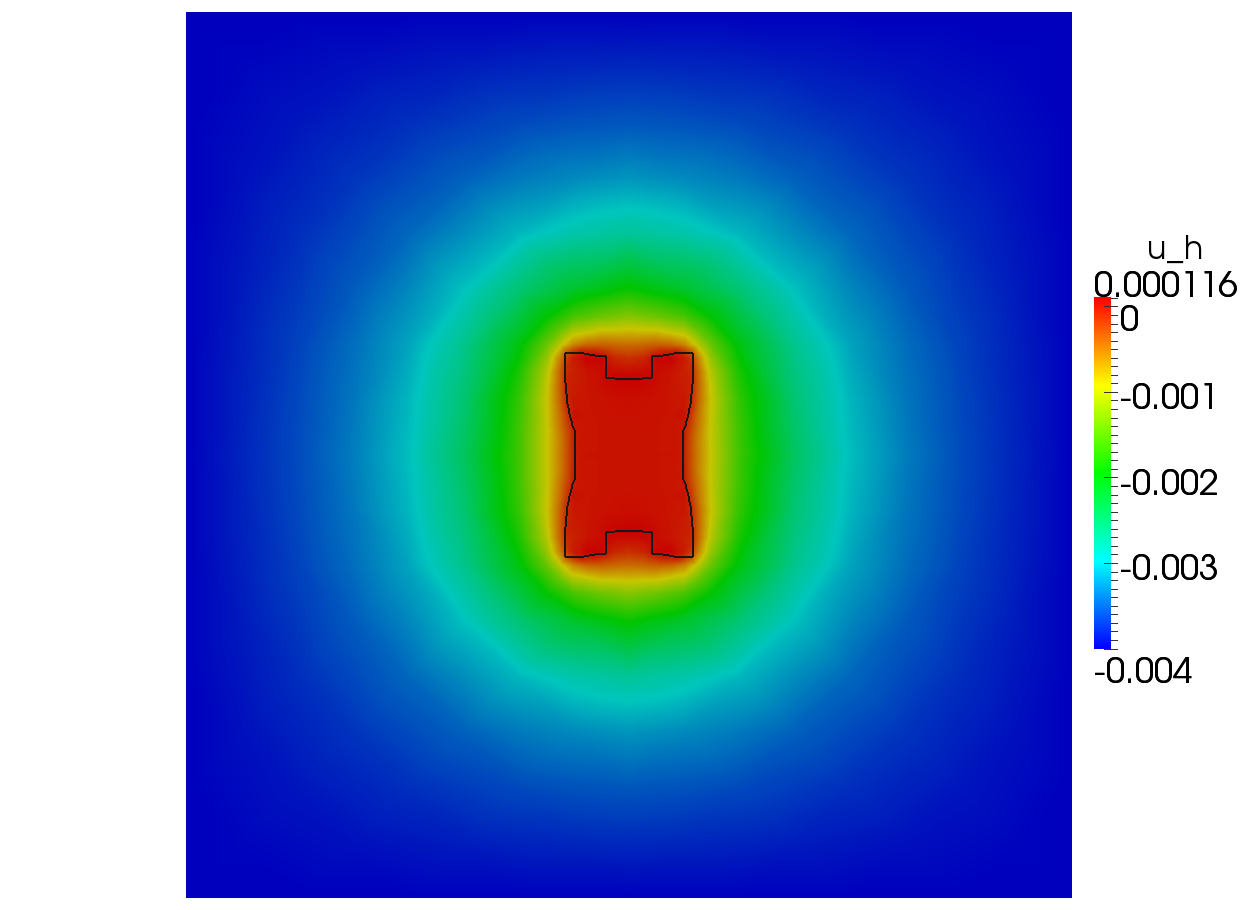} 
\caption{($\Omega=(-4,4)^3$, $u_D = -0.004$, $\beta = \beta_{\rm tall,1}$)
$\vec X(t) \cap \{\vec z : z_1 = 0\}$ 
and $U(t)\!\mid_{z_1=0}$ for $t=15,\, 25,\, 50$.
Parameters are $N_f=128$, $N_c=16$, $K^0_\Gamma = 98$, 
and $\tau=10^{-1}$.}
\label{fig:24rtemppng}
\end{figure} 
\begin{figure}[ht]
\center
 \includegraphics[angle=-0,totalheight=3.5cm]{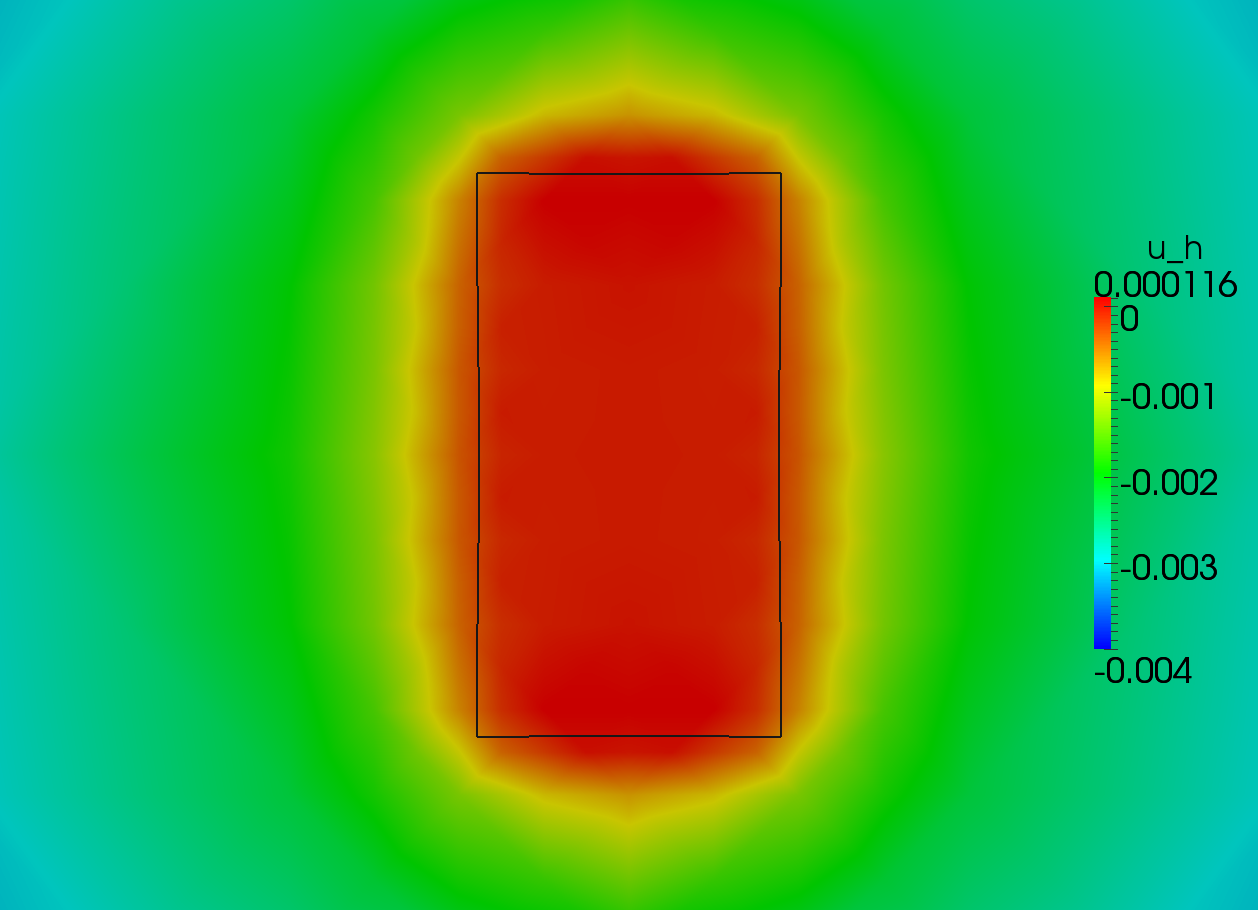}\qquad
 \includegraphics[angle=-0,totalheight=3.5cm]{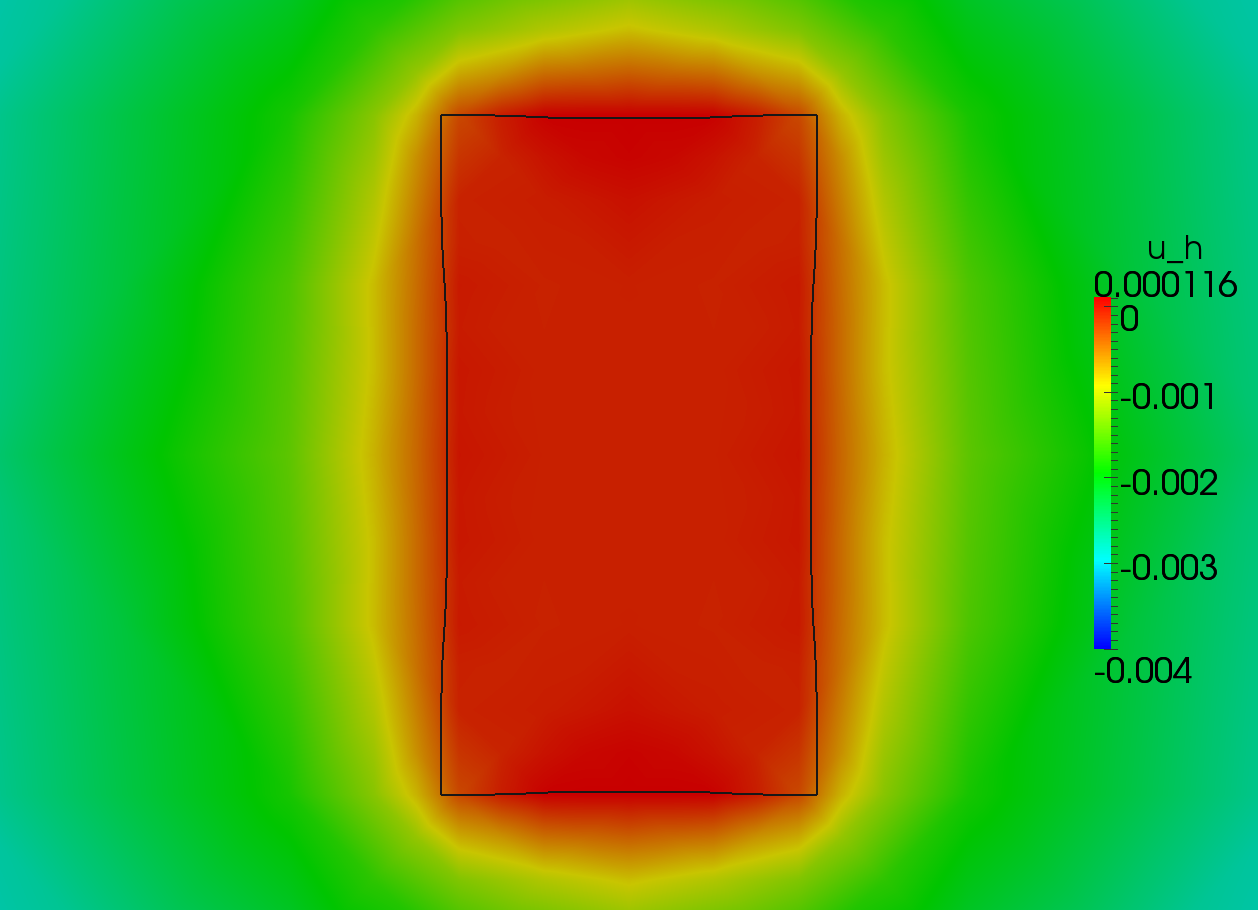}
\qquad
\caption{($\Omega=(-4,4)^3$, $u_D = -0.004$, $\beta = \beta_{\rm tall,1}$)
$\vec X(t) \cap \{\vec z : z_1 = 0\}$ and $U(t)\!\mid_{z_1=0}$ for 
$t=10,\,15$.
Parameters are $N_f=128$, $N_c=16$, $K^0_\Gamma = 98$, 
and $\tau=10^{-1}$.}
\label{fig:24rtemppngzoom}
\end{figure} 
Berg's effect  (see e.g.\ \cite{GigaR03})
can clearly be seen; i.e.,  $U$ increases towards
the centre of the basal face before facet breaking occurs.

For the anisotropy (\ref{eq:giga}) it is 
of interest to find for what value of $\gamma_{\rm TB}$ the evolution of
(\ref{eq:1a}--e) with
\begin{equation} \label{eq:rybkaparams}
\vartheta = 0, \quad \mathcal{K} =1, \quad \lambda = 1, \quad \rho = 1, 
\quad \alpha = 1, \quad a = 1,\quad \beta = \gamma, \quad f = 0
\end{equation}
is self-similar. For example, in \cite{GigaR04} it was shown that there exists
a value $\gamma_{\rm TB} > 0$ for which this is the case.
Numerically this can be checked by starting this flow with a
scaled Wulff shape (or a shape close to that), and then to observe whether the
height-to-basal-diameter ratio of the evolving approximate cylinder converges
to $\gamma_{\rm TB}$.

In practice we choose $\Gamma(0)$ to be a cylinder with basal radius $R_0 =
0.1$ and a height/basal diameter ratio of $\gamma_{\rm TB}$. 
In order to obtain the
desired sign for $\mathcal{V}$, i.e.,  for an expanding evolution, we set
$u_D = -21$ in (\ref{eq:1d}). For the domain $\Omega$ we choose
$\Omega = (-8, 8)^3$.

In practice we appear to obtain a value for self-similarity for 
some $\gamma_{\rm TB} \in [0.92,0.93]$, although the precise value seems to
depend on the resolution of the bulk mesh. In Figure~\ref{fig:rybka0925} 
we plot some results for an experiment with $\gamma_{\rm TB} = 0.925$, while in
Figure~\ref{fig:rybka_rho} we show the evolution of the ratio of interest for
two experiments with $\gamma_{\rm TB} = 0.92$ and $\gamma_{\rm TB} = 0.925$,
respectively. These results seem to indicate that there exists a value 
$\gamma_{\rm TB}$ close to $\gamma_{\rm TB} = 0.92$ for which the evolution of
(\ref{eq:1a}--e) with (\ref{eq:rybkaparams}) and (\ref{eq:giga})
is self-similar.

\begin{figure}[ht]
\center
 \includegraphics[angle=-90,totalheight=3.5cm]{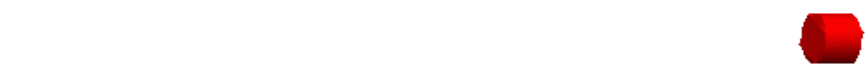} \qquad
 \includegraphics[angle=-90,totalheight=3.5cm]{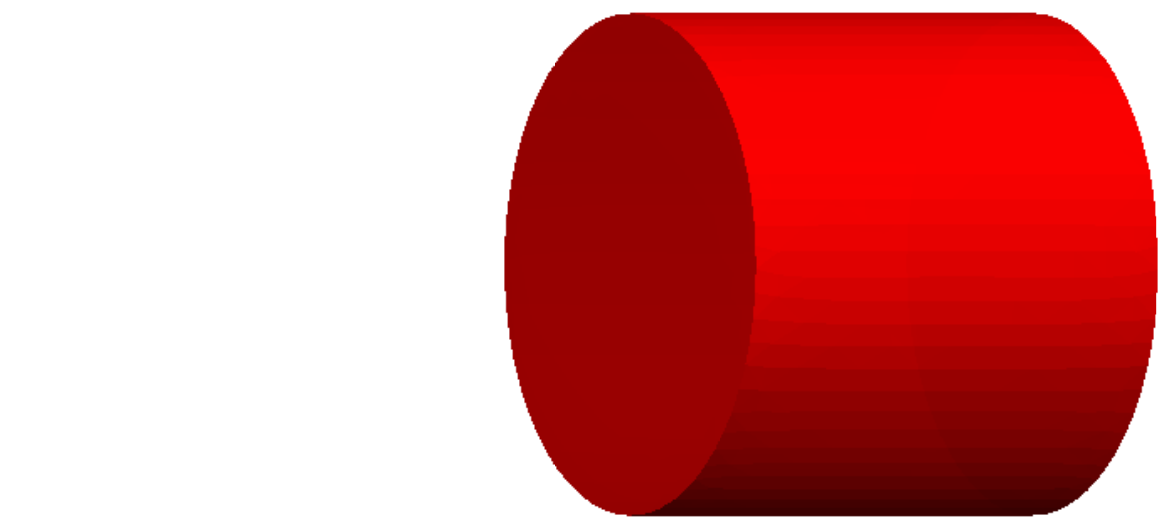} \qquad
 \includegraphics[angle=-90,totalheight=3.5cm]{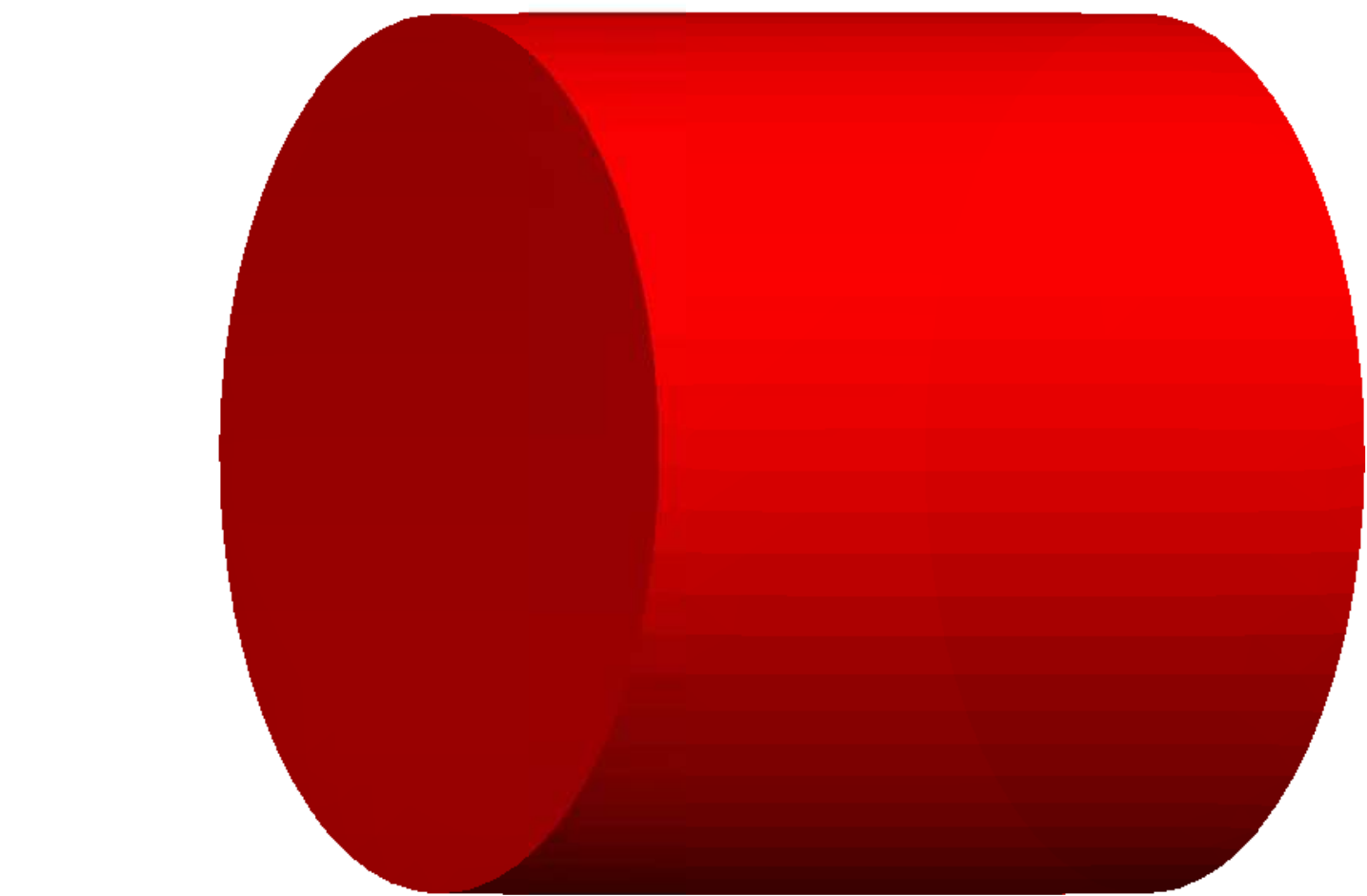} \qquad
 \includegraphics[angle=-90,totalheight=3.5cm]{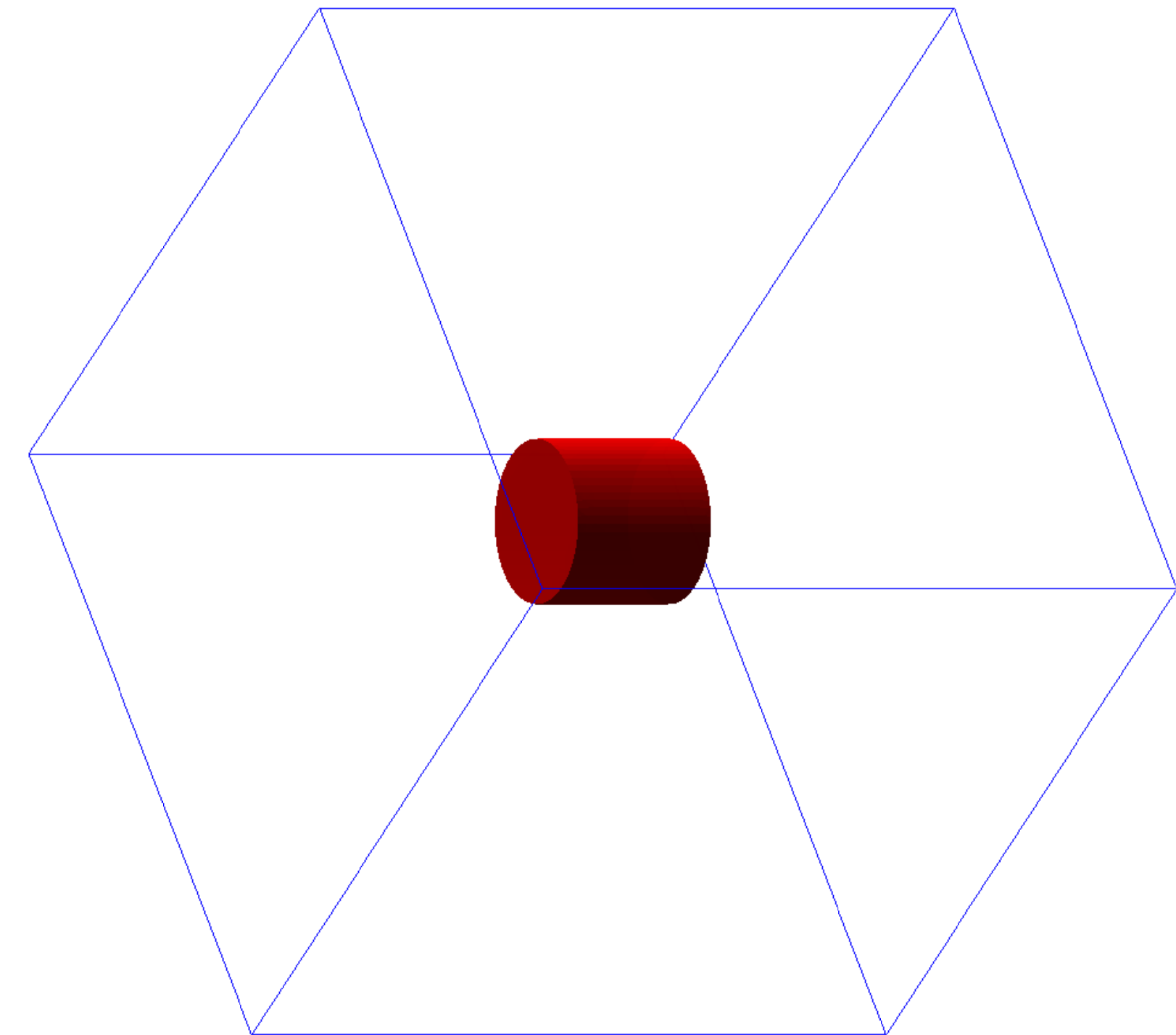} 
\caption{($\Omega=(-8,8)^3$, $\gamma_{\rm TB} = 0.925$)
$\vec{X}(t)$ for $t=0,\,0.1,\,0.2$; and $\vec X(0.2)$ within $\Omega$.
Parameters are $N_f=512$, $N_c=32$, $K^m_\Gamma \equiv 1538$, and 
$\tau=10^{-4}$.}
\label{fig:rybka0925}
\end{figure} 
\begin{figure}[ht]
\center
 \includegraphics[angle=-0,width=0.7\textwidth]{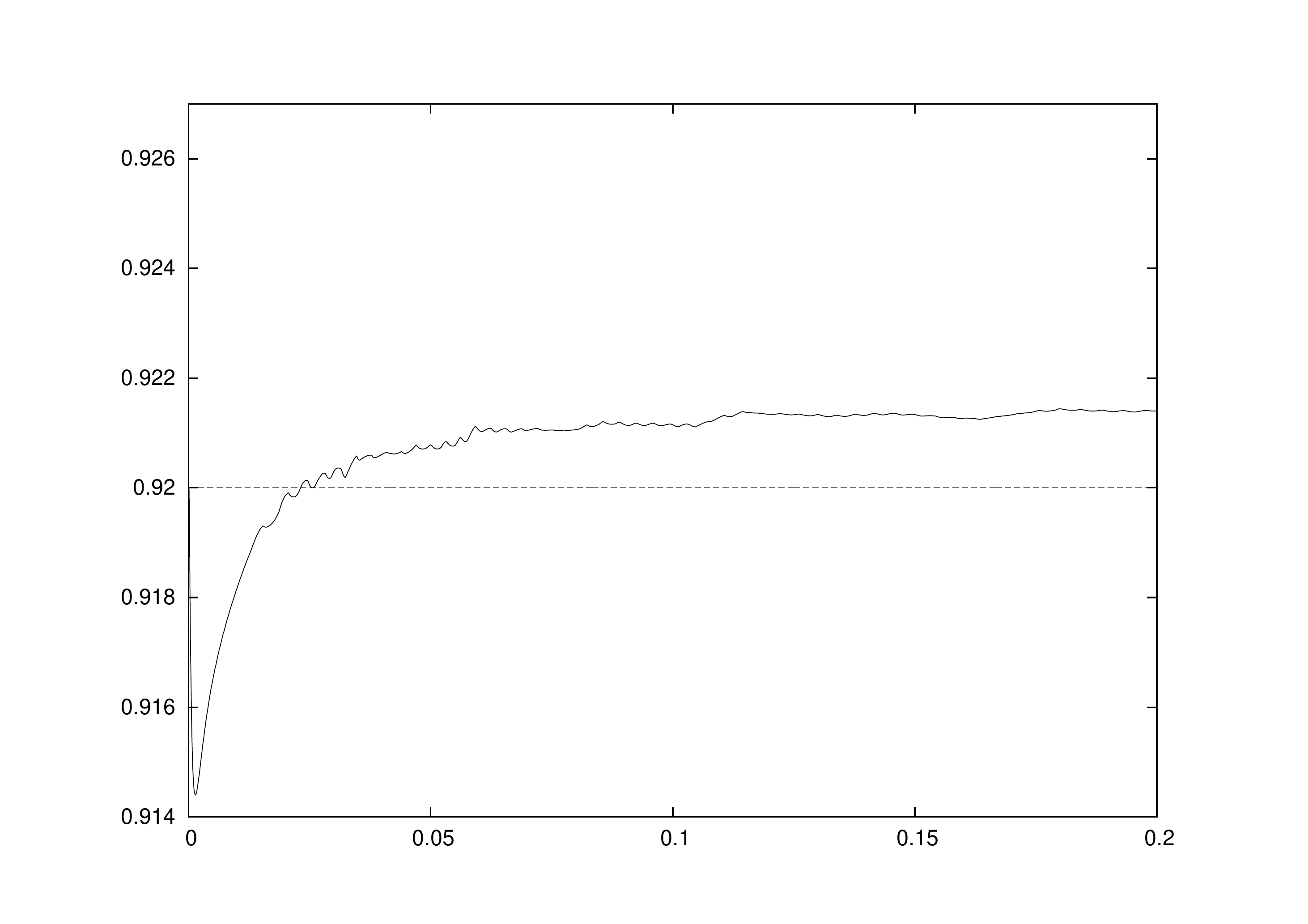}
 \includegraphics[angle=-0,width=0.7\textwidth]{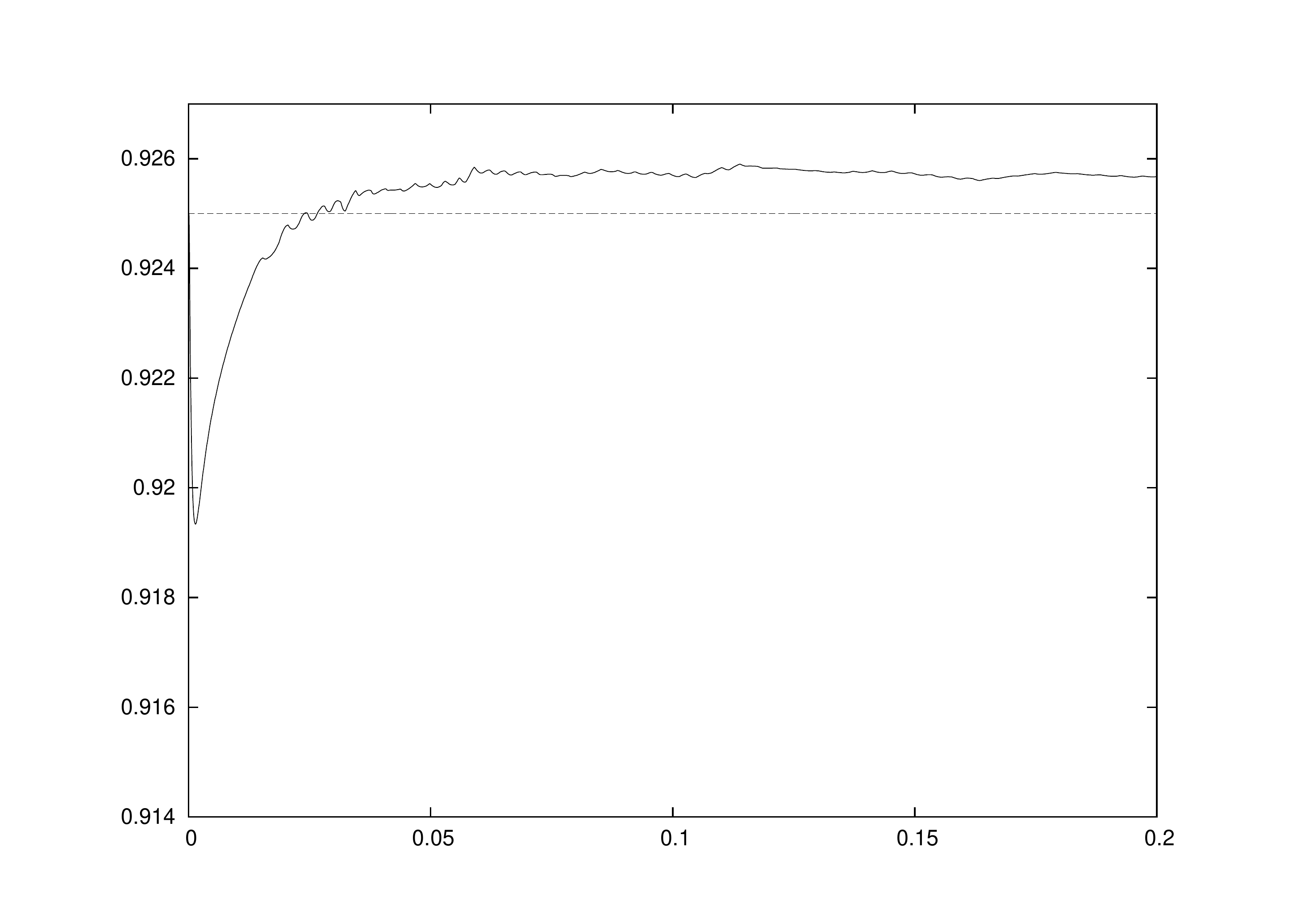}
\caption{Plots of the height/basal diameter ratio for the two runs with
$\gamma_{\rm TB} = 0.92$ (top) and 
$\gamma_{\rm TB} = 0.925$ (bottom). The dashed lines show the value of 
$\gamma_{\rm TB}$.
}
\label{fig:rybka_rho}
\end{figure} 

\section*{Conclusions}

We have presented a fully practical finite-element approximation for one-sided
Mullins--Sekerka and Stefan problems with
anisotropic Gibbs--Thomson law and kinetic undercooling. 
In particular, the method allows the
approximation of a continuum model for snow crystal growth, which is based on
rigorous thermodynamical principles and balance laws. To our knowledge, the
numerical results presented in this paper are the first simulations of snow
crystal growth that are based on such a rigorous, physically motivated model.

In our numerical simulations of snow crystal growth in three space dimensions,
we were able to produce a significant number of different types of snow
crystals. In particular (recall
Figure~\ref{fig:libbrecht}), we obtained results that resemble solid plates,
solid prisms, hollow columns,   
dendrites, capped columns, and
scrolls on plates. Also, facet breaking in the moving-boundary problems
computed have been observed in cases with nearly crystalline
anisotropic energies; see also \cite{GigaR06} for theoretical
predictions of facet breaking. 
We therefore believe that
the results presented here may help to understand the different
factors that play a role in the shaping of snow crystals in the real world.

Producing more complicated dendritic shapes
in three space dimensions, with complicated substructures
such as steps and ridges, as in e.g.\ \cite[Figure~1]{Libbrecht05}, or
as in the beautiful simulations in \cite{GravnerG09},
which were obtained with a cellular automata algorithm, 
would need a much higher computational cost when computed with the help
of a discretized moving-boundary problem for a diffusion 
equation.
The main reason is that the 
highly detailed and irregularly structured surface of snow flakes  (see e.g.\
Figure~1(c) in \cite{Libbrecht05}) would need to be accurately captured with
a triangulated surface $\Gamma^m$, say. On this surface, a second-order partial
differential equation then needs to be solved, which is coupled to a PDE in the
bulk. The necessary resolutions for both meshes, as well as the involved
computational effort to solve the linear systems arising from 
(\ref{eq:uHGa}--c), mean that on currently available computer hardware those
kind of computations cannot be performed.

Nevertheless, it is our belief that the numerical methods presented here, 
combined with suitable randomizations and fluctuations of physical parameters  
together with sophisticated computing equipment, should be able to produce all
the possible variations of realistic snow crystals.
In addition, we believe that the computations presented in this paper are
the most accurate and complex which have been computed so far with the
help of a Stefan or Mullins--Sekerka problem with hexagonal symmetry.

\smallskip

\noindent
{\bf Acknowledgment.} We are grateful to Prof. Libbrecht for allowing us
to use Figure~\ref{fig:libbrecht}.


\begin{thebibliography}{99}

\bibitem{AmestoyDD04}
P.R. Amestoy, T.A. Davis, and I.S. Duff, \textit{Algorithm 837: {AMD}, an
  approximate minimum degree ordering algorithm}, ACM Trans. Math. Software,
  30  (2004), 381--388.

\bibitem{voids3d}
{\vL}.~Ba{\v{n}}as and R.~N\"urnberg, \textit{Finite element approximation of a
  three dimensional phase field model for void electromigration}, J. Sci.
  Comp., 37  (2008), 202--232.

\bibitem{BanschS00}
E.~B{\"a}nsch and A.~Schmidt, \textit{Simulation of dendritic crystal growth
  with thermal convection}, Interfaces Free Bound., 2  (2000), 95--115.

\bibitem{BarrettE82}
J.W. Barrett and C.M. Elliott, \textit{A finite element method on a fixed
  mesh for the {S}tefan problem with convection in a saturated porous medium},
  in K.W. Morton and M.J. Baines, editors, ``Numerical Methods for Fluid
  Dynamics,'' Academic Press (London), 1982,  389--409.

\bibitem{triplejMC}
J.W. Barrett, H.~Garcke, and R.~N\"urnberg, \textit{On the variational
  approximation of combined second and fourth order geometric evolution
  equations}, SIAM J. Sci. Comput., 29  (2007), 1006--1041.

\bibitem{triplej}
J.W. Barrett, H.~Garcke, and R.~N\"urnberg, \textit{A parametric finite
  element method for fourth order geometric evolution equations}, J. Comput.
  Phys., 222  (2007), 441--462.

\bibitem{triplejANI}
J.W. Barrett, H.~Garcke, and R.~N\"urnberg, \textit{Numerical approximation of
  anisotropic geometric evolution equations in the plane}, IMA J. Numer. Anal.,
  28  (2008), 292--330.

\bibitem{gflows3d}
J.W. Barrett, H.~Garcke, and R.~N\"urnberg, \textit{On the parametric finite
  element approximation of evolving hypersurfaces in {${\mathbb R}^3$}}, J.
  Comput. Phys., 227  (2008), 4281--4307.

\bibitem{ani3d}
J.W. Barrett, H.~Garcke, and R.~N\"urnberg, \textit{A variational formulation
  of anisotropic geometric evolution equations in higher dimensions}, Numer.
  Math., 109  (2008), 1--44.

\bibitem{ejam3d}
J.W. Barrett, H.~Garcke, and R.~N\"urnberg, \textit{Finite element
  approximation of coupled surface and grain boundary motion with applications
  to thermal grooving and sintering}, European J. Appl. Math., 21  (2010),
  519--556.

\bibitem{dendritic}
J.W. Barrett, H.~Garcke, and R.~N\"urnberg, \textit{On stable parametric
  finite element methods for the {S}tefan problem and the {M}ullins--{S}ekerka
  problem with applications to dendritic growth}, J. Comput. Phys., 229 
  (2010), 6270--6299.

\bibitem{clust3d}
J.W. Barrett, H.~Garcke, and R.~N\"urnberg, \textit{Parametric approximation
  of surface clusters driven by isotropic and anisotropic surface energies},
  Interfaces Free Bound., 12  (2010), 187--234.

\bibitem{jcg}
J.W. Barrett, H.~Garcke, and R.~N\"urnberg, \textit{Numerical computations of
  faceted pattern formation in snow crystal growth}, Phys. Rev. E, 86  (2012),
  011604.

\bibitem{voids}
J.W. Barrett, R.~N\"urnberg, and V.~Styles, \textit{Finite element
  approximation of a phase field model for void electromigration}, SIAM J.
  Numer. Anal., 42  (2004), 738--772.

\bibitem{BellettiniNP99}
G.~Bellettini, M.~Novaga, and M.~Paolini, \textit{Facet-breaking for
  three-dimensional crystals evolving by mean curvature}, Interfaces Free
  Bound., 1  (1999), 39--55.

\bibitem{Davis01}
S.H. Davis, ``Theory of Solidification," Cambridge Monographs on Mechanics,
  Cambridge University Press, Cambridge, 2001.

\bibitem{Davis05}
T.A. Davis, \textit{Algorithm 849: a concise sparse {C}holesky factorization
  package}, ACM Trans. Math. Software, 31  (2005), 587--591.

\bibitem{DeckelnickDE05}
K.~Deckelnick, G.~Dziuk, and C.M. Elliott, \textit{Computation of geometric
  partial differential equations and mean curvature flow}, Acta Numer., 14 
  (2005), 139--232.

\bibitem{Dziuk91}
G.~Dziuk, \textit{An algorithm for evolutionary surfaces}, Numer. Math., 58 
  (1991), 603--611.

\bibitem{FonsecaM91}
I.~Fonseca and S.~M{\"u}ller, \textit{A uniqueness proof for the {W}ulff
  theorem}, Proc. Roy. Soc. Edinburgh Sect. A, 119  (1991), 125--136.

\bibitem{GarckeS11}
H.~Garcke and S.~Schaubeck, \textit{Existence of weak solutions for the
  {S}tefan problem with anisotropic {G}ibbs--{T}homson law}, Adv. Math. Sci.
  Appl., 21  (2011), 255--283.

\bibitem{GigaG98}
M.-H. Giga and Y.~Giga, \textit{A subdifferential interpretation of crystalline
  motion under nonuniform driving force}, Discrete Contin. Dynam. Systems,
  pp.~276--287, ``Dynamical Systems and Differential Equations, Vol. I,''
  Springfield, MO, 1996.

\bibitem{Giga}
Y.~Giga, ``Surface Evolution Equations," Birkh\"auser, Basel, 2006.

\bibitem{GigaR03}
Y.~Giga and P.~Rybka, \textit{Berg's effect}, Adv. Math. Sci. Appl., 13 
  (2003), 625--637.

\bibitem{GigaR04}
Y.~Giga and P.~Rybka, \textit{Existence of self-similar evolution of crystals
  grown from supersaturated vapor}, Interfaces Free Bound., 6  (2004),
  405--421.

\bibitem{GigaR06}
Y.~Giga and P.~Rybka, \textit{Stability of facets of crystals growing from
  vapor}, Discrete Contin. Dyn. Syst., 14  (2006), 689--706.

\bibitem{GondaY82}
T.~Gonda and T.~Yamazaki, \textit{Morphological stability of polyhedral ice
  crystals growing from the vapor phase}, J. Cryst. Growth, 60  (1982),
  259--263.

\bibitem{Gorka08}
P.~G{\'o}rka, \textit{Evolution of 3-{D} crystals from supersaturated vapor
  with modified {S}tefan condition: {G}alerkin method approach}, J. Math. Anal.
  Appl., 341  (2008), 1413--1426.

\bibitem{Gorka08a}
P.~G{\'o}rka, \textit{Quasi-static evolution of polyhedral crystals}, Discrete
  Contin. Dyn. Syst. Ser. B, 9  (2008), 309--320.

\bibitem{GravnerG09}
J.~Gravner and D.~Griffeath, \textit{Modeling snow-crystal growth: A
  three-dimensional mesoscopic approach}, Phys. Rev. E, 79  (2009),
  011601.

\bibitem{Gurtin93}
M.E. Gurtin, ``Thermomechanics of Evolving Phase Boundaries in the Plane,"
  Oxford
  Mathematical Monographs, The Clarendon Press Oxford University Press, New
  York, 1993.

\bibitem{KobayashiG01}
R.~Kobayashi and Y.~Giga, \textit{On anisotropy and curvature effects for
  growing crystals}, Japan J. Indust. Appl. Math., 18  (2001), 207--230.

\bibitem{Libbrecht05}
K.G. Libbrecht, \textit{The physics of snow crystals}, Rep. Progr. Phys., 68 
  (2005), 855--895.

\bibitem{Luckhaus90}
S.~Luckhaus, \textit{Solutions for the two-phase {S}tefan problem with the
  {G}ibbs--{T}homson law for the melting temperature}, European J. Appl. Math.,
  1  (1990), 101--111.

\bibitem{PruppacherK97}
H.R. Pruppacher and J.D. Klett, ``Microphysics of Clouds and Precipitation,"
  Kluwer Acad. Publ., Dordrecht, 1997.

\bibitem{Schmidt93}
A.~Schmidt, ``{D}ie {B}erechnung dreidimensionaler {D}endriten mit {F}initen
  {E}lementen," Ph.D. thesis, University Freiburg, Freiburg, 1993.

\bibitem{Schmidt96}
A.~Schmidt, \textit{Computation of three dimensional dendrites with finite
  elements}, J. Comput. Phys., 195  (1996), 293--312.

\bibitem{Schmidt98}
A.~Schmidt, \textit{Approximation of crystalline dendrite growth in two space
  dimensions}, in J.~Ka{\v{c}}ur and K.~Mikula, editors, Proceedings of the
  Algoritmy '97 Conference on Scientific Computing (Zuberec), volume~67, Slovak
  University of Technology, Bratislava (1998), pages 57--68.

\bibitem{Alberta}
A.~Schmidt and K.G. Siebert, ``Design of Adaptive Finite Element Software: The
  Finite Element Toolbox {ALBERTA}," volume~42 of \textit{Lecture Notes in
  Computational Science and Engineering}, Springer-Verlag, Berlin, 2005.

\bibitem{Wulff1901}
G.~Wulff, \textit{{Z}ur {F}rage der {G}eschwindigkeit des {W}achstums und der
  {A}ufl{\"o}sung der {K}ristallfl{\"a}chen}, Z. Krist., 34  (1901), 449--530.

\bibitem{Yokoyama93}
E.~Yokoyama, \textit{Formation of patterns during growth of snow crystals}, J.
  Cryst. Growth, 128  (1993), 251--257.

\bibitem{YokoyamaS92}
E.~Yokoyama and R.F. Sekerka, \textit{A numerical study of the combined effect
  of anisotropic surface tension and interface kinetics on pattern formation
  during the growth of two-dimensional crystals}, J. Cryst. Growth, 125 
  (1992), 389--403.

\end{thebibliography}
\end{document}